\documentclass[a4paper,11pt,twoside]{ThesisStyle}

\usepackage{color}

\newcommand{\ignore}[1]{}

\newcommand{\vir}[1]{``#1''}

\definecolor{green}{rgb}{0,0.70,0}

\newcommand{\codice}[1]{\mbox{\textsc{#1}}}

\usepackage{fancybox,Modificatopseudocode}
\usepackage{amsmath,amssymb,dsfont,amsthm}             
\usepackage[english]{babel}
\usepackage[latin1]{inputenc}
\usepackage[T1]{fontenc}
\usepackage[left=1.5in,right=1.3in,top=1.1in,bottom=1.1in,includefoot,includehead,headheight=13.6pt]{geometry}


\usepackage[nottoc, notlof, notlot]{tocbibind}
\usepackage{minitoc}
\setcounter{minitocdepth}{2}
\mtcindent=15pt

\usepackage{aecompl}


\usepackage[intoc]{nomencl}

\makenomenclature


\usepackage{ifpdf}

\ifpdf
  \usepackage[pdftex]{graphicx}
  \DeclareGraphicsExtensions{.jpg}
  \usepackage[a4paper,pagebackref,hyperindex=true]{hyperref}
\else
  \usepackage{graphicx}
  \DeclareGraphicsExtensions{.ps,.eps}
  \usepackage[a4paper,dvipdfm,pagebackref,hyperindex=true]{hyperref}
\fi

\graphicspath{{.}{images/}}

\usepackage{color}
\definecolor{linkcol}{rgb}{0,0,0.4}
\definecolor{citecol}{rgb}{0.5,0,0}



\hypersetup
{
bookmarksopen=true,
pdftitle="Algorithms for Internal Validation Clustering Measures in the Post
Genomic Era",
pdfauthor="Utro Filippo",
pdfsubject="Benchmarking and speedups of Internal Validation Clustering Measures", 
pdfmenubar=true, 
pdfhighlight=/O, 
colorlinks=true, 
pdfpagemode=None, 
pdfpagelayout=SinglePage, 
pdffitwindow=true, 
linkcolor=linkcol, 
citecolor=citecol, 
urlcolor=linkcol 
}


\setcounter{secnumdepth}{3}
\setcounter{tocdepth}{2}


\def\argmax{\operatornamewithlimits{arg\,max}}

\usepackage{rotating}                    
\usepackage{fancyhdr}                    



\pagestyle{fancy}                       
\fancyfoot{}                            

\fancyhead[LE,RO]{\bfseries\thepage}    
\fancyhead[RE]{\bfseries\nouppercase{\leftmark}}      
\fancyhead[LO]{\bfseries\nouppercase{\rightmark}}     

\let\headruleORIG\headrule
\renewcommand{\headrule}{\color{black} \headruleORIG}

\usepackage{colortbl}
\usepackage{epsfig}
\arrayrulecolor{black}

\fancypagestyle{plain}{
  \fancyhead{}
  \fancyfoot{}
  
}

\usepackage{algorithm}
\usepackage[noend]{algorithmic}

\makeatletter

\def\cleardoublepage{\clearpage\if@twoside \ifodd\c@page\else%
  \hbox{}%
  \thispagestyle{empty}
  \newpage%
  \if@twocolumn\hbox{}\newpage\fi\fi\fi}

\makeatother



%
{%

\hrulefill
\vspace*{0.5cm}%
\end{minipage}
}

\let\minitocORIG\minitoc
\renewcommand{\minitoc}{\minitocORIG \vspace{1.5em}}

\usepackage{multirow}
\usepackage{slashbox}

{ \begin{list}%
	{$\bullet$}%
	{\setlength{\labelwidth}{25pt}%
	 \setlength{\leftmargin}{30pt}%
	 \setlength{\itemsep}{\parsep}}}%
{ \end{list} }

\newtheorem{theorem}{Theorem}
\newtheorem{lemma}[theorem]{Lemma}

\renewcommand{\epsilon}{\varepsilon}


\newenvironment{vcenterpage}
{\newpage\vspace*{\fill}\thispagestyle{empty}}
{\vspace*{\fill}}

\usepackage{fancyvrb}
\usepackage{pifont}
\usepackage{multicol}
\usepackage{igo}
\usepackage{dsfont}

\begin{document}

\begin{titlepage}
\begin{center}
\begin{figure}[htbp]
    \begin{center}
     \epsfig{file=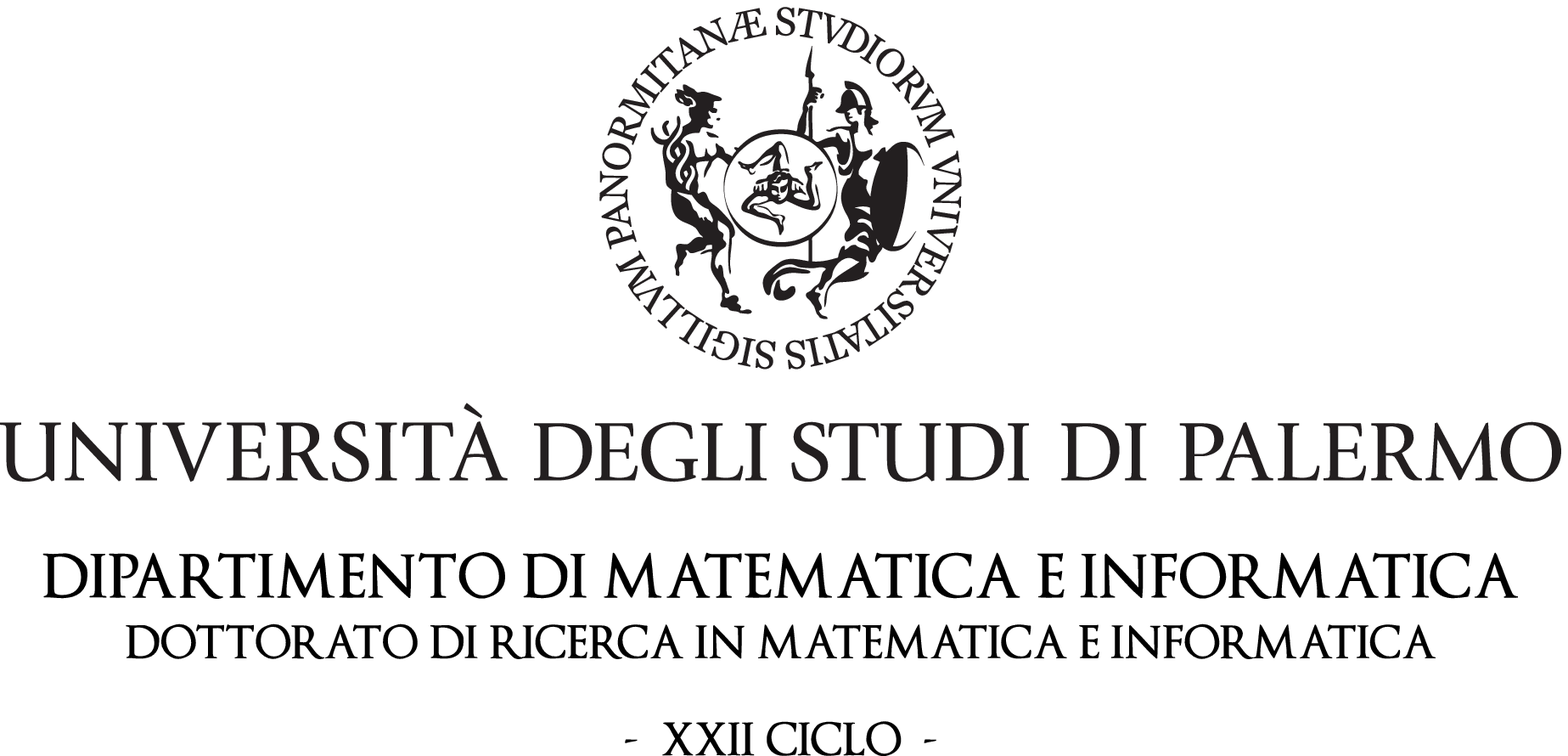,scale=0.7}
    \end{center}
  \end{figure}
\vspace*{0.2cm}
\vspace*{0.2cm}
\vspace*{0.3cm}
\vspace*{0.3cm}
\vspace*{0.3cm}
\vspace*{0.4cm}
\vspace*{0.8cm}
\noindent {\Huge \textbf{\textsc{Algorithms for Internal Validation Clustering Measures in the Post
Genomic Era}}} \\
\vspace*{6.0cm}

\begin{minipage}[t]{6cm}
\center Author\\ \vspace{-0.2cm}
\center \textsc{Filippo Utro}\\
\end{minipage}
\hfill
\begin{minipage}[t]{6cm}
\center Coordinator\\ \vspace{-0.2cm}
\center \emph{Prof. }\textsc{Camillo Trapani}\\ 
\vskip 0.3cm
\end{minipage}
\vspace{1cm}
\begin{center}
\center Thesis Advisor\\ \vspace{-0.2cm}
\center \emph{Prof. }\textsc{Raffaele Giancarlo}\\ \null\vspace{-0.18cm}\hrulefill\\
Settore Scientifico Disciplinare INF/01
\end{center}

\end{center}
\end{titlepage}
\sloppy

\titlepage


\cleardoublepage
\vspace{1.5cm}
\hspace{-0.3cm}Reviewers

\vspace{0.7cm}

\hspace{-0.3cm} \emph{Prof.} \textsc{Concettina Guerra}\\
College of Computing, Georgia Tech, GA, USA\\
Dipartimento di Ingegneria Informatica, Universit di Padova, Italy

\vspace{1.1cm}

\hspace{-0.3cm} \emph{Prof.} \textsc{Paola Sebastiani}\\
Department of Biostatistics, Boston University School of Public Heath, MA, USA\\

\cleardoublepage
\begin{vcenterpage}
\noindent\rule[2pt]{\textwidth}{0.5pt}
\begin{center}
{\large\textbf{Algorithms for Internal Validation Clustering Measures in the Post Genomic Era\\}}
\end{center}
{\large\textbf{Abstract:}}
Inferring cluster structure in microarray datasets is a fundamental task for the so-called -omic sciences. It is also a fundamental question in Statistics, Data Analysis and Classification, in particular with regard to the prediction of the number of clusters in a dataset, usually established via internal validation measures. Despite the wealth of internal measures available in the literature, new ones have been recently proposed, some of them specifically for microarray data.\\
In this dissertation,  a study of internal validation measures is given, paying particular attention to the stability based ones.  Indeed, this class of  measures is particularly  prominent and promising in order to have a reliable  estimate of the correct number of clusters in a dataset. For this kind of measures,
a new general algorithmic paradigm is proposed here that highlights the richness of measures in this class and accounts for the ones already available in the literature.
Moreover, some of the most representative data-driven validation measures are also considered.
Extensive experiments on twelve benchmark microarray datasets are  performed, using both Hierarchical and K-means clustering algorithms, in order to assess both the intrinsic ability of a measure to predict the correct number of clusters in a dataset and its merit relative to the other measures. Particular attention is given both to precision and speed. The main result is a hierarchy of internal validation measures in terms of precision and speed, highlighting some of their merits and limitations not reported before in the literature. This hierarchy shows that the faster the measure, the less accurate it is. In order to reduce the time performance gap between the fastest and the most precise measures, the technique of designing fast approximation algorithms is systematically applied. The end result is a speed-up of many of the measures studied here that brings the gap between the fastest and the most precise within one order of magnitude in time, with no degradation in their prediction power. Prior to this work, the time gap was at least two orders of magnitude.

Finally, for the first time in the literature, a benchmarking  of Non-negative Matrix Factorization as a clustering algorithm on microarrays is provided. Such a benchmarking is novel and sheds further light on the use of Non-negative Matrix Factorization as a data mining tool in bioinformatics. Given the increasing popularity  of Non-negative Matrix Factorization for data mining in biological data, the results reported here seem to contribute to the proper use of the technique, being well aware of its limitations, in particular the extensive use of computational resources it needs.

{\large\textbf{Keywords:}}
Algorithms and Data Structures, Experimantal Analysis of Algorithms,
General Statistics, Analysis of Massive Datasets, Machine Learning,
Computational Biology, Bioinformatics.
\\
\noindent\rule[2pt]{\textwidth}{0.5pt}
\end{vcenterpage}

\dominitoc

\pagenumbering{roman}

 \cleardoublepage

\section*{Acknowledgments}

I owe a great deal of thanks to many people for making this thesis possible. First, I would like to express my gratitude for my advisor Prof. Raffaele Giancarlo, who has been leading and supporting me and my research to be fruitful in his patience.

\medskip

\noindent A huge thanks goes to Davide Scaturro for his collaboration in the first stage of my work. Without his skillful support my projects would not have
been possible.

\medskip

\noindent I truly appreciate to Dr. Giusi Castiglione that read drafts of the thesis and offered important
suggestions for improvement and more important her friendship.

\medskip

\noindent Thanks to my fellow PhD friends, in particular Fabio Bellavia, Marco Cipolla, Filippo Millonzi and Luca Pinello
for our broad-ranging discussions and for sharing the joys and worries of academic research.

\medskip

\noindent Furthermore I am deeply indebted to my colleagues at Department of Mathematics and Computer Science that have provided the environment for sharing their experiences about the problem issues involved as well as participated in stimulating team exercises developing solutions to the identified problems. I would specially like to thank Dr. Giosu\'{e} Lo Bosco and Prof. Marinella Sciortino for their extremely valuable experiences, support, insights and more important their friendship.

\medskip

Finally, I wish to express my gratitude to my family and friends who provided continuous understanding, patience, love and energy. In particular, I would like to express a heartfelt thanks to my parents and my girlfriend Marcella for their infinite support in my research endeavors.

\medskip

\medskip

Thanks to all of you.

\cleardoublepage

\section*{Originality Declaration}
This work contains no material which has been accepted for the award of any other degree or diploma in any university or other tertiary institution and, to the best of my knowledge and belief, contains no material previously published or written by another person, except where due reference has been made in the test. I give consent to this copy of my thesis, when deposited in the University Library, begin available for loan and photocopying.

\vspace{2cm}

Signed \ldots\ldots\ldots\ldots\ldots\ldots\ldots\ldots \hspace{150pt} February 2011

\tableofcontents
\listoffigures
\listoftables


\mainmatter
\chapter*{Introduction}
\label{chap:intro}\markboth{Introduction}{Contributions and Thesis Outline}
\addcontentsline{toc}{chapter}{Introduction} \pagenumbering{arabic}

In the past 15 years, a paradigm shift in Life Sciences research has taken place, thanks to the availability of genomic and proteomic data on an unprecedented scale. Such a revolutionary change has posed new challenges to Mathematics, Statistics and Computer Science, since the conceptual tools proper of those three disciplines are fundamental for the study of biological questions via computational tools, on a genomic scale. In the following, by way of example, the converging views of Hood and Galas, two authorities in the Life Sciences, and Knuth, an authority in Computer Science, are summarized.

In the Fifty Years Commemorative Issue of Nature on the discovery of DNA, Hood and Galas~\cite{HoodGals} conclude their contribution by clearly stating that a mathematical notion of \vir{biological information} readily usable for the development of computational tools for -omic investigations is lacking and that such a notion would be a fundamental contribution of the Exact Sciences to the Life Sciences. Moreover, Knuth~\cite{knuth} assert that the contributions given by Mathematics, Statistics and Computer Science to Molecular Biology would have been unpredictable, both in depth and breadth, only at the end of the 90s. Yet, those contributions are only a small fraction of the challenges faced by the Computer and Information Sciences in strategic junctions of this domain. Therefore, the development of a new core area of Mathematics and Computer Science is taking place. Although the topics discussed in this dissertation are classical, they will be developed having well in mind such an important new direction. In particular, this dissertation focuses on various aspects of clustering when used in conjunction with microarray data.

\emph{Microarrays } are a useful and, by now, well established technology in genomic investigation. Indeed,  experiments based on that technology are increasingly being carried out in biological and medical research to address a wide range of problems, including the classification of tumors~\cite{Aliz00,AlonU1999,CLEST,golub99,Perou1999,Pollack99,Ross00}, where a reliable and precise classification is essential for successful diagnosis and treatment. By allowing the monitoring of gene expression levels on a genomic scale, microarray experiments may lead to a more complete understanding of the molecular variations among tumors and hence to a finer and more reliable classification. An important statistical problem associated with tumor classification is the identification of new tumor classes using gene expression profiles, which has revived interest in cluster analysis. However, the novelty, noisiness and high dimensionality of microarray data provide new challenges even to a classic and well studied area such as clustering. More in general, new methodological and computational challenges are proposed daily~\cite{Shmulevichbook,CLEST,Kikuchi2003,McAdams-1995}.  As a results, there has been a \vir{malthusian growth} of new statistical and computational methods for genomic analysis. Unfortunately, many papers for -omic research describe development or application of statistical methods for microarray data that are questionable~\cite{Mehta2004}. In view of this latter peril, in this dissertation an effort has been made to use  a methodology that is sound and coherent for the  experimental validation of the computational methods proposed here.

In microarray data analysis there are two essential aspects of clustering: finding a \vir{good} partition of the datasets and estimating the number of clusters, if any, in a dataset.
The former problem is usually solved by  the use of a clustering algorithm. In the Literature, a large number of clustering algorithms has been proposed
and many of these have been applied to genomic data~\cite{Nature_Clustering}, the most famous are: K-means~\cite{JainDubes},
fuzzy c-means, self-organizing maps~\cite{Perrin2003,Sachs02,Shmulevich2002}, hierarchical clustering~\cite{JainDubes},
and model-based clustering~\cite{Silvescu01temporalboolean,Smolen2000}. Some of those
studies concentrate both on the ability of an algorithm to obtain a
high quality partition of the data and on its performance in terms
of computational resources, mainly CPU time (see~\cite{Clusterin_high,Kmeanscoresets,Kraus10,Optimal_Hier} and references
therein).

However, the most fundamental issue is the latter problem, i.e., the determination of the number of clusters.
Despite the vast amount of knowledge available in the
general data mining literature, e.g.,~\cite{Breckenridge89,
Dudoit2003,Gordon1, Gordon,Tibshrbook,JainDubes,KR90,Kerr00bootstrappingcluster}, gene
expression data provide unique challenges, in particular with
respect to internal validation indices. Indeed, they  must predict
how many clusters are really present in a dataset, an already
difficult task, made even worse by the fact that the estimation must
be sensible enough to capture the inherent biological structure of
functionally related genes. Despite their potentially important
role, both the use of classic internal validation  indices  and the
design of new ones, specific for microarray data, do not seem to
have great prominence in bioinformatics, where attention is mostly
given to clustering algorithms. The excellent survey by Handl et al.~\cite{Handl05} is a big step forward in making the study of those
techniques a central part of both research and practice in
bioinformatics, since it provides both a technical presentation as
well as valuable general guidelines about their use for post-genomic
data analysis. Although much remains to be done, it is,
nevertheless, an initial step.

For instance, in the general data mining literature, there are
several studies, e.g.,~\cite{MC85}, aimed at establishing  the
intrinsic, as well as the relative, merit of an index. To this end,
the two relevant questions are:
\begin{itemize}
\item[(i)] What is the precision of an index, i.e., its ability to predict the correct number of clusters in a dataset? That is usually established by comparing the number of clusters predicted by  the index against the number of clusters in the true solution of several datasets, the true solution being a partition of the dataset in classes that can be trusted to be correct, i.e.,  distinct groups of functionally related genes.

\item[(ii)] Among a collection of indices, which is more accurate, less algorithm dependent, etc.,?.
\end{itemize}

\noindent Precision versus the  use of computational resources, primarily execution time, would be an important discriminating factor.

\section*{Contributions and Thesis Outline}
\addcontentsline{toc}{section}{Contributions and Thesis Outline}

From the previous brief description of the state of the art it is evident that, for the special case of microarray data, the
experimental assessment of the \vir{fitness} of a measure has been
rather ad hoc and studies in that area provide only partial and
implicit comparison among measures. Moreover, contrary to research
in the clustering literature, the performance of validation methods
in terms of computational resources, again mainly CPU time, is
hardly assessed both in absolute and relative terms. This dissertation is an attempt to tackle the stated limitations of the state of the art in a homogeneous way. It is organized as follows:

\begin{itemize}
\item[$\bullet$] Chapter~\ref{chap:1} provides the background information relevant to this thesis. Indeed, a formal definition of the clustering problem and the notation used in this dissertation is given. Moreover, the two main classes of clustering algorithms proposed in the Literature, some methods for the assessment and evaluation of cluster quality, and data generation/perturbation methods, are also detailed.

\item[$\bullet$] Chapter~\ref{chap:ValidationMeasuers} provides a presentation of some basic validation techniques. In detail, external and internal indices are outlined. In particular, three external indices that assess the agreement between two partitions are presented. Moreover, four internal measures useful to estimate the correct number of clusters present, if any, in a dataset are detailed. They are based on: compactness, hypothesis testing in statistics and jackknife techniques. The described measures have been selected since they are particularly prominent in the clustering literature. Moreover, they are all characterized by the fact that, for their prediction, they make use of nothing more than the dataset available, i.e., they are all data-driven. It is worth pointing out that another class of measure is based on Bayesian Model and it is an aproach both to cluster analysis and the estimation of the \vir{best} number of clusters in a dataset. In order to keep this thesis focus it will not be discussed in this dissertation. The interested reader is referred to~\cite{BayesianCluster} and references therein for an in depth treatment of the relevant topics regarding Bayesian Model Based Clustering.

\item[$\bullet$] Chapter~\ref{chap:Stability} provides one of the main  topics of this thesis. Indeed, for the first time in the Literature, a general algorithmic paradigm of stability internal validation measure is introduced.  It can be seen as a generalization of earlier works by Breckenridge and Valentini. Moreover, it is shown that each of the known stability based measures is an instance of such a novel paradigm. Surprisingly, also the Gap Statistics falls within the new paradigm. Moreover, from this general algorithmic paradigm it is simple to design new stability internal measure combining the building blocks of the measures detailed. As will be evident in this dissertation,  this particular category of internal validation measure obtains excellent results in terms of estimation of number of clusters in a dataset. In fact, prior to this study, they were perceived as a most promising avenue of research in the development of internal validation measures. Therefore, the identification of an algorithmic paradigm describing the entire class seems to be a substantial methodological contribution to that area.

\item[$\bullet$] Chapter~\ref{chap:NMF} provides a formal description of one of the methodologies that has gained prominence in the data analysis literature: Non-negative Matrix Factorization. In particular, of relevance for this thesis, is the use  of Non-negative Matrix Factorization as a clustering algorithm.

\item[$\bullet$] Chapter~\ref{chap:5} describes the experimental methodology used in this thesis. The experimental setup include, to the best of our knowledge, the most complete representative collection of datasets used in the Literature. In particular, this collection is composed of nine microarray datasets that seem to be a \emph{de facto} standard in the specialistic literature and three artificial dataset generated in order to evaluate specific aspects of the clustering methodology, when used on microarray data.
    \noindent Moreover, this chapter provides an exhaustive study of the three external indices detailed in this dissertation. Furthermore, a benchmarking of Non-negative Matrix Factorization as  a clustering algorithm on microarrays data. Such a benchmarking is novel and sheds further light on the use of Non-negative Matrix Factorization as a data mining tool in bioinformatics.

\item[$\bullet$] Chapter~\ref{chap:6} provides a benchmarking of some of the  most prominent  internal validation measures in order to  establish the intrinsic, as well as the relative, merit of a measure taking into account both its predictive power and its computational demand. This benchmarking shows that there is a natural hierarchy, in terms of the trade-off time/precision, for the measures taken into account. That is, the faster the measure, the less accurate it is. Although this study has been published only recently,  it is already being referenced, even \vir{back to back} to fundamental studies on clustering such as  the papers by D'haeseleer~\cite{Nature_Clustering} and Handl et al.~\cite{Handl05}.

\item[$\bullet$] Chapter~\ref{chap:7}, based on the benchmarking in Chapter~\ref{chap:6}, investigates systematically the application of the idea of algorithmic approximation to internal validation measures in order to obtain speedups. That is, the development of new methods that closely track the behavior of existing methods, but that are substantially faster in time. Such a systematic study seems to be quite novel. In this chapter, several approximation algorithms  and two general approximation schemes are proposed. In particular, an approximation of a \vir{star} of the area as Gap Statistics is proposed and it is shown that it grants clearly superior results. Indeed, depending on the dataset, it is from two to three orders of magnitude faster than the Gap Statistics,  with a better prediction of the correct number of clusters in a datasets. Finally, an approximation of Consensus Clustering it is also proposed. In terms of the trade-off time/precision, it turns out to be the best among all measures studied in this dissertation. Even more remarkably, it reduces the time performance gap between the fastest measures and the most precise to one order of magnitude. Prior to this work, the gap was at least two orders of magnitude.

\item[$\bullet$] Chapter~\ref{chap:8} offers some conclusions as well as some future lines of research for further development of the ideas presented in this dissertation.

\end{itemize}

\chapter{Background on Cluster Analysis}
\label{chap:1}



In this chapter, some fundamental aspects of cluster analysis are presented. In particular, the two main classes of clustering algorithms proposed in the literature are described. Moreover, methods for the assessment and evaluation of cluster quality are discussed as well as data generation/perturbation methods which can be applied to the former.\\

\section{Basic Mathematical Problem Formulations}\label{sec:basic}
Consider a  set of $n$ items $\Sigma=\{\sigma_1,\ldots, \sigma_n\}$, where  $\sigma_i$, with  $1 \leq i \leq n$, is defined by $m$ numeric values, referred to as features or
conditions. That is, each $\sigma_i$ is an element in a
$m$-dimensional space. Let $\mathcal{C}_k=\{c_{1},c_{2}, \ldots, c_{k}\}$ be a \emph{partition} of $\Sigma$, i.e.,
a set of subsets of $\Sigma$ such that $\bigcup_{i=1}^{k}c_{i}=\Sigma$ and $c_{i}\cap c_{j}=\emptyset$ for $1\leq i\neq j \leq k$. Each subset $c_{i}$, where $1\leq i\leq k$, is referred to as a \emph{cluster}, and $\mathcal{C}_k$ is referred to as a \emph{clustering solution}. The aim of cluster analysis is to determine a partition of $\Sigma$ according to a similarity/distance $S$, which is referred to as \emph{similarity/distance metric}. It is defined on the elements in $\Sigma$. In particular, one wants that items in the same cluster have \vir{maximal similarity}, while items in different clusters are \vir{dissimilar}.
For instance, an example comes from molecular data analysis~\cite{Speed03}, in which a set of genes are the items and the features are the expression level measurements in $m$ different experimental conditions or in $m$ different time periods. Clustering would highlight groups of genes that are, for instance, functionally correlated or that have the same response to medical treatments.

Usually, the set $\Sigma$ containing the items to be clustered is represented in one of two different ways: (1) a data matrix $D$, of size $n\times m$, in which the rows represent the items and the columns represent the condition values; (2) a similarity/dissimilarity matrix $S$, of size $n\times n$, in which each entry $S_{i,j}$, with $1\leq i\neq j \leq n$, is the value of similarity/dissimilarity of the pair $(i,j)$ of items. Specifically, the value of $S_{i,j}$ can be computed using rows $i$ and $j$ of  $D$. Hence, $S$ can be derived from $D$, but not viceversa.  The specification and formalization of a similarity metric, via mathematical functions, depends heavily on the application domain and it is one of the key steps in clustering, in particular in the case of microarray data. The state of the art,  as well as some relevant progress in the identification of good distance functions for microarrays, is presented in~\cite{Priness07}.


\section{Clustering Algorithms}
Usually, the partition of the items in $\Sigma$ is
accomplished by means of a clustering algorithm $A$. In this
chapter, only the class of clustering algorithms
that takes as input $D$ and an integer $k$ and return a partition
$\mathcal{C}_k$ of $\Sigma$ into $k$ subsets is taken in account. There is a rich literature about clustering algorithms, and there are many different classifications of them~\cite{JainDubes,KR90}. A survey of classic as well as more innovative clustering algorithms, specifically designed for microarray data, is given in~\cite{Roded4}. A classical classification is \emph{hierarchical} versus \emph{partitional} algorithms.
The hierarchical clustering algorithms produce a partition by a nested sequence of partitions and they are outlined in Section~\ref{sec:Hierarchical}, where three of them are detailed~\cite{JainDubes}, referred to as Average Link (Hier-A for short), Complete Link (Hier-C for short), and Single Link (Hier-S for short).
The partitional clustering algorithms directly decompose the dataset into a partition $\mathcal{C}_k$. One of the most prominent in that class, i.e., K-means~\cite{JainDubes}, is detailed in Section \ref{sec:Partional}.

\ignore{
 Moreover, we describe one  \ignore{We  both in the version that starts the clustering from a random partition of the data and in the version where it takes as part of the input an initial partition produced by one of the chosen hierarchical methods. The acronyms of those versions are K-means-R, K-means-A, K-means-C and K-means-S, respectively.}
}
\subsection{Hierarchical Algorithms}\label{sec:Hierarchical}

In hierarchical clustering, items are related by a tree\footnote{One assumes that the reader is familiar with that general concepts of graph theory such as trees and planar graphs. The reader is referred to standard references for an appropriate background~\cite{Bondy}.}  structure, referred to as \emph{dendogram} such that similar items are at the leaves of the same subtree. Each internal node represents a cluster and the leaves correspond to the items.
These algorithms can be either agglomerative (\vir{bottom-up}), in which one starts at the leaves and successively merges clusters together; or divisive (\vir{top-down}) in which one starts at the root and recursively splits the clusters. For instance, in Fig.~\ref{fig:dendogram}, a dendogram obtained by a run of Hier-A on a dataset of 24 elements according to the Euclidean distance is given. The partition $\mathcal{C}_4$ is marked.  In what follows, only the agglomerative approach is outlined. In particular, the Hier-A, Hier-C and Hier-S clustering algorithms are considered in this thesis. Each of them is an instance of the \codice{Hierarchical} paradigm described in Fig.~\ref{alg:Hierarchical} as a procedure. The interested reader is referred to~\cite{Ever,Hartigan,JainDubes} for an in depth treatment of the hierarchical clustering algorithms.

\begin{figure}[ht]
\centering
\epsfig{file=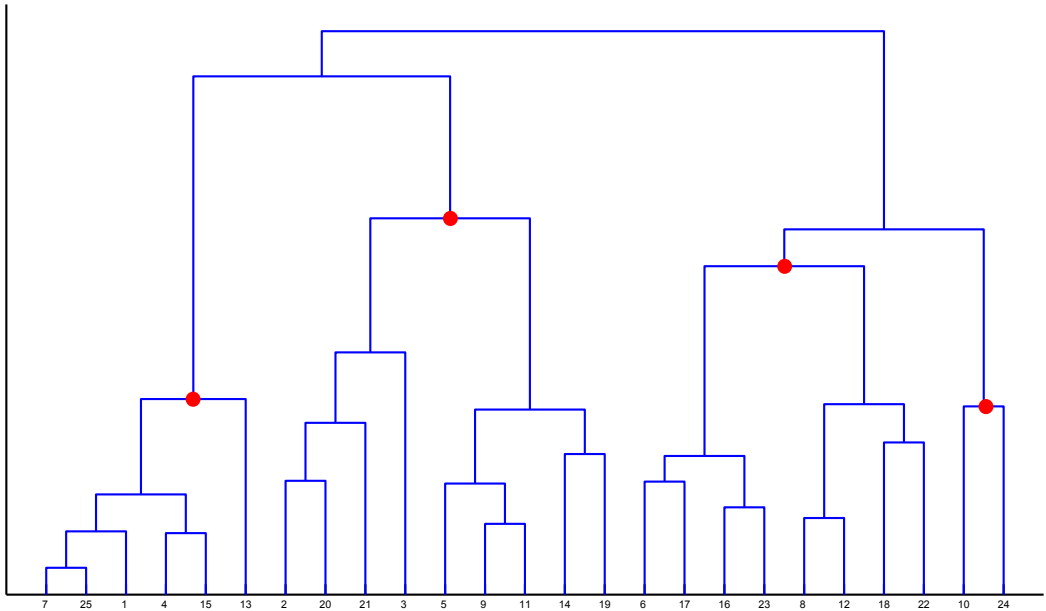,scale=0.4}
\caption{Example of a dendogram. The nodes in red indicate the partition $\mathcal{C}_4$ where the leafs (e.g. items) in the same subtree are in the same cluster.}\label{fig:dendogram}
\end{figure}

\codice{Hierarchical} takes as input a data matrix $D$ and the desired number $k$ of clusters.
\begin{figure}
\newlength{\mylength}
\[
\setlength{\fboxsep}{11pt}
\setlength{\mylength}{\linewidth}
\addtolength{\mylength}{-2\fboxsep}
\addtolength{\mylength}{-2\fboxrule}
\ovalbox{
\parbox{\mylength}{
\setlength{\abovedisplayskip}{0pt}
\setlength{\belowdisplayskip}{0pt}

\begin{pseudocode}{Hierarchical}{D, k}
1. \mbox{ Compute a similarity matrix } S\\
2. \mbox{ Initialize each item as a cluster}\\

\FOR i\GETS 1 \TO n-k \DO \\
\BEGIN

3. \mbox{ }\mbox{ } \mbox{ Select the two most \vir{similar} clusters}\\
4. \mbox{ }\mbox{ } \mbox{ Merge them}.\\

\END\\
\RETURN{clustering\_solution}
\end{pseudocode}

}
}
\]
\caption{The \codice{Hierarchical} procedure.}\label{alg:Hierarchical}
\end{figure}
In step 1, a similarity matrix $S$ is computed from the data matrix $D$. In step 2, each item is considered as a cluster, i.e., this step corresponds to the leaf level of the dendogram. One single iteration of the {\bf for} loop is discussed, which is repeated until $n-k$ steps are performed, i.e., until $k$ clusters are obtained. In step 3, the  most \vir{similar} clusters $c_{i}$ and $c_{j}$ are selected.
The \vir{similarity} between two clusters is measured by a distance function $Dist$. Therefore, if $c_{i}$ and $c_{j}$ are the most similar $Dist(c_{i},c_{j})$ is minimum.
In step 4, $c_{i}$ and $c_{j}$ are merged. Finally, the clustering solution obtained is given as output.

The three hierarchical clustering algorithms differ one from the other only for the distance function used to select the two clusters in step 3.

\noindent Hier-A computes the distance between two clusters $c_{i}$ and $c_{j}$ as the average of the values of the similarity metric between the items of $c_{i}$ and $c_{j}$, respectively. Formally:
$$
Dist(c_{i},c_{j})=\frac{1}{|c_{i}||c_{j}|}\sum\limits_{x \in c_{i}}\sum\limits_{y\in c_{j}}S_{x,\,y}
$$
\noindent where $|c_{i}|$ and $|c_{j}|$ are the sizes of the clusters $c_{i}$ and $c_{j}$, respectively.

\noindent Hier-C computes the distance between two clusters as the maximal item-to-item similarity matric value. Formally:
$$
Dist(c_{i},c_{j})=\underset{x\in c_{i},\, y\in c_{j}}{\mathop{\max }}\,S_{x,\,y}.
$$

\noindent Finally,  in Hier-S, the distance between two clusters is computed as the minimal item-to-item similarity matric value Formally:
$$
Dist(c_{i},c_{j})=\underset{x\in c_{i},\, y\in c_{j}}{\mathop{\min }}\,S_{x,\,y}.
$$

\subsection{Partitional Algorithms}\label{sec:Partional}

The goal of partitional clustering algorithms is to decompose directly the dataset into a set of disjoint clusters, obtaining a partition which should optimize a given objective function. Intuitively, the criteria one follows are to minimize the dissimilarity between items in the same cluster and to maximize the dissimilarity between items of different clusters.
Therefore, clustering can be seen as an optimization problem, where one tries to minimize/maximize an objective function. Unfortunately, it can be shown to be NP-Hard~\cite{GareyJohnson}. For completeness, the number of possible partitions can be computed via the Stirling numbers of the second kind~\cite{JainDubes}:
$$
\frac{1}{k!}\sum\limits_{i=0}^{k}{{{(-1)}^{k-i}}\left( \begin{matrix}
   k  \\
   i  \\
\end{matrix} \right){{i}^{n}}}.
$$
Even for small $k$ and $n$, it is such a substantially large number to discourage exhaustive search. Therefore, existing algorithms provide different heuristics to solve the various versions of clustering as an optimization problem~\cite{Hansen}.
Here, K-means~\cite{MacQueen} is detailed. The interested reader is referred to~\cite{JainDubes,KR90} for an in depth treatment of the partitional clustering algorithm.

In K-means, a cluster is represented by its \vir{center}, which is referred to as \emph{centroid}.
The aim of the algorithm  is to minimize a squared error function, i.e., an indicator of the distance of the $n$ data points from their respective cluster centers. Formally:
$$
\sum\limits_{j=1}^{k}\sum\limits_{x\in c_{j}}\|x-\overline{c_j}\|^{2},
$$

\noindent where $\overline{c_j}$ is the centroid of cluster $c_j$.

The procedure summarizing K-means is reported in Fig.~\ref{alg:Kmeans}. In addition to the input parameters of the \codice{Hierarchical} procedure, \codice{K-means} takes also the  maximum number of iterations allowed to the algorithm, which is referred to as $Niter$.
\begin{figure}
\[
\setlength{\fboxsep}{12pt}
\setlength{\mylength}{\linewidth}
\addtolength{\mylength}{-2\fboxsep}
\addtolength{\mylength}{-2\fboxrule}
\ovalbox{
\parbox{\mylength}{
\setlength{\abovedisplayskip}{0pt}
\setlength{\belowdisplayskip}{0pt}

\begin{pseudocode}{K-means}{D, k, Niter}
1. \mbox{ Extract, at random, } k \mbox{ items from } D \mbox{ and use them as centroids}\\
2. \mbox{ Assign each item of } D \mbox{ to the centroid with minimum distance}\\
3. \mbox{ }step \GETS 0\\
\WHILE (H \OR step>Niter) \DO \\
\BEGIN
4. \mbox{ }\mbox{ }step\GETS step + 1\\
    \mbox{ }\mbox{ } \mbox{ }\mbox{ }\FOREACH i \in D \DO\\
    \mbox{ }\mbox{ }    \mbox{ }\mbox{ }    \BEGIN
5.  \mbox{ }\mbox{ }    \mbox{ }\mbox{ }\mbox{ Compute the distance between } i \mbox{ and each centroid} \\
6.  \mbox{ }\mbox{ }    \mbox{ }\mbox{ }\mbox{ Assign } i \mbox{ to the centroid with minimum distance}\\
7.  \mbox{ }\mbox{ }\mbox{ }\mbox{ }\mbox{ Compute the new } k \mbox{ centroids}\\
    \mbox{ }\mbox{ }    \mbox{ }\mbox{ }\END\\
\END\\
\RETURN{clustering\_solution}
\end{pseudocode}

}
}
\]
\caption{The \codice{K-means} procedure.}\label{alg:Kmeans}
\end{figure}

In steps 1 and 2, the dataset $D$ is partitioned into $k$ clusters, by a random selection of the $k$ centroids.
One single iteration of the {\bf while} loop is discussed, which is repeated until at least one of the following two conditions is satisfied: (i) the clustering solution has not changed or (ii) the maximum number of iterations has been reached. The former condition is indicates in the procedure with $H$, while the latter condition prevents endless oscillations~\cite{JainDubes}.
The main part of the algorithm consists of steps 5-7,  where each item in $D$ is assigned to the centroids with minimum distance. Indeed, for each $i\in D$, with $1\leq i\leq n$,  the distance between $i$ and each centroid is computed in step 5. In step 6, item $i$ is assigned to the cluster whose distance from $i$ is minimal. Finally, the clustering solution is given as output. Figs.~\ref{fig:KmeansExample}(b)-(d) report an example of successive iterations of K-means, where the points in red are the two centroids. The algorithm is applied to the  dataset, with two well-separated clusters, reported in Fig.~\ref{fig:KmeansExample}(a).\\
\noindent Moreover, it is worth pointing out that a clustering solution obtained by another clustering algorithm (e.g. hierarchical) can be used as an initial clustering solution, instead of the random partition generated in steps 1 and 2, by the \codice{K-means} procedure. Accordingly, when \codice{K-means} starts the clustering from a random partition it is referred to as K-means-R, while when it starts from an initial partition produced by one of the chosen hierarchical methods it is referred to as K-means-A, K-means-C and K-means-S, respectively.

\begin{figure}[ht]
\begin{center}
\epsfig{file=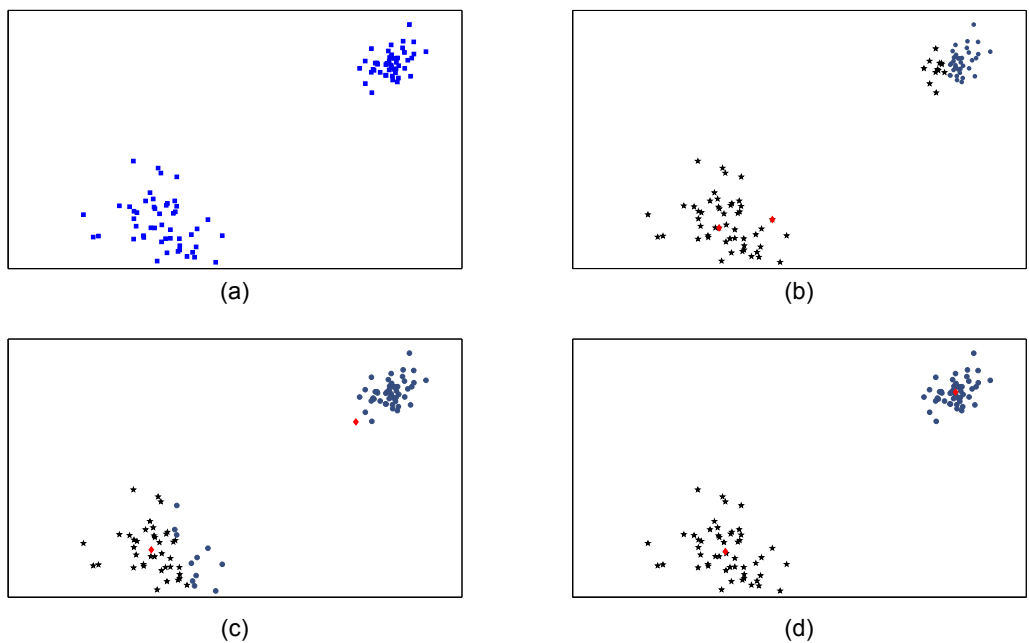,scale=0.6}
\end{center}
\caption{(a) Dataset. (b) Cluster membership after the first step. (c) Cluster membership after the second step. (d)  Cluster membership after the last step.}\label{fig:KmeansExample}
\end{figure}

\section{Assessment of Cluster Quality: Main Problems Statement}\label{sec:asseClusterQuality}

In bioinformatics, a sensible biological
question would be, for instance, to find out how many functional
groups of genes are present in a dataset. Since the presence of \vir{statistically
significant patterns} in the data is usually an indication of their
biological relevance~\cite{zscore}, it makes sense to ask whether a
division of the items into groups is statistically significant. In what follows, the three problem statements in which that question can be cast~\cite{JainDubes} are detailed.

Let $C_j$ be a reference classification for $\Sigma$ consisting of $j$
classes. That is, $C_j$ may either be a partition of $\Sigma$ into $j$
groups, usually referred to as the \emph{gold standard}, or a
division of the universe generating $\Sigma$ into $j$ categories, usually
referred to as \emph{class labels}. An \emph{external index} $E$
is a function that takes as input a reference classification $C_j$
for $\Sigma$ and a partition $P_k$ of $\Sigma$ and returns a value assessing
how close the partition is to the reference classification. It is
external because the quality assessment of the partition is
established via criteria external to the data, i.e., the reference
classification. Notice that it is not required that $j=k$. An
\emph{internal index} $I$ is a function defined on the set of all
possible partitions of $\Sigma$ and with values in $\mathds{R}$. It should
measure the quality of a partition according to some suitable
criteria. It is internal because the quality of the partition is
measured according to information contained in the dataset without
resorting to external knowledge. The first two problems are:

\begin{itemize}

\item [(\bf Q.1)] Given $C_j$, $P_k$ and $E$,  measure how far is
$P_k$ from $C_j$, according to $E$.\\

\item [(\bf Q.2)] Given $P_k$ and $I$, establish whether the value of $I$ computed on $P_k$ is unusual and therefore surprising.  That is, significantly small or significantly large.
\end{itemize}

Notice that the two questions above try to assess the quality of a clustering solution $P_k$ consisting of $k$ groups, but they give no indication on what the \vir{right number} of clusters is. In order to get such an indication, one is interested in the following:

\begin{itemize}

\item[({\bf Q.3})] Given: (Q.3.a) A sequence of clustering solutions $P_1, \ldots, P_s$, obtained for instance via repeated application of a clustering algorithm $A$; (Q.3.b) a function $R$, usually referred to as a \emph{relative index},  that estimates the relative merits of a set of clustering solutions. One is interested in identifying  the partition $P_{k^*}$ among the ones given in (Q.3.a) providing the best value of $R$. In what follows, the optimal number of clusters according to $R$ is referred to as $k^*$.

\end{itemize}

The clustering literature is extremely rich in mathematical
functions suited for the three problems outlined above
~\cite{Handl05}. The crux of the matter is to establish
quantitatively the threshold values allowing one to say that the
value of an index is significant enough. That naturally leads to briefly mention hypothesis testing in statistics, from which one
can derive procedures to assess the statistical significance of an
index. As will be evident in the following sections, those
procedures are rarely applied in microarray data analysis,
being preferred to less resource-demanding heuristics that are
validated experimentally.


\section{Cluster Significance for a Given
Statistic: a General Procedure}\label{sec:statistics}

A \emph{statistic} $T$ is a function of the data capturing useful information about it, i.e., it can be one of the indices mentioned earlier. In mathematical terms, it is  a random variable and its distribution describes the relative frequency with which values of $T$ occur, according to some assumptions. In turn, since $T$ is  a random variable, one implicitly assumes the existence of a background or reference probability distribution for its values. That implies the existence of a sample space. A \emph{hypothesis} is a
statement about the frequency of events in the sample space. It is tested by observing a value of $T$ and by deciding how unusual it is, according to the probability distribution one is assuming for the sample space. In what follows, one assumes that the higher the value of $T$, the more unusual it is, the symmetric case being dealt with similarly.

The most common hypothesis tested for in clustering is the
\emph{null hypothesis} $H_0$: there is no structure in the data (i.e. $k = 1$).
Testing for $H_0$ with a statistic $T$ in a dataset $D$ means to compute $T$ on $D$ and then decide whether to reject  or not to reject $H_0$. In order to decide,  one needs to establish how significant is the value found with respect to a background probability distribution  of the statistic $T$ under $H_0$. That means one has to formalize  the concept of \vir{no structure} or \vir{randomness} in the data. Among the many possible ways, generally referred to as \emph{null models}, the
most relevant proposed in the clustering literature~\cite{Bock1985,Gordon,JainDubes,Sarle1983} are introduced, together with an identification of which one is well suited for microarray data analysis~\cite{CLEST,Tibshirani}:

\paragraph{Unimodality Hypothesis.} A new dataset $D^{\prime}$ is generated as follows: the variables describing the items are randomly selected from a unimodal distribution (e.g. normal). This null model typically is not applied to microarray data, since it gives a high probability of rejection of the null hypothesis. For instance, that happens when the data are sampled from a distribution with a lower kurtosis than the normal distribution, such as the uniform distribution~\cite{Sarle1983}. Fig.~\ref{fig:Null}(a) reports an example of a dataset generated via the unimodality hypothesis.

\paragraph{Random Graph Hypothesis.}
The entries of the dissimilarity/distance matrix $S$ are random.
That is, one assumes that, in terms of a linear order relation capturing proximity, all the entries of
the lower triangular part of $S$ are equally likely, i.e., $S_{i,j}=\frac{1}{[n(n-1)/2]!}$ for $1\leq i\leq n$ and $1\leq j\leq i$. This null model is not applied to microarray data, since it does not preserve the distances that may be present among items.

\paragraph{Random Label Hypothesis.} All permutations of the items are equally likely with respect to some characteristic, such as a \emph{priori} class membership. In order to use this model, one needs to specify the a priori classification of the data. Each permutation has a probability $\frac{1}{n!}$.  In particular, for microarray data, it coincides with the so called \emph{Permutational Model}, detailed in what follows.\\
\begin{itemize}

\item \emph{Permutational Model} ({\tt Pr} for short), it generates a random data matrix by randomly permuting the elements within the rows and/or the columns of $D$. Some variants of this model have been studied for binary pattern matrices~\cite{Harper,Strauss,Vassiliou1989}. In order to properly implement this model, care must be taken in specifying a proper permutation for the data, since some similarity and distance functions are insensitive to permutations of coordinates. That is, although $D'$ is a random permutation of $D$, it may happen that the distance or similarity among the points in $D^{\prime}$ is the same as in $D$,  resulting in indistinguishable datasets for clustering algorithms. This latter model may not be suitable for microarray data with very small sample sizes (conditions), since one will not obtain enough observations (data points) to estimate the null model, even if one generates all possible permutations.

\end{itemize}

\paragraph{Random Position Hypothesis.}  The items can be represented by points that are randomly drawn from a region $\mathcal{R}$ in $m$-dimensional space. In order to use this model, one needs to specify the region within which the points have to be uniformly distributed. Two instances applied to microarray data~\cite{CLEST,giancarlo08,Tibshirani} are distinguished:
\begin{itemize}

\item \emph{Poisson Model} ({\tt Ps} for short), where the region $\mathcal{R}$ is specified from the data. The simplest regions that have been considered are the $m$-dimensional hypercube and hypersphere enclosing the points specified by the matrix $D$~\cite{Gordon}. Another possibility, in order to make the model more data-dependent, is to choose the convex hull enclosing the points specified by $D$. Fig.~\ref{fig:Null}(b) reports an example of a dataset $D'$ generated by {\tt Ps}, where the region $\mathcal{R}$ (the box in red) is obtained from the dataset $D$ reported in Fig.\ref{fig:KmeansExample}(a).

\item \emph{Poisson Model Aligned with Principal Components of the Data} ({\tt Pc} for short), where Tibshirani et al.~\cite{Tibshirani}, following Sarle~\cite{Sarle1983}, propose to align the region $\mathcal{R}$ with the principal components of the data matrix $D$. In detail, assuming that the columns of $D$ have mean zero, let $D=UXV^T$ be its singular value decomposition . Let $\widehat{D}=DV$. One uses $\widehat{D}$ as in {\tt Ps} to obtain a dataset $\widehat{D}'$. Then one back transforms via $D'=\widehat{D}'V^T$ to obtain the new dataset. Fig.~\ref{fig:Null}(c) reports an example of a dataset $D'$ generated by {\tt Pc}, where the region $\mathcal{R}$ (the box in red) is obtained from the dataset $D$ reported in Fig.\ref{fig:KmeansExample}(a).

\end{itemize}

It is worth pointing out that in a multivariate situation, one is not able to choose a generally applicable and useful reference distribution: the geometry of the particular null distribution matters~\cite{Sarle1983,Tibshirani}. Therefore, in two or more dimensions, and depending on the test statistic, the results can be very sensitive to the region of support of the reference distribution~\cite{CLEST,Sarle1983}.

\begin{figure}[ht]
\centering
\epsfig{file=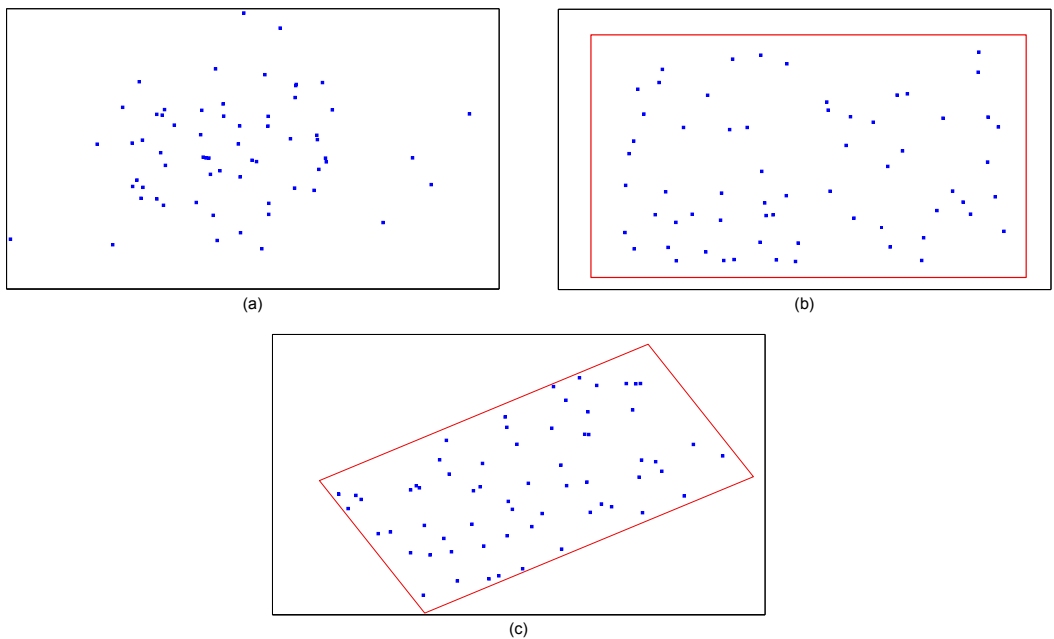, scale=0.19}
\caption{Dataset generated via (a) the Unimodality Hypothesis; (b) the Poisson Null Model; (c) the Poisson Null Model Aligned with Principal Components of the Data.}\label{fig:Null}
\end{figure}

\medskip

\medskip

Once a null model has been agreed upon, one would like to
obtain formulas giving the value of $T$ under
the null model and for a specific set of parameters. Unfortunately,
not too many such formulae are available. In fact, in most cases,
one needs to resort to a Monte Carlo simulation applied to the context of assessing the significance
of a partition of the data into $k$ clusters. In technical terms, it
is a p-value test assessing whether in the dataset there exist $k$
clusters, based on  $T$ and the null model for $H_0$.  It is referred  to as \codice{MECCA}, an abbreviation for Monte Carlo Confidence Analysis, and it is described in Fig.~\ref{alg:Mecca}.
The procedure is also a somewhat more general version of a
significance test proposed and studied by Gordon
~\cite{Gordon1,Gordon} for the same problem. It takes as input an
integer $\ell$ (the number of iterations in the Monte Carlo
simulation),  a clustering algorithm $A$, a dataset $D$, the
function $T$, a partition $P_k$ of $D$ obtained via algorithm $A$
and a parameter $\alpha \in [0,1]$ indicating the level of
\vir{significance} for the rejection of $H_0$. It returns a value
$p \in [0, 1]$. If $ p <\alpha$, the null hypothesis of no cluster
structure in the data is to be rejected at significance level
$\alpha$. Else, it cannot be rejected at that significance level.
\medskip

\begin{figure}
\[
\setlength{\fboxsep}{12pt}
\setlength{\mylength}{\linewidth}
\addtolength{\mylength}{-2\fboxsep}
\addtolength{\mylength}{-2\fboxrule}
\ovalbox{
\parbox{\mylength}{
\setlength{\abovedisplayskip}{0pt}
\setlength{\belowdisplayskip}{0pt}

\begin{pseudocode}{MECCA}{\ell,  A, D, T, P_k, \alpha}
\FOR i\GETS 1 \TO \ell\DO\\
\BEGIN
1.\mbox{ }\mbox{ }\mbox{ Compute a new data matrix } D_i\mbox{, by using the chosen null model.}\\
2.\mbox{ }\mbox{ }\mbox{ Partition } D_i \mbox{ into a set of } k\mbox{ clusters } P_{i,k} \mbox{ by using the algorithm } A.\\
\END\\
\FOR i\GETS 1 \TO \ell\DO\\
\BEGIN
3.\mbox{ }\mbox{ }\mbox{ Compute  } T \mbox{ on } P_{i,k}.\\
4.\mbox{ }\mbox{ }\mbox{ Let }SL\mbox{ be the non-decreasing sorted array of $T$ values.}\\
\END\\
5.\mbox{ Let } V \mbox{ denote the value of } T \mbox{ computed on } P_k.\\
6.\mbox{ Let } p \mbox{ be the proportion of the values in } SL \mbox{ larger
than } V.\\
\RETURN{p}
\end{pseudocode}

}
}
\]
\caption{The \codice{MECCA} procedure.}\label{alg:Mecca}
\end{figure}

A few remarks are in order. As pointed out by Gordon, significance
tests aiming at assessing how reliable is a clustering solution are
usually not carried out in data analysis. Microarrays are no
exception, although sophisticated statistical techniques specific for those data have been designed (e.g.~\cite{Dudoit2003}). One of the reasons is certainly their high computational demand.
Another, more subtle, reason is that researchers expect that
\vir{some structure} is present in their data.  Nevertheless, a
general procedure, like \codice{MECCA}, is quite useful as a paradigm
illustrating how one tries to assess cluster quality via a null
model and a statistic $T$. Indeed, one computes the observed value
of $T$ (on the real data). Then, one computes, via a Monte Carlo
simulation, enough values of $T$, as expected from the formalization
of $H_0$ via the null model. Finally, one checks how \vir{unusual}
is the value of the observed statistic with respect to its expected
value, as estimated by a Monte Carlo simulation. In Chapters \ref{chap:ValidationMeasuers} and \ref{chap:Stability}, two
methods that one describes in this thesis resort to the same
principles and guidelines of \codice{MECCA}, although they are more
specific about the statistic that is relevant in order to
identify the number of clusters in a dataset.

\section{Data Generation/Perturbation Techniques}\label{sec:Perturbation}

The null models described in Section~\ref{sec:statistics} can be seen as an instance of a very general procedure, in what follows referred to as \codice{DGP}, that generates new datasets from a given one. Such a procedure takes as input a dataset $D$, of size $n\times m$, together with other parameters and returns a new dataset $D^{\prime}$ of size $n^{\prime} \times m^{\prime}$, with $n^{\prime}\leq n$ and $m^{\prime}\leq m$. In this section, additional instances of \codice{DGP} are detailed. They are used in microarray data analysis to generate data points in order to compute a cluster quality measure.
Each of them can be thought of as a paradigm in itself and therefore in this section  only an outline is provided.

\subsection{Subsampling}\label{subsec:Subsampling}

The simplest way to generate a new dataset $D^{\prime}$ from $D$ is to take random samples from it.  Although simple, this approach critically depends on whether the sampling is performed without or with replacement. The first type of method is referred to as \emph{subsampling}. It is widely used in clustering and briefly discussed here. The second method is referred to as {\em bootstrapping} and, although fundamental in statistics~\cite{ET93}, it is hardly used in cluster validation as pointed out and discussed in~\cite{JainDubes,Monti03}.

Formally, a subsampling procedure takes as input a dataset $D$ and a parameter $\beta$, with $0<\beta<1$, and gives as output a percentage $\beta$ of $D$, i.e., the dataset $D^{\prime}$ has size $n^{\prime}\times m$, with $n^{\prime}=\lceil\beta n\rceil$.
$D^{\prime}$ is obtained via the extraction of $n^{\prime}$ items (i.e. rows) from $D$, which are usually selected uniformly and at random.

The aim of subsampling procedures is to generate a reduced dataset $D^{\prime}$ that captures the structure (i.e. the right number of clusters) in the original data. Intuitively, both the chances to achieve that goal and the time required by procedures using $D^{\prime}$ increase with $\beta$. In order to have a good trade-off between the representativeness of $D^{\prime}$ and the speed of the methods using it, a value of $\beta \in [0.6,0.9]$ is used in the literature (e.g.~\cite{BenHur02,CLEST,FC,Monti03}).

The subsamping technique does not guarantee that each cluster of $D$ is represented in $D^{\prime}$, i.e., the random extraction could not select any elements of a given cluster. Therefore, Hansen et al.~\cite{Hansen93} propose a heuristic referred to as \emph{proportionate stratified sampling} as an alternative that may take care of the mentioned problem. In that case, $D^{\prime}$ is generated first by clustering $D$ and then by selecting a given percentage $\beta$ of the elements in each cluster. Proportionate stratified sampling gives no formal guarantee that the entire cluster structure of $D$ is present in $D^{\prime}$.

\subsection{Noise Injection}\label{subsec:NoiseInjection}

{\em Noise injection} is a widely applied perturbation methodology in computer science (see ~\cite{noise1,bittner,Kerr00bootstrappingcluster,MCShane02,Raviv96bootstrappingwith,wolfinger01} and reference therein). However, it is not widely applied in clustering. The main idea is to generate $D^{\prime}$ by adding a random value, i.e., a \vir{perturbation}, to each of the elements of $D$. Perturbations are generated  via some random process, i.e., a probability distribution whose parameters can be directly estimated from $D$. In a study about melanoma patients, Bittner et al.~\cite{bittner} propose to perturb the original dataset by adding Gaussian noise to its elements  in order to assess cluster stability.  Following up, Wolfinger et al.~\cite{wolfinger01} report that  perturbing the data via a Gaussian distribution provides good stability results for several microarray datasets. As for parameter estimation, McShane et al.~\cite{MCShane02} propose to compute the variance of experiments in each row of $D$ and then to use the median of the observed distribution as the variance in the Gaussian distribution.

\subsection{Dimensionality Reduction Methods}\label{subsec:SubspaceMethods}

The methodology described in this section is, in most cases,  the
dual of subsampling, since the main idea is to obtain $D^{\prime}$
by reducing the number of columns of $D$ while trying to preserve
its cluster structure. Since each element,  i.e. row,  of $D$ is a
point in  $m$-dimensional space, one has a dimensionality reduction
in the data.

A well established dimensionality reduction method in data analysis is the \emph{Principal Component Analysis} (PCA for short)~\cite{JainDubes}. Although it is a standard method for dimensionality reduction from a statistical point of view, it is not used in conjunction with stability-based internal validation measures (detailed in Chapter~\ref{chap:Stability}) because of its determinism in generating  $D^{\prime}$. Note, however, that the main idea of principal components is used in conjunction with  null models (see for example the (M.2) model described in Section~\ref{sec:statistics}).

The following three techniques of dimensionality reduction seem to be of use in this area.
The first one is rather trivial since it consists of randomly
selecting the columns of $D$ (cf.~\cite{SmolkinGhosh}). However, the
cluster structure of $D$ is unlikely to be preserved and this
approach may introduce large distortions into gene expression data,
which then result in the introduction of biases into stability
indices (detailed in Chapter~\ref{chap:Stability}),  as reported in~\cite{BertoniV06}. More sensible approaches
for dimensionality reduction are Non-negative Matrix Factorization (NMF for short) and randomized dimensionality reduction techniques
reported in the following. This latter is discussed next, while discussion of the former is given in Chapter~\ref{chap:NMF}, while an exhaustive benchmarking of NMF as a clustering algorithm is provided in Chapter~\ref{chap:5}.

\subsubsection{Randomized Dimensionality Reduction}

The technique consists of the use of a family of transformations that try to preserve the distance
with an \vir{$\varepsilon$ distortion level} between the elements of $D$.
Intuitively, if two elements are \vir{close} in $D$, according to
some distance function $d$, they should be \vir{close} in $D^{\prime}$. Let $f$ be a transformation from $D$ in $D'$,  $f(\sigma_i)$ and
$f(\sigma_j)$ be the projections of two elements $\sigma_i$ and
$\sigma_j$ of $D$ into $D^{\prime}$. Let
$$
d_{f}(\sigma_i,\sigma_j)=\frac{d(\sigma_i,\sigma_j)}{d(f(\sigma_i),f(\sigma_j))}
.$$

If $d_{f}=1$ the distance of the two elements is preserved. When $1-\varepsilon \leq d_{f}(\sigma_i,\sigma_j) \leq 1 + \varepsilon$, one says that the function $f$ preserves the distance with an \vir{$\varepsilon$ distortion level}. The Johnson-Lindenstrauss Lemma and \emph{random projections} are the keys to all the randomized dimensionality reduction based techniques. Intuitively, for a fixed distortion level $\varepsilon$, the Johnson-Lindenstrauss Lemma gives nearly optimal bounds to the value of $m^{\prime}$ (cf.~\cite{Alon08}), formally:

\begin{lemma}[Johnson-Lindenstrauss~\cite{JL}]
For any $0<\varepsilon<1$ and any integer $n$, let $m^{\prime}$ be a positive integer such that
$$
m^{\prime}> 4 (\varepsilon^2/2 - \varepsilon^2/3)^{-1}\log n.
$$

For any set $V$ of $n$ points in $\mathds{R}^{m}$ there is a map function $f:\mathds{R}^{m}\rightarrow\mathds{R}^{m^{\prime}}$ such that for all $u,v \in V$

$$
(1-\varepsilon) \|u-v\|^{2} \leq \|f(u)-f(v) \|^{2} \leq (1 + \varepsilon) \|u-v\|^{2}
$$

\end{lemma}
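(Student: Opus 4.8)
The plan is to use the probabilistic method: rather than constructing $f$ explicitly, I would exhibit a random linear map and show that it has the desired distortion property with positive probability, hence some such map exists. Concretely, I would take $f(u) = \frac{1}{\sqrt{m'}} A u$, where $A$ is an $m' \times m$ matrix whose entries are independent standard Gaussian random variables $N(0,1)$ (the normalization $1/\sqrt{m'}$ makes $f$ an isometry in expectation). The key reduction is that, by linearity, it suffices to control $\|f(x)\|^2$ for a fixed unit vector $x = (u-v)/\|u-v\|$; one then applies this to all $\binom{n}{2}$ difference vectors and takes a union bound.

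The heart of the argument is the concentration estimate: for a fixed unit vector $x \in \mathds{R}^m$, the random variable $Z = \|A x\|^2 = \sum_{i=1}^{m'} (A_i \cdot x)^2$ is a sum of $m'$ independent squared standard Gaussians (each $A_i \cdot x \sim N(0,1)$ since $x$ is a unit vector), so $Z \sim \chi^2_{m'}$ with $\mathbb{E}[Z] = m'$. I would then prove the two-sided tail bound
$$
\Pr\!\left[ Z \le (1-\varepsilon) m' \right] \le \exp\!\left( -\tfrac{m'}{2}(\varepsilon^2/2 - \varepsilon^3/3) \right), \qquad
\Pr\!\left[ Z \ge (1+\varepsilon) m' \right] \le \exp\!\left( -\tfrac{m'}{2}(\varepsilon^2/2 - \varepsilon^3/3) \right),
$$
via the standard Chernoff/Laplace-transform method applied to the $\chi^2$ moment generating function $\mathbb{E}[e^{tZ}] = (1-2t)^{-m'/2}$, optimizing over $t$ and using the inequality $\log(1+x) \le x - x^2/2 + x^3/3$ for the relevant range. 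This gives that $f$ distorts the norm of a single fixed vector by a factor outside $[1-\varepsilon, 1+\varepsilon]$ with probability at most $2\exp(-\tfrac{m'}{2}(\varepsilon^2/2 - \varepsilon^3/3))$.

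Finally, I would apply this to each of the fewer than $n^2$ vectors $u - v$ with $u,v \in V$ and take a union bound: the probability that $f$ fails to satisfy $(1-\varepsilon)\|u-v\|^2 \le \|f(u)-f(v)\|^2 \le (1+\varepsilon)\|u-v\|^2$ for \emph{some} pair is at most
$$
n^2 \cdot 2 \exp\!\left( -\tfrac{m'}{2}(\varepsilon^2/2 - \varepsilon^3/3) \right) = 2 \exp\!\left( 2\log n - \tfrac{m'}{2}(\varepsilon^2/2 - \varepsilon^3/3) \right),
$$
which is strictly less than $1$ precisely when $m' > 4(\varepsilon^2/2 - \varepsilon^3/3)^{-1}\log n$ — the hypothesis of the lemma. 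Hence with positive probability the random map works, so a suitable $f$ exists. I expect the main obstacle to be the careful bookkeeping in the Chernoff bound for the $\chi^2$ distribution: one must choose the optimal exponent parameter and bound the resulting logarithmic expression cleanly enough to recover exactly the constant $4$ and the $\varepsilon^2/2 - \varepsilon^3/3$ factor in the statement, rather than a weaker constant. Everything else — linearity, rotational invariance of the Gaussian, the union bound — is routine.
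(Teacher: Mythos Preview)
Your proposal is correct and follows the standard Dasgupta--Gupta argument (random Gaussian projection, $\chi^2$ concentration via the Chernoff method, union bound over $\binom{n}{2}$ pairs). However, the paper does \emph{not} prove this lemma at all: it merely states the result and immediately refers the reader to \cite{DasguptaG03,IndykMotwani} for proofs, so there is nothing in the paper to compare your argument against. Your approach is precisely the one in the first of those references.

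One minor point: the lemma as printed in the paper has $\varepsilon^2/2 - \varepsilon^2/3$ in the bound on $m'$, which is almost certainly a typo for $\varepsilon^2/2 - \varepsilon^3/3$ (as in Dasgupta--Gupta, and as you correctly write throughout your proposal). Your computation recovers the right constant; just be aware that matching the statement literally would be impossible since $\varepsilon^2/2 - \varepsilon^2/3 = \varepsilon^2/6$ gives a different (and incorrect) dependence.
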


The interested reader will find two independent simplified versions of the proof of the above Lemma in~\cite{DasguptaG03,IndykMotwani} and extensions to other spaces and distances in~\cite{AilonChazelle,BhattacharyaKP09,CormodeDIM03,JohnsonN09}.
It is possible to determine a function $f$ that satisfies the Lemma with high probability, e.g., at least 2/3, in randomized polynomial time~\cite{DasguptaG03,IndykMotwani}.

Since the projection into the new smaller space is a time consuming task, several heuristics have been proposed in the literature. Some of them are based on sparse projection matrices~\cite{Achlioptas,BinghamMannila}, while a more innovative and  recent approach has been proposed by Ailon and Chazelle~\cite{AilonChazelle} with the addition of  the  Fast Fourier Transform to the Johnson-Lindenstrauss Lemma. In conclusion, it is also worthy of mention that the dimensionality reduction techniques described here are tightly connected to problems as approximate nearest neighbor~\cite{AndoniIndyk08,IndykMotwani} of relevance also for clustering.

\ignore{
\subsubsection{Nonnegative Matrix Factorization}

Nonnegative matrix factorization (NMF for short) is introduced by Lee and Seung~\cite{NMF,lee00algorithms} and generalize the positive matrix factorization from Paatero and Tapper~\cite{PaateroTapper}. Formally, given an $n\times m$ nonnegative matrix $V$ where $m$ is the number of conditions and $n$ is the number of examples in the dataset (i.e. $V$ is the transpose of $D$). This matrix is then approximately factorized into an $n \times r$ matrix $W$ and an $r\times m$ matrix H, i.e., $V\equiv WH$ (see Fig.~\ref{fig:NMFexample}).
Usually $r$ is chosen to be smaller than $n$ or $m$, so that $W$ and $H$ are smaller than the original matrix $V$. This results in a compressed version of the original data matrix.
By convention, the columns of $W$ are considered to be the components of $V$, while the matrix $H$ is said to contain the coefficients or \vir{weights} that indicate the linear combination of components which approximate each column of $V$. That is, $W$ contains the basis of the new space of size $r$ and $H$ contains the coefficient.
NMF has been primarily applied in an unsupervised setting in image and natural language processing~\cite{DingNMF,NMF,PatrikNMF,Wild}. More recently, it has been successfully utilized in a variety of applications in computational biology~\cite{BrunetNMF,Devarajan,KimPark,WangKO06}. In terms of the subject of this manuscript, it is used as a clustering algorithm rather than a dimensionality reduction techniques~\cite{BrunetNMF} (see also Section~\ref{subsec:Consensus}).

\ignore{
\begin{figure}
\begin{center}
\includegraphics[scale=0.75]{img/nmf.jpg}
\caption{An example of NMF factorization.}\label{fig:NMFexample}
\end{center}
\end{figure}
}
}

\chapter{Fundamental Validation Indices}\label{chap:ValidationMeasuers}

In this chapter, some basic validation techniques are presented. In detail, external and internal indices are outlined. In the scholarly literature, the terms index is also referred to as measure. Following the literature, in this dissertation both nomenclatures are used. The external and internal indices differ fundamentally in their aims, and find application in distinct experimental settings. In particular, three external indices that assess the agreement between two partitions are presented. Moreover, four internal measures useful to estimate the correct number of clusters present in a dataset, based on: compactness, hypothesis testing in statistics and jackknife techniques are also discussed.
One of the topics of this thesis is the study of a relevant paradigm of internal validation measure based on the notion of cluster stability. For this reason this paradigm and the relative instances are thoroughly discussed in Chapter~\ref{chap:Stability}.

\section{External Indices}\label{sec:ext}
In  this section three external indices, namely formulae, are defined. Such measures establish the level of agreement between two partitions. Usually, for a given dataset, one of the partitions is a reference classification of the data while the other one is provided as output by a clustering algorithm.

Let $C=\{c_1,
\ldots,c_r\}$ be a partition of the items in $\Sigma$ into $r$ classes
and $P=\{p_1, \ldots,p_t\}$ be another partition of $\Sigma$ into $t$
clusters. With the notation of Section~\ref{sec:asseClusterQuality}, $C$ is an
external partition of the items, derived from the reference
classification, while $P$ is a partition obtained by some clustering
method. Let $n_{i,j}$ be the number of items in both $c_i$ and
$p_j$, $1 \leq i \leq r$ and $1 \leq j \leq t$. Moreover, let
$|c_i|=n_{i.}$ and $|p_j|=n_{.j}$. Those values can be conveniently
arranged in a \emph{contingency table} (see Table
~\ref{ContingencyTable}).

\subsection{Adjusted Rand Index}\label{sec:AdjRand}

Let $a$ be the number of pairs of items that are placed in the same class in  $C$ and in the same cluster in $P$; let $b$ be the number of pairs of items  placed in the same class in $C$ but not in the same class in $P$; let $c$ be the number of pairs of items in the same cluster in $P$ but not in the same cluster in $C$; let $d$ be the number of pairs of items in different classes and different clusters in both partitions.  The information needed to compute $a$, $b$, $c$ and $d$ can be derived from  Table~\ref{ContingencyTable}. One has:

\begin{table}
\begin{center}
\begin{tabular}{|c|c c c c|c|}
\hline $Class\setminus Cluster$ & $p_{1}$ & $p_{2}$ & \dots & $p_{t}$ & Sums \\
\hline
 $c_{1}$ & $n_{1,1}$ & $n_{1,2}$ & \dots & $n_{1,t}$ & $n_{1.}$ \\
 $c_{2}$ & $n_{2,1}$ & $n_{2,2}$ & \dots & $n_{2,t}$ & $n_{2.}$ \\
\vdots & \vdots & \vdots & & \vdots &  \vdots \\
 $c_{r}$ & $n_{r,1}$ & $n_{r,2}$ & \dots & $n_{r,t}$ & $n_{r.}$ \\
\hline Sums & $n_{.1}$ & $n_{.2}$ & \dots & $n_{.t}$ & $n_{..}=n$\\
 \hline
\end{tabular}
\medskip
\medskip
\caption{Contingency table for comparing two
partitions}\label{ContingencyTable}
\end{center}
\end{table}

\begin{equation}
a=\sum_{i,j}{n_{i,j} \choose 2},
\end{equation}

\begin{equation}
b=\sum_{i}{n_{i.} \choose 2}-a,
\end{equation}

\begin{equation}
c=\sum_{j}{n_{.j} \choose 2}-a.
\end{equation}

\noindent Moreover, since $a+b+c+d= {n \choose 2}$, one has:
\begin{equation}
d={n \choose 2}-(a+b+c).
\end{equation}

Based on those quantities,  the {\tt Rand} index $R$ is defined as
~\cite{rand_ref}:

\begin{equation}\label{eq:rand}
R=\frac{a+d}{a+b+c+d}
\end{equation}

Notice that,  since $a+d$ is the number of pairs of items in which there is agreement between the two partitions,  $R$ is an index of agreement of the two partitions with value in  $[0,1]$. The main problem with $R$ is that its value on two partitions picked at random does not take a constant value, say zero. So, it is difficult to establish, given two partitions, how significant (distant from randomness) is the concordance  between the two partitions, as measured by the value of $R$. In general, given an index, it would be appropriate to take an
adjusted version ensuring that its expected value is zero when the partitions are selected at random and one when they are identical.
That can be done according to the following general scheme:

\begin{equation}
\frac{index- expected\mbox{ }index}{maximum\mbox{
}index-expected\mbox{ }index}\label{eq:genext}
\end{equation}

\noindent where $maximum\mbox{
}index$ is the maximum value of the $index$ and  $expected\mbox{ }index$ is its
expected value derived under a suitably chosen model of random
agreement between two partitions, i.e. the null hypothesis. The {\tt Adjusted Rand} Index $R_A$ is derived from \eqref{eq:rand} and \eqref{eq:genext} using the generalized hypergeometric distribution as the null hypothesis. That is, it is assumed that the row and column sums in Table~\ref{ContingencyTable} are fixed, but the two partitions are picked at random. One has~\cite{AdjRand0}:

$$R_A=\frac{\sum\limits_{i,j}{n_{i,j} \choose 2} - \frac{\left[\sum\limits_{i}{n_{i.} \choose 2}\sum\limits_{j}{n_{.j} \choose 2}\right]}{{n \choose 2}}}{ \frac {1}{2}\left[\sum\limits_{i}{n_{i.} \choose 2}+\sum\limits_{j}{n_{.j} \choose 2}\right]-\frac{\left[\sum\limits_{i}{n_{i.} \choose 2}\sum\limits_{j}{n_{.j} \choose 2}\right]}{{n \choose 2}}}$$

$R_A$ has a maximum value of one, when there is a perfect agreement between the two partitions, while its expected value of zero indicates a level of agreement due to chance. Moreover, $R_A$ can take on a larger range of values with respect to $R$ and, in particular,  may be negative~\cite{KaYeeDiss}. Therefore, the two partitions are in significant agreement if $R_A$ assumes a non-negative value, substantially away from zero. Notice that $R_A$ is a statistic on the level of agreement of two partitions of a dataset (see Section~\ref{sec:statistics}) while $R$ is a simple indication of percentage agreement.
To illustrate this point, consider two partitions of a set of 29 items giving rise to Table~\ref{ContingencyTableExample}. Then $R=0.677$, indicating a good percentage agreement while $R_A=-0.014$ and, being close to its expected value under the null model, it indicates a level of significance in the agreement  close to the random case.
In fact, the entries in the table have been picked at random. $R_A$  is a statistic recommended in the classification literature~\cite{AdjRand2} to compare the level of agreement of two partitions.

\begin{table}
\begin{center}
\begin{tabular}{|c|c c c c c|c|}
\hline $Class\setminus Cluster$ & $p_{1}$ & $p_{2}$ & $p_{3}$ & $p_{4}$ & $v_{5}$ & Sums \\
\hline
 $c_{1}$ & $1$ & $4$ & $2$ & $1$ & $2$ & $10$ \\
 $c_{2}$ & $0$ & $1$ & $1$ & $0$ & $1$ & $3$ \\
 $c_{3}$ & $1$ & $2$ & $0$ & $2$ & $0$ & $5$ \\
 $c_{4}$ & $2$ & $1$ & $0$ & $1$ & $2$ & $6$ \\
 $c_{5}$ & $1$ & $0$ & $1$ & $0$ & $3$ & $5$ \\
\hline Sums & $5$ & $8$ & $4$ & $4$ & $8$ & $n=29$\\
 \hline
\end{tabular}
\medskip
\medskip
\caption{Contingency table example}
\label{ContingencyTableExample}
\end{center}
\end{table}


\ignore{ We now illustrate the methodology that uses $R_A$ to
experimentally validate clustering algorithms, i.e., how to
experimentally assess how reliable is the output of a clustering
algorithm on microarray data. To this end, we use the {\bf RYCC}
dataset and a few classic clustering algorithms. The interested
reader will find extensive sets of experiments, following this
methodology, in~\cite{Genclust,KaYeeDiss}. }

\subsection{Fowlkes and Mallows Index}\label{subsection:FM}

The {\tt FM-index}
~\cite{FMMeasure} is also derived from the contingency Table
~\ref{ContingencyTable}, as follows:

\begin{equation}\label{FMIndex}
FM_{k}=\frac{T_{k}}{\sqrt{U_{k} \cdot V_{k}}}
\end{equation}

\noindent where:

\begin{equation}
T_{k}=\sum_{i=1}^k \sum_{j=1}^k n_{ij}^2 - n
\end{equation}

\begin{equation}
U_{k}=\sum_{i=1}^k n_{i.}^2 - n
\end{equation}

\begin{equation}
V_{k}=\sum_{j=1}^k n_{.j}^2 - n
\end{equation}

\noindent The index has values in the range $[0,1]$, with an
interpretation of the values in that interval analogous to that provided  for  the values of $R$. An example can be obtained as in Section~\ref{sec:AdjRand}. Indeed, for Table
~\ref{ContingencyTableExample}, one has $FM_{5}= 0.186$ indicating a low level of agreement between the two partitions.

\subsection{The F-Index}

The {\tt F-index}~\cite{CVRijs} combines notions from information retrieval, such as precision and recall, in order to evaluate the agreement of a clustering solution $P$ with respect to a reference partition $C$. Again, its definition can be derived from the contingency  Table~\ref{ContingencyTable}.

Given $c_i$ and $p_j$, their relative precision is defined as the ratio of the number of elements of the class $c_{i}$ within cluster $p_{j}$, and by the size of the cluster $p_{j}$. That is:

\begin{equation}\label{eq:Prec}
Prec(c_i, p_j)=\frac{n_{i,j}}{n_{.j}}
\end{equation}

\noindent Moreover, their relative recall is defined as the ratio of the number of elements of the class $c_{i}$ within cluster $p_{j}$, divided by the size of the class $c_{i}$. That is:

\begin{equation}\label{eq:Rec}
Rec(c_i, p_j)=\frac{n_{i,j}}{n_{i.}}.
\end{equation}

\noindent The {\tt F-index}  is then defined as an harmonic mean that uses the  precision and recall values, with weight $b$:

\begin{equation}\label{eq:FtCk}
F(c_{i}, p_{j})=\frac{(b^2+1) \cdot Prec(c_{i}, p_{j}) \cdot
Rec(c_{i}, p_{j})}{b^2 \cdot Prec(c_{i}, p_{j}) + Rec(c_{i}, p_{j})}.
\end{equation}
\\
\noindent Equal weighting for precision and recall is obtained by setting $b = 1$. Finally,  the overall {\tt F-index} is:
\\
\begin{equation}\label{eq:FOverall}
F=\sum_{c_i \in C} \frac{n_{i.}}{n} \cdot \underset{p_k \in P}{\mathop{\max }}\, F(c_i, p_k).
\end{equation}

\medskip

$F$ is an index with value in the range  $[0,1]$, with an
interpretation of the values in that interval analogous to that provided for the values of $R$. An example can be obtained as in Section~\ref{sec:AdjRand}. Indeed, for Table~\ref{ContingencyTableExample}, one has $F = 0.414$, indicating low level of  agreement between the two partitions.

\section{Internal and Relative Indices}\label{sec:int}

Internal indices should assess the merits of a partition, without
any use of external information. Then, a Monte Carlo simulation can
establish if the value of such an index on the given partition is
unusual enough for the user to gain confidence that the partition is
good. Unfortunately, this methodology is rarely used in data
analysis for microarrays, as stated in~\cite{Handl05}. Internal indices are also a
fundamental building block in order to obtain relative indices that
help to select, among a given set of partitions, the \vir{best}
one.

In this section, four relative indices are presented, starting
with the ones based on compactness. Then
methods that are based on hypothesis testing and the jackknife approach are presented.

\subsection{Methods Based on Compactness}

The measures presented here assess cluster compactness. The most popular compactness measures are based on the sum-of-squares. In what follows, two of the prominent measures in that class are detailed.

\subsubsection{Within Cluster Sum-of-Squares}
\label{sec:WCSS-Theory}

An internal measure that gives an assessment of the level of
compactness of each cluster in a clustering solution is the Within Cluster Sum of Squares ({\tt WCSS} for short).
Let $\mathcal{C}=\{c_1,\ldots,c_k\}$ be a clustering solution, with $k$ clusters. Formally, let

\begin{equation}
D_r=\sum_{j \in c_r}||\sigma_j- \overline{\sigma_r}||^2
\end{equation}

\noindent where $\overline{\sigma_r}$ is the centroid of cluster $c_r$.
Then, one has:

\begin{equation}
{\tt WCSS}(k)=\sum_{r=1}^k D_{r}.
\end{equation}

By analyzing the behavior of {\tt WCSS} in $[1,k_{max}]$, as a function of $k$, one can estimate the correct number of cluster $k^*$ in the dataset.
Intuitively, for values $k<k^*$, the compactness of each cluster should substantially increase, causing a substantial decrease in {\tt WCSS}. In other words, one should observe in the {\tt WCSS} curve a decreasing marginal improvement in terms of cluster compactness after the value $k^*$. The following heuristic approach comes out~\cite{Tibshrbook}: Plot the values of {\tt WCSS}, computed on the given clustering solutions, in the range $[1,k_{max}]$; choose as $k^*$ the abscissa closest to the \vir{knee} in the {\tt WCSS} curve. Fig.~\ref{fig:WCSS-example} provides an example of the {\tt WCSS} curve computed on the dataset of Fig.~\ref{fig:KmeansExample}(a) with K-means-R (see Section~\ref{sec:Partional}) for $k\in[1,10]$. Indeed, the dataset has two natural clusters and the plot of the {\tt WCSS} curve in Fig.~\ref{fig:WCSS-example} indicates $k^*=2$. As it will be clear in Section~\ref{sec:wcss-exp}, the prediction of $k^*$ with {\tt WCSS} is not so easy on real datasets, since the behavior of {\tt WCSS} is not so regular as one expects.

\medskip

\medskip

\begin{figure}[ht]
\centering
\epsfig{file=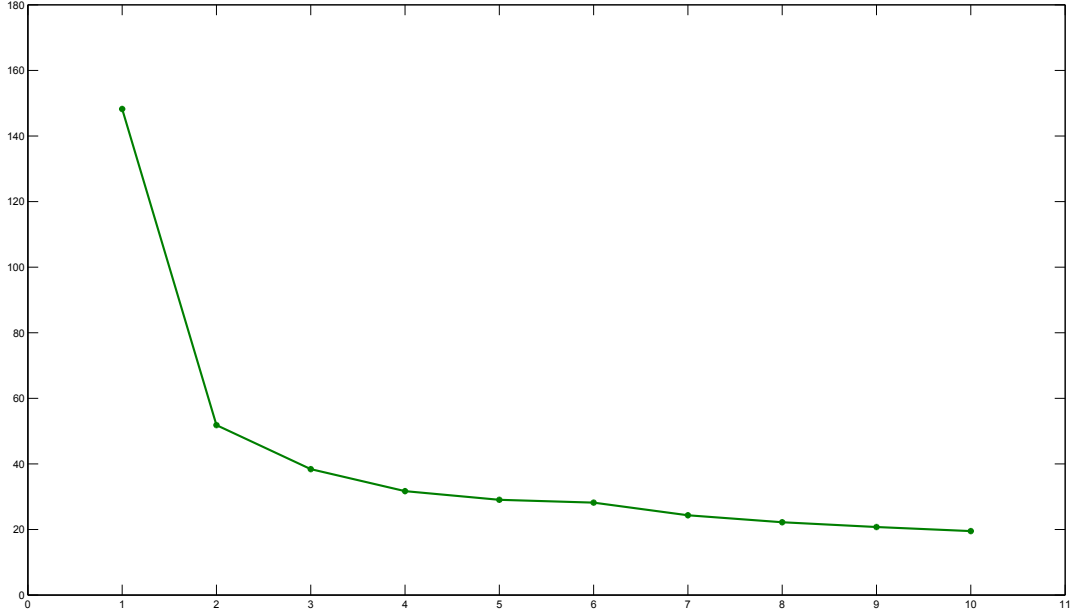, scale=0.4}
\caption{Plot of the values of {\tt WCSS}.}\label{fig:WCSS-example}
\end{figure}

\subsubsection{Krzanowski and Lai Index}\label{subsec:KL}
By elaborating on an earlier proposal by Marriot~\cite{Marriot71},
\ignore{\cite{Marriot71}} Krzanowski and Lai~\cite{KLMeasure}
proposed an internal measure, which is referred to as {\tt KL}. It is based on {\tt WCSS}, but it is automatic, i.e., a numeric value for $k^*$ is returned. Let

\begin{equation}\label{eqn:Diff}
DIFF(k)=(k-1)^{2/m}{\tt WCSS}(k-1)-k^{2/m}{\tt WCSS}(k).
\end{equation}

\noindent with $2 \leq k \leq k_{max}$.

Recall from Section~\ref{sec:WCSS-Theory} the behavior of {\tt
WCSS},  with respect to $k^*$. Based of those considerations, one expects the following behavior for
$DIFF(k)$:
\begin{itemize}
\item[(i)] for $k<k^{*}$, both $DIFF(k)$ and $DIFF(k+1)$ should be large positive values.
\item[(ii)] for $k>k^{*}$,  both $DIFF(k)$ and $DIFF(k+1)$ should be small values, and one or both might be negative.
\item[(iii)] for $k=k^{*}$, $DIFF(k)$ should be large positive, but $DIFF(k+1)$ should be relatively small (might be negative).
\end{itemize}

Based on these considerations, Krzanowski and Lai propose to
choose the estimate on the number of clusters as the $k$ maximizing:

\begin{equation}\label{eqn:KL}
KL(k)=\left|\frac{DIFF(k)}{DIFF(k+1)}\right|.
\end{equation}

That is,
\begin{equation}\label{eqn:KL-rule}
k^{*}=\argmax\limits_{2 \leq k \leq k_{max}} KL(k).
\end{equation}

Notice that  $KL(k)$ is not defined for the important special case of $k=1$, i.e., there is no cluster structure in the data.
Figure~\ref{fig:KL-Example}(a) reports an example of $DIFF(k)$ computation with K-means-R (see Section~\ref{sec:Partional}) on the dataset of Fig.~\ref{fig:KmeansExample}(a); the corresponding $KL$ values are reported in Fig.\ref{fig:KL-Example}(b). Notice that the $KL$ curve has a local maximum on $k=3$ (value close to the correct number of classes) but based on \eqref{eqn:KL-rule} the prediction is $k^{*}=5$.

\begin{figure}[ht]
\centering
\epsfig{file=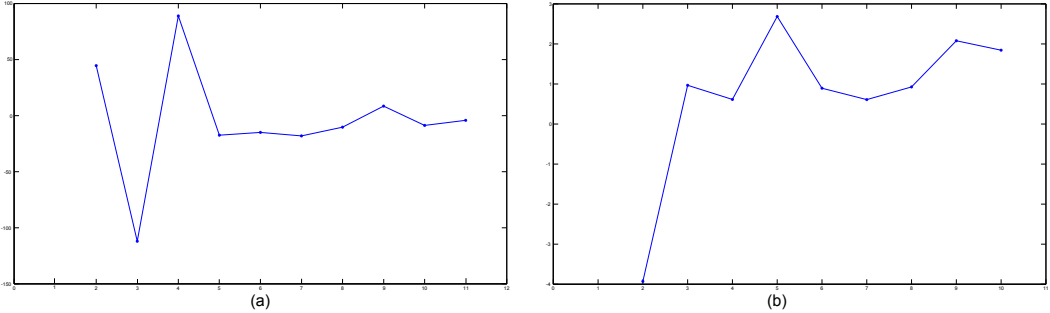,
scale=0.19}
\caption{(a) Plot of the values of $DIFF$. (b) Plot of the values of $KL$.}\label{fig:KL-Example}
\end{figure}

\subsection{Methods Based on Hypothesis Testing in Statistics}\label{sec:Gap}

The measures presented so far are either useless or not defined for the important special case $k=1$.
In this thesis, two methods based on hypothesis testing proposed by Dudoit and Fridlyand~\cite{CLEST} and Tibshirani et al.~\cite{Tibshirani} are considered. The former is a clever combination of the \codice{MECCA} hypothesis testing paradigm (see Section~\ref{sec:statistics}) and stability techniques, and for this reason is detailed in Chapter~\ref{chap:Stability}.

Tibshirani et al.~\cite{Tibshirani} brilliantly combine the ideas of Section~\ref{sec:statistics} with the {\tt WCSS} heuristic, to obtain an index that can deal also with the case $k=1$. It is referred to as the Gap Statistics and, for brevity, it is denoted as {\tt Gap}.\\
\noindent The intuition behind the method is brilliantly elegant. Recall, from the previous subsection that the \vir{knee} in the {\tt WCSS} curve can be used to predict the real number of cluster in the dataset. Unfortunately, the localization of such a value may be subjective. Consider the curves in Fig.~\ref{fig:GAP-exp}. The curve in green at the bottom of the figure is the {\tt WCSS}  given in Fig.~\ref{fig:WCSS-example}. The curve in red at  the top of the figure is the {\em average} {\tt WCSS}, computed on ten datasets generated from the original data via the {\tt Ps} null model. As it is evident from the figure, the curve on the top has a nearly constant slope: an expected behavior on datasets with no cluster structure in them. The vertical lines indicate the gap between the null model curves and the curve computed by K-means-R, which supposedly captures \vir{cluster structure} in the dataset. Since {\tt WCSS} is expected to decrease sharply up to $k^*$, on the real dataset, and it has a nearly constant slope on the null model datasets, the length of the vertical segments is expected to increase up to $k^*$ and then to decrease. In fact, in the figure, if one takes as the  prediction for $k^*$ the first local maximum of the gap values (data not shown), one has $k^*=2$, the correct number of classes in the
dataset. Normalizing the {\tt  WCSS} curves via logs and accounting also for the simulation error, such an intuition can be given under the form of a procedure in Fig.~\ref{algo:Gap}, which is strikingly similar to \codice{MECCA}, as discussed shortly (see Section~\ref{sec:statistics}). The first three parameters are as in that procedure, while the last one states that the search for $k^*$ must be done in the interval $[1, k_{max}]$.

\begin{figure}[ht] \centering \epsfig{file=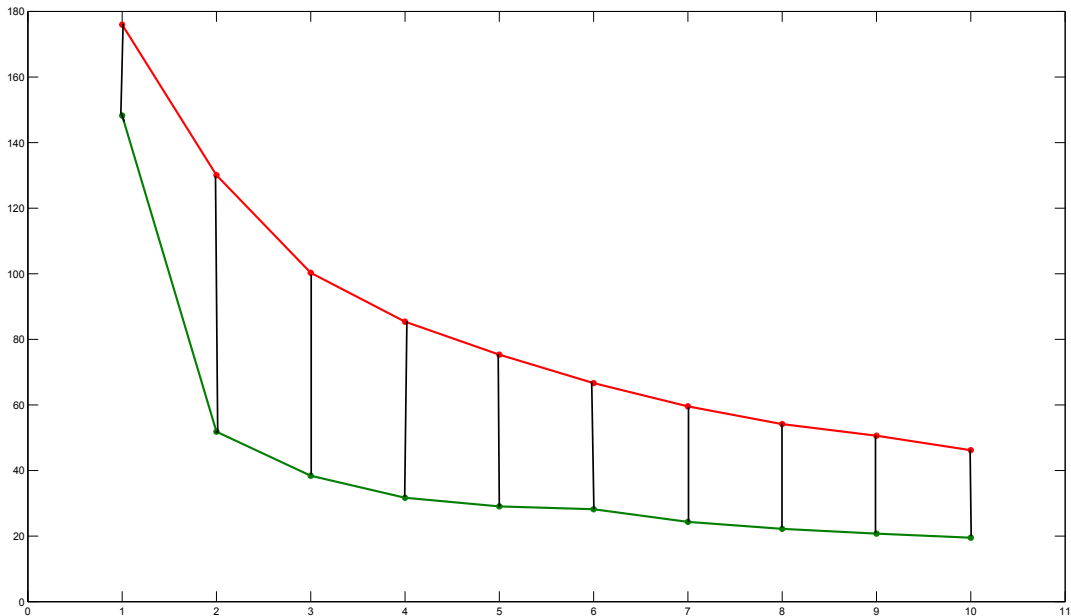,
scale=0.4} \caption{A geometric interpretation of the Gap
Statistics.  }\label{fig:GAP-exp}
\end{figure}

\begin{figure}
\[
\setlength{\fboxsep}{12pt}
\setlength{\mylength}{\linewidth}
\addtolength{\mylength}{-2\fboxsep}
\addtolength{\mylength}{-2\fboxrule}
\ovalbox{
\parbox{\mylength}{
\setlength{\abovedisplayskip}{0pt}
\setlength{\belowdisplayskip}{0pt}

\begin{pseudocode}{GP}{\ell,  A, D, k_{max}}

\FOR i \GETS 1 \TO \ell \DO\\

1.\mbox{ Compute a new data matrix
} D_i\mbox{, using the chosen null model.}\\\mbox{\,\,\,\,\,\,\,\,\,\,\,\,Let } D_0\mbox{ denote the original
data matrix.}\\

\FOR i \GETS 1 \TO \ell \DO\\
\BEGIN
\mbox{ }\mbox{ }\mbox{ }\mbox{ }\FOR k \GETS 1 \TO k_{max} \DO\\
2. \mbox{ }\mbox{ }\mbox{ }\mbox{ Compute a clustering solution } P_{i,k}\mbox{ on } D_i\mbox{ using algorithm }
A.\\
\END\\
\FOR i \GETS 1 \TO \ell \DO\\
\BEGIN
\mbox{ }\mbox{ }\mbox{ }\mbox{ }\FOR k \GETS 1 \TO k_{max} \DO\\
3. \mbox{ }\mbox{ }\mbox{ }\mbox{ Compute  } \log ({\tt WCSS}(k))\mbox{ on } P_{i,k}
\mbox{ and store the result in matrix } SL[i,k].\\
\END\\
\FOR k \GETS 1 \TO k_{max} \DO\\
\BEGIN
4. \mbox{ }\mbox{ }\mbox{ }\mbox{ }{\tt Gap}(k) \GETS \frac{1}{\ell}\sum\limits_{i=1}^\ell SL[i,k] -SL[0,k].\\

5.\mbox{ }\mbox{ }\mbox{ }\mbox{ Compute the standard deviation }sd(k)\mbox{ of the set of numbers }\\
\mbox{\,\,\,\,\,\,\,\,}\mbox{ }\mbox{ }\mbox{ }\{SL[1,k], \ldots, SL[\ell,k]\}\\
6.\mbox{ }\mbox{ }\mbox{ }\mbox{ }s(k)\GETS \left(\sqrt{1 +\frac{1}{\ell}}\right)sd(k).\\
\END\\

7.\mbox{ }k^*\mbox{ is the first value of }k\mbox{ such that }{\tt Gap}(k)\geq {\tt Gap}(k+1)-s(k+1).\\

\RETURN{k^{*}}
\end{pseudocode}

}
}
\]
\caption{The Gap Statistics procedure.}\label{algo:Gap}
\end{figure}

Now, $\log ({\tt WCSS}(k))$ is the statistic $T$ used to assess how reliable is a clustering solution with $k$ clusters. The value of that statistic is computed on both the observed data and on data generated by the chosen null model. Then, rather than returning a p-value, the procedure returns the first $k$ for which \vir{the gap} between the observed and the expected statistic is at a local maximum. With reference to step 7 of procedure \codice{GP} (see Fig.~\ref{algo:Gap}), it is worth pointing out that the adjustment due to the $s(k+1)$ term  is a heuristic
meant to account for the Monte Carlo simulation error in the
estimation of the expected value of $\log ({\tt WCSS}(k))$ (cf.~\cite{Tibshirani}).

 As discussed in Chapter~\ref{chap:6}, the
prediction of $k^*$ is based on running a certain number of times
the procedure \codice{GP}. Then one takes the most frequent outcome as the
prediction. It is worth pointing out that further improvements and
generalizations of {\tt Gap} have been proposed in~\cite{Yan07}.

\subsection{Methods Based on Jackknife Techniques: FOM}

Figure of Merit ({\tt FOM} for short) is a family of internal
validation measures introduced by Yeung et al.
~\cite{KaYeeFOM}, specifically for microarray data. Such a family is based on the jackknife approach and it  has been designed for use as a relative index assessing the predictive power of a clustering algorithm, i.e., its ability to predict the correct number of clusters in a dataset.  It has also been extended  in several directions by Datta and Datta~\cite{Ba03}. Experiments by Yeung et al. show that the {\tt FOM} family of measures satisfies the following properties, with a good degree of accuracy. For a given clustering algorithm, it has a low value in correspondence with the number of clusters that are really present in the data. Moreover, when comparing clustering algorithms for a given number of clusters $k$, the lower the value of {\tt FOM} for a given algorithm, the better its predictive power. In what follow, a review of this work is given, using the {\tt
$2$-norm FOM}, which is the most used instance in the {\tt FOM} family.

Assume that a clustering algorithm is given the data matrix $D$ with column $e$ excluded. Assume also that, with that reduced dataset, the algorithm produces $k$ clusters $c_1, \ldots, c_{k}$. Let $D(\sigma,e)$ be the expression level of gene $\sigma$ and $m_i(e)$ be the
average expression level of condition $e$ for genes in cluster
$c_i$. The {\tt $2$-norm FOM } with respect to $k$ clusters and
condition $e$ is defined as:

\begin{equation}\label{math:fome}
{\tt FOM}(e,k)= \sqrt{\frac{1}{n}\sum_{i=1}^{k}\sum_{x\in
c_i}(D(x,e)-m_i(e))^2}.
\end{equation}

\noindent Notice that {\tt FOM}$(e,k)$ is  essentially a root mean
square deviation. The {\tt aggregate $2$-norm FOM } for $k$ clusters
is then:

\begin{equation}\label{math:foma}
{\tt FOM}(k)=\sum_{e=1}^m {\tt FOM}(e,k).
\end{equation}

Both formulae \eqref{math:fome} and \eqref{math:foma} can be used to
measure the predictive power of an algorithm. The first one offers
more flexibility, since one can pick any condition, while the second one
offers a total estimate over all conditions. So far,
\eqref{math:foma} is the formula used the most in the literature.
Moreover, since the experimental studies conducted by Yeung
et al. show that {\tt FOM}$(k)$ behaves as a decreasing function of
$k$, an adjustment factor has been introduced to properly compare
clustering solutions with different numbers of clusters. A
theoretical analysis by Yeung et al. provides the following
adjustment factor:

\begin{equation}\label{math:adj}
\sqrt{\frac{n-k}{n}}.
\end{equation}

When \eqref{math:adj}  divides \eqref{math:fome},
\eqref{math:fome} and \eqref{math:foma} are referred to as \textit{adjusted} {\tt
FOM}s. The adjusted aggregate {\tt FOM} is used for the experiments in this thesis and, for brevity, it is referred to as  {\tt FOM}.

The use of {\tt FOM} in order to establish how many clusters are present in the data follows the same heuristic methodology outlined for {\tt WCSS}, i.e., one tries to identify the \vir{knee} in the {\tt FOM} plot as a function of the number of clusters. Fig.~\ref{FOM-Example} provides an example, where the {\tt FOM} curve is computed on the dataset of Fig.~\ref{fig:KmeansExample}(a) with K-means-R. In this case, it is easy to see that the predicted value is $k^*=6$, that is a value very far to the correct number of clusters in the dataset.

\medskip
\medskip

\begin{figure}[ht]
\centering
\epsfig{file=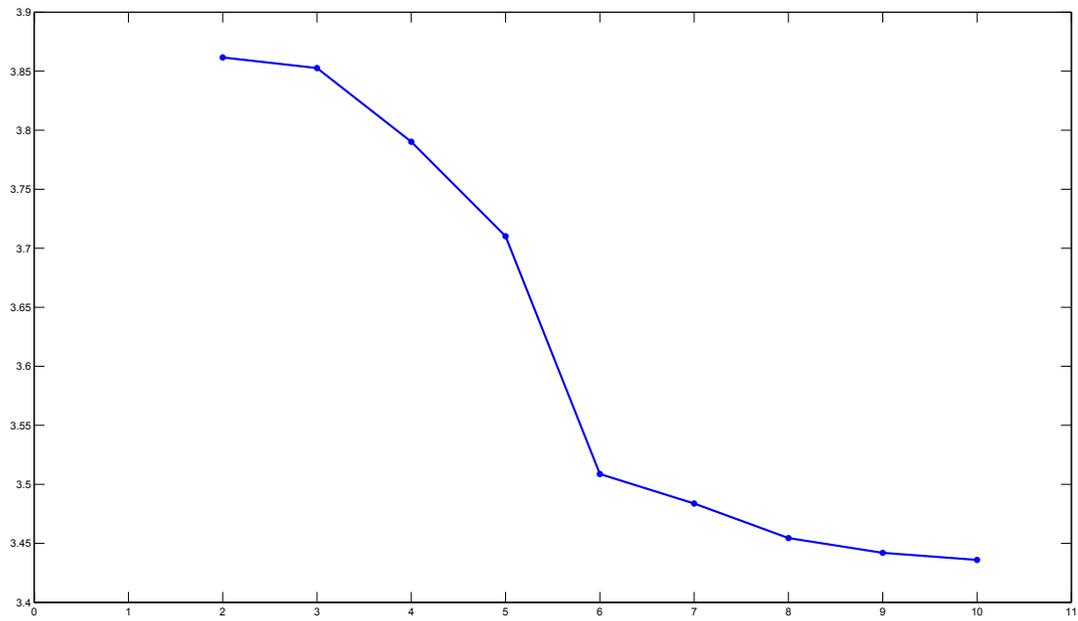,
scale=0.4}
\caption{Plot of the values of {\tt FOM}.}\label{FOM-Example}
\end{figure}

\ignore{
\subsection{Stability Measures}

This category of internal measure, assessing the \vir{stability} of a partitioning, by repeatedly resampling or perturbing the original dataset, and by repeatedly clustering the resulting data. The consistency of the corresponding results provides an estimate of the significance of the clusters obtained from the original dataset.
In the next chapter of this dissertation, we detail in deep this category and we show how the most relevant stability measures proposed in literature are an instance of the same paradigm.
} 

\chapter{The Stability Measure Paradigm and Its Instances}
\label{chap:Stability}


In this chapter, internal validation measures based on the notion of stability are presented. First, a general algorithmic paradigm is introduced, which can be seen as a generalization of earlier work by Breckenridge and Valentini. Then, it is shown that each of the known stability based measures is an instance of such a novel paradigm. Surprisingly, also {\tt Gap} falls within the new paradigm.

\section{An Intuitive Description}\label{sec:ClusterStability}

All the methods described in this chapter for model selection in clustering are related to the concept of \emph{stability} which is now discussed in intuitive terms.
A \vir{good} algorithm should produce clustering solutions that do not vary much from one sample to another, when data points are repeatedly sampled and clustered. That is, the algorithm must be stable with respect to input randomization. Therefore, the main idea to validate a clustering solution is to use a measure the self-consistency of the data instead of using the classical concepts of isolation and compactness~\cite{Handl05,JainDubes}.

The stability framework can be applied to problems ({\bf Q.2}) and ({\bf Q.3}), detailed in Section~\ref{sec:asseClusterQuality}, and for convenience of the reader reported here again:
\begin{itemize}
\item ({\bf Q.2}) compute a partition of $D$ and assess the confidence of cluster assignments for individual samples; \\

\item ({\bf Q.3}) estimate the number of clusters, if any, in a dataset.
\end{itemize}
These two problems are strongly related, since it is possible to use the former problem to solve the latter.

In order to obtain a stability internal validation method, one needs to specify the following \vir{ingredients}:

\begin{enumerate}
    \item a data generation/pertubation procedure;

    \item a similarity measure between partitions;

    \item a statistics on clustering stability;

    \item rules on how to select the most reliable clustering(s).
\end{enumerate}

\noindent Points 1 and 2 have been addressed in Sections~\ref{sec:statistics}-\ref{sec:Perturbation} and~\ref{sec:ext}, respectively. Points 3 and 4 are the main subject of study of this chapter.
For problem ({\bf Q.2}) and ({\bf Q.3}) a first effort to formalize the steps and the ingredients of a solution based on \vir{stability} are due to Breckenridge~\cite{Breckenridge89} and Valentini~\cite{mosclust}, respectively. In particular, this latter formalization is done via a software library for the statistical computing environment R, which is referred to as {\tt mosclust}. Indeed, this tool provides several macro operations in order to implement a stability internal measure. However, {\tt mosclust} focuses its attention on similarity measures between only two partitions (see point 2) obtained from the same clustering algorithm and it does not provide any macro that takes into account a more general similarity measure.

Extensive experimental results (see~\cite{Dudoit2003,giancarlo08} and Chapter~\ref{chap:6}) show that this class of measures has an excellent predictive power. However there are some drawbacks and open problems associated with their use:

\begin{itemize}

\item[(a)] As shown in~\cite{Handl05}, a given clustering may converge to a suboptimal solution owing to the shape of the data manifold and not to the real structure of the data. Thus, some bias in the stability indices are introduced.

\item[(b)] Ben-David et al.~\cite{Ben-David2006} show that stability methods based on resampling techniques, when cost-based clustering algorithms are used, may fail to estimate $k^{*}$, if the data is not symmetric.

\item[(c)] Stability methods have various parameters that a user needs to specify~\cite{giancarlo08}. Those choices may affect both their time performance and their estimation of $k^{*}$.

\end{itemize}

Whit respect to the problem (b), it is unclear if these results may be extended to other stability based methods or to other more general classes of clustering algorithms. In this dissertation, one focuses on problem (c). Indeed, as it will be shown in Chapter~\ref{chap:6}, repeatedly generating and clustering data, i.e., the main cycle of the stability based methods, has a drastic influence on time performance. Therefore, design of fast approximation algorithms is needed in order to use these measures for large datasets.
It is worth pointing out that in Chapter~\ref{chap:7} an approximation scheme of the stability internal validation measures is proposed.

\ignore{
Although the excellent experimental results obtained with stability-based methods~\cite{Dudoit2003,giancarlo08} there are some drawbacks and open problems associated with these techniques~\cite{giancarlo08}. Indeed, as shown in~\cite{Handl05}, a given clustering may converge to a suboptimal solution owing to the shape of the data manifold and not to the real structure of the data, thus some bias in the stability indices are introduced. Moreover, Ben-David et al.~\cite{Ben-David2006} show that stability methods based on resampling techniques, when cost-based clustering algorithms are used, may fail to estimate $k^{*}$ if the data are not symmetric. However, it is unclear if these results may be extended to other stability based methods or to other more general classes of clustering algorithms. Moreover, it is worth pointing out that the stability methods have various parameters that a user needs to specify~\cite{giancarlo08}. Those choices may affect both time performance and the estimation of $k^{*}$. The performance of the stability based methods are the main practical drawback. Indeed, the repeatedly data generation and clustering data have a drastic influence on the time performance. For this reason design fast approximation is needed.}

In this chapter a generalization of the efforts of Breckenridge and Valentini is proposed via two novel paradigms in order to solve the problems ({\bf Q.2}) and ({\bf Q.3}). The former is described in Section~\ref{sec:Statistic_Stability} and it is referred to as Stability Statistic. The latter is described in Section~\ref{sec:Stability_Measure} is referred to as Stability Measure.
Finally, in Section~\ref{sec:Application} several stability measure are presented as instances of the novel paradigm.

\ignore{
There is an intuitively connection between the two problems. Indeed, the partitions obtained from clustering analysis (problem ({\bf Q.2})) are often used later on for prediction purpose (problem ({\bf Q.3})).

The viceversa is to estimate the number of $k$ clusters into a dataset, and its cluster label for each observation to get a measure of confidence for this cluster assignment if possible. In what follows, this connection will be detailed showing as methodology for the problem ({\bf Q.3}) are a fundamental component of the stability internal validation measure paradigm (problem ({\bf Q.2})).
Indeed, the internal validation methods assessing the stability of a partition via a resample/perturbation of the data (see Section~\ref{sec:Perturbation}) in order to compute a level of significance of the cluster obtained from the original dataset.
}

\section{The Stability Statistic and the Stability Measure Paradigms}\label{sec:paradigm}

Recall from~\cite{JainDubes} that a \emph{statistic} is a function
of the data capturing useful information about it. A
statistic assessing cluster stability is, intuitively, a measure of consistency of
a clustering solution. In turn, information obtained from the statistic is used by the
Stability Measure in order to estimate $k^*$. Since
Stability Statistic is a  \vir{subroutine} of the
Stability Measure paradigm, it is presented first. In what follows,
a statistic is represented by a set $S$ of records. For instance, in
its simplest form, a statistic consists of a single real number,
while in other cases of interest, it is a one or two-dimensional
array of real numbers.

\ignore{ In this section we present two paradigms:
Stability\_Statistic and Stability\_Measure. The relation between
them is as follow: the Stability\_Measure paradigm uses the
Stability\_Statistic paradigm as one of its main \vir{subroutine},
therefore this latter paradigm is described first. The
Stability\_Statistic paradigm collects a statistic for a fixed
number of cluster.}

\subsection{The Stability Statistic Paradigm}\label{sec:Statistic_Stability}

The paradigm for the collection of a statistic on cluster
stability is best presented as a procedure, reported in Fig.~\ref{fig:StabilityStatistic}. Its input parameters and macro
operations are described in abstract form in Figs.~\ref{fig:InputCollect} and~\ref{fig:MacrosCollect}, respectively,
while its basic steps are described below.

\begin{figure}
\[
\setlength{\fboxsep}{11pt}
\setlength{\mylength}{\linewidth}
\addtolength{\mylength}{-2\fboxsep}
\addtolength{\mylength}{-2\fboxrule}
\ovalbox{
\parbox{\mylength}{
\setlength{\abovedisplayskip}{0pt}
\setlength{\belowdisplayskip}{0pt}

\begin{pseudocode}{Stability\_Statistic}{D_0, H, \alpha, \beta, <C_{1},C_2,\ldots,C_t>, k}
 S^{k}=\emptyset\\
 \WHILE H \DO\\
\BEGIN

1. <D_1, D_2,\ldots, D_l> \GETS <\CALL{DGP}{D_0, \beta}, \CALL{DGP}{D_0, \beta}, \ldots, \CALL{DGP}{D_0, \beta}>\\

2. <D_{T,0},D_{T,1}, \ldots, D_{T,l}, D_{L,0},D_{L,1}, \ldots, D_{L,l}> \GETS \CALL{Split}{<D_0,D_1, \ldots, D_l>, \alpha}\\

3. <G> \GETS \CALL{Assign}{<D_{T,0}, D_{T,1},\ldots, D_{T,l}>,<C_{1}, C_{2},\ldots, C_{t}>}\\

4. <C_{i_{1}}, C_{i_{2}},\ldots, C_{i_{q}}> \GETS \CALL{Train}{<G>}\\

5. <\hat{G}> \GETS \CALL{Assign}{<D_{L,0}, D_{L,1},\ldots, D_{L,l}>,<C_{1}, C_{2},\ldots, C_{t}>}\\

6. <P_{1}, P_{2}, \ldots, P_{z}> \GETS \CALL{Cluster}{\hat{G},k}\\

7.~u \GETS \CALL {Collect\_Statistic}{<P_{1}, P_{2}, \ldots, P_{z}>}\\

8.~S^{k} \GETS S^{k} \bigcup \{u\}\\

\END\\
\RETURN{S^{k}}
\end{pseudocode}

}
}
\]
\caption{The \codice{Stability\_Statistic} procedure.}\label{fig:StabilityStatistic}
\end{figure}

\begin{figure}
\[
\setlength{\fboxsep}{8pt}
\setlength{\mylength}{\linewidth}
\addtolength{\mylength}{-2\fboxsep}
\addtolength{\mylength}{-2\fboxrule}
\fbox{
\parbox{\mylength}{
\setlength{\abovedisplayskip}{0pt}
\setlength{\belowdisplayskip}{0pt}

\underline{\codice{Input}}
\begin{itemize}

\item[-] $D_0$: it is the input dataset.

\item[-] $H$: it is a test on the \vir{adequacy} of a statistic $S$, i.e., it evaluates whether $S$ contains enough information. Note that $H$ could simply be a check of as to whether a given number $c$ of iterations has been reached. In what follows, this simple test is denoted as $\hat{H}_c$.

\item[-] $\alpha$: it is a number in the range $[0,1]$.

\item[-] $\beta$: it is a sampling percentage, used by the \codice{DGP} procedure (described in Section~\ref{sec:Perturbation}).

\item[-] $<C_{1},C_2,\ldots,C_t>$: it is a set of procedures, each of which is either a classifier or a clustering algorithm.

\item[-] $k$: it is the number of clusters in which a dataset has to be partitioned.

\end{itemize}
}
}
\]
\caption{List of the input parameters used in the \codice{Stability\_Statistic} procedure.}\label{fig:InputCollect}
\end{figure}


\begin{figure}
\[
\setlength{\fboxsep}{8pt}
\setlength{\mylength}{\linewidth}
\addtolength{\mylength}{-2\fboxsep}
\addtolength{\mylength}{-2\fboxrule}
\fbox{
\parbox{\mylength}{
\setlength{\abovedisplayskip}{0pt}
\setlength{\belowdisplayskip}{0pt}

\underline{\codice{Macro Operations}}
\begin{itemize}

\item[-] \codice{Split}: it takes as input a family of datasets $F_1,F_2,\ldots,F_w$ and a real number $\alpha$ in the range [0,1]. The procedure splits each $F_i$, $1\leq i \leq w$, into two parts according to $\alpha$, referred to as  \emph{learning} and \emph{training} dataset and denoted with $F_{L,i}$ and $F_{T,i}$, respectively. That is, from each $F_i$, $\lceil\alpha n_{i}\rceil$ and $\lfloor(1-\alpha)n_{i}\rfloor$ rows are selected in order to obtain the corresponding $F_{T,i}$ and $F_{L,i}$, respectively, where $n_i$ is the number of rows of $F_i$. Each $F_{T,i}$ and $F_{L,i}$ is given as output.

\item[-] \codice{Assign}: it takes as input a family of datasets and a set of procedures, each of which is either a classifier or a clustering algorithm. It returns a finite set of pairs in which the first element is a dataset and the second one is either a classifier or a clustering algorithm. Such an association is encoded via a bipartite graph $G$, where the datasets are represented by nodes in one partition and procedures in the other partition. Notice that the graph is not a matching, i.e., the same dataset can be assigned to different procedures and viceversa.

\item[-] \codice{Train}: it takes as input a set of pairs $<$dataset, classifier$>$, encoded as a bipartite graph, analogous to the one just discussed. For each pair, it gives as output the classifier trained with the corresponding dataset. Notice that the number $q$ of trained classifiers returned as output is equal to the number of edges in the input graph.

\item[-] \codice{Cluster}: it takes as input a set of pairs $<$dataset, classifier/clustering algorithm$>$ and a positive integer $k$. Again, the set is encoded as a bipartite graph. For each pair, it gives as output a partition in $k$ clusters obtained by the classifier/clustering algorithm on the corresponding input dataset.
    Notice that the number $z$ of partitions returned as output is equal to the number of edges in the input graph.

\item[-] \codice{Collect\_Statistic}: it takes as input a set of partitions. It returns as output the statistic computed on the input set.

\end{itemize}
}
}
\]
\caption{List of the macro operations used in the \codice{Stability\_Statistic} procedure.}\label{fig:MacrosCollect}
\end{figure}

\medskip

A single iteration of the \textbf{while} loop is discussed. The loop is repeated until the condition $H$ is satisfied, i.e., until enough information about the given statistic has been collected. In step 1, a set of perturbed datasets is generated from $D_0$ by a \codice{DGP} procedure (see Section~\ref{sec:Perturbation}). In step 2, $D_0$ and all the datasets generated in the previous step are split in a learning and training dataset, according to the input parameter $\alpha$.
The next two steps train a subset of the classifiers on a subset of the training sets. In step 5, the bipartite graph $\hat{G}$ encodes the association between learning datasets and clustering procedures. In step 6, based on the association encoded by $\hat{G}$, the learning datasets are partitioned. Finally, in step 7, a statistic $S^{k}$ is computed from those partitions and is given as output.\\

In the next subsection,  some instances of this paradigm are discussed.

\subsubsection{Instances}\label{subsec:ExampleStatisticCollect}
Here  three incarnations of the Stability Statistic paradigm are provided. The first is \emph{replicating analysis}, a ground-breaking method due to Breckenridge~\cite{Breckenridge89}. The other two are \emph{BagClust1} and \emph{BagClust2}, due to Dudoit and Fridlyand~\cite{Dudoit2003}. In all three cases, the procedures were proposed to improve a clustering solution for a fixed
value of $k$ (see problem ({\bf Q.2})), rather than to estimate the \vir{true} number of clusters in $D$. However, as one will see in Section~\ref{sec:Application}, \emph{replicating analysis} and \emph{BagClust2} play a key role in many internal stability methods.\\
The presentation of methods in this section is organized as follows: for each example, the input parameters setup is first described (see Fig.~\ref{fig:InputCollect}), then the \codice{Stability\_Statistic} is detailed.

\begin{itemize}

\item[$\bullet$] \emph{Replicating analysis}.
\begin{itemize}

 \item[-] \emph{The input parameters setup}:  $\beta$ is not relevant  and the simple test $\hat{H}_1$ is used to allow only one iteration of the \textbf{while} loop. Moreover, the set of procedures $<C_{1},C_{2},\ldots,C_{t}>$ has  size  two,  i.e., it contains one classifier and one clustering algorithm, referred to as $C_1$ and $C_2$, respectively.\\

 \item[-] \emph{The \codice{Statistics\_Stability} procedure}: step 1 is not performed. In step 2, the \codice{Split} procedure is applied  to $D_0$ only  and it gives as output the training and learning dataset $D_{T,0}$ and $D_{L,0}$, respectively. Then, in steps 3-5, $D_{T,0}$  is used to train the classifier $C_1$. In steps 5 and 6, two partitions $P_1$ and $P_2$ of $D_{L,0}$ are produced,  by $C_1$ and $C_2$, respectively. Finally, in step 7, the \codice{Collect\_Statistic} procedure measures the agreement between the two partitions $P_1$ and $P_2$ via an external index (see Section~\ref{sec:ext}) in order to assess the stability structure of the dataset. For convenience of the reader the \emph{replicating analysis} procedure is given in Fig.~\ref{algo:Breckenridge}.

\end{itemize}

\begin{figure}
\[
\setlength{\fboxsep}{12pt}
\setlength{\mylength}{\linewidth}
\addtolength{\mylength}{-2\fboxsep}
\addtolength{\mylength}{-2\fboxrule}
\ovalbox{
\parbox{\mylength}{
\setlength{\abovedisplayskip}{0pt}
\setlength{\belowdisplayskip}{0pt}

\begin{pseudocode}{Replicating\_Analysis}{D_0, \alpha, <C_{1}, C_{2}>, k}
1. \mbox{ Split the input dataset in $D_{L}$ and $D_{T}$, the  learning and training sets, }\\\mbox{ }\mbox{ }\mbox{ }\mbox{ respectively}.\\
2. \mbox{ Train the classifier }C_{1} \mbox{on }D_{T}.\\
3. \mbox{ Let $P_{1}$ and $P_{2}$ be the partitions of $D_L$ }\\\mbox{ }\mbox{ }\mbox{ }\mbox{ }\mbox{into $k$ cluster with the use of $C_1$ and $C_2$, respectively.}\\
4. \mbox{ Let $e$ be the agreement measure between $P_1$ and $P_2$ obtained }\\\mbox{ }\mbox{ }\mbox{ }\mbox{ via an external index.}\\
\RETURN{e}
\end{pseudocode}
}
}
\]
\caption{The \emph{replicating analysis} procedure.}\label{algo:Breckenridge}
\end{figure}

\item[$\bullet$] \emph{BagClust1}.

\begin{itemize}

 \item[-] \emph{The input parameters setup}: $\hat{H}_c$ is used as test, for a given number of iterations $c$.  The set of procedures $<C_{1},C_{2},\ldots,C_{t}>$ consists only of one clustering algorithm, $\alpha=0$ and $\beta=1$. Moreover, each  \codice{DGP} is an instance of the same bootstrapping subsampling method (see Section~\ref{subsec:Subsampling}). Since $\alpha=0$, the \codice{Split} procedure gives as output only the learning datasets, which are copies of the corresponding input dataset.

 \item[-] \emph{The \codice{Statistics\_Stability} procedure}: in step 1, a single  \codice{DGP} procedure is executed to generate $D_1$. Then, the \codice{Split} procedure takes as input $D_0$ and $D_1$ and it gives as output $D_{L,0}=D_0$ and $D_{L,1}=D_1$. In steps 5 and 6, the clustering procedure is applied to both $D_0$ and $D_1$ in order to obtain the partitions $P_1$ and $P_2$, respectively. The \codice{Collect\_Statistic} procedure permutes the elements assigned to the partition $P_2$ so that there is the maximum overlap with $P_1$. For each iteration of the \textbf{while} loop, the number of overlapping elements are counted and given as output of the method. From that statistic, a new partition is obtained by assigning each element of $D_0$ to a cluster via a majority vote system. That is, each element is assigned to the cluster for which it expressed the maximum of number of preferences. For convenience of the reader the\emph{BagClust1} procedure is given in Fig.~\ref{algo:BagClust1}.
\end{itemize}

\begin{figure}
\[
\setlength{\fboxsep}{12pt}
\setlength{\mylength}{\linewidth}
\addtolength{\mylength}{-2\fboxsep}
\addtolength{\mylength}{-2\fboxrule}
\ovalbox{
\parbox{\mylength}{
\setlength{\abovedisplayskip}{0pt}
\setlength{\belowdisplayskip}{0pt}

\begin{pseudocode}{BagClust1}{D_0, H_c, \beta, <C_{1}>, k}

\FOR i\GETS 0 \TO H_c \DO \\
\BEGIN
1. \mbox{ Generate (via a bootstrap) a data matrix $D_1$}\\
2. \mbox{ }\mbox{Let $P_{1}$ and $P_{2}$ be the corresponding partitions of $D_0$ and $D_1$  }\\\mbox{ }\mbox{ }\mbox{ }\mbox{ }\mbox{into $k$ cluster with the use of $C_1$.}\\
3. \mbox{ Permute the elements assigned to the partition $P_2$ so that there is }\\\mbox{ }\mbox{ }\mbox{ }\mbox{ maximum overlap with $P_1$}\\
4. \mbox{ Let $O_c$ be the number of overlapping elements}\\
\END\\
\RETURN{O_c}

\end{pseudocode}
}
}
\]
\caption{The \emph{BagClust1} procedure.}\label{algo:BagClust1}
\end{figure}

\item[$\bullet$] \emph{BagClust2}.

\begin{itemize}

 \item[-] \emph{The input parameters setup}: as in \emph{BagClust1}.

 \item[-] \emph{The \codice{Statistics\_Stability} procedure}: in step 1, a single  \codice{DGP} procedure is executed to generate $D_1$. Then, the \codice{Split} procedure takes as input $D_0$ and $D_1$ and it gives as output $D_{L,0}=D_0$ and $D_{L,1}=D_1$. In step 5, the bipartite graph $\hat{G}$ consists of only one node per partition, encoding the dataset $D_1$ and the clustering procedure, respectively. In step 6, a clustering partition is obtained from it. Finally, in steps 7 and 8, a dissimilarity matrix $\mathcal{M}$ is computed. Each entry $\mathcal{M}_{i,j}$ of $\mathcal{M}$ is defined as follows:

\begin{equation}\label{eqn:dissBag2}
\mathcal{M}_{i,j} = 1 - \frac{M_{i,j}}{I_{i,j}},
\end{equation}

\noindent where $M_{i,j}$ is the number of times in which items $i$ and
$j$ are in the same cluster and $I_{i,j}$ is the number of times in
which items $i$ and $j$ are in the same learning dataset. The
dissimilarity matrix $\mathcal{M}$ is then used as input to a
clustering procedure in order to obtain a partition. For convenience of the reader the \emph{BagClust2} procedure is given in Fig.~\ref{algo:BagClust2}.

\end{itemize}

\begin{figure}
\[
\setlength{\fboxsep}{12pt}
\setlength{\mylength}{\linewidth}
\addtolength{\mylength}{-2\fboxsep}
\addtolength{\mylength}{-2\fboxrule}
\ovalbox{
\parbox{\mylength}{
\setlength{\abovedisplayskip}{0pt}
\setlength{\belowdisplayskip}{0pt}

\begin{pseudocode}{BagClust2}{D_0, H_c, \beta, <C_{1}>, k}

\FOR i\GETS 0 \TO H_c \DO \\
\BEGIN
1.  \mbox{ Generate (via a bootstrap)  a data matrix $D_1$}\\
2. \mbox{ }\mbox{Let $P_{1}$ be the partition of $D_{1}$ into $k$ clusters with the use of }C_{1}\\
3. \mbox{ Compute $\mathcal{M}$ as in \eqref{eqn:dissBag2}}\\
\END\\
\RETURN{\mathcal{M}}

\end{pseudocode}
}
}
\]
\caption{The \emph{BagClust2} procedure.}\label{algo:BagClust2}
\end{figure}

\end{itemize}

\ignore{
In the following sections, we explain how the statistic collected by the
\codice{Stability\_Statistics} paradigm is used to compute a level
of significance in order to estimate $k^*$.}

\ignore{
{\bf old}

Finally, via the \codice{Collect\_Statistic} the level of stability significance computation is performed. The two most used approach to compute the stability score either (A) via similarity measure, described in the previous section, or (B) it is computed by a directly comparison of the several clustering solutions. In the former, the typically approach is generate, via some data generation/perturbation procedure, two datasets and computed from them two clustering solutions. Finally, consider one of this solution as \vir{gold solution} (or generate it) and compute the similarity measure. For instance, Dudoit and Fridlyand~\cite{CLEST} built a DLDA classifier for the data, via a learning set, then to be used to derive \vir{gold solution} for the training set. That is, the classifier is assumed to be a reliable model for the data.
It is then used to assess the quality of the partitions of the training set obtained by a given clustering algorithm.
In the latter, the clustering solution are directly compared in order to compute the stability score. The most used approach is either via a statistical definition or via some \vir{appropriate} mathematical function. For instance, the most simple statistical way is compute a matrix where either each entry contains the number of times in which the two items are included in the same cluster or the proportion in which a cluster appearers.
Whereas, the Loevinger's measures~\cite{Loevinger} is an example of mathematical function that gives a quality measure of the stability of an individual cluster. Where the stability of an individual cluster is interpreted as a weighted mean of the inherent stabilities in the isolation and cohesion, respectively, of the examined cluster.
}

\subsection{The Stability Measure Paradigm}\label{sec:Stability_Measure}
In this section, the main paradigm of internal
stability methods is described. It is best presented as a procedure, reported in Fig.~\ref{fig:StabilityMeasure}. Its macro operations are described in
abstract form in Fig.~\ref{fig:MacrosMain}, while its basic steps are described below.\\
\begin{figure}
\[
\setlength{\fboxsep}{12pt}
\setlength{\mylength}{\linewidth}
\addtolength{\mylength}{-2\fboxsep}
\addtolength{\mylength}{-2\fboxrule}
\ovalbox{
\parbox{\mylength}{
\setlength{\abovedisplayskip}{0pt}
\setlength{\belowdisplayskip}{0pt}

\begin{pseudocode}{Stability\_Measure}{k_{min}, k_{max}, D, H, \alpha, \beta, <C_{1},C_{2},\ldots,C_{t}>}
\mbox{ }\FOR k\GETS k_{min} \TO k_{max} \DO \\
\mbox{ }\BEGIN
1.\mbox{ }\mbox{ }\mbox{ }\mbox{ }S^{k} \GETS \CALL{Stability\_Statistics}{D, H,  \alpha, \beta, <C_{1},C_{2},\ldots,C_{t}>, k}\\

2.\mbox{ }\mbox{ }\mbox{ }\mbox{ }R^{k} \GETS \CALL{Synopsis}{S^{k}}\\

\mbox{ }\END\\

3. \mbox{ }k^{*} \GETS \CALL{Significance\_Analysis}{R^{k_{min}},\ldots,R^{k_{max}}}\\

\RETURN{k^{*}}
\end{pseudocode}
}
}
\]
\caption{The \codice{Stability\_Measure} procedure.}\label{fig:StabilityMeasure}
\end{figure}
\begin{figure}
\[
\setlength{\fboxsep}{8pt}
\setlength{\mylength}{\linewidth}
\addtolength{\mylength}{-2\fboxsep}
\addtolength{\mylength}{-2\fboxrule}
\fbox{
\parbox{\mylength}{
\setlength{\abovedisplayskip}{0pt}
\setlength{\belowdisplayskip}{0pt}

\underline{\codice{Macro Operations}}
\begin{itemize}

\item[-] \codice{Synopsis}: it takes as input a statistic and
returns as output a concise description of it.

\item[-] \codice{Significance\_Analysis}: it takes as input all the
statistics/information collected as returned by the \codice{Synopsis} procedure. It computes the significance level
of each statistic. It returns as output, explicitly or
implicitly, a prediction about $k^*$. For instance,  an implicit prediction of the value of $k^*$ can be the plot of a histogram or of a curve, as in many methods described in the next section.

\end{itemize}
}
}
\]
\caption{List of the macro operations used in \codice{Stability\_Measure} procedure.}\label{fig:MacrosMain}
\end{figure}
For each $k$ in the range $[k_{min},k_{max}]$ the paradigm collects the statistics $S^{k}$ computed by the \codice{Stability\_Statistic} procedure, then a concise description $R^{k}$ of the statistic $S^{k}$ is computed via the \codice{Synopsis} procedure. Finally, an explicit or implicit prediction of the value of $k^*$ is computed by \codice{Significance\_Analysis} and it is given as output.

In the remaining part of this section, only two examples of the Stability Measure paradigm are detailed. The other incarnations proposed in the literature are discussed in the
remaining part of the chapter.

\subsubsection{Instances}\label{subsec:ExampleStatisticStatistic}
The presentation of methods in this section is organized as follows: for each method, the input parameters setup is first described (see Fig.~\ref{fig:InputCollect}), then the \codice{Stability\_Statistic} and the \codice{Stability\_Measure} procedures are detailed.

\begin{itemize}

\item[$\bullet$] \emph{Model Explorer} by Ben-Hur et al.~\cite{BenHur02} ({\tt ME} for
short) is the simplest incarnation of the Stability Measure paradigm and it can be derived in the following way.

\begin{itemize}

\item \emph{Input parameters setup}: $\hat{H}_c$ is used as test, for a given number of iteration $c$, $\alpha = 0$, $\beta \in [0.6,0.9]$ and the set of procedures $<C_1,C_2,\ldots,C_t>$ consists only of one clustering algorithm $C_1$.

\item \emph{The \codice{Statistics\_Stability} procedure}: in step 1, $D_1$ and $D_2$ are generated by two \codice{DGP} procedures, where each  procedure is an instance of subsampling (see Section~\ref{subsec:Subsampling}). Since $\alpha=0$, the \codice{Split} procedure copies those datasets into the corresponding learning datasets, while  steps 3 and 4 are not performed. In step 5, the graph $\hat{G}$, obtained as output of the \codice{Assign} procedure, encodes two relations: $<D_{L,1},C_1>$ and $<D_{L,2},C_1>$. In step 6, two clustering solutions $P_1$ and $P_2$ are obtained from $<D_{L,1},C_1>$ and $<D_{L,2},C_1>$, respectively. The \codice{Collect\_Statistic} procedure computes the level of agreement  between the two partitions via an external index (see Section~\ref{sec:ext}), but restricted to the common elements of $D_1$ and $D_2$. In step 8, this level of agreement is stored into a one dimensional array $S^{k}$. That is, for each iteration of the \textbf{while} loop, the value returned by the external index is stored in the corresponding entry of $S^{k}$.

\item \emph{The \codice{Stability\_Measure} procedure}: for each $k$, it computes the array $S^{k}$ via the \codice{Statistic\_Stability} procedure while the \codice{Synopsis} procedure performs  a copy of the collected statistic. Finally, the \codice{Significance\_Analysis} procedure provides an implicit estimation of $k^*$: each $k\in [k_{min},k_{max}]$, $R^{k}$ and its values are histogrammed separately. Then, the optimal number of clusters $k^*$ is predicted to be the lowest value of $k$ such that the $R^{k}$ value distribution is close to one and $R^{k+1}$ value distribution is in a wider range of values. An example of the number of clusters prediction is given in Fig.~\ref{fig:MEexample}, where {\tt ME} is computed on the dataset of Fig.~\ref{fig:KmeansExample}(a) with K-means-R (see Section~\ref{sec:Partional}) for $k\in[2,5]$.
\end{itemize}

For convenience of the reader the {\tt ME} procedure is given in Fig.~\ref{algo:ME}.

\begin{figure}
\[
\setlength{\fboxsep}{12pt}
\setlength{\mylength}{\linewidth}
\addtolength{\mylength}{-2\fboxsep}
\addtolength{\mylength}{-2\fboxrule}
\ovalbox{
\parbox{\mylength}{
\setlength{\abovedisplayskip}{0pt}
\setlength{\belowdisplayskip}{0pt}

\begin{pseudocode}{ME}{H_{c}, <C_{1}>, D, k_{max}}
\FOR k\GETS 2 \TO k_{max} \DO \\
\BEGIN
\mbox{ }\mbox{ }\mbox{ }\mbox{ }\FOR i\GETS 1 \TO H_{c} \DO \\
\mbox{ }\mbox{ }\mbox{ }\mbox{ }\BEGIN
1.\mbox{ }\mbox{ }\mbox{ }\mbox{ }\mbox{ Generate (via subsampling)  two data matrices $D_1$ and $D_2$}\\
2.\mbox{ }\mbox{ }\mbox{ }\mbox{ }\mbox{ Let $P_{1}$ and $P_{2}$ be the corresponding partitions of $D_1$ and $D_2$  }\\\mbox{ }\mbox{ }\mbox{ }\mbox{ }\mbox{ }\mbox{ }\mbox{ into $k$ cluster with the use of $C_1$, respectively}\\
3.\mbox{ }\mbox{ }\mbox{ }\mbox{ }\mbox{ Let $S^{k}(i)$ be the level of agreement  between $P_1$ and $P_2$ via an}\\\mbox{ }\mbox{ }\mbox{ }\mbox{ }\mbox{ }\mbox{ }\mbox{  external index but restricted to the common elements of $D_1$ and $D_2$} \\
\mbox{ }\mbox{ }\mbox{ }\mbox{ }\END\\
\END\\
4.\mbox{  Plot separately the histogram of $S^{k}$ values and return a prediction for }k^{*}
\end{pseudocode}
}
}
\]
\caption{The {\tt ME} procedure}\label{algo:ME}
\end{figure}

\begin{figure}[ht]
\begin{center}
\epsfig{file=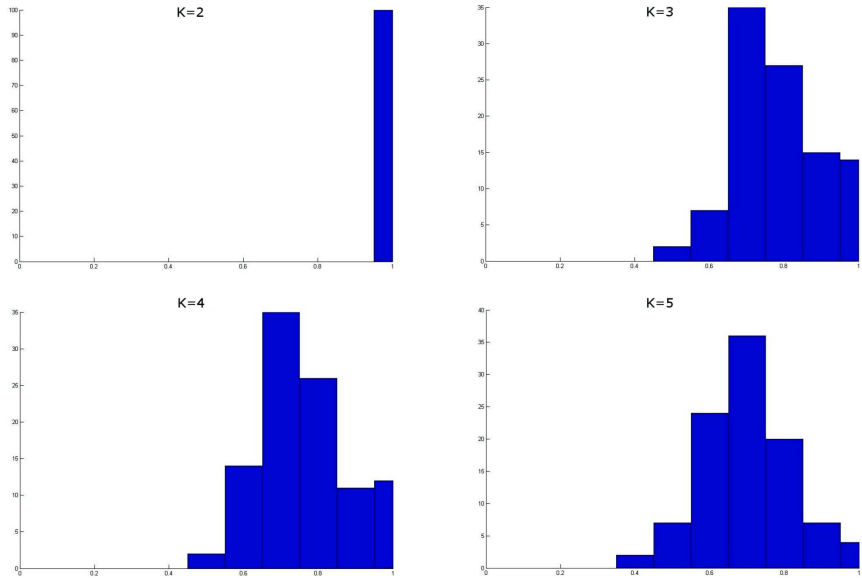,scale=0.2}
\end{center}
\caption{The histograms plotting the $R^{k}$ values distribution for increasing values of $k$. The prediction of $k^*$ correspond to correct number of cluster, i.e., $k^{*} = 2$.}\label{fig:MEexample}
\end{figure}

\item[$\bullet$] \emph{MOSRAM} by  Bertoni and Valentini~\cite{BertoniV07} is strongly related to {\tt ME}, where the most significant change is in the \codice{Significance\_Analysis} procedure.
Indeed, it estimates automatically the \vir{true} number of clusters, and in addition, it detects significant and possibly multi-level structures simultaneously present in $D$ (e.g. hierarchical structures - see Fig.~\ref{fig:datasethier}). It can be derived from the Stability Measure paradigm as follows.

    \begin{itemize}

    \item \emph{The input parameters setup}: as in {\tt ME}.

    \item \emph{The \codice{Statistics\_Stability} procedure}: it is the same proposed in {\tt ME}, except that the two \codice{DGP} procedures performed in step 1 are both an instance of randomized mapping (see Section~\ref{subsec:SubspaceMethods}).

    \item \emph{The \codice{Stability\_Measure} procedure}: in step 2, each $R^{k}$ given as output by the \codice{Synopsis} procedure is an average of the statistics $S^{k}$ computed in step 1. Intuitively, if the value of $R^{k}$ is close to 1, then the clustering solution is stable. Moreover, in order to detect significant and possibly multi-level structures that are simultaneously present in $D$, a statistical hypothesis test is applied. The \codice{Significance\_Analysis} procedure performs a $\chi^2$-based test in order to estimate $k^{*}$ as follows. Let $R=\{R^{k_{min}},\ldots,R^{k_{max}}\}$ and let $\tau$ be a significance level. The null hypothesis $H_0$ considers the set of $k$-clusterings as equally reliable, while the alternative hypothesis $H_1$ considers the set of $k$-clusterings as not equally reliable. When $H_{0}$ is rejected at $\tau$ significance level, it means that at least one $k$-clustering significantly differs from the others. The procedure sorts the values in $R$, and a $\chi^2$-based test is repeated until no significant difference is detected or the only remaining clustering is the top-ranked in $R$. At each iteration, if a significant difference is detected, the bottom-ranked value is removed from the set $R$. Therefore, the \codice{Significance\_Analysis} gives as output the set of the remaining (top sorted) $k$-clusterings that corresponds to the set of the estimate \vir{true} number of clusters (at $\tau$ significance level). For convenience of the reader the \emph{MOSRAM} procedure is given in Fig.~\ref{algo:Mosram}.

\begin{figure}
\[
\setlength{\fboxsep}{12pt}
\setlength{\mylength}{\linewidth}
\addtolength{\mylength}{-2\fboxsep}
\addtolength{\mylength}{-2\fboxrule}
\ovalbox{
\parbox{\mylength}{
\setlength{\abovedisplayskip}{0pt}
\setlength{\belowdisplayskip}{0pt}

\begin{pseudocode}{MOSRAM}{H_{c}, <C_{1}>, D, k_{max}}
\FOR k\GETS 2 \TO k_{max} \DO \\
\BEGIN
\mbox{ }\mbox{ }\mbox{ }\mbox{ }\FOR i\GETS 1 \TO H_{c} \DO \\
\mbox{ }\mbox{ }\mbox{ }\mbox{ }\BEGIN
1.\mbox{ }\mbox{ }\mbox{ }\mbox{ }\mbox{ Generate (via randomized mapping) two data matrices $D_1$ and $D_2$}\\
2.\mbox{ }\mbox{ }\mbox{ }\mbox{ }\mbox{ Let $P_{1}$ and $P_{2}$ be the corresponding partitions of $D_1$ and $D_2$ }\\\mbox{ }\mbox{ }\mbox{ }\mbox{ }\mbox{ }\mbox{ }\mbox{ into $k$ cluster with the use of $C_1$, respectively}\\
3.\mbox{ }\mbox{ }\mbox{ }\mbox{ }\mbox{ Let $S^{k}(i)$ be the level of agreement  between $P_1$ and $P_2$ via an}\\\mbox{ }\mbox{ }\mbox{ }\mbox{ }\mbox{ }\mbox{ }\mbox{ external index but restricted to the common elements of $D_1$ and $D_2$} \\
\mbox{ }\mbox{ }\mbox{ }\mbox{ }\END\\
\END\\
4.\mbox{  Perform a $\chi^2$-based test in order to estimate }k^{*}
\end{pseudocode}
}
}
\]
\caption{The \emph{MOSRAM} procedure}\label{algo:Mosram}
\end{figure}

\begin{figure}
\begin{center}
\epsfig{file=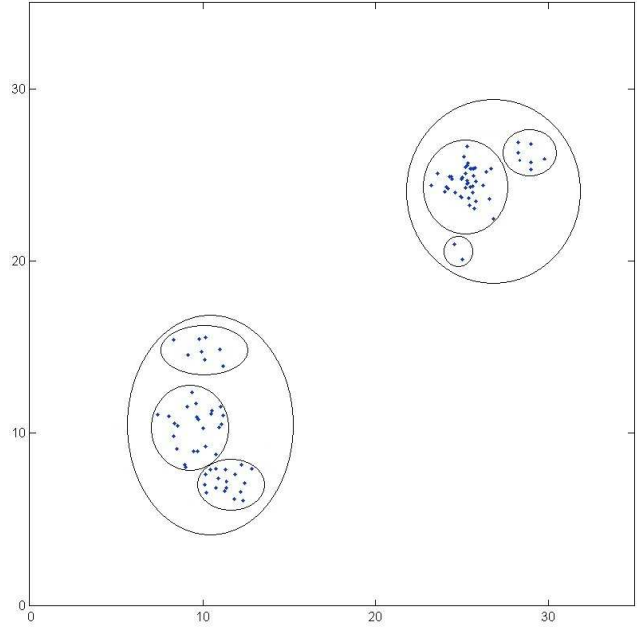,scale=0.4}
\end{center}
\caption{An example of hierarchical structures in a dataset.}\label{fig:datasethier}
\end{figure}

    \end{itemize}

\end{itemize}

\section{Further Instances of the Stability Measure Paradigm}\label{sec:Application}
In this section several incarnations of the Stability Measure paradigm are detailed, describing for each method the input parameters and macro operations listed in Figs.~\ref{fig:InputCollect},~\ref{fig:MacrosCollect} and~\ref{fig:MacrosMain}.
The section follows the same organization of Section~\ref{subsec:ExampleStatisticStatistic}.

\ignore{
For completeness, we repot that the implementation of many methods
proposed here can be engineered in order to have a
speedup in terms of time.

Indeed, our aim is to
show how all the methods perform two main steps: (1) the statistics are collected and (2) those statistics are used in order to provide either
explicit or implicit estimations of $k^{*}$.
The section is organized as follows: for each method, we first describe the input parameters setup, then the \codice{Stability\_Statistic} and the \codice{Stability\_Measure} procedures are detailed.
}

\subsection{Consensus Clustering}\label{subsec:Consensus}

Consensus Clustering by Monti et al.~\cite{Monti03} ({\tt Consensus} for short) is a reference method in internal validation measures, with a prediction power far better than other established methods~\cite{giancarlo08, Monti03}.
\begin{itemize}

    \item[-] \emph{The input parameters setup}: it uses the same input parameters setup of {\tt ME} for the \codice{Statistic\_Stability} procedure (see Fig.~\ref{fig:InputCollect}). Moreover, as in {\tt ME}, the \codice{DGP} procedure is an instance of subsampling (see Section~\ref{subsec:Subsampling}).

    \item[-] \emph{The \codice{Statistics\_Stability} procedure}: it is strongly related to {\em BagClust2}, where the most significant change is that the matrix $\mathcal{M}$ is a similarity instead of a dissimilarity matrix.

   \item[-] \emph{The \codice{Stability\_Measure} procedure}: as in {\tt ME}, the \codice{Synopsis} procedure performs a copy of the collected statistic and the \codice{Significance\_Analysis} procedure provides an implicit estimation of $k^*$, as detailed next. Based on the collected statistics, for each $k$, Monti et al. define a value $A(k)$ measuring the level of stability in cluster assignments, as reflected by the matrix $\mathcal{M}$. Formally,
\end{itemize}

$$
A(k) = \sum_{i=2}^{n} [x_{i} - x_{i-1}]CDF(x_{i})
$$

where $CDF$ is the empirical cumulative distribution defined over the range $[0,1]$, as follows:

$$
CDF(c)=\frac{\sum\limits_{i<j}l\{{\cal M}_{i,j}\leq c\}}{n(n-1)/2}
$$

with $l$ equal to 1 if the condition is true and 0 otherwise.  Finally, based on $A(k)$, one can define:

\begin{displaymath}
\Delta(k)=\left\{\begin{array}{lr}
A(k)   & k=2,  \\
\frac{A(k+1)-A(k)}{A(k)} & k>2. \\
\end{array}\right.
\end{displaymath}

Moreover, Monti et al. suggest the use of the function $\Delta'$ for non-hierarchical algorithms. It is defined as $\Delta$ but one uses $A'(k) = \underset{k'\in [2,k]}{\mathop{\max }}\,A(k')$. The reason is the following: $A(k)$ is a value that is
expected to behaves like a non-decreasing function of $k$, for
hierarchical algorithms. Therefore $\Delta(k)$ would be expected to
be positive or, when negative, not too far from zero. Such a
monotonicity of $A(k)$ is not expected for non-hierarchical
algorithms. Therefore, another definition of $\Delta$ is needed to
ensure a behavior of this function analogous to the hierarchical
algorithms.

Assuming that one has computed the $\Delta$ curve for a given dataset,  the value of $k^{*}$ can be obtained by using the following intuitive idea,  also based on experimental observations.

\begin{itemize}
\item[(i)] For each $k \leq k^{*}$,  the area $A(k)$ markedly increases. This results in an analogous pronounced decrease of the $\Delta$ curve.

\item[(ii)]For $k>k^{*}$, the area $A(k)$ has no meaningful increases. This results in a stable plot of the $\Delta$  curve.
\end{itemize}

From this behavior,  the \vir{rule of thumb} to identify $k^{*}$ is: take as $k^{*}$ the abscissa corresponding to the smallest non-negative value where the curve starts to stabilize; that is, no big variation in the curve takes place from that point on. However, the behavior of the $\Delta$ curve could give an ambiguous estimation of $k^*$. Therefore,	it is advisable to combine the information given by the $\Delta$ curve with an estimation provided by the plot of the $CDF$ curves. In this latter estimation,  $k^*$ is the value of $k$ where the area under the $CDF$ curves does not change more, i.e., the gap between the curves stay almost constant. An example is given in Fig.~\ref{fig:Consensusexample}.

The same considerations and rule applies to the prediction of $k^*$ via $\Delta'$. However, as the  experiments performed in Chapter~\ref{chap:6} bring to light, for the partitional algorithms  used in this dissertation, $\Delta'$  displays nearly the same monotonicity properties of $\Delta$, when used on hierarchical algorithms. The end result is that $\Delta$ can be used for both types of algorithms. It is worth pointing out that, to the best of our knowledge, Monti et al. defined the function $\Delta'$, but they did not experiment with it, since their experimentation was limited to hierarchical algorithms. For convenience of the reader the {\tt Consensus} procedure is given in Fig.~\ref{algo:Consensus}.

For completeness, it may be of interest to the reader to report that  Brunet et al.~\cite{BrunetNMF} propose a different approach to estimate $k^*$ based on the dispersion of the matrix $\mathcal{M}$. Indeed, Brunet et al. compute the cophenetic correlation coefficient $\rho$~\cite{JainDubes} for each $k$. Based on the observation on how the value of $\rho$ changes as $k$ increases, the \vir{rule of thumb} to identify $k^{*}$ is: take as $k^{*}$ the abscissa corresponding to the value where the curve starts to fall.

Finally, it can be useful to observe that the matrix  $\mathcal{M}$  can be naturally thought of as a similarity measure. Accordingly, via standard techniques, it can be transformed into a (pseudo) distance matrix that can be used by clustering algorithms as in {\em BagClust2}  in order to solve problem ({\bf Q.2}).

\ignore{
\begin{figure}
\[
\setlength{\fboxsep}{12pt}
\setlength{\mylength}{\linewidth}
\addtolength{\mylength}{-2\fboxsep}
\addtolength{\mylength}{-2\fboxrule}
\ovalbox{
\parbox{\mylength}{
\setlength{\abovedisplayskip}{0pt}
\setlength{\belowdisplayskip}{0pt}

\begin{pseudocode}{Consensus}{H_{c}, <C_{1}>, D, k_{max}}
\FOR k\GETS 2 \TO k_{max} \DO \\
\BEGIN
\mbox{ }\mbox{ }\mbox{ }\mbox{ }\FOR i\GETS 1 \TO H_{c} \DO \\
\mbox{ }\mbox{ }\mbox{ }\mbox{ }\BEGIN
1.\mbox{ }\mbox{ }\mbox{ }\mbox{ }\mbox{ }D_{1} \GETS \CALL{DGP}{D}\\
2.\mbox{ }\mbox{ }\mbox{ }\mbox{ }\mbox{ }<\hat{G}> \GETS \CALL{Assign}{<D_{1}>,<C_{1}>}\\
3.\mbox{ }\mbox{ }\mbox{ }\mbox{ }\mbox{ }<P_{i}> \GETS \CALL{Cluster}{\hat{G},k}\\
4.\mbox{ }\mbox{ }\mbox{ }\mbox{ }\mbox{ Based on }P_{i}\mbox{ compute the connectivity matrix }M^{k}_{i}\\
\mbox{ }\mbox{ }\mbox{ }\mbox{ }\END\\
5.\mbox{ }\mbox{ }\mbox{ Compute the consensus matrix }\mathcal{M}^{k}\\
\END\\
6.\mbox{ Based on the }k_{max}-1 \mbox{ consensus
matrices, return a prediction for }k^{*}
\end{pseudocode}
}
}
\]
\caption{The \codice{Consensus} procedure}\label{algo:Consensus}
\end{figure}
}

\begin{figure}
\[
\setlength{\fboxsep}{12pt}
\setlength{\mylength}{\linewidth}
\addtolength{\mylength}{-2\fboxsep}
\addtolength{\mylength}{-2\fboxrule}
\ovalbox{
\parbox{\mylength}{
\setlength{\abovedisplayskip}{0pt}
\setlength{\belowdisplayskip}{0pt}

\begin{pseudocode}{Consensus}{H_{c}, <C_{1}>, D, k_{max}}
\FOR k\GETS 2 \TO k_{max} \DO \\
\BEGIN
\mbox{ }\mbox{ }\mbox{ }\mbox{ }\FOR i\GETS 1 \TO H_{c} \DO \\
\mbox{ }\mbox{ }\mbox{ }\mbox{ }\BEGIN
1.\mbox{ }\mbox{ }\mbox{ }\mbox{ }\mbox{ }\mbox{ Generate (via a subsampling)  a data matrix $D_1$}\\
2.\mbox{ }\mbox{ }\mbox{ }\mbox{ }\mbox{ }\mbox{ }\mbox{Let $P_{1}$ be the partition of $D_{1}$ into $k$ clusters with the use of }C_{1}\\
3.\mbox{ }\mbox{ }\mbox{ }\mbox{ }\mbox{ }\mbox{ Based on }P_{1}\mbox{ compute the connectivity matrix }M^{k}_{i}\\
\mbox{ }\mbox{ }\mbox{ }\mbox{ }\END\\
4.\mbox{ }\mbox{ }\mbox{ Compute the consensus matrix }\mathcal{M}^{k}\\
\END\\
5.\mbox{ Based on the }k_{max}-1 \mbox{ consensus
matrices, return a prediction for }k^{*}
\end{pseudocode}
}
}
\]
\caption{The \codice{Consensus} procedure}\label{algo:Consensus}
\end{figure}

\begin{figure}[ht]
\centering
\epsfig{file=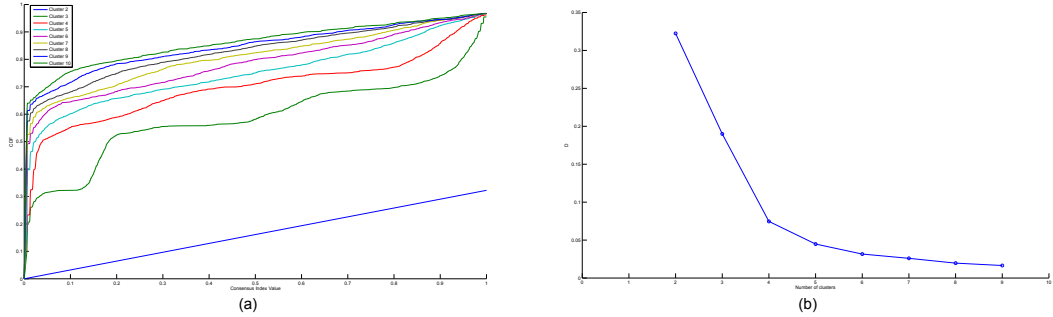,scale=0.18}
\caption{An example of number of cluster prediction with the use of {\tt Consensus}. The experiment is derived from the dataset of Fig.~\ref{fig:KmeansExample}(a) with $k^*=2$, with use of the K-means-R clustering algorithm. The plots of the CDF curves is shown in (a), yielding a monotonically increasing value of $A$, as a function of $k$. The plot of the $\Delta$ curve is shown in (b), where the flattening effect corresponding to $k=4$, while the gap of the area under the CDF curves is evident for $k \geq k^{*} = 2$.}\label{fig:Consensusexample}
\end{figure}

\subsection{Levine and Domany}
This method is due to Levine and Domany~\cite{LevineDomany}.

 \begin{itemize}

    \item[-] \emph{The input parameters setup}: as in {\tt ME}.

    \item[-] \emph{The \codice{Statistics\_Stability} procedure}: it is strongly related to {\em BagClust2}. Indeed, for each iteration, the method computes as statistic a connectivity matrix in which each entry is 1, if the two elements are in the same cluster and 0 otherwise. Moreover, the collected  statistic $S^{k}$ is a set of matrices, $S^{k}=\{S^{k}_{0},\ldots,S^{k}_{c}\}$. Matrix $S^{k}_0$ corresponds to the connectivity matrix for $D_0$, and matrix $S_{i}^{k}$, for $1\leq i\leq c$, corresponds to the connectivity matrix for the dataset $D_1$ generated by the \codice{DGP} procedures at the corresponding iteration. \ignore{ the datasets generated by the \codice{DGP} procedures.}

 \item[-] \emph{The \codice{Stability\_Measure} procedure}: the \codice{Synopsis} procedure compares the collected statistic, via the following formula:

\begin{equation}\label{eqn:LD01}
R^{k} = \ll\delta_{S^{k}_0,S^{k}_{i}}\gg
\end{equation}

\noindent where $\ll\cdot\gg$ is a twofold averaging. That is, for each $S^{k}_{i}$, an average is computed over all pairs which are in the same cluster in the original dataset and have been both selected in the same resample. Then, an average for all $S^{k}_{i}$ is computed. The \codice{Significance\_Analysis} procedure gives as output $k^*$ as the value of $k$ with the local maximum of $R^{k}$, for $k\in[k_{min},k_{max}]$.
\end{itemize}
For convenience of the reader the procedure proposed by Levine and Domany is given in Fig.~\ref{algo:LD01}.

\begin{figure}
\[
\setlength{\fboxsep}{12pt}
\setlength{\mylength}{\linewidth}
\addtolength{\mylength}{-2\fboxsep}
\addtolength{\mylength}{-2\fboxrule}
\ovalbox{
\parbox{\mylength}{
\setlength{\abovedisplayskip}{0pt}
\setlength{\belowdisplayskip}{0pt}

\begin{pseudocode}{LD01}{H, <C_{1}>, D, k_{max}}
\FOR k\GETS 2 \TO k_{max} \DO \\
\BEGIN
1.\mbox{ }\mbox{ }\mbox{ }\mbox{Let $P_{1}$ be the partition of $D$ into $k$ clusters with the use of }C_{1}\\
2.\mbox{ }\mbox{ }\mbox{ Based on }P_{1}\mbox{ compute the connectivity matrix }S^{k}_{0}\\
\mbox{ }\mbox{ }\mbox{ }\mbox{ }\mbox{ }\mbox{ }\FOR i\GETS 1 \TO H \DO \\
\mbox{ }\mbox{ }\mbox{ }\mbox{ }\mbox{ }\mbox{ }\mbox{ }\BEGIN
3.\mbox{ }\mbox{ }\mbox{ }\mbox{ }\mbox{ }\mbox{ }\mbox{ }\mbox{ Generate (via a subsampling)  a data matrix $D_1$}\\
4.\mbox{ }\mbox{ }\mbox{ }\mbox{ }\mbox{ }\mbox{ }\mbox{ }\mbox{ Let $P_{1}$ be the partition of $D_{1}$ into $k$ clusters with the use of }C_{1}\\
5.\mbox{ }\mbox{ }\mbox{ }\mbox{ }\mbox{ }\mbox{ }\mbox{ }\mbox{ Based on }P_{1}\mbox{ compute the connectivity matrix }S^{k}_{i}\\
\mbox{ }\mbox{ }\mbox{ }\mbox{ }\mbox{ }\mbox{ }\mbox{ }\END\\
6.\mbox{ }\mbox{ }\mbox{ Compute } R^{k}\mbox{, as defined in }\eqref{eqn:LD01}\\
\END
\end{pseudocode}
}
}
\]
\caption{The Levine and Domany procedure}\label{algo:LD01}
\end{figure}

\subsection{Clest}
{\tt Clest}, proposed by Dudoit and Fridlyand~\cite{CLEST},  generalizes in many aspects {\em replicating analysis} by Breckenridge (see Section~\ref{sec:Statistic_Stability}). It can be regarded as a clever combination of hypothesis testing and resampling techniques. It estimates  $k^{*}$ by iterating the
following: randomly partition the original dataset in a \emph{learning} set and \emph{training} set. The learning set is used to build  a classifier $\mathcal{C}$ for the data, then to be
used to derive \vir{gold standard} partitions of the training set. That is, the classifier is assumed to be a reliable model for the data. It is then used to assess the quality of  the partitions of the training set obtained by  a given clustering algorithm.

 \begin{itemize}

    \item[-] \emph{The input parameters setup}: it uses the same input parameters of \emph{replicating analysis}, except for the test condition $H$, where in this case $c$ iterations of the \textbf{while} loop are allowed, for a given integer $c>1$.

     \item[-] \emph{The \codice{Statistics\_Stability} procedure}: it corresponds to the {\em replicating analysis}. Therefore, the set $S^{k}$ of records is a one dimensional array, in which each entry stores the value of the external index for the corresponding iteration.

      \item[-] \emph{The \codice{Stability\_Measure} procedure}: the \codice{Synopsis} procedure computes $R^{k}$ as the median of the values stored in $S^{k}$. The \codice{Significance\_Analysis} procedure proposed in {\tt Clest} is best presented as a procedure which is given in Fig.~\ref{fig:SignificanceClest} and it is outlined next. The first step of the procedure generates a new dataset via the \codice{DGP} procedure that in this case it is an instance of null models (see Section~\ref{sec:statistics}), the $p_{max}$ is a \vir{significance level} threshold and $d_{min}$ is a minimum allowed difference between \vir{computed and expected} values. It is worth pointing out that the \codice{Significance\_Analysis} procedure provides an explicit prediction of $k^{*}$.
\end{itemize}
For convenience of the reader the {\tt Clest} procedure is given in Fig.~\ref{algo:ClestOriginal}. It is worth pointing out that steps 1-7 correspond to the \codice{Stability\_Statistics} detailed above. Step 8 is the \codice{Synopsis} call in the \codice{Stability\_Measure} procedure, and the next steps corresponds to the \codice{Significance\_Analysis} procedure described above (see Fig.~\ref{fig:SignificanceClest}).

\begin{figure}
\[
\setlength{\fboxsep}{12pt}
\setlength{\mylength}{\linewidth}
\addtolength{\mylength}{-2\fboxsep}
\addtolength{\mylength}{-2\fboxrule}
\ovalbox{
\parbox{\mylength}{
\setlength{\abovedisplayskip}{0pt}
\setlength{\belowdisplayskip}{0pt}

\begin{pseudocode}{Significance\_Analysis}{R^{k_{min}},\ldots,R^{k_{max}}}
\FOR i\GETS 0 \TO B_{0} \DO \\

\BEGIN
1.\mbox{ }\mbox{ }\mbox{ }D^{i} \GETS \CALL{DGP}{D}\\

2.\mbox{ }\mbox{ }\mbox{ }\FOR k\GETS k_{min} \TO k_{max} \DO \\

\mbox{ }\mbox{ }\mbox{ }\mbox{ }\mbox{ }\BEGIN
3.\mbox{ }\mbox{ }\mbox{ }\mbox{ }\mbox{ }\mbox{ }\mbox{ }\mbox{ }S^{k} \GETS \CALL{Stability\_Statistics}{D^{i}, H, \beta, <C_{1},C_2,\ldots,C_t>, k}\\

4.\mbox{ }\mbox{ }\mbox{ }\mbox{ }\mbox{ }\mbox{ }\mbox{ }\mbox{ }R_{i}^{k} \GETS \CALL{Synopsis}{S^{k}}\\
\mbox{ }\mbox{ }\mbox{ }\mbox{ }\mbox{ }\END\\

\END\\

5.\mbox{ }\FOR k\GETS k_{min} \TO k_{max} \DO \\
\mbox{ }\mbox{ }\mbox{ }\BEGIN
6.\mbox{ }\mbox{ }\mbox{ }\mbox{ }\mbox{ }\mbox{ }t_k^{0} \GETS \mbox{Compute the average of the }R^{k}_{i}\mbox{ values}\\
7.\mbox{ }\mbox{ }\mbox{ }\mbox{ }\mbox{ }\mbox{ }p_k \GETS \mbox{Compute the p-value of }R^{k}\\
8.\mbox{ }\mbox{ }\mbox{ }\mbox{ }\mbox{ }\mbox{ }d_k \GETS R^{k} - t_{k}{0}\\

\mbox{ }\mbox{ }\mbox{ }\END\\

9.\mbox{ }\mbox{Define a set } K=\{k_{min}\leq k \leq k_{max}: p_{k}\leq p_{max} \mbox{ and } d_{k}\geq d_{min}\}\\

10.\mbox{ }\IF K = \emptyset \THEN k^{*}\GETS 1
\ELSE  k^{*} \GETS \argmax\limits_{k\in K} d_{k}\\

\RETURN{k^{*}}
\end{pseudocode}
}
}
\]
\caption{Implementation of the \codice{Significance\_Analysis} procedure proposed by Dudoit and Fridlyand for {\tt Clest}.}\label{fig:SignificanceClest}
\end{figure}

\begin{figure}
\[
\setlength{\fboxsep}{12pt}
\setlength{\mylength}{\linewidth}
\addtolength{\mylength}{-2\fboxsep}
\addtolength{\mylength}{-2\fboxrule}
\ovalbox{
\parbox{\mylength}{
\setlength{\abovedisplayskip}{0pt}
\setlength{\belowdisplayskip}{0pt}

\begin{pseudocode}{Clest}{B_{0}, H_{c},  <C_{1},C_{2}>, E, D,  k_{max}, p_{max}, d_{min}}

\FOR k\GETS 2 \TO k_{max} \DO\\
\BEGIN
1. \mbox{ }\mbox{ }\mbox{ }\FOR h\GETS 1 \TO H_{c} \DO\\
\mbox{ }\mbox{ }\mbox{ }\mbox{ }\mbox{ }\BEGIN
2.\mbox{ }\mbox{ }\mbox{ }\mbox{ }\mbox{ }\mbox{ }\mbox{ }\mbox{ }\mbox{Split the input dataset in $D_{L}$ and $D_{T}$, the  learning and training sets, }\\\mbox{ }\mbox{ }\mbox{ }\mbox{ }\mbox{ }\mbox{ }\mbox{ }\mbox{ }\mbox{ }\mbox{ }\mbox{ }\mbox{respectively}\\
3.\mbox{ }\mbox{ }\mbox{ }\mbox{ }\mbox{ }\mbox{ }\mbox{ }\mbox{ Train the classifier }C_{1} \mbox{on }D_{T}\\

6.\mbox{ }\mbox{ }\mbox{ }\mbox{ }\mbox{ }\mbox{ }\mbox{ }\mbox{ Partition $D_{L}$ into $k$ clusters by $C_{1}$ and $C_{2}$ in order to obtain }\\\mbox{ }\mbox{ }\mbox{ }\mbox{ }\mbox{ }\mbox{ }\mbox{ }\mbox{ }\mbox{ }\mbox{ }\mbox{ }\mbox{$P_{1}$ and $P_{2}$, respectively.}\\
7.\mbox{ }\mbox{ }\mbox{ }\mbox{ }\mbox{ }\mbox{ }\mbox{ }\mbox{ }m_{k,h}=E(P_{1},P_{2})\\
\mbox{ }\mbox{ }\mbox{ }\mbox{ }\mbox{ }\END\\
8. \mbox{ }\mbox{ }\mbox{ }\mbox{ }t_{k} \GETS median(m_{k,1},\ldots,m_{k,H})\\
\mbox{ }\mbox{ }\mbox{ }\mbox{ }\mbox{ }\mbox{ }\FOR b\GETS 1 \TO B_{0} \DO\\
\mbox{ }\mbox{ }\mbox{ }\mbox{ }\mbox{ }\mbox{ }\BEGIN
9.\mbox{ }\mbox{ }\mbox{ }\mbox{ }\mbox{ }\mbox{ }\mbox{ }\mbox{ Generate (via a null model),  a data matrix $D^{b}$}\\
10.\mbox{ }\mbox{ }\mbox{ }\mbox{ }\mbox{ }\mbox{ }\mbox{ }\mbox{Repeat steps 1-8 on }D^{b}\\
\mbox{ }\mbox{ }\mbox{ }\mbox{ }\mbox{ }\mbox{ }\mbox{ }\END\\

11. \mbox{ }\mbox{ }\mbox{ }\mbox{ Compute the average of these H statistics,
and denote it with }t_{k}^{0}\\
12. \mbox{ }\mbox{ }\mbox{ }\mbox{ Compute the p-value }p_{k}\mbox{ of }t_{k} \\
13.\mbox{ } \mbox{ }\mbox{ }\mbox{ } d_{k}\GETS t_{k}-t^{0}_{k}\\
\END\\
14.\mbox{ } K\GETS\{ 2 \leq k \leq k_{max}: p_{k} \leq p_{max}\mbox{ and }d_{k}
\geq d_{min}\}\\

\IF K = \emptyset \THEN k^{*}\GETS 1
 \ELSE  k^{*} \GETS \argmax\limits_{k\in K} d_{k}\\

\RETURN{k^{*}}

\end{pseudocode}
}
}
\]
\caption{The \codice{Clest} procedure.}\label{algo:ClestOriginal}
\end{figure}

\subsection{Roth et al.}

In analogy with {\tt Clest}, this method, by Roth et al.~\cite{Roth02aresampling}, also generalizes \emph{replicating analysis}.
\begin{itemize}
    \item[-] \emph{The input parameters setup}: as in {\tt Clest}.

   \item[-] \emph{The \codice{Statistics\_Stability} procedure}: steps 1-6  that are the same as in \emph{replicating analysis}. Recall from that latter procedure, that $P_1$ and $P_2$ are two partitions obtained by a classifier and a clustering algorithm, respectively. The \codice{Collect\_Statistic} procedure takes as input $P_1$ and $P_2$ and generates a new partition by computing a minimum weighted perfect bipartite matching~\cite{PapadimitriouS82}. Therefore, assuming $P_1$ as a correct solution, \codice{Collect\_Statistic} gives as output the number of misclassified elements, normalized with respect to the case in which the prediction is random.

   \item[-] \emph{The \codice{Stability\_Measure} procedure}: the \codice{Synopsis} procedure computes the average over the assignment cost and it computes the \vir{expected (in)-stability} value defined as the expectation with respect to the two different datasets. Finally, the \codice{Significance\_Analysis} procedure gives as output the value of $k$ with the minimum \vir{expected (in)-stability} value as $k^*$.

\end{itemize}
For convenience of the reader, the procedure proposed by Roth et al. is given in Fig.~\ref{algo:RLBB02}.

\begin{figure}
\[
\setlength{\fboxsep}{12pt}
\setlength{\mylength}{\linewidth}
\addtolength{\mylength}{-2\fboxsep}
\addtolength{\mylength}{-2\fboxrule}
\ovalbox{
\parbox{\mylength}{
\setlength{\abovedisplayskip}{0pt}
\setlength{\belowdisplayskip}{0pt}

\begin{pseudocode}{RLBB02}{H_{c}, <C_{1}, C_{2}>, D, k_{max}}
\FOR k\GETS k_{min} \TO k_{max} \DO  \\
\BEGIN
 \mbox{ }\mbox{ }\mbox{ } \FOR i\GETS 0 \TO H_{c} \DO\\
\mbox{ }\mbox{ }\mbox{ }\mbox{ }\mbox{ }\BEGIN
1.\mbox{ }\mbox{ }\mbox{ }\mbox{ }\mbox{ }\mbox{ }\mbox{ }\mbox{Split the input dataset in $D_{L}$ and $D_{T}$, the  learning and training sets, }\\\mbox{ }\mbox{ }\mbox{ }\mbox{ }\mbox{ }\mbox{ }\mbox{ }\mbox{ }\mbox{ }\mbox{ }\mbox{respectively}.\\
3.\mbox{ }\mbox{ }\mbox{ }\mbox{ }\mbox{ }\mbox{ }\mbox{ Train the classifier }C_1 \mbox{on }D_{T}\\
5.\mbox{ }\mbox{ }\mbox{ }\mbox{ }\mbox{ }\mbox{ }\mbox{ Partition $D_{L}$ into $k$ clusters by $C_{1}$ and $C_{2}$ in order to obtain }\\\mbox{ }\mbox{ }\mbox{ }\mbox{ }\mbox{ }\mbox{ }\mbox{ }\mbox{ }\mbox{ }\mbox{$P_{1}$ and $P_{2}$, respectively.}\\
6.\mbox{ }\mbox{ }\mbox{ }\mbox{ }\mbox{ }\mbox{ }\mbox{ Find the correct permutation by the minimum}\\\mbox{ }\mbox{ }\mbox{ }\mbox{ }\mbox{ }\mbox{ }\mbox{ }\mbox{ }\mbox{ }\mbox{weighted perfect bipartite matching}\\
7.\mbox{ }\mbox{ }\mbox{ }\mbox{ }\mbox{ }\mbox{ }\mbox{ Normalize w.r.t. the random}\\
8.\mbox{ }\mbox{ }\mbox{ }\mbox{ }\mbox{ }\mbox{ }\mbox{ Compute the expected (in)-stability value}\\
\mbox{ }\mbox{ }\mbox{ }\mbox{ }\mbox{ }\END\\
\END
\end{pseudocode}
}
}
\]
\caption{The Roth et al. procedure.}\label{algo:RLBB02}
\end{figure}

\ignore{
\subsection{BB06}

This method, referred to as {\tt BB06}, is due to Bertrand and Bel Mufti~\cite{BertrandM06} is an internal validation measure proposed for partitional algorithms. It uses the Loevinger's measure in order to assess the cluster stability.

{\tt BB06} uses the following experimental setup (see Figs.~\ref{fig:InputCollect} and~\ref{fig:MacrosCollect}): $H_c$ is used as test, for a given number of iterations $c$ (e.g. $c=100$), $\alpha=0$, $\beta\in[0.7,0.9]$ and the \codice{DGP} is an instance of the \emph{proportionate stratified sampling} method (see Section~\ref{subsec:Subsampling}).  It can be derived by the Stability Measure paradigm as follows.}

\subsection{A Special Case: the Gap Statistics}

Although {\tt Gap} (see Section~\ref{sec:Gap}) is not an internal stability measure, the \codice{GP} procedure (see Fig.~\ref{algo:Gap}) can be derived from the Stability Measure paradigm as follows.

\begin{itemize}
    \item[-] \emph{The input parameters setup}: $\hat{H}_1$,  $\alpha=0$, $\beta$ is not relevant and  the set of clustering procedures $<C_{1},C_2,\ldots,C_t>$ consists of only one clustering algorithm $C_1$.

    \item[-] \emph{The \codice{Statistics\_Stability} procedure}: only steps 2 and 5-8 are performed. The \codice{Collect\_Statistic} procedure takes as input a partition $P_1$ relative to the dataset $D$ and it gives as output the {\tt WCSS} value (see Section~\ref{sec:WCSS-Theory}).

     \item[-] \emph{The \codice{Stability\_Measure} procedure}: the \codice{Synopsis} procedures return a copy of the {\tt WCSS} values. Finally, the \codice{Significance\_Analysis} procedure is best presented as a procedure which is given in Fig.~\ref{fig:SignificanceGap}, where in the first step the procedure generates a new dataset via the \codice{DGP} procedure that in this case it is an instance of null models (see Section~\ref{sec:statistics}). Intuitively, in the remaining steps, it compares the value of {\tt WCSS} obtained for $D$ and for the null model datasets. The prediction of $k^*$ is based on running a certain number of times the procedure \codice{Stability\_Measure} taking then the most frequent outcome as the prediction.

\end{itemize}
\begin{figure}
\[
\setlength{\fboxsep}{12pt}
\setlength{\mylength}{\linewidth}
\addtolength{\mylength}{-2\fboxsep}
\addtolength{\mylength}{-2\fboxrule}
\ovalbox{
\parbox{\mylength}{
\setlength{\abovedisplayskip}{0pt}
\setlength{\belowdisplayskip}{0pt}
\begin{pseudocode}{Significance\_Analysis}{R^{k_{min}},\ldots,R^{k_{max}}}

\FOR k\GETS k_{min} \TO k_{max} \DO  \\
\BEGIN
 \mbox{ }\mbox{ }\mbox{ }\mbox{ }\mbox{ }\FOR i\GETS 0 \TO B_{0} \DO\\

\mbox{ }\mbox{ }\mbox{ }\mbox{ }\mbox{ }\mbox{ }\mbox{ }\BEGIN

\mbox{ }\mbox{ }\mbox{ }\mbox{ }\mbox{ }\mbox{ }\mbox{ }\mbox{ }\mbox{ }\mbox{ }\mbox{ }D^{i} \GETS \CALL{DGP}{D}\\

\mbox{ }\mbox{ }\mbox{ }\mbox{ }\mbox{ }\mbox{ }\mbox{ }\mbox{ }\mbox{ }\mbox{ }\mbox{ }R_{i}^{k} \GETS \CALL{Stability\_Statistics}{D^{i}, H, \beta, <C_{1}>, k}\\
\mbox{ }\mbox{ }\mbox{ }\mbox{ }\mbox{ }\mbox{ }\mbox{ }\END\\

\mbox{ }\mbox{ }\mbox{ }\mbox{ }\mbox{ }\mbox{ }\mbox{Compute }Gap(k)= \frac{1}{B_{0}}\sum_{i=1}^{B_{0}} R_{i}^{k} - R^{k}.\\

\mbox{ }\mbox{ }\mbox{ }\mbox{ }\mbox{ }\mbox{ }\mbox{Compute the standard deviation }sd(k) \mbox{of the set of numbers }\{R_{1}^{k}, \ldots,R_{B_{0}}^{k} \}
\\\mbox{ }\mbox{ }\mbox{ }\mbox{ }\mbox{ }\mbox{ }s(k)=\left(\sqrt{1 +\frac{1}{B_{0}}}\right)sd(k).\\

\END\\

k^*\mbox{ is the first value of }k\mbox{ such that } Gap(k)\geq Gap(k+1)-s(k+1).\\

\RETURN{k^{*}}
\end{pseudocode}

}
}
\]
\caption{Implementation of the \codice{Significance\_Analysis} procedure for {\tt Gap}.}\label{fig:SignificanceGap}
\end{figure}

\chapter{Non-negative Matrix Factorization}\label{chap:NMF}

This chapter describes one of the methodologies that has gained prominence in the data analysis literature: Non-negative Matrix Factorization (NMF for short).
In particular, the mathematical formulation of NMF and some algorithms that have been developed to compute it are presented. Moreover, some applications of NMF are also discussed. It is worth pointing out that, for this dissertation, NMF is of interest as a clustering algorithm. In fact,  in the next chapter, its first benchmarking on microarray data is presented.

\section{Overview}
One common ground on which many
data analysis methods rest is to replace the original data by a lower dimensional representation obtained via
subspace approximations~\cite{lonardibook,Tibshrbook,Rice,Speed03,Witten00}. The goal is to explain the observed data using a limited number of basis components, which when combined together, approximate the original data as accurately as possible. This concise description  of the data allows to find possible structure in them. Indeed, a meaningful dimensionality reduction can be achieved only if the data has common underlying regularities and patterns.
Matrix factorization and principal component analysis are two of the many classical methods used to accomplish both the goals of reducing the number of variables and detecting structure underlying the data. While some of those techniques have been reviewed in Section~\ref{subsec:SubspaceMethods},  NMF is singled-out in this chapter since it is a substantial contribution to this area: the pioneering paper by Lin and Seung has been  followed-up by extensions and uses of NMF in a broad range of domains.  In particular,  molecular biology applications, as described in the survey of Devarajan~\cite{Devarajan}, image processing~\cite{GuillametICPR,NMF} and text mining~\cite{NMF,NielsenBalslevHansen}.

This chapter is organized as follows. Section~\ref{sec:MF} provides a formalization of matrix factorization. Sections~\ref{sec:NMFScheme} and~\ref{sec:NMFAlg} provide the general scheme and different algorithms for NMF, respectively. Finally, in Section~\ref{sec:NMFAlg},  several applications of NMF to image processing, text mining and molecular biology are briefly described.

\ignore{
One of the matrix factorization methodologies that has gained much attention over the past decade is the Non-negative Matrix Factorization due to Lee and Seung~\cite{Lee97unsupervisedlearning,NMF,lee00algorithms}.
}
\section{Matrix Factorization: A Basic Mathematical Formulation}\label{sec:MF}

In this section, a general statement of Matrix Factorization (MF for short) is given, then two restricted versions of it are presented: Positive Matrix Factorization (PMF for short)  and the already mentioned NMF.

\paragraph{MF.} Let  $V$ be  matrix  of size $m\times n$. Usually, when $V$ is a data matrix, $m$ denotes  the number of features and $n$ denotes the number of items in $\Sigma$.
Given $V$ and an integer $r<\mathop{\min }\,\{n, m\}$,  one wants  to find two matrix factors $W$ and $H$ of size $m\times r$ and $r \times n$, respectively, such that:

\begin{equation}\label{eqn:VapproxWH}
V\approx W \times H.
\end{equation}

One has that $v\approx Wh$, where $v$ and $h$ are homologous columns in  $V$ and $H$, respectively.
That is, each column $v$ is approximated by a linear combination of the columns of $W$, weighted by the components of $h$. Therefore, $W$ can be regarded as containing a basis.

\paragraph{PMF.} It is  a variant of MF, by  Paatero and Tapper~\cite{PaateroTapper},   since $V$  is constrained  to be a positive matrix. One possible solution can be obtained by computing a   positive low-rank approximation of $W\times H$, via an optimization of the function:

$$
\underset{W,H\ge 0}{\mathop{\min }}\,{{\left\| A \circ (V-W\times H) \right\|}_{F}},
$$

\noindent where $A$ is a weighted matrix whose elements are associated to the elements of $V$, $\circ$ is the Hadamard product and ${\left\|  \right\|}_{F}$ denotes the Frobenious norm~\cite{MatrixComputation}. Paatero and Tapper also proposed an alternative
least squares algorithm in which one of the two matrices is fixed and the optimization is solved with respect to the other one and viceversa. Later, Paatero developed a series of algorithms~\cite{Paatero1997,PaateroPARAFAC,Paatero99} using a longer product of matrices to replace the approximate $W\times H$.

\paragraph{NMF.} A variant of PMF,  NMF allows  $V$ to be non-negative, this latter being a   constraint much more suitable for  data analysis tasks. It also offers several advantages~\cite{Gaujoux2010,NMF}:

\begin{enumerate}
\item The constraint that the matrix factors are  non-negative  allows for their intuitive interpretation as real underlying components within the context defined by the original data. The basis components can be directly interpreted as parts or basis samples, present in different proportions in each observed sample.\\

\item NMF generally produces sparse results, implying that the basis and/or the mixture coefficients have only a few non-zero entries. \\

\item Unlike other decomposition methods such as singular value decomposition~\cite{TrefethenBau} or independent component analysis~\cite{BellSejnowski,CichockiAmari}, the aim of NMF is not to find components that are orthogonal or independent, therefore allowing overlaps among  the components.\\
\end{enumerate}

\section{The General NMF Scheme}\label{sec:NMFScheme}

In order to compute an NMF, all the methods proposed in the literature use the same simple scheme, reported in Fig.~\ref{algo:NMFmain}, where the following \vir{ingredients} have to be specified:

\begin{figure}
\[
\setlength{\fboxsep}{12pt}
\setlength{\mylength}{\linewidth}
\addtolength{\mylength}{-2\fboxsep}
\addtolength{\mylength}{-2\fboxrule}
\ovalbox{
\parbox{\mylength}{
\setlength{\abovedisplayskip}{0pt}
\setlength{\belowdisplayskip}{0pt}

\begin{pseudocode}{NMF}{V}
 \mbox{ Initialize $W$ and $H$}\\
\WHILE H \DO \\
\BEGIN
\mbox{ }\mbox{ }\mbox{ }\mbox{ }\mbox{ }\mbox{ update $W$ and $H$}\\
\END
\end{pseudocode}
}
}
\]
\caption{The basic \codice{NMF} procedure.}\label{algo:NMFmain}
\end{figure}

\begin{itemize}
    \item[(i)] an initialization for matrices $W$ and $H$;

    \item[(ii)] an update rule;

    \item[(iii)] a stopping criterion.
\end{itemize}

The scheme  is iterative: it starts from the two initial matrices $W$ and $H$, which are repeatedly  updated via a fixed  rule until the stopping criterion is satisfies.

For point (i), the matrices $W$ and $H$  are initialized at random. In that case, different runs of NMF, with the same input, are likely to  produce different results. However, it is worth pointing out that sophisticated  deterministic initialization methods have been proposed to choose appropriate initial values referred to as \vir{seed  NMF algorithms}~\cite{Langville06initializations,BoutsidisGallopoulos,Wild2003}. When one uses the same initial seed, the procedure is deterministic, i.e., it always produces  the same output on a given input.

With respect to points (ii) and (iii), quite many numerical algorithms have been developed for NMF~\cite{LawsonHanson,lee00algorithms,Lin07,Lin07Neural,Piper04}.

For (ii), the most popular follow at least one of the following  principles and techniques: alternating direction iterations,  projected Newton,  reduced quadratic approximation, and  descent search. Correspondingly, specific implementations can be categorized into alternating least squares algorithms (ALS for short)~\cite{PaateroTapper}, multiplicative update algorithms~\cite{NMF,lee00algorithms} combined with  gradient descent search, and hybrid algorithms~\cite{Berry06algorithmsand,Piper04}. For the interested reader, it is worth pointing out that a general assessments of these methods can be found in~\cite{LangvilleMeyerAlbright,Tropp}. In the next section, a brief description of the gradient descent and ALS methods is provided.

As for (iii), the most \vir{popular}  stopping  criteria are: a fixed number of iterations, of a suitably defined matrix, referred to as consensus (see~\cite{BrunetNMF} for formal definitions)  and stationarity of the objective function value.

\section{The NMF Procedure and Its Variants}\label{sec:NMFAlg}
 As anticipated, two of the most popular strategies for the NMF computation are detailed here. Moreover, other approaches specific to particular  applications are briefly described. Finally, a list of the software available for the computation of NMF is reported.

\subsection{Multiplicative Update Rules and Gradient Descent}
Gradient descent is an optimization strategy widely used in the literature~\cite{Snyman,SwagatamSambarta,Yamagishi}.
Let $F$ be a function to be minimized. Unfortunately, it is possible that the computation of the absolute minimum of $F$ may  be practically unfeasible. In that case,  a local minimum is an \vir{acceptable solution}. With reference to Fig.~\ref{fig:NMFGFD}, the gradient descent approach tries to compute such a local minimum as follows: the algorithm starts from a random point and moves to a successive point by minimizing along the local downhill gradient between the two points. This process is iterated until a minimum of the objective function is reached.

Lee and Seung propose the following optimization function :

\begin{equation}\label{eqn:LeeSeungFunction}
f(W,H)=\frac{1}{2}{\left\| V-W\times H\right\|}^{2}
\end{equation}
as a convergence criterion.

In order to optimize that function,  Lee and Seung propose \vir{multiplicative update rules} in conjunction with gradient descent.
As discussed in~\cite{lee00algorithms}, the use of a multiplicative update strategy as opposed to an additive one is that  the latter does not guarantee a systematic decrease in the cost function.

The algorithm starts with a random initialization of the two matrices $W$ and $H$ and then, at each iteration, updates them as follows:

$$
H_{a,\mu}\leftarrow H_{a,\mu} \times \frac{(W^{t}\times V)_{a,\mu}}{(W^{t}\times W \times H)_{a,\mu}}
$$
\noindent and
$$
W_{i,a}\leftarrow W_{i,a} \times \frac{(V\times H^{t})_{i,a}}{(W\times H \times H^{t})_{{i,a}}}
$$

\noindent where $1\leq a\leq n $, $1 \leq i,\mu \leq r$ and $A^{t}$ denotes the transpose of a matrix $A$.

Letting  $\nabla_{H}f(W,H) = W^{t}\times (W\times H - V )$ and  $\nabla_{W}f(W,H) = (W\times H - V )\times H^{t}$, the above update rules can be rewritten as follows:

\begin{equation}
H_{a,\mu}\leftarrow H_{a,\mu} - \frac{H_{a,\mu}}{(W^{t}\times W \times H)_{a,\mu}}\times \nabla_{H}f(W,H)\label{eq:update1}
\end{equation}

and

\begin{equation}
W_{i,a}\leftarrow W_{i,a} - \frac{W_{i,a}}{(W\times H \times H^{t})_{{i,a}}}\times \nabla_{W}f(W,H)\label{eq:update2}
\end{equation}

It is worth pointing out that the optimization function~\eqref{eqn:LeeSeungFunction} in the general NMF model can be modified in several ways, depending on the application at hand. For instance, some penalty terms can be added in order to gain more localization or enforce sparsity, and more constraints such as sparseness can be imposed.

\begin{figure}[ht]
\begin{center}
\epsfig{file=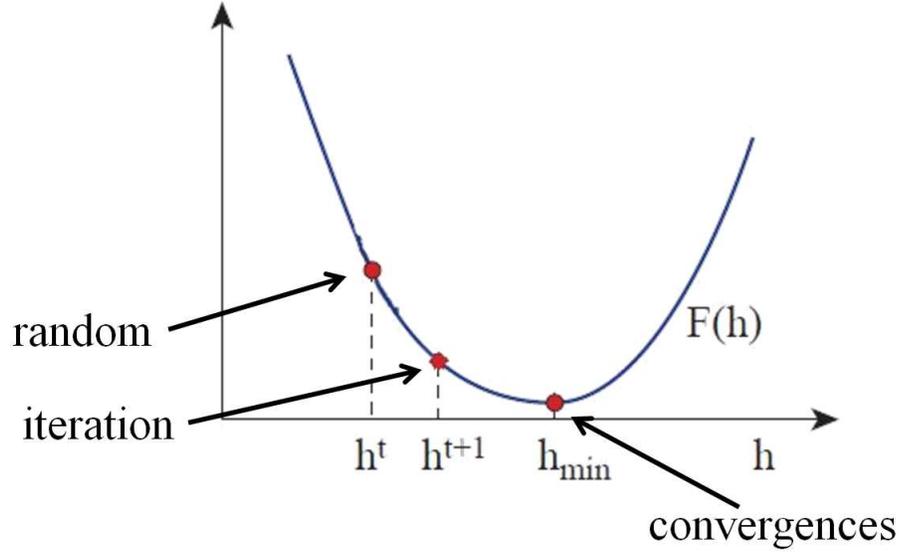,scale=0.3}
\end{center}
\caption{An example of gradient descend. Part of this figure is taken from~\cite{lee00algorithms}.}\label{fig:NMFGFD}
\end{figure}

Two drawbacks of the multiplicative algorithms are that: (1) the denominator of the step size may be zero and (2) once an element in $W$ or $H$ becomes zero, it remains zero and this fact should have some effects on the correct convergence of the algorithm.

Although Lee and Seung~\cite{lee00algorithms} claimed the convergence of the above algorithm, Gonzalez and Zhang~\cite{GonzalesZhang2005a} and Lin~\cite{Lin07,Lin07Neural} independently presented
numerical counter-examples, where the Lee and Seung algorithm fails to approach a stationary point, i.e., to converge to a local minimum.

Moreover, Lin~\cite{Lin07Neural} proposed to modify the  Lee and Seung method,  in order to solve the convergence problem, by keeping the same objective function, while update rules (\ref{eq:update1}) and (\ref{eq:update2}) become:

$$
H_{a,\mu}\leftarrow H_{a,\mu} - \frac{\overline{H}_{a,\mu}}{(W^{t}\times W \times \overline{H})_{a,\mu}+\delta}\times \nabla_{H}f(W,H)
$$
\noindent and
$$
W_{i,a}\leftarrow W_{i,a} - \frac{\overline{W}_{i,a}}{(\overline{W}\times H \times H^{t})_{i,a}+\delta}\,\times \nabla_{W}f(W,H)
$$

\noindent Both  $\epsilon$ and $\delta$ are pre-defined small positive constants,

\begin{displaymath}
\overline{H}_{a,\mu}=\left\{\begin{array}{ll}
H_{a,\mu}  & \mathop{if }\nabla_{H} f(W,H)_{a,\mu}\geq 0,  \\
\mathop{\max }(H_{a,\mu},\epsilon) & otherwise,\\
\end{array}\right.
\end{displaymath}

\noindent and $\overline{W}_{i,a}$ is defined in analogy with  $\overline{H}_{a,\mu}$.

Moreover, Lin also proposed to normalize $W$ so that the sum of its columns is one. With respect to the Lee and Seung method, this modified version requires same extra operations (e.g. to compute  $\overline{H}$ and $\overline{W}$), but that has no substantial effect  on the  complexity of the procedure.

Lin~\cite{Lin07Neural} proves the convergence of this modified algorithm, via the following theorem.

\begin{theorem}
The update sequence of $W$ and $H$ has at least one limit point.
\end{theorem}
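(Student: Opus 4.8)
The plan is to combine two facts: Lin's modified multiplicative updates drive the objective $f(W,H)=\tfrac12\|V-W\times H\|^{2}$ monotonically downhill, and the column normalization of $W$ confines the iterates to a bounded region; a limit point then exists by the Bolzano--Weierstrass theorem. First I would reprove the descent property for the modified rules, i.e.\ $f(W^{(t+1)},H^{(t+1)})\le f(W^{(t)},H^{(t)})$ for every $t$. The standard tool is an auxiliary (majorizing) function $G(H,H^{(t)})$ satisfying $G(H^{(t)},H^{(t)})=f(W^{(t)},H^{(t)})$ and $G(H,H^{(t)})\ge f(W^{(t)},H)$, whose minimizer in $H$ is exactly the update; one checks that replacing the raw denominator by the $\delta$-regularized, $\overline{H}$-corrected denominator of Lin's rule still yields a valid majorizer, so the inequality survives the modification. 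The symmetric argument handles the $W$-update, and the normalization of the columns of $W$ together with the compensating rescaling of the rows of $H$ leaves $W\times H$, hence $f$, unchanged. Consequently $\{f(W^{(t)},H^{(t)})\}$ is non-increasing and bounded below by $0$, so it converges; in particular $f(W^{(t)},H^{(t)})\le f(W^{(0)},H^{(0)})=:C_0$ for all $t$.

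Next I would show that $\{(W^{(t)},H^{(t)})\}$ is bounded. For $W$ this is immediate: each iterate lies in the compact set $\{W\ge 0:\text{each column sums to }1\}\subset\mathds{R}^{m\times r}$. For $H$, the bound $\|V-W^{(t)}\times H^{(t)}\|\le\sqrt{2C_0}$ gives $\|W^{(t)}\times H^{(t)}\|\le\|V\|+\sqrt{2C_0}$, and I would then exploit the explicit $\delta$-regularized form of the $H$-update to argue that a step from a finite $H^{(t)}$ cannot produce an $H^{(t+1)}$ of arbitrarily large norm: a large entry $H^{(t)}_{a,\mu}$ contributes to its own update denominator through $(W^{t}\times W\times\overline{H})_{a,\mu}$, so its growth is self-limiting, and combined with the uniform bound on $\|W^{(t)}\times H^{(t)}\|$ this should rule out $\|H^{(t)}\|\to\infty$. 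Once both factors are bounded, $\{(W^{(t)},H^{(t)})\}$ lies in a compact subset of $\mathds{R}^{m\times r}\times\mathds{R}^{r\times n}$, and Bolzano--Weierstrass yields a convergent subsequence whose limit is the desired limit point.

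The main obstacle is precisely this last boundedness of $\{H^{(t)}\}$. A bound on $W^{(t)}\times H^{(t)}$ does \emph{not} by itself bound $H^{(t)}$, because the normalized matrices $W^{(t)}$ may become ill-conditioned---their columns can approach linear dependence---so that the smallest singular value of $W^{(t)}$ need not stay bounded away from $0$ along the sequence. Making the ``self-limiting'' heuristic rigorous therefore requires care: one must either carry out a case analysis on which columns of $W^{(t)}$ (nearly) collapse, using the $\epsilon$-truncation and the $\delta$ term to keep the updates well defined and to derive a uniform recursive bound $\|H^{(t+1)}\|\le \phi(\|H^{(t)}\|)$ with $\phi$ having a bounded orbit, or show that any escape of $\|H^{(t)}\|$ to infinity would eventually force $f$ to increase, contradicting the monotonicity established in the first step. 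With that in place the proof closes as above.
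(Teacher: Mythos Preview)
Your skeleton matches the paper's: $W^{(t)}$ is trivially bounded by the column normalization, so the whole task is to bound $\{H^{(t)}\}$, and one argues by contradiction along a subsequence. But your mechanism for the contradiction has a genuine gap, and the ``self-limiting denominator'' heuristic is misleading in precisely the case that matters. Suppose $H^{(t)}_{a,\mu}\to\infty$ along a subsequence on which it is strictly increasing. If the $a$-th column of $W^{(t)}$ does not tend to zero, you are fine: for some $i$, $(W^{(t)}H^{(t)})_{i,\mu}\ge W^{(t)}_{i,a}H^{(t)}_{a,\mu}\to\infty$, hence $f\to\infty$, contradicting the monotone decrease of $f$. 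The dangerous case is when that column \emph{does} tend to $0$, and there your denominator argument collapses: $(W^\top W\,\overline{H})_{a,\mu}$ involves the $a$-th row of $W^\top W$, which vanishes with the $a$-th column of $W$, so only the fixed $\delta$ remains and nothing in the update self-limits $H_{a,\mu}$.

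The paper closes this case with a discreteness observation you do not have. Lin's normalization forces every column of $W^{(t)}$ to have sum either $1$ or $0$. Hence if along the subsequence the $a$-th column converges entrywise to $0$, the column sums---taking only the values $0$ and $1$---must eventually be exactly $0$, i.e.\ the entire $a$-th column of $W^{(t)}$ is identically zero for all large $t$ in the subsequence. But then $\nabla_H f(W^{(t)},H^{(t)})_{a,\mu}=\bigl(W^{(t)\top}(W^{(t)}H^{(t)}-V)\bigr)_{a,\mu}=0$, so Lin's $H$-update leaves $H_{a,\mu}$ unchanged; and the subsequent normalization (with its compensating rescaling of the rows of $H$) cannot increase the $a$-th row of $H$ when the $a$-th column of the updated $W$ is still zero. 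Thus $H^{(t+1)}_{a,\mu}\le H^{(t)}_{a,\mu}$ along the subsequence, contradicting strict increase. This dichotomy---nonzero limiting column forces $f\to\infty$, zero limiting column freezes $H_{a,\mu}$---is the missing idea; neither a recursive bound $\|H^{(t+1)}\|\le\phi(\|H^{(t)}\|)$ nor ``$f$ would have to increase'' produces it. (Incidentally, reproving descent via an auxiliary function is unnecessary overhead here: the paper simply uses Lin's strict-decrease result as input.)
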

\begin{proof}
Let $W^{k}$ and $H^{k}$ be the matrices at iteration $k$. It suffices to prove that ($W^{k}$, $H^{k}$), $k=1,\ldots,\infty$, are in a closed and bounded set. Since $W^{k}$ is normalized, one needs to show that ${H^{k}}$ is bounded. Otherwise, there is a component $H_{a,\mu}$ and an infinite index set $\kappa$ such that
\begin{equation}\label{eqn:lin40}
\begin{array}{lcr}
\underset{k\in \kappa}{\mathop{\lim }}\,H^{k}_{a,\mu} \rightarrow \infty,& H^{k}_{a,\mu}<H^{k+1}_{a,\mu}, &\forall k \in \kappa,\\
\end{array}
\end{equation}
\noindent and exist $\forall i$
$$
\underset{k\in \kappa, k\rightarrow \infty}{\mathop{\lim }}\,W^{k}_{i,a}.
$$

\noindent  One must have that  $W_{i,a}=0$, $\forall i$. Otherwise, there is an index $i$ such that
$$
\underset{k\in \kappa, k\rightarrow \infty}{\mathop{\lim }}\,(W^{k}\times H^{k})_{i,j}\geq \underset{k\in \kappa, k\rightarrow \infty}{\mathop{\lim }}\,W^{k}_{i,a}H^{k}_{a,j}=\infty.
$$

\noindent  Then,

$$
\underset{k\rightarrow \infty}{\mathop{\lim }}\,f(W^{k},H^{k}) \geq \underset{k\in \kappa, k\rightarrow \infty}{\mathop{\lim }}\,{\left\| V_{i,j}-(W^{k}\times H^{k})_{i,j}\right\|}^{2}=\infty
$$

\noindent  contradicting that $f(W^{k},H^{k})$ is strictly decreasing. Since the sum of the columns of $W^{k}$ is either one or zero, $W_{i,a}=0$, $\forall i$ implies
\begin{equation}\label{eqn:lin42}
\begin{array}{lcr}
W_{i,a}=0, & \forall i & \forall k\in \kappa \mbox{ large enough}.\\
\end{array}
\end{equation}

\noindent Then,

\begin{equation}\label{eqn:lin43}
\begin{array}{lcr}
 \nabla_H f(W^{k},H^{k})_{a,\mu}=0, & \forall k\in \kappa, \mbox{ so }H^{k,n}_{a,\mu}=H^{k}_{a,\mu}, &  \forall k\in \kappa .\\
\end{array}
\end{equation}

\noindent  Moreover, by \eqref{eqn:lin42}, one can prove that
$$
\underset{k\in \kappa, k\rightarrow \infty}{\mathop{\lim }}\,W^{k,n}_{i,a} = \underset{k\in \kappa, k\rightarrow \infty}{\mathop{\lim }}\,W^{k}_{i,a} =0, \forall i
$$

Thus, in normalizing $(W^{k,n},H^{k,n})$, $H^{k,n}$'s $a$-th row is either unchanged or decreased. With \eqref{eqn:lin43},
$$
H^{k+1}_{a,\mu} \leq H^{k}_{a,\mu}, \forall k \in \kappa
$$
an inequality contradicting \eqref{eqn:lin40}. Thus, ${H^{k}}$ is bounded. Therefore, $\{W^{k}, H^{k}\}$ is in a compact set, so there is at least one convergent sub-sequence.
\end{proof}

\subsection{Alternating Least Squares Algorithms}

In this class of algorithms, a least squares step is followed by another least squares step in an alternating way. That is, the algorithm updates in alternation  $W$ and $H$ by solving a  distinct matrix equations for each, as detailed in procedure ALS reported in Fig.~\ref{algo:ALS}. In that procedure, in order to keep non-negativity, a simple projection step is performed to set all negative elements resulting from the least squares computation to zero.  Moreover, some additional flexibility,  not available in other algorithms-especially those of the multiplicative update class- is also available. For instance,
as detailed above, in the multiplicative algorithms, if an element in $W$ or $H$ becomes 0, it must remain 0. This is restrictive, since once the algorithm starts heading down a path towards a fixed point, even if it is a poor fixed point, it must continue in that vein. The ALS algorithms is more flexible, allowing the iterative process to escape from a poor path.
It is worth pointing out that some improvements to the basic ALS algorithm scheme appear in~\cite{Berry06algorithmsand,Paatero99} and that, depending on the implementation, ALS algorithms can be very fast.

\begin{figure}
\[
\setlength{\fboxsep}{12pt}
\setlength{\mylength}{\linewidth}
\addtolength{\mylength}{-2\fboxsep}
\addtolength{\mylength}{-2\fboxrule}
\ovalbox{
\parbox{\mylength}{
\setlength{\abovedisplayskip}{0pt}
\setlength{\belowdisplayskip}{0pt}

\begin{pseudocode}{ALS}{V}
\WHILE \mbox{convergence is not reached} \DO \\
\BEGIN
1.\mbox{ Initialize $W$ at random}\\
2.\mbox{ Solve for $H$ in matrix equation $W^{t}\times W\times H=W^{t}\times V$}\\
\mbox{ }\mbox{ }\mbox{ and set all negative elements in $H$ to zero}\\
3.\mbox{ Solve for $W$ in matrix equation $H\times H^{t}\times W^{t}=H\times V^{t}$}\\
\mbox{ }\mbox{ }\mbox{ and set all negative elements in $W$ to zero}\\
\END
\end{pseudocode}
}
}
\]
\caption{The basic \codice{ALS} procedure}\label{algo:ALS}
\end{figure}

\subsection{NMF Algorithms with Application-Dependent Auxiliary Constraints}

A great deal of work has been devoted to the analysis, extension, and application of NMF algorithms in science, engineering and medicine. The NMF has been cast into alternate formulations by various authors. In this section a brief survey on the state of the art is provided.
One of the main improvements is to develop the algorithms by using different objective functions. Lee and Seung~\cite{lee00algorithms} provided an information theoretic formulation based on the Kullback-Leibler divergence~\cite{KullbackLeibler1951} of $V$ from $W\times H$. Dhillon and Sra~\cite{Dhillon05} generalized the NMF methods with Bregman divergence. Cichocki et al.~\cite{Cichocki06} have proposed cost functions based on Csisz\'{a}r's $\varphi$-divergence. Wang et al.~\cite{WangJiaHuTurk} propose a formulation that enforces constraints based on Fisher linear discriminant analysis for improved determination of spatially localized features. Guillamet et al.~\cite{GuillametBressan} have suggested the use of a diagonal weight matrix $Q$ in a new factorization model, $V\times Q \approx W\times H\times Q$, in an attempt to compensate for feature redundancy in the columns of $W$.
Other approaches propose alternative cost function formulations. Smoothness constraints have been used to regularize the computation of spectral features in remote sensing data~\cite{Pauca200629,Piper04}. Chen and Cichocki~\cite{Chen05} used temporal smoothness and spatial correlation constraints to improve the analysis of EEG data for early detection of Alzheimer's disease. Hoyer~\cite{Hoyer,Hoyer04} employed sparsity constraints on either $W$ or $H$ to improve local rather than global representation of data. The extension of NMF to include such auxiliary constraints is problem dependent and often reflects the need to compensate for the presence of noise or other data degradations in $V$.

\subsection{Software Available}

Several algorithms performing NMF have been implemented and  published. The interested reader will find a compendium of them in~\cite{Berry06algorithmsand}.  In this dissertation, some of them are mentioned next.

 Hoyer~\cite{Hoyer} provided a package that implements five different algorithms. Cichocki and Zdunek~\cite{NMFLab} produced an  appealing NMFLAB package that implements a wide range of NMF algorithms, which can be combined, tested and compared via a graphical interface. However, availability only in MATLAB, a proprietary software, limits access to these packages within the scientific community. Some C/C++ implementations are also available~\cite{WangKO06}, including a parallel implementation using the MPI. Recently, Gaujoux and Seoighe~\cite{Gaujoux2010} propose a completely open-source package for the R/BioConductor platform~\cite{Bioconductor}, which is a well established and extensively used standard in statistical and bioinformatics research.

\section{Some Applications}\label{sec:NMFApps}
In this section, the spectrum of possible applications of NMF is highlighted. In particular, the ones in  image analysis  and text mining, originally used to Lee and Seung to show the validity of NMF, are detailed. Moreover, a short summary of the more recent applications of NMF in bioinformatics is also provided.

\subsection{Image Analysis}

In this domain, the use of NMF is a natural choice, since an image can be represent as a non-negative matrix. Moreover, when  it is desirable to process datasets of images represented by column vectors, as composite objects or as separated parts, it is suggested that an NMF would enable the identification and classification of intrinsic \vir{parts} that make up the object being imaged by multiple observations~\cite{Donoho03whendoes,NMF}. An example is provided  in  Fig.~\ref{fig:NMFimage}, which is taken from~\cite{NMF}. Other work on face and image processing applications of NMF includes~\cite{GuillametBressan,GuillametICPR,GuillametSchiele,Guillametface,LawrenceMatusik,LawrenceRusinkiewicz,WangZL05}.

\begin{figure}[ht]
\begin{center}
\epsfig{file=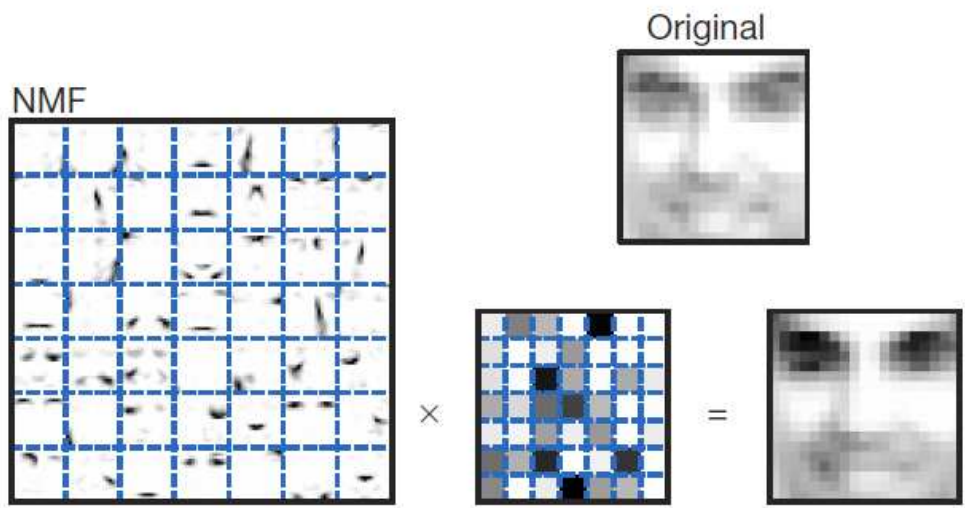,scale=0.6}
\end{center}
\caption{The NMF is applied to a  database of 2.429 facial images, each consisting of $19\times 19$ pixels, and constituting a matrix $V$. The NMF learns to represent a face as a linear combination of basis images.
The NMF basis and encodings contain a large fraction of vanishing coefficients, so both the basis images and image encodings are sparse. The basis images are sparse because they are non-global and contain several versions of mouths, noses and other facial parts, where the various versions are in different locations or forms.}\label{fig:NMFimage}
\end{figure}

\subsection{Text Mining}

The use of NMF on text documents has  highlighted its ability  to tackle semantic issues such as synonymy or even to cluster data. An example is given in Fig.~\ref{fig:NMFtext}, where NMF is used to discover semantic features of 30.991 articles from the Grolier encyclopedia. For each word in a vocabulary of size 15.276, the number of occurrences is counted in each article and it is used to form the matrix $V$. Each column of $V$ contains the word counts for a particular article, whereas each row of $V$ contains the counts of a particular word in different articles. Upper left, four of the $r=200$ semantic features (columns of $W$). As they are very high-dimensional vectors, each semantic feature is represented by a list of the eight words with highest frequency in that feature. The darkness of the text indicates the relative frequency of each word within a feature.
With reference to Fig.~\ref{fig:NMFtext} the bottom of the figure exhibits the two semantic features containing \vir{lead} with high frequencies. Judging from the other words in the features, two different meanings of \vir{lead} are differentiated by NMF.
Right, the eight most frequent words and their counts in the encyclopedia entry on the \vir{Constitution of the United States}. This word count vector was approximated by a superposition that gave high weight to the upper two semantic features, and none to the lower two, as shown by the four shaded squares in the middle indicating the activities of H.

Other work on text mining applications of NMF includes~\cite{Badea05clusteringand,BerryBrowne,Dhillon,PaucaSBP04,ShahnazBPP06,XuLiu}.

\begin{figure}[ht]
\begin{center}
\epsfig{file=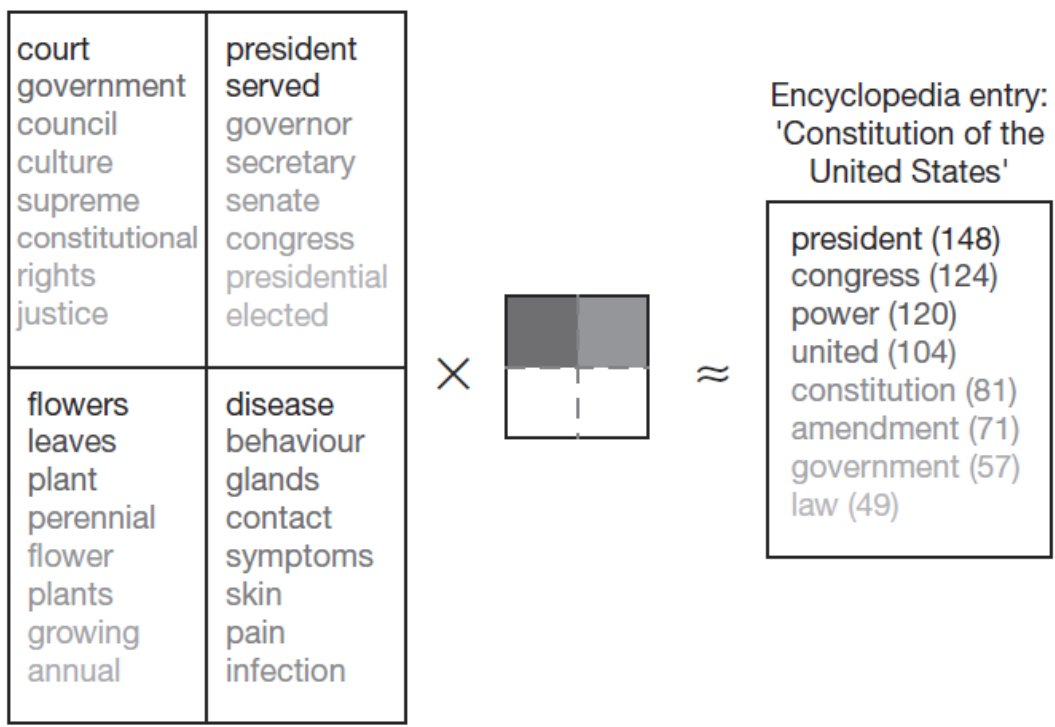,scale=0.6}
\end{center}
\caption{Example of NMF on text. This figure is takes from~\cite{NMF}.}\label{fig:NMFtext}
\end{figure}

\subsection{Bioinformatics}

NMF is a very versatile pattern discovery technique that has received quite a bit of attention in the computational biology literature, as discussed in the review by Devarajan \cite{Devarajan}.
In this section,  some of them are detailed.

\paragraph{Class Comparison and Prediction.} Recently, the NMF is used in this domain. For instance, Fogel et al.~\cite{Fogel2007} apply NMF to identify ordered sets of genes and utilize them in sequential analysis of variance procedures in order to identify differentially expressed genes.
Okun and Priisalu~\cite{Okun2006} apply NMF, successfully,  as a dimension reduction tool in conjunction with several classification methods for protein fold recognition. They report superior performance, in terms of misclassification error rate,  of three classifiers based on nearest neighbor methods when applied to NMF reduced data relative to the original data. Similar approaches have been proposed in~\cite{JungLee} and~\cite{Kelm2007} for fold recognition  and magnetic resonance spectroscopic imaging, respectively.

\paragraph{Cross-Platform and Cross-Species Characterization.}
The rapid advances in high-throughput technologies have resulted in the generation of independent large-scale biological datasets using different technologies in different laboratories.
In this scenario, it is important to assess and interpret potential differences and similarities in these datasets in order to enable cross-platform and cross-species analysis and the possible characterization of the data.
Tamayo et al.~\cite{Tamayo2007} describe an approach referred to as metagene projection in order to reduce noise and technological variation, while capturing invariant biological features in the data. Furthermore, this approach allows the use of prior knowledge based on existing datasets in analyzing and characterizing new data~\cite{Isakoff06122005}. In metagene projection, the dimensionality of a given dataset is reduced using NMF, based on a pre-specified rank $k$ factorization.

\paragraph{Molecular Pattern Discovery.}
One of the most common applications of NMF in bionformatics,  and of  great interest for this thesis,  is in the area of molecular pattern discovery. In particular, for gene and protein expression microarray studies,  where there is  lack of a priori knowledge of the expected expression patterns,  for a given set of genes or phenotypes.  In this area,  NMF is successfully applied in order to discover biologically meaningful classes, i.e., clusters.

For instance, Kim and Tidor~\cite{Kim01072003} apply NMF as a tool to cluster genes and predict functional cellular relationships in Yeast, while Heger and Holm~\cite{HegerHolm} use it for the recognition of sequence patterns among related proteins. Brunet et al.~\cite{BrunetNMF} apply it to cancer microarray data for the elucidation of tumor subtypes. Moreover, they developed a model selection for NMF based on the consensus matrix (see Section~\ref{subsec:Consensus}) that enables the choice of the appropriate number of clusters in a dataset.
Following the same notation as in Brunet et al.~\cite{BrunetNMF} and Devarajan~\cite{Devarajan}, let $V$ represent the outcome of a microarray experiment, where there are $m$ samples,  each  composed of measurements of $n$ genes. In this case, $W$ and $H$ assume two very intuitive roles. $W$ is a matrix whose columns are \vir{metagenes} and $H$ is a matrix whose rows  are \vir{meta expression patterns}. If one is interested in clustering the samples in $r$ groups, as we do here, then one can place sample $i$ in cluster $j$ if  the expression level of sample $i$ is maximum in metagene $j$.  That is,  $H_{i,j}$ is maximum in the  $i$-th column of $H$. An example is given in Figure~\ref{fig:NMFCluster}.

Other approaches of molecular pattern discovery  that use NMF~\cite{CarmonaPascual,Costanzo01012001,YuanGao11012005,KimPark,PascualMontano2006,bioNMF,TiradoMontano}, or its sparse version, have been proposed in the literature.

\begin{figure}[ht]
\begin{center}
\epsfig{file=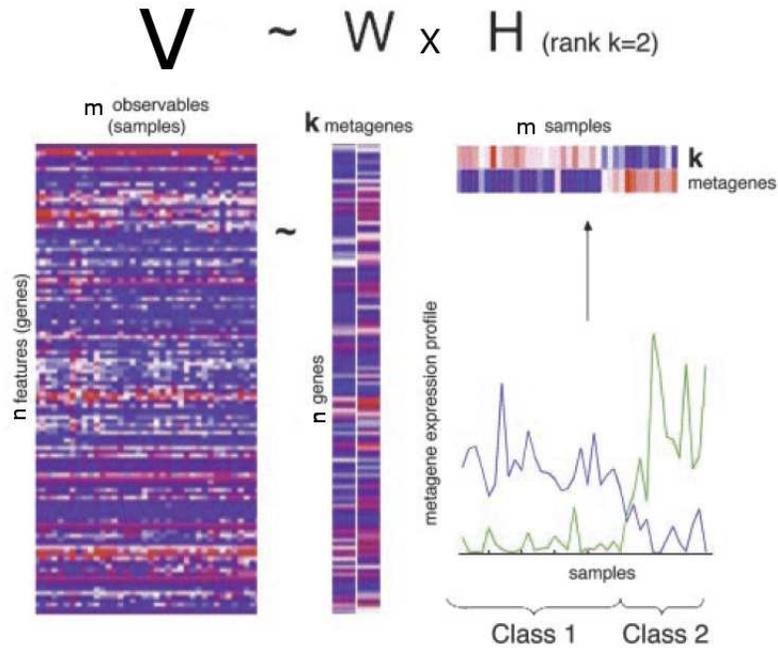,scale=0.4}
\end{center}
\caption{A rank-2 reduction of a DNA microarray of $n$ genes and $m$ samples is
obtained by NMF, $V\approx W\times H$. For better visibility, $H$ and $W$ are shown with
exaggerated width compared with original data in $V$, and a white line
separates the two columns of $W$. Metagene expression levels (rows of $H$) are
color coded by using a heat color map, from dark blue (minimum) to dark red
(maximum). The same data are shown as continuous profiles below. The
relative amplitudes of the two metagenes determine two classes of samples,
class 1 and class 2. Here, samples have been ordered to better expose the class
distinction. This figure is taken from~\cite{BrunetNMF}.}\label{fig:NMFCluster}
\end{figure}

\chapter{Experimental Setup and Benchmarking of NMF as a Clustering Algorithm}
\label{chap:5}

In this chapter, the experimental framework used in this dissertation is  detailed, i.e.,  datasets, algorithms and hardware.  Moreover, a  benchmarking of NMF as a clustering algorithms on microarray data is proposed. To the best of our knowledge, it is the first one that takes into account both its ability to identify cluster structure in a dataset and the computational resources it needs for that task. A comparative analysis with some classic algorithms is also provided.

\section{Datasets}\label{sec:dataset}

It is useful to recall the definition of \vir{gold solution} that naturally yields a partition of the datasets in two main categories.
Technically speaking, a gold solution for a dataset is a partition of the data in a number of classes known {\it a priori}. Membership in a class is established by assigning the appropriate  class label to each element. In less formal terms, the partition  of the dataset in classes is based on external knowledge that leaves no ambiguity on the  actual number of classes and on the membership of elements to classes. Although there exist real microarray datasets for which such an {\it a priori} division is known, in a few previous studies of relevance here,  a more relaxed criterion has been adopted to
allow also datasets with high quality partitions that have been inferred by analyzing the data, i.e., by the use of internal
knowledge via data analysis tools such as clustering algorithms. In strict technical terms, there is a difference between the two types of \vir{gold solutions}. For their datasets, Dudoit and Fridlyand~\cite{CLEST} elegantly make clear that difference and a closely related approach is used here.

\medskip

\medskip

Each dataset is a matrix, in which each row corresponds to an
element to be clustered and each column to an experimental
condition.
In this dissertation, both microarray and simulated datasets are used. In what follows, a brief description of them is given.

\subsection{Gene-Expression Microarray Data}

The nine datasets from gene-expression microarray, together with the acronyms used in
this dissertation, are reported next. For conciseness, only some relevant facts about them are mentioned.  The interested reader can find additional information in Handl et al.~\cite{Handl05},  for the Leukemia dataset, in Dudoit
and Fridlyand~\cite{CLEST} for the Lymphoma and NCI60 datasets, in Monti et al.~\cite{Monti03} for the Normal, Novartis and St.Jude datasets, finally in Di Ges\'u
et al.~\cite{Genclust},  for the  remaining ones.
In all of the referenced papers, the datasets were used for validation studies. Moreover, in
those papers, the interested reader can find additional pointers to
validation studies using the same datasets. For completeness, it is worth reporting that they have also been used for benchmarking in the context of clustering analysis of
microarray data~\cite{lonardibook,giancarlo08,Monti03,Priness07}.

Particularly relevant is their use in two studies strictly related to this thesis: Giancarlo et al.~\cite{giancarlo08} and Monti et al.~\cite{Monti03}. Indeed, Giancarlo et al. use the following six datasets: CNS Rat, Leukemia, Lymphoma, NCI60, Yeast and PBM, while the remain three datasets are used by Monti et al..
The choice to use all nine is motivated as follows: the datasets of Giancarlo et al. allow to make an accurate and uniform benchmarking of several internal validation measures, taking into account both time and precision as well as to compare our results with extant ones in the Literature. This study is reported in Chapter~\ref{chap:6} and it is quite unique in literature. The datasets of Monti et al. are used to complete the study of {\tt Consensus} and to compare it with a speedup proposed in Chapter~\ref{chap:7}.

Although the Giancarlo et al.'s datasets have relatively few items to classify and relatively few dimensions, it is worth
mentioning that Lymphoma, NCI60 and Leukemia have been obtained by Dudoit and Fridlyand and Handl et al., respectively, via an accurate statistical screening of the three relevant microarray experiments that involved thousands of conditions (columns). That screening process eliminated most of the conditions since there was no statistically significant variation across items (rows). Indeed, one would hardly attempt the clustering of microarray experiments without a preliminary statistical screening aimed at identifying the \vir{relevant parts} of the experiment~\cite{Eisen98}.  It is also worth pointing out that the three mentioned datasets are quite representative of microarray cancer studies. The CNS Rat and Yeast datasets  come from gene functionality studies. The sixth one, PBM, is a dataset that corresponds to a cDNA with a large number of items to classify and it is used to show the current limitations of existing validation methods (see Giancarlo et al. for additional details). Indeed, they have been established  with PBM as input. In particular, when given to {\tt Consensus} as input, the computational demand is such that all experiments were stopped after four days, or they would have taken weeks to complete.  It is also worth pointing out that the remain three datasets are very high dimension ($\simeq 1000$ features). Therefore, they naturally complement the first six since they all have relatively few features (less than 200). In summary, the nine datasets used for the experimentation in this dissertation seem to be a reliable sample of microarray studies, where clustering is used as an exploratory data analysis technique.

\medskip

\medskip

\noindent CNS Rat: It is a $112\times17$ data matrix, obtained from  the expression levels of 112
genes during a rat's central nervous system development. The dataset is  studied by Wen et al.~\cite{CNSRat}, where they
suggest a  partition  of the genes into six
classes, four of which are composed of biologically,
functionally-related genes. This partition is taken as the gold solution,
which is the same one used for the validation of {\tt FOM}.

\medskip

\medskip

\noindent Leukemia: It is a $38\times100$ data matrix, where each
row corresponds to a patient with acute leukemia and each column to
a gene. The original microarray experiment consists of a $72\times6817$ matrix, due to Golub et al.~\cite{golub99}. In order to obtain the current dataset, Handl et al.~\cite{Handl05} extracted from it a $38\times6817$ matrix, corresponding to the \vir{learning set}  in the study of Golub et al. and, via preprocessing steps, they reduced it to the current dimension by excluding genes that exhibited no significant variation across samples. The interested reader can find details of the extraction process in Handl et al.. For this dataset, there is a partition into
three classes and it is taken as gold solution. It is also worthy of mention that Leukemia has become a benchmark standard in the cancer classification community~\cite{BrunetNMF}.

\medskip

\medskip

\noindent Lymphoma: It is a $80\times100$ data matrix, where each row corresponds to a tissue
sample and each column to a gene. The dataset comes from the study of Alizadeh et
al.~\cite{Aliz00} on the three most common adult lymphoma tumors. There is a
partition into three classes and it is taken as the  gold solution.
The dataset has been obtained from the original microarray
experiments, consisting of an $80\times4682$ data matrix, following the same preprocessing steps detailed in Dudoit and Fridlyand~\cite{CLEST}.

\medskip

\medskip

\noindent NCI60: It is a $57\times200$ data matrix, where each row corresponds to a cell line
and each column to a gene. This dataset originates from a microarray study in
gene expression variation among the sixty cell lines of the National
Cancer Institute anti-cancer drug screen~\cite{NCI60}, which consists of a $61\times5244$ data matrix. There is a partition of
the dataset into eight classes, for a total of $57$ cell lines, and it is taken as the  gold
solution. The dataset has been obtained from the original microarray
experiments as described by Dudoit and Fridlyand~\cite{CLEST}.

\medskip

\medskip

\noindent Normal: It is a $90\times1277$ data matrix, where each row corresponds to a tissue
sample and each column to a gene. The dataset comes from the study of Su et
al.~\cite{Su2002} on four distinct cancer types. There is a
partition into four classes and it is taken as the gold solution.

\medskip

\medskip

\noindent Novartis: It is a $103\times1000$  data matrix, where each row corresponds to a tissue
sample and each column to a gene. The dataset comes from the study of Ramaswamy et al.~\cite{Ramaswamy2001} on 13 distinct tissue types. There is a
partition into 13 classes and it is taken as the gold solution.

\medskip

\medskip

\noindent PBM: It is a $2329\times139$ data matrix, where each row corresponds to a cDNA with a fingerprint of
139 oligos. According to
Hartuv et al.~\cite{Hartuv00}, the cDNAs in the dataset originated
from 18 distinct genes, i.e., the classes are known.
The partition  of the dataset into 18 groups was obtained by lab
experiments at Novartis in Vienna. Following that study,  this partition is taken as the
gold solution.

\medskip

\medskip

\noindent St.Jude: It is a $248\times985$  data matrix, where each row corresponds to a tissue
sample and each column to a gene. The dataset comes from the study of Yeoh et al.~\cite{Yeoh2002} on diagnostic bone marrow samples from pediatric acute leukemia patients corresponding to 6 prognostically important leukemia sub-types. There is a
partition into 6 classes and it is taken as the gold solution.

\medskip

\medskip

\noindent Yeast: It is a $698\times72$ data matrix, studied by Spellman et
al.~\cite{Spell98} whose analysis suggests a  partition  of the genes into five
functionally-related classes which is taken as the gold solution and which has been used by Shamir and Sharan for a case study on the performance of clustering
algorithms~\cite{Roded4}.

\medskip

\medskip

In this dissertation, it is referred to as  {\tt Benchmark 1} the following group of datasets: CNS Rat, Leukemia, Lymphoma, NCI60, PBM and Yeast. While the remaining three datasets are referred to as {\tt Benchmark 2}.

\subsection{Simulated Data}

In order to compare some of the algorithms of this thesis, in particular in the speedup of {\tt Consensus} proposed in Chapter~\ref{chap:7}, with the work of Monti et al.~\cite{Monti03} on {\tt Consensus}, some artificial datasets from that study are used.
These datasets have known characteristics, typical of  microarray data, but since they have no \vir{noise}, they are \vir{easy} to classify. Therefore, the experimental results involving them  are considered as complementary to those on real microarray datasets.
The three datasets, together with the acronyms used in
this dissertation, are reported next.

\medskip

\medskip

\noindent Gaussian3: It is a $60 \times 600$ data matrix. It is generated by having 200 distinct features out of the 600 assigned to each cluster. There is a
partition into three classes and that is taken as the  gold solution. The data simulates a pattern whereby a distinct set of 200 genes is up-regulated in one of the three clusters, and down-regulated in the remaining two.
\medskip

\medskip

\noindent Gaussian5: It is a $500 \times 2$ data matrix, and it is show in Fig.~\ref{fig:Gaussian5}.
It represents the union of observations from 5 bivariate Gaussians, 4
of which are centered at the corners of the square of side length $\lambda$, with the 5th Gaussian centered at ($\lambda/2$, $\lambda/2$). A total of 250 samples, 50 per class, were generated, where two values of $\lambda$ are used, namely, $\lambda= 2$ and $\lambda= 3$, to investigate different levels of overlapping between clusters. There is a
partition into five classes and that is taken as the  gold solution.

\begin{figure}[ht]
\begin{center}
\epsfig{file=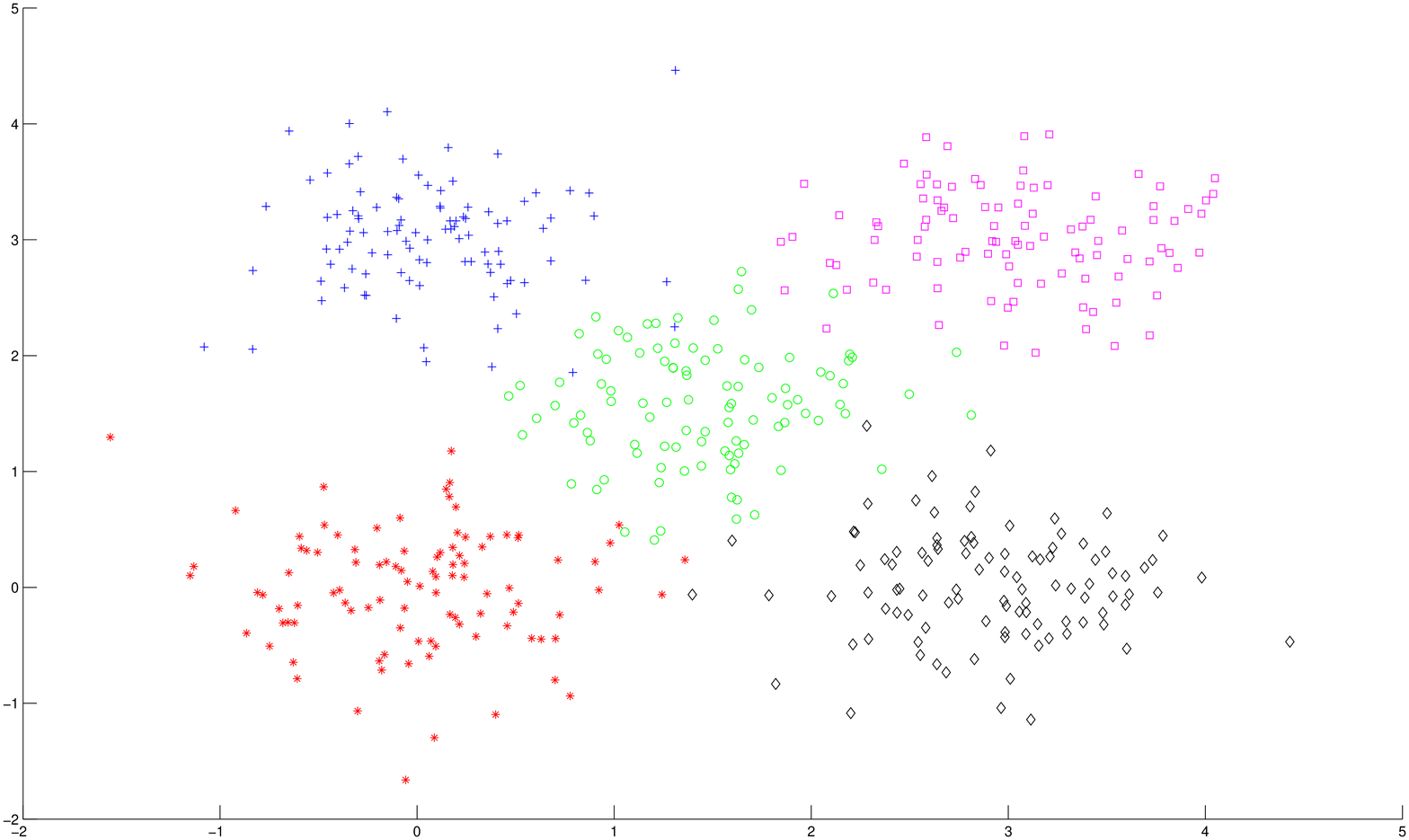,scale=0.3}
\end{center}
\caption{The Gaussian5 dataset.}\label{fig:Gaussian5}
\end{figure}

\medskip

\medskip

\noindent Simulated6: It is a $60 \times 600$ data matrix. It consists of a 600-gene by 60-sample dataset. It can be partitioned into 6 classes with 8, 12, 10, 15, 5, and 10 samples respectively, each marked by 50 distinct genes uniquely up-regulated for that class. Additionally, 300 noise genes (i.e., genes having the same distribution within all clusters) are included. The genes for the different clusters are of varying \vir{sharpness}. That is, the 50 genes marking the first class are the sharpest- with highest differential expression and lowest variation-followed by the 50 genes for the second cluster, etc. Fig.~\ref{fig:Simulated} depicts the expression profile of the 600 genes within each cluster. This partition into 6 classes is taken as the  gold solution.

\begin{figure}[ht]
\begin{center}
\epsfig{file=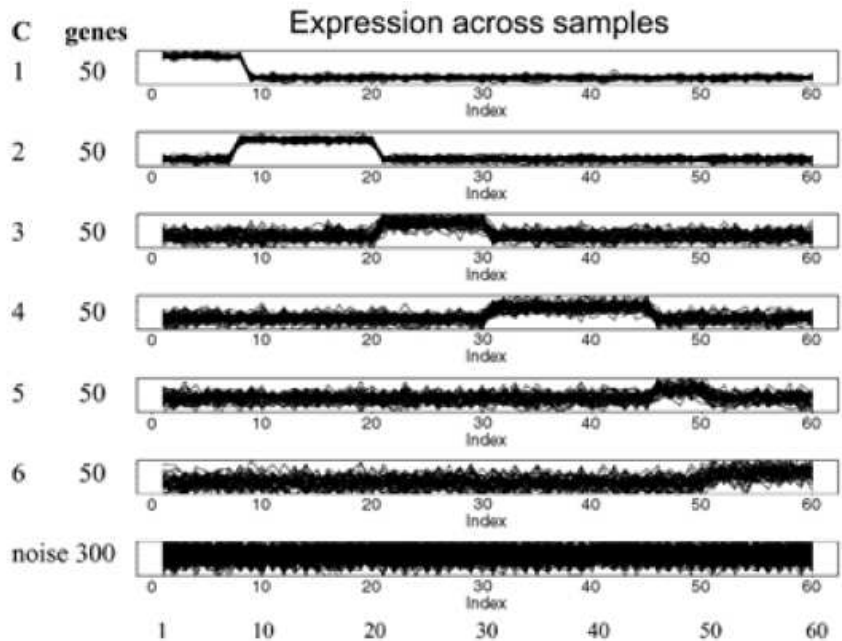,scale=0.6}
\end{center}
\caption{Expression profiles for each
gene within each cluster on the Simulated6 dataset. This figure is takes from~\cite{Monti03}}\label{fig:Simulated}
\end{figure}

\medskip

\medskip

This three datasets are also included in the {\tt Benchmark 2}.

\section{Clustering Algorithms and Their Stability}

In this dissertation, a suite of clustering algorithms is used. Among the hierarchical
methods~\cite{JainDubes}  Hier-A (Average Link), Hier-C (Complete
Link), and Hier-S (Single Link) (see Section~\ref{sec:Hierarchical}).

Moreover, both   K-means~\cite{JainDubes} and {\tt NMF} are used (see Section~\ref{sec:Partional} and Chapter~\ref{chap:NMF}) , both in the version that starts the clustering
from a random partition of the data and in the version where each takes, as part of its input, an initial partition produced by one of the chosen hierarchical methods. For K-means,  the acronyms of those versions are K-means-R, K-means-A, K-means-C and K-means-S, respectively.  An analogous notation is followed for {\tt NMF}.

It is worth pointing out that K-means-R is a randomized algorithm that may provide different answers on the same input dataset. That might make the values of many of the measures studied in this dissertation to  depend critically on the particular execution of the algorithm. Such a dependance is
important for {\tt WCSS},  {\tt KL} and {\tt FOM}. For those
measures and their approximations, the
computation of the relevant curves, on all datasets, with K-means-R is repeated five times. Only negligible differences from run to run is observed. Therefore, in what follows, all reported results refer to a single  run of the algorithms, except for the cases in which an explicit Monte Carlo simulation is required.

For completeness, it is also reported that in this thesis a C/C++ implementation of the {\tt NMF} is used, which is based on the Matlab script available at the Broad institute~\cite{BroadInstitute}. Indeed, it was  converted to a C/C++ version that  was then validated by ensuring it produced the same results as for the Matlab version, in a number of simulations. Notice that this implementation also allows for {\tt NMF} to start from two matrices $W$ and $H$ that actually correspond to a partition of the data into clusters, rather than choosing $W$ and $H$ at random. Such an option is analogous to the well known one offered by K-means, which can start from a given solution rather than randomly. As with K-means, {\tt NMF} also has a faster convergence to a solution when not initialized at random, although the improvement seems not to be significant.

\section{Similarity/Distance Functions}
All of the algorithms use Euclidean distance in order to assess similarity of single elements to be clustered. Such a choice is natural and conservative, as now explained. It places all
algorithms in the same position without introducing biases due to distance function performance, rather than to the algorithm.
Moreover, time course data have been properly standardized (mean equal to  zero  and variance equal to one), so that Euclidean distance would not be penalized on those data. This is standard procedure, e.g.,~\cite{KaYeeFOM},  for those data. The results obtained are conservative since, assuming that one has a provably much better similarity/distance function, one can only hope to get better estimates than ours (else the used distance function is not better than Euclidean distance after all). As it is clear from the upcoming results presented in the next chapters, such better estimates will cause no
dramatic change in the general picture of our findings. The choice is  also natural, in view of the debate regarding the identification of a proper similarity/distance function for clustering gene expression data and the number of such measures available. The state of the art as well some relevant progress in the identification of such measure  is well presented in~\cite{Giancarlo2010,giancarlocibb,Priness07}.

\section{Hardware}
All experiments for the assessment of the precision of each measure were performed in part on several state-of-the-art PCs and in part on  a 64-bit AMD Athlon 2.2 GHz bi-processor with 1 GB of main memory running Windows Server 2003. All the timing experiments reported were performed on the bi-processor,  using one processor per run. The usage of different machines for the experimentation was deemed necessary in order to complete the full set of experiments in a reasonable amount of time. Indeed, as detailed later, some measures require weeks to complete execution on large datasets. It is worth pointing out that all the Operating Systems supervising the computations have a 32 bits precision.

\section{NMF Benchmarking}

In this section a benchmarking of NMF as a clustering algorithms is described.
In order to perform it, the performance of NMF is measured via the three external indices described in Section~\ref{sec:ext}:
{\tt Adjusted Rand} Index, {\tt FM-Index} and {\tt F-Index}.

External indices  can be very useful in evaluating the performance of algorithms and internal/relative indices, with the use of datasets that have a gold standard solution. A brief illustration is given of the methodology for the external validation of a clustering  algorithm, via an external index that needs to be maximized. The same methodology applies to internal/relative indices, as discussed in~\cite{KaYeeFOM}. For a given dataset, one plots the values of the index computed by the algorithm as a function of $k$, the number of clusters. Then, one expects the curve to grow to reach its maximum close or at the number of classes in the reference classification of the dataset. After that number, the curve should fall.

In what follows, the results of the experiments are presented, with the use of the indices. The experiments summarized here refer  to the {\tt Benchmark 1} datasets and the simulated datasets in {\tt Benchmark 2}. For each dataset and each clustering algorithm, each index is computed for a number of cluster values in the range $[2,30]$. Moreover, the time performance of NMF  on the microarray  datasets  is compared with that of the classical clustering algorithms.
To this end the execution time in millisecond of  each algorithm on each  datasets is reported in Table~\ref{table:time}. A dash indicates that the experiment was stopped because of its high computational demand. Indeed, given the dimension of the PBM datasets, NMF is stopped after four days, for this reason the results on this dataset are not reported here.
From the results in Table~\ref{table:time}, it is possible to see that NMF is very slow, at least four order of magnitude of difference with the other clustering algorithms. Moreover, NMF is not able to complete the experiment on PBM dataset.

\begin{table}
\begin{tabular}{|c|c|c|c|c|c|c|}
  \hline
  & CNS Rat & Leukemia & NCI60 & Lymphoma & Yeast  & PBM  \\
  Hier-A & 875 & 219 & 500 & 921 & 594 & $4.4 \times 10^{5}$ \\
  Hier-C & 865 & 250 & 469 & 750 & 625 & $4.6 \times 10^{5}$ \\
  Hier-S & 860 & 296 & 516 & 641 & 609 & $4.3 \times 10^{5}$ \\
  K-means-R & $3.2 \times 10^{3}$ & $2.1 \times 10^{3}$ & $3.2 \times 10^{3}$ & $7.2 \times 10^{3}$ & $1.1 \times 10^{5}$ & $1.1 \times 10^{6}$ \\
  K-means-A & $3.2 \times 10^{3}$ & $1.1 \times 10^{3}$ & $4.2 \times 10^{3}$ & $4.4 \times 10^{3}$ & $1.1 \times 10^{5}$ & $1.7 \times 10^{6}$ \\
  K-means-C & $2.9 \times 10^{3}$ & $1.1 \times 10^{3}$ & $4.0 \times 10^{3}$ & $4.2 \times 10^{3}$ & $1.0 \times 10^{5}$ & $1.3 \times 10^{6}$ \\
  K-means-S & $3.3 \times 10^{3}$ & $1.3 \times 10^{3}$ & $5.2 \times 10^{3}$ & $5.4 \times 10^{3}$ & $1.2 \times 10^{5}$ & $1.4 \times 10^{6}$ \\
  NMF-R & $9.0 \times 10^{6}$ & $8.6 \times 10^{4}$ & $3.9 \times 10^{5}$ & $5.2 \times 10^{5}$ & $2.9 \times 10^{8}$ & - \\
  NMF-A & $3.0 \times 10^{6}$ & $2.4 \times 10^{4}$ & $7.9 \times 10^{4}$ & $1.1 \times 10^{5}$ & $5.5 \times 10^{7}$ & - \\
  NMF-C & $2.4 \times 10^{6}$ & $2.5 \times 10^{4}$ & $7.4 \times 10^{4}$ & $1.1 \times 10^{5}$ & $5.9 \times 10^{7}$ & - \\
  NMF-S & $5.7 \times 10^{6}$ & $2.3 \times 10^{4}$ & $6.9 \times 10^{4}$ & $1.1 \times 10^{5}$ & $4.5 \times 10^{7}$ & - \\
  \hline
\end{tabular}
\caption{Time results in millisecond for all the algorithms on {\tt Benchmark 1} datasets. For
PBM, the experiments were terminated  due to their high
computational demand (weeks to complete).}
\label{table:time}
\end{table}

\medskip

\paragraph{Adjusted Rand Index}
For those experiments, the relevant plots are in Fig.~\ref{Fig:Adjusted_Rand}-\ref{Fig:Adjusted_RandSim} for {\tt Benchmark1} and simulated datasets, respectively.
Based on the results on {\tt Benchmark1} datasets (see  Fig.~\ref{Fig:Adjusted_Rand}), it is possible to see that all the algorithms perform very well on Leukemia and NCI60  datasets. For the remain three datasets, their  performance is somewhat mixed, and sometimes they are not  precise. In particular,  the Hier-S, NMF-R, and NMF-S algorithms.
From the results on the simulated datasets (see Fig.~\ref{Fig:Adjusted_RandSim}), it is possible to see that, except for Hier-S and K-means-S, all the algorithms  perform very well on Gaussian3 and Gaussian5. Whereas, on Simulated6 all the algorithms give an useless indication.

\begin{figure}[ht]
\begin{center}
\epsfig{file=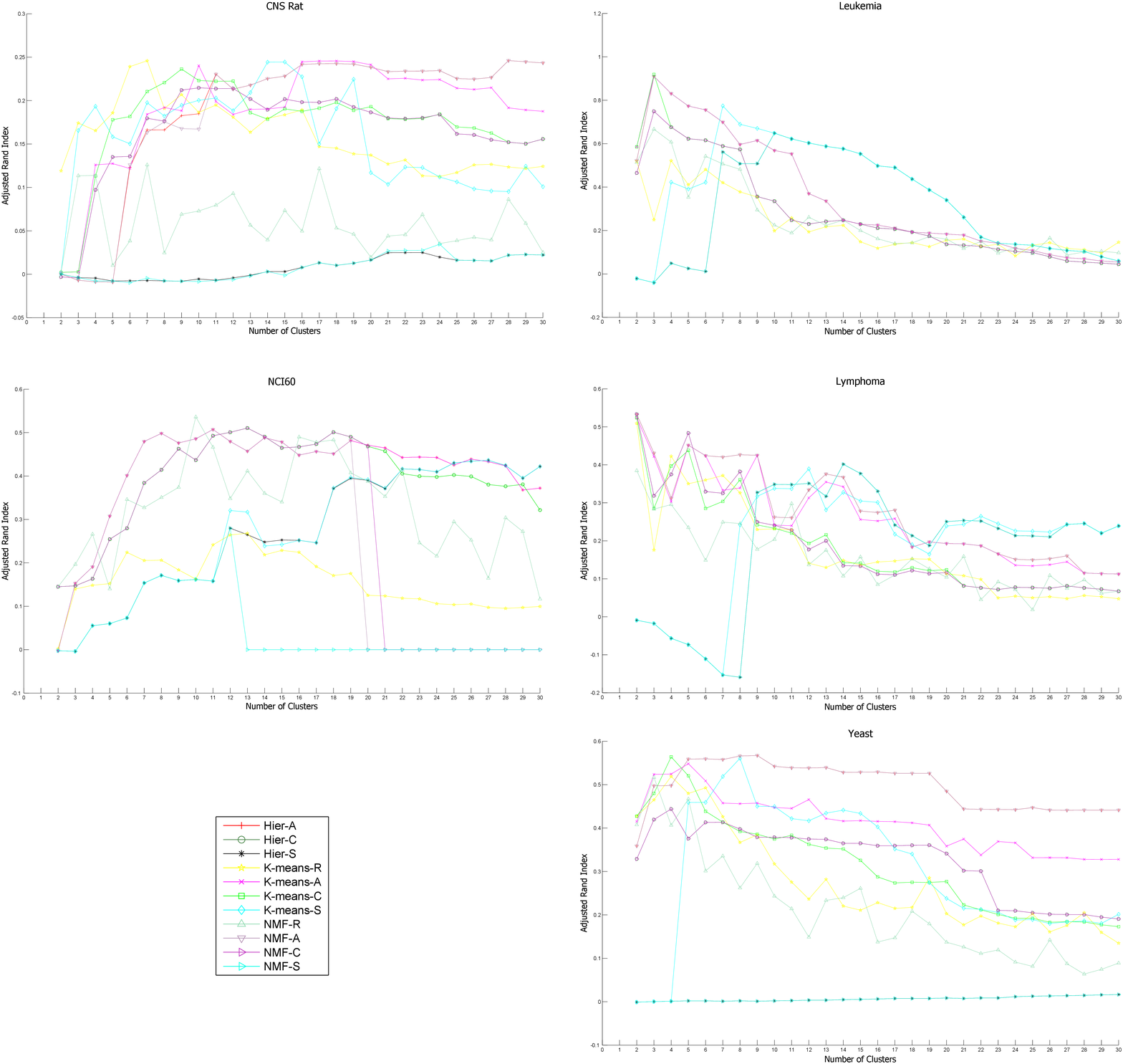,scale=0.2}
\end{center}
\caption{The {\tt Adjusted Rand} Index curves, for each of the {\tt Benchmark1} datasets. In each
figure, the plot of the index, as a function of the number of
clusters, is plotted  differently for each algorithm. For
PBM, the experiments on NMF were terminated  due to their high
computational demand and the corresponding plots has been removed
from the figure.}\label{Fig:Adjusted_Rand}
\end{figure}

\begin{figure}[ht]
\begin{center}
\epsfig{file=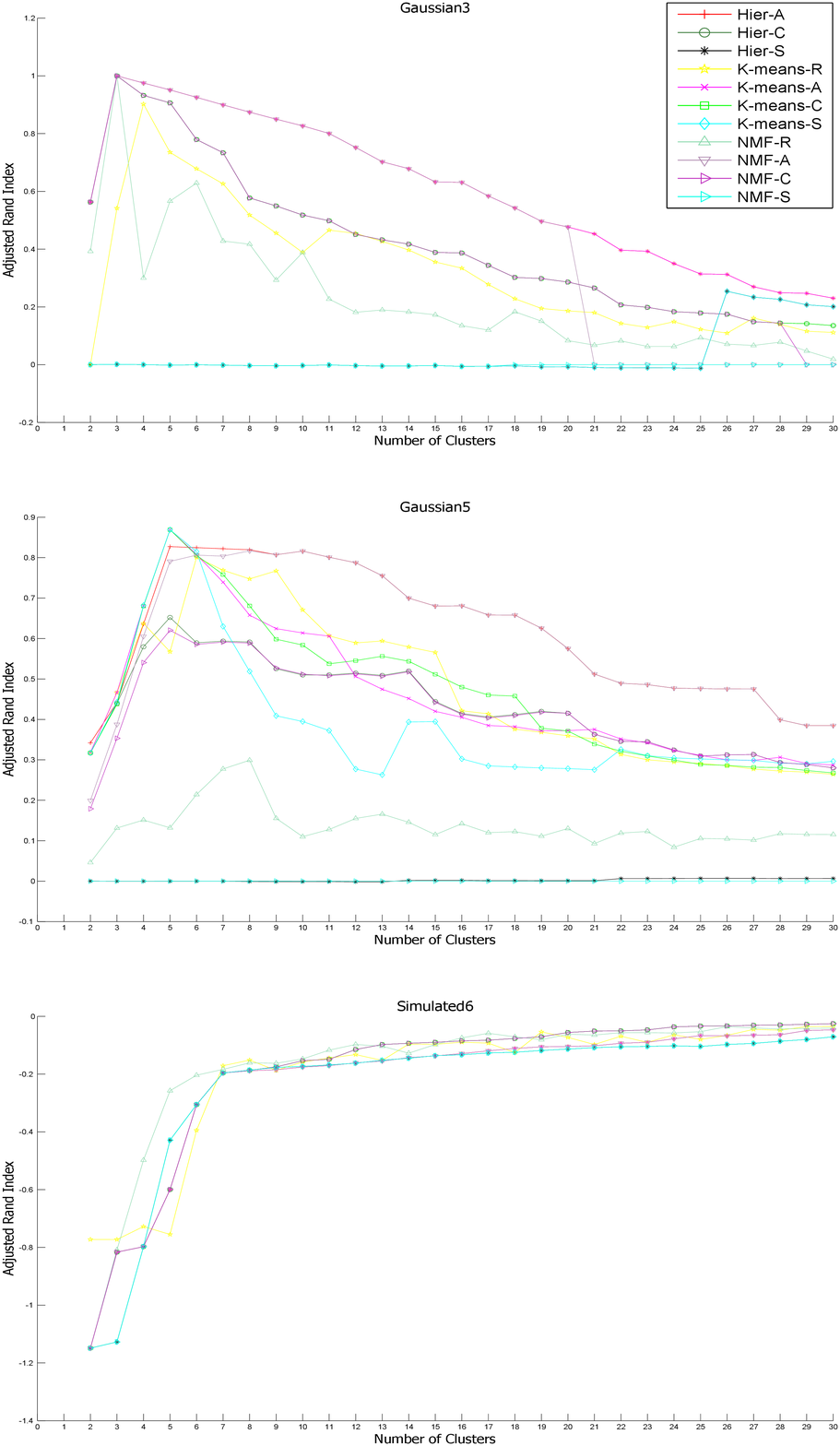,scale=0.4}
\end{center}
\caption{The {\tt Adjusted Rand} Index curves, for each of the simulated dataset in {\tt Benchmark 2}. In each
figure, the plot of the index, as a function of the number of
clusters, is plotted  differently for each algorithm.}\label{Fig:Adjusted_RandSim}
\end{figure}

\paragraph{FM-Index} For those experiments, the
relevant plots are in Fig.~\ref{Fig:FM}-\ref{Fig:FMSim}. Based on them, Hier-S, NMF-S are still the worst among the algorithms,
however now there is no   consistent indication given by  the other
algorithms. Moreover, on Gaussian5 and Simulated6 NMF, both with random and hierarchical initialization, does not perform very well.

\begin{figure}[ht]
\begin{center}
\epsfig{file=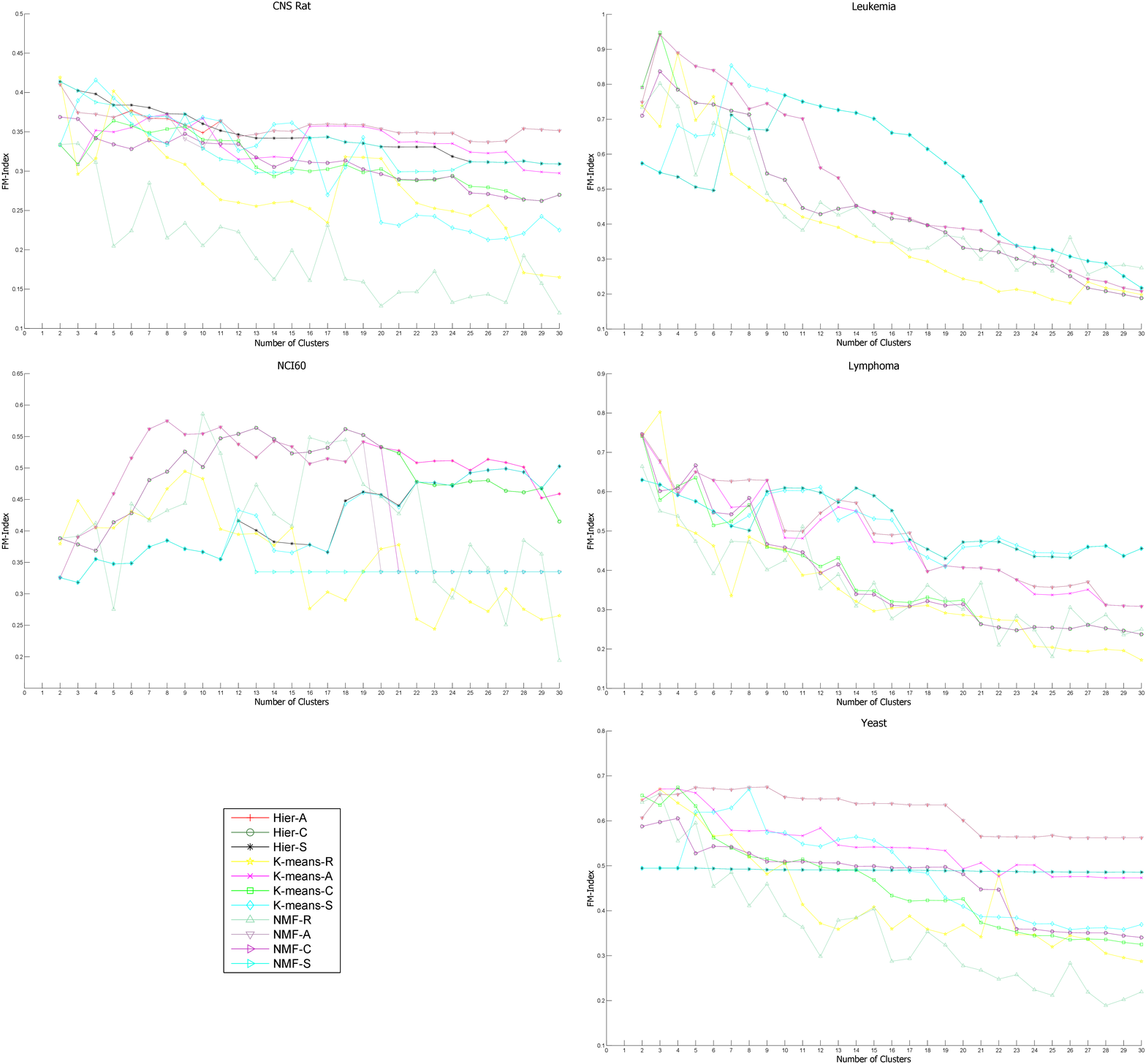,scale=0.2}
\end{center}
\caption{The {\tt FM-Index} curves, for each of the {\tt Benchmark1} datasets. In each
figure, the plot of the index, as a function of the number of
clusters, is plotted  differently for each algorithm. For
PBM, the experiments on NMF were terminated  due to their high
computational demand and the corresponding plots has been removed
from the figure. }\label{Fig:FM}
\end{figure}

\begin{figure}[ht]
\begin{center}
\epsfig{file=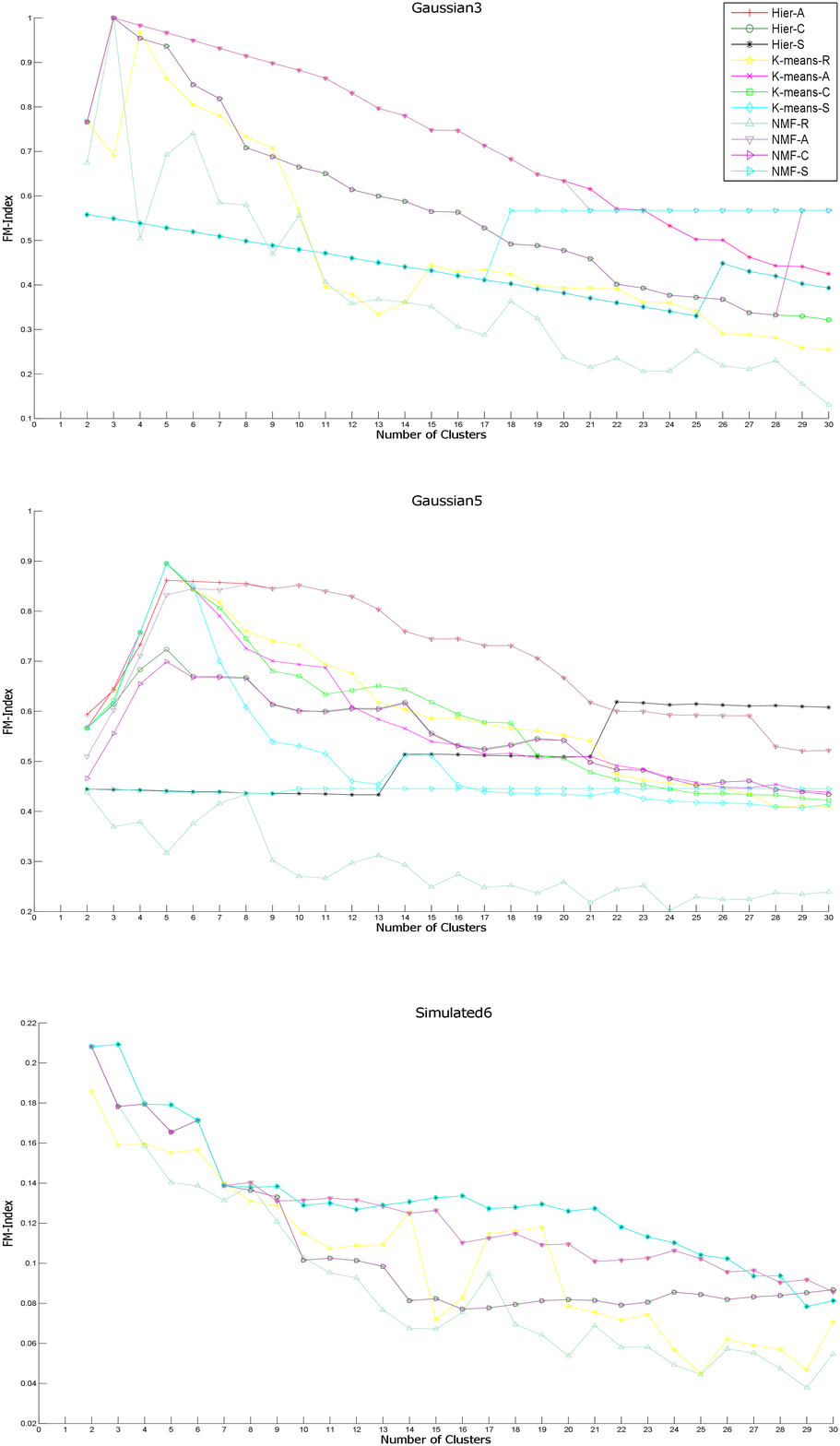,scale=0.4}
\end{center}
\caption{The {\tt FM-index} curves, for each of the simulated dataset in {\tt Benchmark 2}. In each
figure, the plot of the index, as a function of the number of
clusters, is plotted  differently for each algorithm. }\label{Fig:FMSim}
\end{figure}

\paragraph{F-Index} For those experiments, the
relevant plots are in Figs.~\ref{Fig:F}-\ref{Fig:FSim}. On {\tt Benchmark1} datasets, Hier-S and NMF-S are still the worst among the algorithms
and the indications in this case about the other algorithms
are essentially the same as in the case of {\tt Adjusted Rand} Index. Whereas, on simulated dataset all the algorithms have a disappointing performance.\\

\begin{figure}[ht]
\begin{center}
\epsfig{file=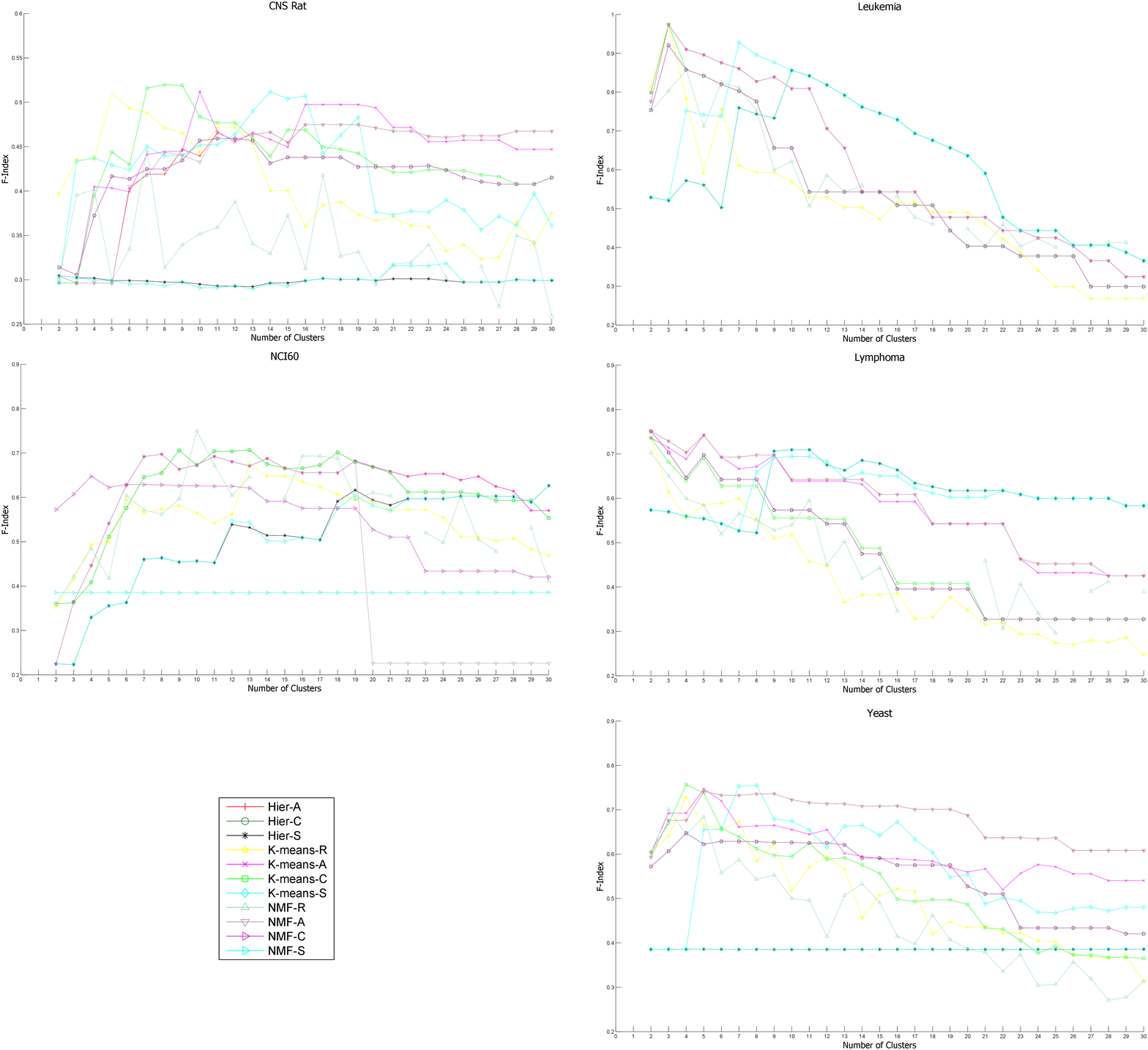,scale=0.2}
\end{center}
\caption{The {\tt F-Index} curves, for each of the {\tt Benchmark1} datasets. In each
figure, the plot of the index, as a function of the number of
clusters, is plotted  differently for each algorithm. For
PBM, the experiments on NMF were terminated  due to their high
computational demand and the corresponding plots has been removed
from the figure.}\label{Fig:F}
\end{figure}

\begin{figure}[ht]
\begin{center}
\epsfig{file=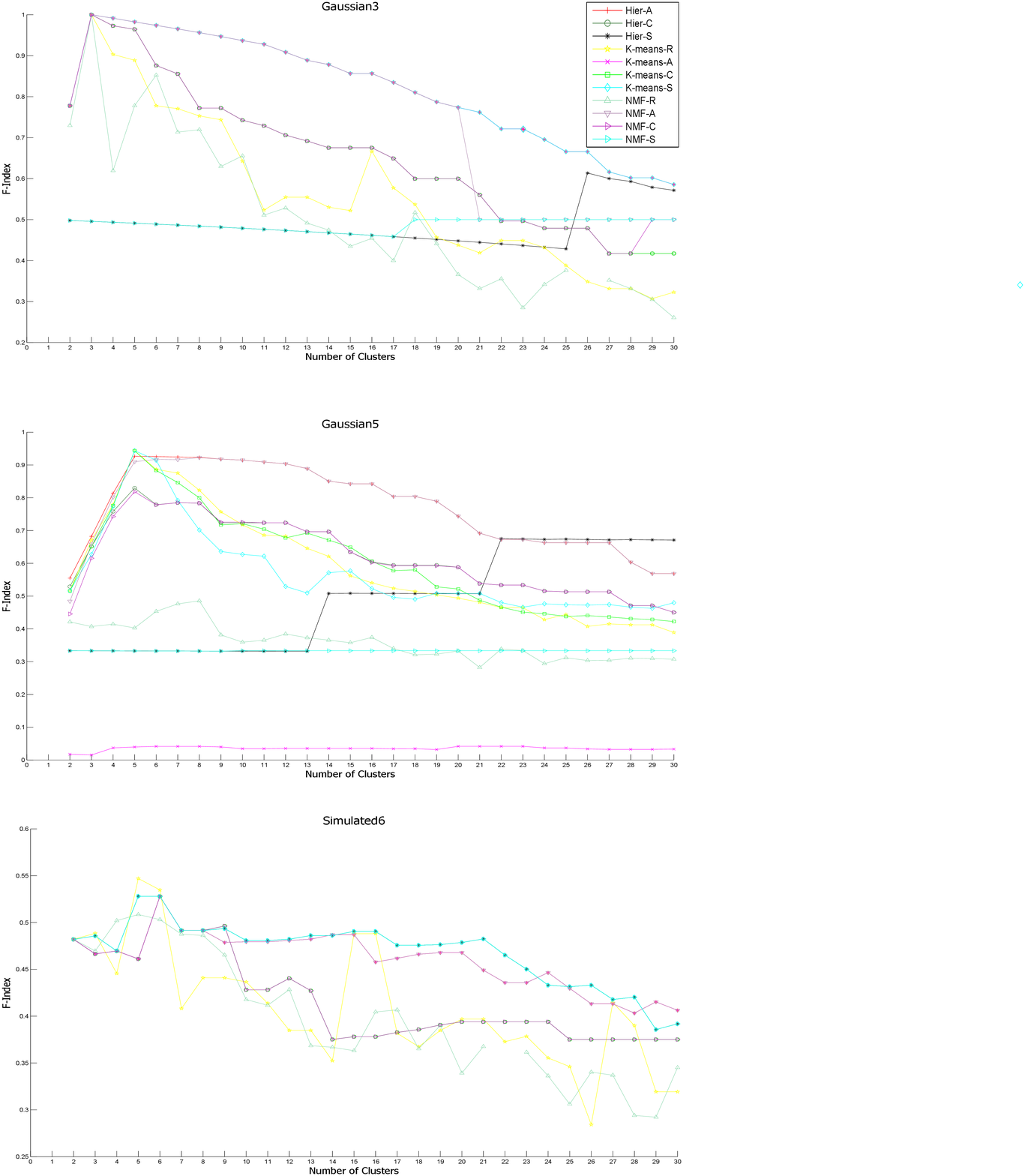,scale=0.4}
\end{center}
\caption{The {\tt F-Index} curves, for each of the simulated dataset in {\tt Benchmark 2}. In each
figure, the plot of the index, as a function of the number of
clusters, is plotted  differently for each algorithm.} \label{Fig:FSim}
\end{figure}

\medskip

\medskip

In Conclusion, from all the results proposed in this section, it should be mentioned that although
all of the three indices have solid statistical justifications, the {\tt Adjusted Rand} Index seems to be the best performer while the {\tt FM-index} is
somewhat  disappointing. Moreover, the use of NMF as  a clustering algorithm is not suggested, in particular for large datasets. Indeed, given the steep computational price (see Table~\ref{table:time}) one has to afford,  its use does not seem to be justified since Hier-A is at least four orders of magnitude faster and with a better precision. In fact, the main power of NMF rests on its pattern discovery ability, and its use as a clustering algorithm seems to be very limiting for this technique.

\chapter{An Experimental Assessment of Internal Validation Measures}
\label{chap:6}

In this chapter, a benchmarking of some of the internal validation measures described in the previous chapters is presented. In particular, all measures presented in Chapter~\ref{chap:ValidationMeasuers} are considered, while only {\tt ME}, {\tt Consensus} and {\tt Clest} are studied here, since they seem to be the most  representative of the stability based ones. This study tries to  establish the intrinsic, as well as the relative, merit of a measure taking into account both its predictive power and its computational demand. To the best of our knowledge, this is the first study of this kind available in the Literature. It is worthy to anticipate that, based on the results reported here, a speedup of some of the measure presented here becomes a very natural and well motivated problem, that is addressed in the next chapter.

\section{Problems Statement and State of the Art}

An established and rich research area  in bioinformatics
is  the design of new internal  validation measures  that should assess the
biological relevance of the clustering solutions found. Despite the
vast amount of knowledge available in this area in the general
data mining literature~\cite{Ever,Hansen,Hartigan,Tibshrbook,jain,KR90,
Mirkin,Rice}, gene expression data provide unique
challenges. Indeed, the internal validation measure  must predict how many clusters are really
present in a dataset, an already difficult task, made even worse by
the fact that the estimation must be sensible enough to capture the
inherent biological structure of functionally related genes. The excellent survey by Handl et al.~\cite{Handl05} is a
big step forward in making the study of internal  validation measures a central
part of both research and practice in bioinformatics, since it
provides both a technical presentation as well as valuable general
guidelines about their use for post-genomic data analysis. Although
much remains to be done, it is, nevertheless, an initial step.

In order to establish  the intrinsic and relative merit of a measure,
the two relevant questions are:

\begin{itemize}

\item [(A)]  What is the precision of a measure, i.e., its ability to
predict the correct number of clusters in a dataset? That is
usually established by comparing the number of clusters predicted by
the measure against the number of clusters in the gold solution of
several datasets. It is worth recalling from  Chapter~\ref{chap:5} that the gold solution is a partition of the dataset
in classes that can be trusted to be correct, i.e., distinct groups
of functionally related genes.

\item [(B)] Among a collection of measures, which is more accurate, less algorithm-dependent,
etc.,?. Precision versus the  use of computational resources,
primarily execution time, would be an important discriminating
factor.

\end{itemize}

Although the classic studies in the general data mining Literature, mentioned earlier, are also of great relevance for
bioinformatics, there is an acute need for analogous studies
conducted on internal measures  introduced recently and specifically
designed for analysis of microarray data.
In this chapter both  of the stated
questions are addressed for several measures. They are all
characterized by the fact that, for their prediction,  they make use
of nothing  more than the dataset available (see Chapters~\ref{chap:ValidationMeasuers} and~\ref{chap:Stability}): {\tt WCSS}, {\tt KL}, {\tt Clest}, {\tt Consensus} , {\tt FOM},
 {\tt Gap}  and {\tt ME}.  In order to perform this study, as anticipated in Chapter~\ref{chap:5}, only the {\tt Benchmark1} datasets  is used, i.e.: CNS Rat, Leukemia, Lymphoma, NCI60, Yeast and PBM.

Initial studies of the mentioned measures, in connection with  both
Questions (A) and (B),  have been done, primarily, in the papers in
which they were originally proposed.
This study carries further
those studies by providing more focused information about using
those measures for the analysis of gene expression data. For
Question (A), that analysis provides further insights into the
properties of the mentioned measures, with particular attention to
time. For Question (B), a first comparative analysis
involving all of those measures that accounts for both precision and
time is provided. This is particularly relevant in regard to the
\vir{stability-based} methods, i.e., {\tt Clest}, {\tt Consensus}
and {\tt ME}. In fact,
\begin{itemize}
\item[(1)] those three measures are excellent representatives  of methods in the class (see  Handl et al. and~\cite{Tib06,LK04});\\
\item[(2)] Dudoit and Fridlyand mention that it would be desirable to relate {\tt Clest} and {\tt ME} but no comparison seems to be available in the literature;\\
\item[(3)] although it is quite common to include {\tt Clest} and {\tt Gap} in comparative analysis for novel measures, {\tt Consensus} is hardly considered. However, the   experiments presented here show that it should definitely be included.
\end{itemize}

Finally, it is worth pointing out that the results and conclusion of this chapter are also available to Giancarlo et al.~\cite{giancarlo08}.

\section{Intrinsic Precision of the Internal Measures}

In this section,  the experiments with the aim to shed some
light on Question (A) are presented. As discussed in Chapters~\ref{chap:ValidationMeasuers} and~\ref{chap:Stability}, for most
measures, the prediction of the \vir{optimal} number $k^*$ of
clusters is based on the visual inspection of curves and histograms.
For conciseness, all the relevant material is provided in the following
supplementary material web site~\cite{Bench} (Figures section). Here
only summary tables are given, based on the corresponding analysis
of the relevant curves and experiments. In this section, two separate tables
for each measure are  reported, one for the precision and the other for timing results.

It is worthy to anticipate that the next section addresses
the relative merits of each measure and two global summary tables are
reported, but only for the best performers. That is,  for each
measure, the experimental parameters are reported (e.g., clustering
algorithm) only if in that setting the prediction of $k^*$ has been
reasonably close to the gold solution (at most an absolute value
difference of one between the predicted number and the real number)
in at least four of the six datasets used in this chapter.

Moreover, in what follows, for each cell in a table displaying
precision results, a number in a circle with a black background
indicates a prediction in agreement with the number of classes
in the dataset, while a number in a circle with a white background
indicates a prediction that differs, in absolute value, by 1
from the number of classes in the dataset; when the
prediction is one cluster, i.e. Gap statistics, this symbol
rule is not applied because the prediction means no
cluster structure in the data; a number not in a circle indicates
the remaining predictions. As detailed in each
table displaying
timing or precision results,  cells with a dash indicate that either the experiment was
stopped,  because of its high computational demand, or that the
measure gives no  useful indication. The timing results are reported
only on the four smallest datasets. Indeed, for Yeast and PBM, the
computational demand is such on some measures  that either they had
to be stopped or they took weeks to complete. For those two
datasets, the experiments reported here were done using more than one
machine.

\subsection{WCSS}\label{sec:wcss-exp}

For each algorithm,
and each dataset, {\tt WCSS} is computed for a number of
cluster values in the range $[2, 30]$. The relevant plots are in the Figures section at the following  supplementary material web site~\cite{Bench}: Fig. S1 for the K-means algorithms and Fig. S2
for the hierarchical algorithms.

As outlined in the Section~\ref{sec:WCSS-Theory}, given the relevant {\tt WCSS}
curve, $k^*$ is predicted as the abscissa closest to the \vir{knee} in
that curve. The values resulting from the application of this
methodology to the relevant plots are reported in Table~\ref{table:WCSS}, while the timing results for the relevant datasets are reported in Table~\ref{table:WCSS-time}.

One has that  {\tt WCSS}  performs well with
K-means-C and  K-means-A (see Table~\ref{table:WCSS}), on the first five  datasets, while it
gives no reasonably correct indication on PBM. It is a poor
performer with the other clustering algorithms. Those facts give
strong indication that {\tt WCSS} is algorithm-dependent. Finally,
the failure of {\tt WCSS}, with all algorithms,  to give a good
prediction for PBM indicates that {\tt WCSS} may not be of any use
on large datasets having  a large number of clusters.

Overall, the  best performer is  K-means-C.
The relative results  are reported in Tables~\ref{table:summary-table} and~\ref{table:summary-table-time}, for comparison with
the performance of the other measures.
\begin{table}
\begin{footnotesize}
\begin{center}
\begin{tabular}{|l|cccccc|}\hline
& \multicolumn{6}{c|}{Precision}
\\\cline{2-7}
  & CNS Rat & Leukemia & NCI60 & Lymphoma & Yeast & PBM \\
   Hier-A & 10 & \ding{184}& 3 & 6 &
\ding{186}&  - \\
 Hier-C & 10 & \ding{184}& \ding{178}& 8 &
9 &  - \\
 Hier-S & 8 & 10  & \ding{178}& 9
& - &- \\
   K-means-R & 4 &   \ding{184}& 3 & 8 &\ding{175}   &  - \\
 K-means-A & 4 &   \ding{184} &
\ding{178}& 6 & \ding{186}&  -
\\
 K-means-C & \ding{176}  &   \ding{184} &
\ding{189}& 8 &\ding{175} &  -
\\
K-means-S &3 &  \ding{175}   &  \ding{178}
& 8 & 24&  -
\\
\hline
{\bf Gold solution}  & {\bf 6} & {\bf 3} & {\bf 8} & {\bf 3} & {\bf 5} & {\bf 18} \\
\hline
\end{tabular}
\end{center}
\end{footnotesize}
\caption{A summary of the precision results for {\tt WCSS} on all algorithms and {\tt Benchmark 1}
datasets. Cells with a dash indicate that  {\tt
WCSS}  did not give any useful indication.}\label{table:WCSS}
\end{table}

\begin{table}
\begin{footnotesize}
\begin{center}
\begin{tabular}{|l|cccc|}\hline
& \multicolumn{4}{c|}{Timing}
\\\cline{2-5}
  &  CNS Rat & Leukemia & NCI60 & Limphoma\\
   Hier-A &
$1.1  \times 10^{3}$ & $4.0  \times 10^{2}$&$2.1 \times 10^{3}$ & $1.9 \times 10^{3}$ \\
 Hier-C &  $7.0 \times
10^{2}$ &$ 4.0 \times 10^{2}$ & $1.7 \times 10^{3}$ & $1.4 \times 10^{3}$\\
 Hier-S & $2.6  \times
10^{3}$ & $6.0 \times 10^{2}$ & $3.2 \times 10^{3}$& $3.8 \times 10^{3}$\\
   K-means-R  &  $2.4 \times 10^{3}$ & $2.0 \times 10^{3}$ & $8.4 \times 10^{3}$ &  $8.4 \times 10^{3}$ \\
 K-means-A  &$2.3 \times
10^{3}$ & $1.3 \times 10^{3}$& $5.4 \times 10^{3}$  & $5.8 \times
10^{3}$
\\
 K-means-C & $1.7 \times
10^{3}$ & $1.3 \times 10^{3}$&$5.0 \times 10^{3}$ & $4.0 \times
10^{3}$
\\
K-means-S & $2.6 \times 10^{3}$ & $1.6 \times 10^{3}$ & $7.3
\times 10^{3}$ & $7.4 \times 10^{3}$
\\

\hline

\end{tabular}
\end{center}
\end{footnotesize}
\caption{A summary of the timing results for {\tt WCSS}.}\label{table:WCSS-time}
\end{table}

\subsection{KL}
Following the same experimental setup of {\tt WCSS}, the {\tt KL} measure is computed,  for each dataset and each algorithm.
The results, summarized in Tables~\ref{table:KL-Index} and~\ref{table:KL-Index-time}, are rather disappointing: the
measure  provides some reliable indication, across algorithms, only
on the Leukemia and the Lymphoma datasets. Due to such a poor
performance, no results are reported in Tables~\ref{table:summary-table} and~\ref{table:summary-table-time},  for comparison with
the performance of the other measures.

\begin{table}
\begin{footnotesize}
\begin{center}
\begin{tabular}{|l|cccccc|}\hline
& \multicolumn{6}{c|}{Precision}
\\\cline{2-7}
 & CNS Rat & Leukemia & NCI60 & Lymphoma & Yeast & PBM \\
Hier-A & \ding{178}& \ding{184} &3 &  \ding{173}& 17 & 12  \\
Hier-C & 10 & \ding{184}&  2&  \ding{173} & 16 & 15  \\
Hier-S & 21 & 7 & \ding{178}&  9 & 15 & 25  \\
K-means-R & 4 & 27 &  3 &  22 & 29 & 24   \\
K-means-A & 25 & \ding{184} & 3 & \ding{173}& 7 & 16   \\
K-means-C & 2 & \ding{184} &\ding{178} & \ding{173}  & 26 & 24  \\
K-means-S & 4 &\ding{175} & 12 & 8 & 13 & 16  \\
\hline
{\bf Gold solution}  & {\bf 6} & {\bf 3} & {\bf 8} & {\bf 3} & {\bf 5} & {\bf 18} \\
\hline
\end{tabular}
\end{center}
\end{footnotesize}
\caption{
A summary of the precision results for {\tt KL} on all algorithms and on {\tt Benchmark 1}
datasets.
}
\label{table:KL-Index}
\end{table}

\begin{table}
\begin{footnotesize}
\begin{center}
\begin{tabular}{|l|cccc|}\hline
 & \multicolumn{4}{c|}{Timing}
\\\cline{2-5}
 &  CNS Rat & Leukemia & NCI60 & Limphoma\\
Hier-A &  $1.9 \times 10^{3}$ & $6.0 \times 10^{2}$& $2.1 \times 10^{3}$ &  $2.5 \times 10^{3}$ \\
Hier-C &  $1.6 \times 10^{3}$ & $1.1 \times 10^{3}$& $2.5 \times 10^{3}$ & $2.1 \times 10^{3}$  \\
Hier-S &  $3.4\times 10^{3}$ & $1.3 \times 10^{3}$& $3.7 \times 10^{3}$ & $4.9\times 10^{3}$  \\
K-means-R &  $2.7 \times 10^{3}$ & $3.4 \times 10^{3}$& $9.3 \times 10^{3}$ &  $9.0 \times 10^{3}$  \\
K-means-A &$2.3 \times 10^{3}$ & $2.4 \times 10^{3}$& $5.7 \times 10^{3}$ &  $6.2 \times 10^{3}$  \\
K-means-C & $3.0 \times 10^{3}$ & $2.6 \times 10^{3}$ & $5.0 \times 10^{3}$ &  $5.8 \times 10^{3}$ \\
K-means-S & $4.0 \times 10^{3}$ & $2.9 \times 10^{3}$ & $8.0 \times 10^{3}$ &  $8.5 \times 10^{3}$  \\
\hline

\end{tabular}
\end{center}
\end{footnotesize}
\caption{
A summary of the timing results {\tt KL} on all algorithms.
}
\label{table:KL-Index-time}
\end{table}

\subsection{Gap}

For each dataset and each clustering algorithm, three
versions of {\tt Gap} are computed, namely {\tt Gap-Ps}, {\tt Gap-Pc} and {\tt
Gap-Pr}, for a number of cluster values in the range $[1,30]$. {\tt
Gap-Ps} uses  the  Poisson null model, {\tt Gap-Pc} the Poisson null
model aligned with the principal components of the data while {\tt
Gap-Pr}  uses the permutational null model (see Section~\ref{sec:statistics}).
For each of them, a Monte Carlo simulation is performed, 20 steps, in
which the measure returns  an estimated number of clusters for each
step. Each simulation step is based on the generation of 10 data
matrices from the null model used by the measure. At the end of each
Monte Carlo simulation,  the number with the majority of estimates
is taken  as the predicted number of clusters. Occasionally, there
are ties and both numbers are reported. The relevant histograms are
displayed  at the  following  supplementary material web site~\cite{Bench} (Figures section):
Figs. S3-S8 for {\tt Gap-Ps}, Figs. S9-S13 for {\tt Gap-Pc}  and
Figs. S14-S19 for {\tt Gap-Pr}. The results are summarized in Tables~\ref{table:Gap} and~\ref{table:Gap-time}. For PBM
and {\tt Gap-Pc},  each  experiment was terminated  after a week,
since no substantial progress was being made towards its completion.

\begin{table}
\begin{footnotesize}
\begin{center}
\begin{tabular}{|l|cccccc|}\hline
& \multicolumn{6}{c|}{Precision}
\\\cline{2-7}
 & CNS Rat & Leukemia & NCI60 & Lymphoma & Yeast & PBM \\
 {\tt Gap-Ps}-Hier-A & 1 & \ding{175} &  1 & 6 & 3 & 1 \\
 {\tt Gap-Ps}-Hier-C & 1 or 2 & \ding{175} &  2 & 1 or 25& 7 & 15 \\
 {\tt Gap-Ps}-Hier-S & 1 & 1&  1 & 1 & 1 & 1 \\
 {\tt Gap-Ps}-K-means-R & \ding{187}  or \ding{178}  & \ding{175}  or 5 &  3& 8 & 9 &7 \\
 {\tt Gap-Ps}-K-means-A & \ding{178}  & \ding{184} &  1 & 8 & 7 & 9 \\
 {\tt Gap-Ps}-K-means-C & \ding{178} & \ding{175} &  2& 1 or 25  & 12 & 6 \\
 {\tt Gap-Ps}-K-means-S & 9 & \ding{184} &  1 & 1 & 7 & 8 \\

 {\tt Gap-Pc}-Hier-A & 1 & \ding{184} or \ding{175} &  1 & 1 & 1 or 2 or 3  & - \\
 {\tt Gap-Pc}-Hier-C & 1 &  \ding{175} &  1 & 1 &  3  & - \\
 {\tt Gap-Pc}-Hier-S & 1 & 1 &  1& 1  &  1  & - \\
 {\tt Gap-Pc}-K-means-R & 2 & 1 &  1& 1  &   \ding{175}  & -\\
 {\tt Gap-Pc}-K-means-A & 2 &  \ding{175}  &  1 & 1& 3 & - \\
 {\tt Gap-Pc}-K-means-C & 2 & 1 &  1 & 1  & \ding{175} & -\\
 {\tt Gap-Pc}-K-means-S & 3 & 1 &  1 & 1  & 1 & -\\

 {\tt Gap-Pr}-Hier-A & 3 & \ding{175} &  1  &  6 & 3 & 1 \\
 {\tt Gap-Pr}-Hier-C  & \ding{178}  & \ding{175} &  1 & 1 or 25 & 16 & 1 \\
 {\tt Gap-Pr}-Hier-S & 1 or \ding{187}  & 1&  2 & 1  & 1 &2 \\
 {\tt Gap-Pr}-K-means-R& \ding{187}  & \ding{175}  &  5 &  8 & 8 & 8 \\
 {\tt Gap-Pr}-K-means-A & 8 & \ding{175}&  1  & 8 & 13 &4 \\
 {\tt Gap-Pr}-K-means-C & \ding{176} & 6 &  1 & 1 or 25 & 8 & 1 \\
 {\tt Gap-Pr}-K-means-S & \ding{178} & \ding{184} &  2 & 1 & 11 & 1 \\
 \hline
 {\bf Gold solution}  & \textbf{6} & {\bf 3} & {\bf 8} & {\bf 3} & {\bf 5} & {\bf 18}  \\
 \hline
\end{tabular}
\caption{A summary of the precision results for {\tt Gap} on all algorithms
and  {\tt Benchmark 1} datasets, with use of three null models.  For
{\tt Gap-Pc}, on PBM, the experiments were stopped due to their high
computational demand.}\label{table:Gap}
\end{center}
\end{footnotesize}
\end{table}

\begin{table}
\begin{footnotesize}
\begin{center}
\begin{tabular}{|l|cccc|}\hline
 & \multicolumn{4}{c|}{Timing}
\\\cline{2-5}
 & CNS Rat & Leukemia & NCI60 & Limphoma\\
 {\tt Gap-Ps}-Hier-A & $2.7 \times 10^{5}$ & $1.4 \times 10^{5}$ &  $6.1 \times 10^{5}$ & $6.4 \times 10^{5}$\\
 {\tt Gap-Ps}-Hier-C  & $2.3 \times 10^{5}$ & $1.1 \times 10^{4}$&  $3.4 \times 10^{5}$  & $3.2 \times 10^{5}$\\
 {\tt Gap-Ps}-Hier-S  & $6.1 \times 10^{5}$ & $1.9 \times 10^{5}$&  $1.1 \times 10^{6}$ &$1.4 \times 10^{6}$ \\
 {\tt Gap-Ps}-K-means-R  & $8.4  \times 10^{5}$ & $5.0 \times 10^{5}$ &  $1.1 \times 10^{6}$ &$1.0 \times 10^{6}$\\
 {\tt Gap-Ps}-K-means-A & $6.1  \times 10^{5}$ & $4.7 \times 10^{5}$&   $1.1 \times 10^{6}$ & $1.0 \times 10^{6}$\\
 {\tt Gap-Ps}-K-means-C & $6.0  \times 10^{5}$ &$6.1  \times 10^{5}$&   $8.8 \times 10^{5}$& $7.6 \times 10^{5}$\\
 {\tt Gap-Ps}-K-means-S & $9.1  \times 10^{5}$ &$6.5  \times 10^{5}$&   $2.1 \times 10^{6}$ & $1.8 \times 10^{6}$\\

 {\tt Gap-Pc}-Hier-A & $3.2 \times 10^{5}$ & $3.7 \times 10^{5}$&   $8.1 \times 10^{5}$ & $6.1 \times 10^{5}$\\
 {\tt Gap-Pc}-Hier-C  & $1.9 \times 10^{5}$ & $1.9 \times 10^{5}$ & $7.1 \times 10^{5}$ & $5.8 \times 10^{5}$ \\
 {\tt Gap-Pc}-Hier-S  & $7.7 \times 10^{5}$ & $3.1 \times 10^{5}$&  $1.4 \times 10^{6}$ & $1.3 \times 10^{6}$\\
 {\tt Gap-Pc}-K-means-R &$4.9 \times 10^{5}$ &$8.0 \times 10^{5}$ & $1.8 \times 10^{6}$& $1.8 \times 10^{6}$\\
 {\tt Gap-Pc}-K-means-A & $3.8 \times 10^{5}$ & $7.0 \times 10^{5}$&   $1.3 \times 10^{6}$ & $1.4 \times 10^{6}$\\
 {\tt Gap-Pc}-K-means-C  &$4.1 \times 10^{5}$ & $5.8 \times 10^{5}$&   $1.3 \times 10^{6}$ & $1.2 \times 10^{6}$ \\
 {\tt Gap-Pc}-K-means-S &$6.5 \times 10^{5}$ & $7.6 \times 10^{5}$&   $2.4 \times 10^{6}$& $2.0 \times 10^{6}$ \\

 {\tt Gap-Pr}-Hier-A  & $2.5 \times 10^{5}$ & $1.6 \times 10^{5}$&  $3.3 \times 10^{5}$   & $3.8 \times 10^{5}$  \\
 {\tt Gap-Pr}-Hier-C   & $1.3 \times 10^{5}$ & $1.4 \times 10^{5}$&  $3.2 \times 10^{5}$   & $3.7 \times 10^{5}$  \\
 {\tt Gap-Pr}-Hier-S  & $6.8 \times 10^{5}$ & $1.9 \times 10^{5}$&  $1.1 \times 10^{6}$   & $1.4 \times 10^{6}$  \\
 {\tt Gap-Pr}-K-means-R & $8.6 \times 10^{5}$ & $5.4 \times 10^{5}$&   $1.5 \times 10^{6}$& $9.4 \times 10^{5}$  \\
 {\tt Gap-Pr}-K-means-A  & $7.4 \times 10^{5}$ & $5.0 \times 10^{5}$&   $8.7 \times 10^{5}$& $1.0 \times 10^{6}$  \\
 {\tt Gap-Pr}-K-means-C  & $6.7  \times 10^{5}$ &$4.6  \times 10^{5}$&   $8.6 \times 10^{5}$& $1.0 \times 10^{6}$   \\
 {\tt Gap-Pr}-K-means-S  & $1.2  \times 10^{6}$ &$5.4  \times 10^{5}$&   $1.8 \times 10^{6}$& $2.3 \times 10^{6}$ \\
 \hline
\end{tabular}
\caption{A summary of the timing results for {\tt Gap} on all algorithms, with use of three null models.  For
{\tt Gap-Pc}, on PBM, the experiments were stopped due to their high
computational demand.}\label{table:Gap-time}
\end{center}
\end{footnotesize}
\end{table}

The results for {\tt Gap} are somewhat disappointing, as Table~\ref{table:Gap}
shows. However, a few comments are in
order, the first one regarding the null models. Tibshirani et al.
find experimentally that, on simulated data,  {\tt Gap-Pc} is the
clear winner over {\tt Gap-Ps} (they did not consider {\tt Gap-Pr}).
The results reported here show that, as the dataset size increases, {\tt Gap-Pc}
incurs into a severe time performance degradation, due to the
repeated data transformation step. Moreover, on the smaller
datasets, no null model seems to have the edge. Some of the results
are also somewhat puzzling. In particular, although the datasets
have cluster structure, many algorithms return an estimate of
$k^*=1$, i.e., no cluster structure in the data. An analogous
situation was reported by Monti et al.. In their study, {\tt Gap-Ps}
returned $k^*=1$ on two artificial datasets. Fortunately, an
analysis of the corresponding {\tt Gap} curve showed that indeed the
first maximum was at $k^*=1$ but a local maximum was also present
at the correct number of classes, in each dataset. An analogous analysis of the relevant {\tt Gap} curves is also performed here to find that, in analogy with Monti et al., most
curves show a local maximum at or very close to the number of
classes in each dataset, following the maximum at $k^*=1$. An
example curve is given in Fig.~\ref{fig:Gap}. From the above, one can conclude
that inspection of the {\tt Gap} curves and {\em domain knowledge}
can greatly help in disambiguating the case $k^*=1$. It is worth pointing out that experiments conducted by Dudoit and Fridlyand and, independently by Yan and Ye \cite{Yan07}, show that {\tt Gap} tends to overestimate the correct number of clusters, although this does not seem to be the case for the datasets and algorithms used in this dissertation. The above
considerations seem to suggest that the automatic rule for the
prediction of $k^*$ based on {\tt Gap} is rather weak.

\begin{figure}[ht] \centering \epsfig{file=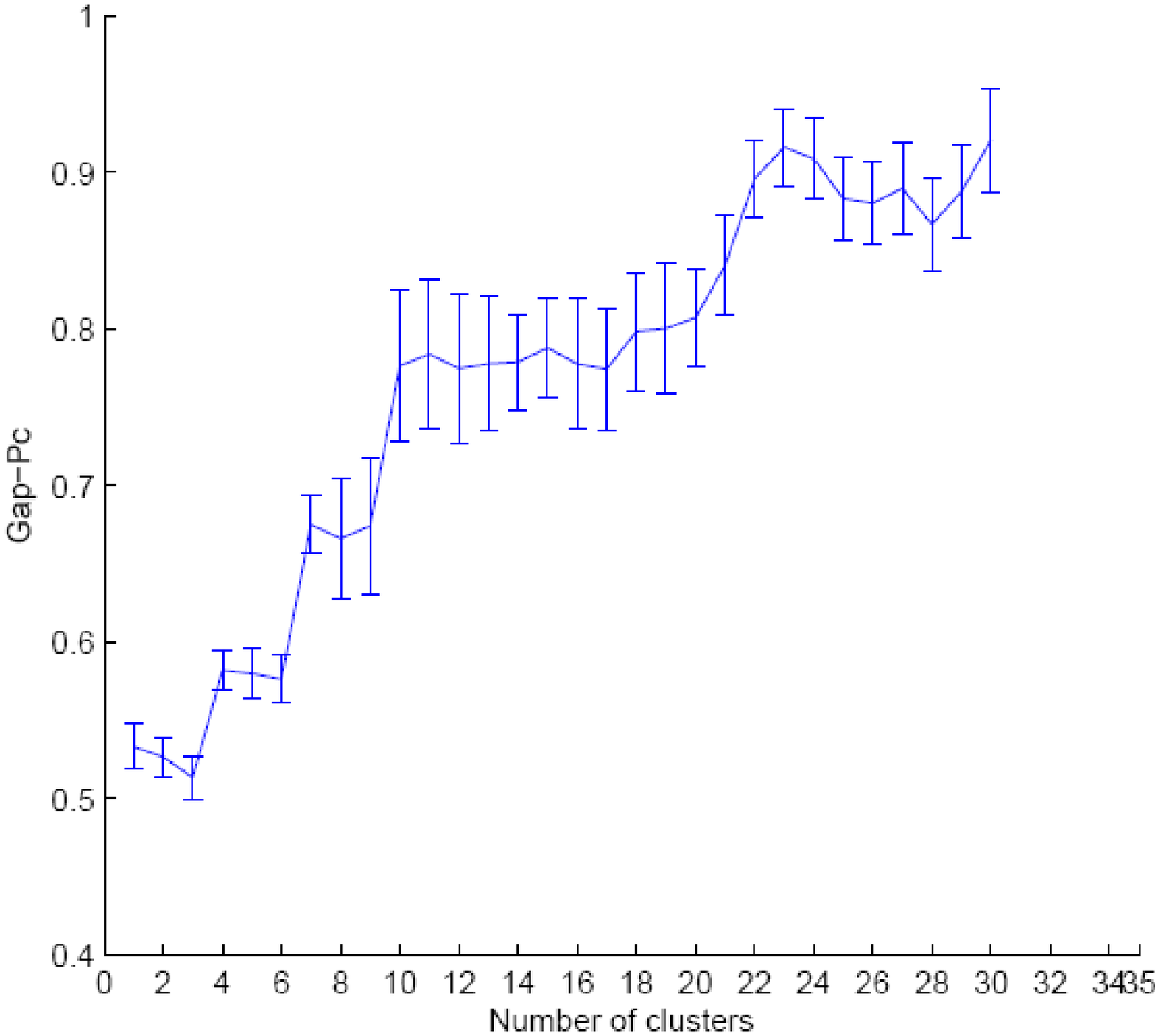,scale=0.6}
\caption{The {\tt Gap-Pc} curve on the Leukemia dataset, with use of the
Hier-S algorithm. At each point, error bars indicate the variation
of the curve across simulations. The curve shows a first maximum at
$k=1$, yielding a prediction of $k^*=1$, the next maximum is at
$k=4$, which is very close to the number of classes $k^*=3$.}\label{fig:Gap}
\end{figure}

\subsection{Clest}\label{subsec:resultclest}

For CNS Rat and Yeast and each clustering algorithm, {\tt
Clest} is computed for a number of cluster values in the range $[2,30]$ while,
for Leukemia,  NCI60 and Lymphoma,  the ranges  $[2,10]$, $[2,15]$
and $[2, 15]$ are used, respectively,  due to the small size of the
datasets. Moreover, although experiments have been started with PBM,
no substantial progress was made after a week of execution and, for
each clustering algorithm, the corresponding experiment was
terminated. Following the same experimental setup of Dudoit and
Fridlyand, for each cluster value $k$, 20 resampling
steps and 20 iterations are performed. In each step, 66\% of the rows of the data
matrix are extracted, uniformly and at random,  to create a learning
set, to be given to the clustering algorithm to be clustered in $k$
groups. As one of its input parameters, {\tt Clest} requires the use
of an  external index $E$ to establish the level of agreement
between two partitions of a dataset.  Here each of the following are used:
the {\tt FM} (the {\tt FM-Index}), {\tt Adj} (the {\tt Adjusted Rand}
Index) and  {\tt F} (the {\tt F-Index}) (see Section~\ref{sec:ext}).

The precision results are summarized in Table~\ref{table:Clest-summary}, while the timing results are reported in Table~\ref{table:Clest-summary-time}.
The  Leukemia, NCI60 and Lymphoma datasets were excluded since the
experiments  were performed on a smaller interval of cluster values
with respect to CNS Rat. This latter interval is the standard one used in this dissertation to make consistent comparisons across measures and
algorithms.

\begin{table}
\begin{footnotesize}
\begin{center}
\begin{tabular}{|l|ccccc|}\hline
& \multicolumn{5}{c|}{Precision}
\\\cline{2-6}
 & CNS Rat & Leukemia & NCI60 & Lymphoma & Yeast \\
{\tt Clest-FM}-Hier-A & 10 & 6 &  10  & 13  & 24  \\
{\tt Clest-FM}-Hier-C  & 10 &\ding{175} & \ding{180}  & 15  & 8  \\
{\tt Clest-FM}-Hier-S & 20 & 10 & 15  & 15  & 1 \\
{\tt Clest-FM}-K-means-R & 8  &\ding{175} &  \ding{189}& \ding{173}  & \ding{175} \\
{\tt Clest-FM}-K-means-A&  18  & 7 & 12  & 15  & 13 \\
{\tt Clest-FM}-K-means-C & 12 & 5 & 12  & 11  &\ding{175}  \\
{\tt Clest-FM}-K-means-S & 24 & 8 & 13 & 15  & 1 \\

{\tt Clest-Adj}-Hier-A & 13 & \ding{184} &3  & \ding{173}   & 11 \\
{\tt Clest-Adj}-Hier-C & 9 &  \ding{175}  & 2  &  \ding{173}  &\ding{175} \\
{\tt Clest-Adj}-Hier-S & 4 & 7 &\ding{180}  & 7  & 26 \\
{\tt Clest-Adj}-K-means-R & \ding{176}   &\ding{175} & 3 & \ding{173}   & 2  \\
{\tt Clest-Adj}-K-means-A & 12 & \ding{184} &3  & \ding{173}   & \ding{186}  \\
{\tt Clest-Adj}-K-means-C & 9 &\ding{173} & 2  & \ding{173}   &\ding{175}   \\
{\tt Clest-Adj}-K-means-S & 20 & 6 & 13  & 6  & 10  \\

{\tt Clest-F}-Hier-A &  \ding{178} & 7 & 10 & 15 & 27  \\
{\tt Clest-F}-Hier-C & 9 & \ding{184} &13  & \ding{184} & \ding{186}  \\
{\tt Clest-F}-Hier-S & 28 & 10 & 15  & 15  & 1  \\
{\tt Clest-F}-K-means-R &\ding{187}& \ding{184} &15  &\ding{173}  & \ding{175}  \\
{\tt Clest-F}-K-means-A & 8 & 6 & 10 & 14  & 11  \\
{\tt Clest-F}-K-means-C & 9 &5 & 12 & \ding{184}& \ding{175} \\
{\tt Clest-F}-K-means-S & 21 & 10 & 15  & 15  & 1  \\

\hline
{\bf Gold solution}  & {\bf 6} & {\bf 3} & {\bf 8} & {\bf 3} & {\bf 5} \\
\hline
\end{tabular}
\end{center}
\end{footnotesize}
\caption{
A summary of the precision results for {\tt Clest} on all algorithms and the
first four datasets, with use of three external indices. For
PBM, the experiments were terminated  due to their high
computational demand (weeks to complete).
}\label{table:Clest-summary}
\end{table}

\begin{table}
\begin{footnotesize}
\begin{center}
\begin{tabular}{|l|c|}\hline
& Timing\\
\cline{2-2}
 &  CNS Rat\\
{\tt Clest-FM}-Hier-A & $1.1 \times 10^{6}$ \\
{\tt Clest-FM}-Hier-C & $1.1 \times 10^{6}$ \\
{\tt Clest-FM}-Hier-S & $1.1 \times 10^{6}$\\
{\tt Clest-FM}-K-means-R &  $1.2 \times 10^{6}$ \\
{\tt Clest-FM}-K-means-A &  $1.4 \times 10^{6}$ \\
{\tt Clest-FM}-K-means-C &  $1.5 \times 10^{6}$ \\
{\tt Clest-FM}-K-means-S & $1.8 \times 10^{6}$ \\
{\tt Clest-Adj}-Hier-A & $1.1 \times 10^{6}$ \\
{\tt Clest-Adj}-Hier-C & $1.1 \times 10^{6}$ \\
{\tt Clest-Adj}-Hier-S & $1.1 \times 10^{6}$ \\
{\tt Clest-Adj}-K-means-R & $1.1 \times 10^{6}$ \\
{\tt Clest-Adj}-K-means-A & $1.4 \times 10^{6}$ \\
{\tt Clest-Adj}-K-means-C & $1.4 \times 10^{6}$ \\
{\tt Clest-Adj}-K-means-S & $1.8 \times 10^{6}$ \\
{\tt Clest-F}-Hier-A &  $1.1 \times 10^{6}$ \\
{\tt Clest-F}-Hier-C & $1.1 \times 10^{6}$ \\
{\tt Clest-F}-Hier-S & $1.1 \times 10^{6}$ \\
{\tt Clest-F}-K-means-R & $1.2 \times 10^{6}$ \\
{\tt Clest-F}-K-means-A & $1.4 \times 10^{6}$ \\
{\tt Clest-F}-K-means-C & $1.5 \times 10^{6}$ \\
{\tt Clest-F}-K-means-S & $1.8 \times 10^{6}$ \\
\hline
\end{tabular}
\end{center}
\end{footnotesize}
\caption{
A summary of the timing results for {\tt Clest} on all algorithms, with use of three external indices.  For
PBM, the experiments were terminated  due to their high
computational demand (weeks to complete). Therefore, the resulting
column is omitted from the table. For the Leukemia, NCI60 and
Lymphoma datasets, the timing experiments are not reported because
incomparable with those of CNS Rat and of the other measures. The
corresponding  columns are eliminated from the table.
}\label{table:Clest-summary-time}
\end{table}

The results show that {\tt Clest} has severe time demand limitations
on large datasets. It also seems to achieve a better performance,
across algorithms with {\tt Adj} and {\tt F}. Moreover, it is
clearly algorithm-dependent, with K-means-R being the best performer
with both  {\tt FM}  and {\tt F}. Those results  are reported in
Tables~\ref{table:summary-table}  {and~\ref{table:summary-table-time} for comparison with the performance of the other measures.

\subsection{ME}
For each of the first five datasets and each clustering algorithm,
{\tt ME} is computed for a number of cluster values in the range
$[2,30]$. Following the same experimental setup of Ben-Hur et al.,
for each cluster value $k$, 100 iterations are performed. In each step, two datasets to be given to the algorithm to be clustered
in $k$ groups are computed. Each dataset is created by extracting uniformly and at random  80\% of the rows. The prediction of $k^*$ is based on the plot of the corresponding histograms, as illustrated in Chapter~\ref{chap:Stability}. As for the external indices that are used, they are the same three
used for {\tt Clest}. The histograms obtained from such an
experimentation are reported at the following supplementary material web site~\cite{Bench}
in Figs. S20-S124. As for PBM, the computations were stopped because
of their computational demand. A summary of the results is given in
Tables~\ref{table:ME} and~\ref{table:ME-time}. Indeed, the performance of {\tt ME} was rather disappointing,
with the exception of Leukemia and Lymphoma, across algorithms and
external indices.

\begin{table}

\begin{footnotesize}\begin{center}
\begin{tabular}{|l|cccccc|}\hline
& \multicolumn{6}{c|}{Precision}
\\\cline{2-7}
  & CNS Rat & Leukemia & NCI60 & Lymphoma & Yeast & PBM \\
 {\tt ME-FM}-Hier-A & 4 & \ding{173} & 2 & \ding{173} & 1 & - \\
 {\tt ME-FM}-Hier-C & 2 & \ding{173} & 2 & \ding{173} & 1 & - \\
 {\tt ME-FM}-Hier-S & 8 & \ding{173} & 2 & \ding{173} & - & - \\
 {\tt ME-FM}-K-means-R & 2 & \ding{173} & 2 & \ding{173} & 3 & - \\
 {\tt ME-FM}-K-means-A & 2 & \ding{173} & 4 & \ding{173} & 2 & - \\
 {\tt ME-FM}-K-means-C & 2 & \ding{173} & 2 & \ding{173} & 3 & - \\
 {\tt ME-FM}-K-means-S & 2 & \ding{173} & 3 & \ding{173} & \ding{175} & - \\

 {\tt ME-Adj}-Hier-A & 3 & 1 & 4 & 1 & 1 & - \\
 {\tt ME-Adj}-Hier-C & 1 & 1 & 2 & \ding{173} & 1 & - \\
 {\tt ME-Adj}-Hier-S & 1 & 1 & 1 & 1 & 1 & - \\
 {\tt ME-Adj}-K-means-R & 1 & 1 & 1 & 2 & 1 & - \\
 {\tt ME-Adj}-K-means-A & 1 & \ding{173} & 1 & \ding{173} & 1 & - \\
 {\tt ME-Adj}-K-means-C & 1 & \ding{173} & 2 & \ding{173} & 1 & - \\
 {\tt ME-Adj}-K-means-S & 1 & 1 & 1 & 1 & 1 & - \\

 {\tt ME-F}-Hier-A & 4 & 1 & 3 & 1 & 1 & - \\
 {\tt ME-F}-Hier-C & 3 & 1 & 2 & \ding{173} & 1 & - \\
 {\tt ME-F}-Hier-S & \ding{178} & 1 & 2 & \ding{184} & - & - \\
 {\tt ME-F}-K-means-R & 1 & \ding{173} & 2 & \ding{173} & 2 & - \\
 {\tt ME-F}-K-means-A & 2 & \ding{184} & 4 & \ding{173} & 2 & - \\
 {\tt ME-F}-K-means-C & 2 & \ding{173} & 2 & \ding{173} & 2 & - \\
 {\tt ME-F}-K-means-S & 2 & 1 & 2 & \ding{175} & \ding{175} & - \\
\hline
 {\bf Gold solution}  & \textbf{6} & {\bf 3} & {\bf 8} & {\bf 3} & {\bf 5} & {\bf 18} \\
 \hline
\end{tabular}\end{center}
\end{footnotesize}

\caption{A summary of the precision results for {\tt ME} on all algorithms
and  {\tt Benchmark 1} datasets, with use of three external indices.  For
PBM, the experiments were stopped due to their high computational
demand (weeks to complete).}\label{table:ME}
\end{table}

\begin{table}

\begin{footnotesize}\begin{center}
\begin{tabular}{|l|cccc|}\hline
& \multicolumn{4}{c|}{Timing}
\\\cline{2-5}
& CNS Rat & Leukemia & NCI60 & Limphoma\\
 {\tt ME-FM}-Hier-A  & $2.8 \times 10^{5}$ & $2.3 \times 10^{5}$ & $7.6 \times 10^{5}$ & $6.5 \times 10^{5}$ \\
 {\tt ME-FM}-Hier-C  & $2.9 \times 10^{5}$  & $2.3 \times 10^{5}$ & $7.6 \times 10^{5}$  & $6.5 \times 10^{5}$ \\
 {\tt ME-FM}-Hier-S  & $2.9 \times 10^{5}$ & $2.3 \times 10^{5}$ & $7.7 \times 10^{5}$  & $6.5 \times 10^{5}$ \\
 {\tt ME-FM}-K-means-R  & $3.6 \times 10^{5}$ & $3.9 \times 10^{5}$ & $1.3 \times 10^{6}$  & $1.1 \times 10^{6}$ \\
 {\tt ME-FM}-K-means-A  & $4.6 \times 10^{5}$ & $3.6 \times 10^{5}$ & $1.1 \times 10^{6}$ & $9.8 \times 10^{5}$\\
 {\tt ME-FM}-K-means-C  & $4.4 \times 10^{5}$ & $3.7 \times 10^{5}$ & $1.1 \times 10^{6}$ & $9.8 \times 10^{5}$ \\
 {\tt ME-FM}-K-means-S  & $5.3 \times 10^{5}$ & $3.7 \times 10^{5}$ & $1.2 \times 10^{6}$ & $1.0 \times 10^{6}$ \\

 {\tt ME-Adj}-Hier-A  & $2.7 \times 10^{5}$ & $2.3 \times 10^{5}$ & $7.6 \times 10^{5}$ & $6.4 \times 10^{5}$ \\
 {\tt ME-Adj}-Hier-C  & $2.7 \times 10^{5}$ & $2.3 \times 10^{5}$ & $7.6 \times 10^{5}$ & $6.5 \times 10^{5}$ \\
 {\tt ME-Adj}-Hier-S  & $2.7 \times 10^{5}$ & $2.3 \times 10^{5}$ & $7.5 \times 10^{5}$ & $6.5 \times 10^{5}$ \\
 {\tt ME-Adj}-K-means-R & $3.4 \times 10^{5}$ & $3.8 \times 10^{5}$ & $1.3 \times 10^{6}$  & $1.1 \times 10^{6}$  \\
 {\tt ME-Adj}-K-means-A  & $4.4 \times 10^{5}$ & $3.5 \times 10^{5}$ & $1.1 \times 10^{6}$  & $9.8 \times 10^{5}$ \\
 {\tt ME-Adj}-K-means-C  & $4.2 \times 10^{5}$ & $4.2 \times 10^{5}$ & $1.1 \times 10^{6}$  & $9.9 \times 10^{5}$ \\
 {\tt ME-Adj}-K-means-S  & $5.1 \times 10^{5}$ & $3.7 \times 10^{5}$ & $1.2 \times 10^{6}$  & $1.0 \times 10^{6}$ \\

 {\tt ME-F}-Hier-A  & $2.8 \times 10^{5}$ & $2.1 \times 10^{5}$ & $7.3 \times 10^{5}$ & $6.4 \times 10^{5}$ \\
 {\tt ME-F}-Hier-C  & $2.8 \times 10^{5}$ & $2.2 \times 10^{5}$ & $7.4 \times 10^{5}$ & $6.4 \times 10^{5}$ \\
 {\tt ME-F}-Hier-S  & $2.8 \times 10^{5}$ & $2.1 \times 10^{5}$ & $7.4 \times 10^{5}$ & $6.4 \times 10^{5}$ \\
 {\tt ME-F}-K-means-R  & $3.6\times 10^{5}$ & $3.8 \times 10^{5}$  & $1.3 \times 10^{6}$ & $1.0 \times 10^{6}$ \\
 {\tt ME-F}-K-means-A  & $4.5 \times 10^{5}$ & $3.5 \times 10^{5}$ & $1.1 \times 10^{6}$ & $9.5 \times 10^{5}$ \\
 {\tt ME-F}-K-means-C  & $4.3 \times 10^{5}$ & $3.5 \times 10^{5}$ & $1.1 \times 10^{6}$ & $9.6 \times 10^{5}$ \\
 {\tt ME-F}-K-means-S  & $5.2 \times 10^{5}$ & $3.5 \times 10^{5}$ & $1.1 \times 10^{6}$ & $1.0 \times 10^{6}$ \\
\hline

\end{tabular}
\end{center}
\end{footnotesize}
\caption{A summary of the timing results for {\tt ME} on all algorithms, with use of three external indices.  For
PBM, the experiments were stopped due to their high computational
demand (weeks to complete).}\label{table:ME-time}
\end{table}

\subsection{Consensus}

For each of the first five datasets and each clustering algorithm,
{\tt Consensus} is computed for a number of cluster values in the
range $[2,30]$. Following the same experimental setup of Monti et
al., for each cluster value $k$, 500 resampling steps are  performed. In
each step, 80\% of the rows of the matrix are extracted uniformly
and at random to create a new dataset, to be given to the clustering
algorithm to be clustered in $k$ groups. The prediction of $k^*$ is
based on the plot of two curves, $\Delta(k)$ and $\Delta'(k)$, as a
function of the number $k$ of clusters. Both curves are defined in Chapter~\ref{chap:Stability}.
As suggested by Monti et al.,  the first curve
is suitable for hierarchical algorithms while the second suits
non-hierarchical ones. The experiment for PBM  were aborted since {\tt
Consensus} was very slow (execution on each algorithm was terminated
after a week). Contrary to Monti et al. indication, the $\Delta(k)$ curve is computed for all algorithms on the first five datasets,
for reasons that will be self-evident shortly. The corresponding
plots are available at the  following supplementary material web site~\cite{Bench} (Figures
section) as Figs. S125-S134. Moreover, the $\Delta'(k)$
curve is also computed for the K-means algorithms, on the same datasets. Recall from Chapter~\ref{chap:Stability} the recommendation to use the $\Delta'$ curve instead of the $\Delta$ curve for non-hierarchical algorithms as suggested by Monti et al.. Briefly, the reason is the following: $A(k)$ is a value that is expected to behaves like a non-decreasing function of $k$, for hierarchical algorithms. Therefore,  $\Delta(k)$ would be expected to be positive or, when negative, not too far from zero. Such a monotonicity of $A(k)$ is not expected for non-hierarchical algorithms. Therefore, another definition of $\Delta$ is needed to ensure a behavior of this function analogous to the hierarchical algorithms. However, from the experiments reported in Table~\ref{table:consensus}, for the K-means algorithms, $A(k)$ displays nearly the same monotonicity properties of the hierarchical algorithms. The end result is that  $\Delta$ can be used for both types of algorithms.  Consequently, since the $\Delta'$
curves are nearly identical to the $\Delta(k)$ ones, they are
omitted. In order to predict the number of clusters in the datasets, for all curves, the rule reported and explained in the Section~\ref{subsec:Consensus} is used: take as $k^*$ the abscissa corresponding to the
smallest non-negative value where the curve starts to stabilize;
that is, no big variation in the curve takes place from that point
on. An analysis on the $\Delta(k)$ curves is performed and the precision
results are summarized in Table~\ref{table:consensus} and the corresponding
timing results in Table~\ref{table:consensus-time}.

As for the precision of {\tt Consensus}, all  algorithms perform
well, except for Hier-S.

In conclusion, {\tt Consensus} seems to be limited by time demand
that makes  it not applicable to large datasets. However, on small
and medium sized datasets, it is remarkably precise across
algorithms. In fact, except for Hier-S, the performance of {\tt
Consensus} is among the best and reported in  Tables~\ref{table:summary-table} and~\ref{table:summary-table-time},  for comparison
with the performance of the other measures.

\begin{table}\begin{center}
\begin{footnotesize}

\begin{tabular}{|l|ccccc|}\hline
& \multicolumn{5}{c|}{Precision}
\\\cline{2-6}
 & CNS Rat & Leukemia & NCI60 & Lymphoma & Yeast  \\

Hier-A & \ding{178}& \ding{184} &\ding{189}& \ding{184} & \ding{186}\\
Hier-C & \ding{187}&\ding{175} & \ding{189}& 5 &\ding{177}\\
Hier-S & 2 & 8 & 10 & \ding{184} &10 \\

K-means-R & \ding{187}  &\ding{175} & \ding{178}& \ding{204}&\ding{177}  \\
K-means-A & \ding{178} & \ding{184} &\ding{189}& \ding{204}&\ding{177}  \\
K-means-C & \ding{187}& \ding{184} &\ding{189}&\ding{175} &\ding{177} \\
K-means-S & \ding{178}&\ding{175} & 10 &\ding{173} &\ding{177}  \\

\hline
{\bf Gold solution}  & {\bf 6} & {\bf 3} & {\bf 8} & {\bf 3} & {\bf 5} \\
 \hline
\end{tabular}
\end{footnotesize}
\caption{
 A summary of the precision results for {\tt Consensus} on all
algorithms and the first five datasets in {\tt Benchmark 1}. For
PBM, the experiments were terminated  due to their high
computational demand and the corresponding column has been removed
from the table.
 }\label{table:consensus}
\end{center}
\end{table}

\begin{table}\begin{center}
\begin{footnotesize}

\begin{tabular}{|l|cccc|}\hline
& \multicolumn{4}{c|}{Timing}
\\\cline{2-5}
  & CNS Rat & Leukemia & NCI60 & Lymphoma \\

Hier-A & $9.2 \times 10^{5}$ & $7.9 \times 10^{5}$ & $2.0 \times 10^{6}$ & $1.9 \times 10^{6}$\\
Hier-C & $8.7 \times 10^{5}$ & $6.9 \times 10^{5}$ & $2.0 \times 10^{6}$ & $2.0 \times 10^{6}$\\
Hier-S & $9.4 \times 10^{5}$ & $8.0 \times 10^{5}$& $2.0 \times 10^{6}$ & $1.7 \times 10^{6}$\\

K-means-R & $1.0 \times 10^{6}$ & $1.3 \times 10^{6}$ & $3.4 \times 10^{6}$ & $3.0 \times 10^{6}$ \\
K-means-A & $1.3 \times 10^{6}$ & $1.6 \times 10^{6}$ & $3.0 \times 10^{6}$ & $2.6 \times 10^{6}$ \\
K-means-C & $1.3 \times 10^{6}$ &  $1.8 \times 10^{6}$ & $2.9 \times 10^{6}$ & $2.6 \times 10^{6}$\\
K-means-S & $1.5 \times 10^{6}$ & $1.8 \times 10^{6}$ &  $3.2 \times 10^{6}$ &  $2.8 \times 10^{6}$ \\

\hline
\end{tabular}
\end{footnotesize}
\caption{
 A summary of the timing results for {\tt Consensus} on all
algorithms and the first five datasets in {\tt Benchmark 1}. For
PBM, the experiments were terminated  due to their high
computational demand and the corresponding column has been removed
from the table.
 }\label{table:consensus-time}
\end{center}
\end{table}

\subsection{FOM}
For each algorithm,  and
each dataset,  the same methodology outlined for
{\tt WCSS} is followed. The relevant plots are in Figs. S135-S136 at the following
supplementary material web site~\cite{Bench} (Figures section). The values
resulting from the application of this methodology to the relevant
plots are reported in Table~\ref{table:FOM-Index} and~\ref{table:FOM-Index-time} together with timing results for the
relevant datasets. From those results, it is possible to see as {\tt FOM} is  algorithm-dependent and gives no useful indication on
large datasets.  The best performing settings are reported in Tables~\ref{table:summary-table},  and~\ref{table:summary-table-time} for
comparison with the performance of the other measures.

\begin{table}\begin{center}
\begin{footnotesize}
\begin{tabular}{|l|cccccc|}\hline
& \multicolumn{6}{c|}{Precision}
\\ \cline{2-7}
 & CNS Rat & Leukemia & NCI60 & Lymphoma & Yeast & PBM \\
Hier-A & \ding{178}&\ding{184} & \ding{178} & 6 &\ding{177} &  - \\
Hier-C & 10 & \ding{175} &  \ding{178} & 7 & \ding{186} & -  \\
Hier-S & 3 & 7 &  \ding{178} &  9  & - &  - \\

K-means-R & \ding{178} & \ding{184} & 6 & 9 & \ding{175} &  - \\
K-means-A & \ding{178} & \ding{184} & 6 & 6 & \ding{175}  &  - \\
K-means-C & \ding{178} &  8  &  \ding{189}&  \ding{175}  & \ding{175} &  -  \\
K-means-S &\ding{187 }& \ding{184} & \ding{189} & 8 &\ding{175} &  -  \\
\hline
{\bf Gold solution}  & {\bf 6} & {\bf 3} & {\bf 8} & {\bf 3} & {\bf 5} & {\bf 18} \\
 \hline
\end{tabular}

\end{footnotesize}\end{center}
\caption{
A summary of the precision results for {\tt FOM} on all algorithms
and on {\tt Benchmark 1} datasets. Cells with a dash indicate that  {\tt
FOM} did not give any useful indication.
}\label{table:FOM-Index}
\end{table}

\begin{table}\begin{center}
\begin{footnotesize}
\begin{tabular}{|l|cccc|}\hline
& \multicolumn{4}{c|}{Timing}
\\ \cline{2-5}
 & CNS Rat & Leukemia & NCI60 & Lymphoma\\
Hier-A & $1.6  \times 10^{3}$ & $7.5  \times 10^{3}$ & $5.1 \times 10^{4}$&   $1.8 \times 10^{4}$\\
Hier-C & $1.6  \times 10^{3}$ & $7.7  \times 10^{3}$& $4.5 \times 10^{4}$& $1.8 \times 10^{4}$\\
Hier-S & $1.6  \times 10^{3}$ & $7.4  \times 10^{3}$& $4.9 \times 10^{5}$&  $1.7 \times 10^{4}$\\

K-means-R  & $2.9  \times 10^{4}$ & $1.9  \times 10^{5}$& $1.3 \times 10^{6}$  &  $6.7 \times 10^{5}$\\
K-means-A  & $2.2  \times 10^{4}$ & $9.3  \times 10^{4}$& $5.5 \times 10^{5}$&  $2.7 \times 10^{5}$\\
K-means-C  & $1.9 \times 10^{4}$ & $9.4 \times 10^{4}$& $5.5 \times 10^{5}$ & $2.6 \times 10^{5}$ \\
K-means-S & $2.9 \times 10^{4}$ & $1.0 \times 10^{5}$ & $7.1 \times 10^{5}$ & $3.6 \times 10^{5}$ \\
\hline
\end{tabular}

\end{footnotesize}\end{center}
\caption{
A summary of the timing results for {\tt FOM} on all algorithms
and on {\tt Benchmark 1} datasets.
}\label{table:FOM-Index-time}
\end{table}

\section{Relative Merits of Each Measure}\label{time_merits}

The discussion here refers to Tables~\ref{table:summary-table} and~\ref{table:summary-table-time}. It is evident that the
K-means algorithms have superior performance with respect to the
hierarchical ones, although Hier-A has an impressive and unmatched
performance with {\tt Consensus}.

However, {\tt Consensus} and {\tt FOM} stand out as being  the most
stable across algorithms. In particular, {\tt Consensus} has a
remarkable stability performance across algorithms and datasets.

For large datasets such as  PBM, the experiments show that all the
measures are severely limited due  either to speed  ({\tt Clest},
{\tt Consensus}, {\tt Gap-Pc}) or to precision as well as speed
(the others). Therefore, this fact stresses even more the need for
good data filtering and dimensionality reduction techniques since
they may help reduce such datasets to sizes manageable by the
measures studied in this chapter.

It is also obvious that, when one takes computer time into account,
there is a hierarchy of measures, with {\tt WCSS} being the fastest
and {\tt Consensus} the slowest. It is worth pointing out that from Table~\ref{table:summary-table-time}  that there is
a natural division of methods in two groups: slow ({\tt Clest}, {\tt
Consensus}, {\tt Gap}) and fast (the other measures). Since there
are at least two orders of magnitude of difference in time
performance between the two groups, it seems reasonable to use one
of the fast methods to limit the search
interval for $k^*$. One can then use {\tt Consensus} in the narrowed
interval. Although it may seem paradoxical, despite its precision
performance, {\tt FOM} does not seem to be competitive in this
scenario. Indeed, it is only marginally better than the best
performing setting of {\tt WCSS}  but at least an order of
magnitude slower in time.

When one does not account for time, {\tt Consensus} seems to be the
clear winner since it offers good precision performance across
algorithms at virtually the same price in terms of time performance.

It is also important  pointing  out that the three
instances of the Stability Measure paradigm have quite diverging performances. Such a fact gives evidence that care must be exercised in taking full advantage of such a powerful paradigm.
Indeed,  only {\tt Consensus} seems to take full advantage of the repeated data generation. A possible reason for this is in the different implementation of the Stability Statistic paradigm to collect the statistic. In fact, {\tt Consensus} builds a matrix, the consensus matrix,  that contains very punctual and global information about the cluster structure of the dataset, while the other two measures try to infer that structure by first splitting the dataset in two subsets and then by using a synoptic function (an external index) to assess the similarity between the partitions. That is, those latter two measures use a coarse assessment of consistency. Moreover, {\tt ME} uses the same algorithm for both of the datasets generated. Probably that induces a big dependency of the measure on the clustering algorithm.

Considering the results of Tables~\ref{table:summary-table} and~\ref{table:summary-table-time},
a promising avenue of
research is to design fast  approximation  algorithms for the
computation of the slowest measures, in particular {\tt Consensus}.
Finally, it is worth pointing out that {\tt Gap}, {\tt Clest}, {\tt ME} and {\tt
Consensus} have various parameters that a user needs to specify.
Those choices may affect both time performance and precision. However, no parameter tuning is available in the Literature.

The next chapter addresses those issues. Indeed, a first study of the best parameter setting for {\tt Consensus} is provided. Moreover, an approximation of several measures is presented,  with particular focus on {\tt Gap} and  {\tt Consensus}. Moreover,   a general scheme  speeding up  the Stability Statistic paradigm is also provided.

\begin{table}
\begin{center}
\begin{footnotesize}
\begin{tabular}{|l|ccccc|}\hline
& \multicolumn{5}{c|}{Precision}
\\ \cline{2-6}
  & CNS Rat & Leukemia & NCI60 & Lymphoma & Yeast  \\
 {\tt WCSS}-K-means-C & \ding{176}  &
\ding{184} &\ding{189}& 8 &\ding{175} \\
{\tt FOM}-Hier-A & \ding{178}&\ding{184} & \ding{178}& 6 & \ding{177} \\
 {\tt FOM}-K-means-C & \ding{178}&  8  &
\ding{189}&\ding{175}  &\ding{175} \\
 {\tt FOM}-K-means-S &\ding{187}&\ding{184}& \ding{189} & 8 &\ding{175} \\

 {\tt Clest-F}-K-means-R &\ding{187}& \ding{184}& 15  &\ding{173}  & \ding{175}  \\

 {\tt Clest-FM}-K-means-R & 8 &
 \ding{175} &  \ding{189}& \ding{173}  &
 \ding{175} \\

 {\tt Consensus}-Hier-A & \ding{178}& \ding{184} &\ding{189}& \ding{184} & \ding{186} \\
 {\tt Consensus}-Hier-C & \ding{187}&\ding{175} & \ding{189}& 5 & \ding{177} \\

 {\tt Consensus}-K-means-R & \ding{187}  &\ding{175} & \ding{178}& \ding{184} &\ding{177} \\
 {\tt Consensus}-K-means-A & \ding{178} & \ding{184} &\ding{189}& \ding{184} &\ding{177} \\
 {\tt Consensus}-K-means-C & \ding{187}& \ding{184} &\ding{189}&\ding{175} & \ding{177} \\
 {\tt Consensus}-K-means-S & \ding{178}&\ding{175} &10 &\ding{173} & \ding{177} \\
\hline
 {\bf Gold solution}  & {\bf 6} & {\bf 3} & {\bf 8} & {\bf 3} & {\bf 5} \\
\hline
\end{tabular}
\end{footnotesize}
\end{center}
\caption{
A summary of the precision for the best performances obtained by each measure. The PBM
dataset has been excluded because no measure gave useful information
about its cluster structure.
}\label{table:summary-table}
\end{table}

\begin{table}
\begin{center}
\begin{footnotesize}
\begin{tabular}{|l|cccc|}\hline
 & \multicolumn{4}{c|}{Timing}
\\ \cline{2-5}
  & CNS Rat & Leukemia & NCI60 & Lymphoma\\
 {\tt WCSS}-K-means-C &
$1.7 \times 10^{3}$ & $1.3 \times 10^{3}$&$5.0 \times 10^{3}$ & $4.0
\times 10^{3}$
\\
{\tt FOM}-Hier-A & $1.6  \times 10^{3}$ & $7.5  \times 10^{3}$ & $5.1 \times 10^{4}$&   $1.8 \times 10^{4}$\\
 {\tt FOM}-K-means-C &
$1.9 \times 10^{4}$ & $9.4 \times 10^{4}$& $5.5 \times 10^{5}$& $2.6
\times 10^{5}$
\\
 {\tt FOM}-K-means-S &  $2.9 \times
10^{4}$ & $1.0 \times 10^{5}$ & $7.1 \times 10^{5}$& $3.6 \times
10^{5}$
\\

 {\tt Clest-F}-K-means-R &  $1.2 \times 10^{6}$ & - & - & -  \\

 {\tt Clest-FM}-K-means-R & $1.2 \times 10^{6}$ & -& - & - \\

 {\tt Consensus}-Hier-A & $9.2 \times 10^{5}$ & $7.9 \times 10^{5}$ & $2.0 \times 10^{6}$ & $1.9 \times 10^{6}$\\
 {\tt Consensus}-Hier-C & $8.7 \times 10^{5}$ & $6.9 \times 10^{5}$ & $2.0 \times 10^{6}$ & $2.0 \times 10^{6}$\\

 {\tt Consensus}-K-means-R & $1.0 \times 10^{6}$ & $1.3 \times 10^{6}$ & $3.4 \times 10^{6}$ & $3.0 \times 10^{6}$ \\
 {\tt Consensus}-K-means-A & $1.3 \times 10^{6}$ & $1.6 \times 10^{6}$ & $3.0 \times 10^{6}$ & $2.6 \times 10^{6}$ \\
 {\tt Consensus}-K-means-C & $1.3 \times 10^{6}$ &  $1.8 \times 10^{6}$ & $2.9 \times 10^{6}$ & $2.6 \times 10^{6}$\\
 {\tt Consensus}-K-means-S & $1.5 \times 10^{6}$ & $1.8 \times 10^{6}$ &  $3.2 \times 10^{6}$ &  $2.8 \times 10^{6}$ \\
\hline
\end{tabular}
\end{footnotesize}
\end{center}
\caption{
A summary of the timing for the best performances obtained by each measure. The PBM
dataset has been excluded because no measure gave useful information
about its cluster structure.
}\label{table:summary-table-time}
\end{table} 

\chapter{Speedups of Internal Validation Measures Based on Approximations}
\label{chap:7}
One open question that was made explicit by the study of the previous chapter is the  design of a data-driven  internal validation measure that is both precise and fast, and capable of granting scalability with dataset size. Such a lack of scalability for the most precise internal validation measures is one of the main computational bottlenecks in the process of cluster evaluation for microarray data analysis. Its elimination is far from trivial~\cite{klie10} and even
partial progress on  this problem is perceived as important. In the research area embodying the design and analysis of algorithms, when a problem is computationally difficult, a usual approach to its solution is to design fast heuristics and/or provably good approximation algorithms, in order to obtain solution  that are \vir{close} to {the ones that would be produced by the exact algorithms.

In this chapter,  the algorithmic approach just outlined is investigated in a systematic way in the realm of  internal validation measures, with the goal of narrowing the time performance gap,  identified in the previous chapter, between the most precise and the fastest measures. In particular, several algorithmic approximations and two general approximations schemes are presented.

\section{An Approximation of WCSS}

The approximation of {\tt WCSS} proposed here  is based
on the idea of reducing the number of executions of a clustering
algorithm $C$ for the computation of ${\tt WCSS}(k)$, for each $k$
in a given interval $[1, k_{max}]$. In fact, given an integer $R>0$, which is referred to as refresh step,
the approximate algorithm to compute {\tt WCSS}  uses algorithm $C$
to obtain a clustering solution with $k$ clusters, only for values
of $k$ multiples of $R$. For
all other $k$'s, a clustering solution is obtained by merging two
clusters in a chosen clustering solution already available. The
procedure in Fig.~\ref{algo:WCSS-R} gives the high level details. It takes as input $R,
C, D$ and $k_{max}$. Algorithm $C$ must be able to take as input a
clustering solution with $k$ clusters and refine it to give as
output a clustering solution with the same number of clusters.

\begin{figure}
\[
\setlength{\fboxsep}{12pt}
\setlength{\mylength}{\linewidth}
\addtolength{\mylength}{-2\fboxsep}
\addtolength{\mylength}{-2\fboxrule}
\ovalbox{
\parbox{\mylength}{
\setlength{\abovedisplayskip}{0pt}
\setlength{\belowdisplayskip}{0pt}

\begin{pseudocode}{WCSS-R}{R, C, D, k_{max}}
1. \mbox{ Let $P_{k_{max}}$ be the partition of $D$ into $k_{max}$ cluster with the use of $C$}\\
2.\mbox{ Compute ${\rm WCSS}(k_{max})$ using $P_{k_{max}}$}\\
\mbox{ }\FOR k\GETS k_{max}-1 \DOWNTO 1 \DO \\
\mbox{ }\mbox{ }\BEGIN
\mbox{ }\mbox{ }\mbox{ }\mbox{ }\mbox{ }\mbox{ }\COMMENT{Merge}\\
3.\mbox{ }\mbox{ }\mbox{ }\mbox{Merge
the two clusters in $P_{k+1}$  with minimum Euclidean  distance}\\ \mbox{ }\mbox{ }\mbox{ }\mbox{ }\mbox{
between centroids to obtain a  temporary clustering solution}\\ \mbox{ }\mbox{ }\mbox{ }\mbox{ }\mbox{
$P'_{k}$ with $k$ clusters}\\
\mbox{ }\mbox{ }\mbox{ }\mbox{ }\mbox{ }\mbox{ }\COMMENT{Refresh}\\
4.\mbox{ }\mbox{ }\mbox{ }\mbox{ }\IF  (R=0) \mbox{ or }(  k \hbox{ mod } R>0) \THEN
P_k \GETS P'_k
\mbox{ }\mbox{ }\mbox{ }\ELSE \mbox{Compute new $P_k$ based on $P'_k$.
That is, $P'_k$ is given as input to $C$, as an }\\ \mbox{ }\mbox{ }\mbox{ }\mbox{ }\mbox{initial partition of
$D$ in $k$ clusters, and $P_k$ is  the result of that computation.}\\
5.\mbox{ }\mbox{ }\mbox{ }\mbox{ }\mbox{ Compute ${\tt WCSS}(k)$ using $P_k$} \\
\mbox{ }\mbox{ }\END

\end{pseudocode}
}
}
\]
\caption{The \codice{WCSS-R} procedure}\label{algo:WCSS-R}
\end{figure}

Technically, the main idea in the approximation
scheme is to interleave the execution of a partitional clustering
algorithm $C$ with a merge step typical of agglomerative clustering.
The gain in speed is realized by having a fast merge step, based on
$k+1$ clusters, to obtain $k$ clusters instead of a new full fledged
computation, starting from scratch, of the algorithm $C$ to obtain
the same number of clusters. The approximation scheme would work
also for hierarchical algorithms, provided that they comply with the
requirement that, given as input a dataset, they will return a
partition into $k$ groups. However, in this circumstance, the
approximation scheme would be a nearly exact replica of the
hierarchical algorithm.  In conclusion, a general
approximation scheme is proposed, where the gain is realized when the merge step
is faster than a complete computation of a clustering algorithm $C$.
In this thesis,  experiments have been conducted with K-means-R on {\tt Benchmark 1} datasets,  and with values of the refresh
step $R=0,2,5$, i.e., the partitional clustering algorithm is used
only once, every two and five steps, respectively. The corresponding results are summarized in Tables~\ref{table:WCSSapp} and~\ref{table:WCSSapp-time}, together with the results of {\tt WCSS} already reported in Tables~\ref{table:WCSS} and~\ref{table:WCSS-time}. As is self-evident from the results in the former tables, the approximation has a better predicting power
than the original {\tt WCSS} curve  (obtained via all other
clustering algorithms one has experimented with). In fact, the approximation  is
among the best performers. Moreover, depending on the
dataset, it is from a few times to an order of magnitude faster
than the K-means algorithms (see Table~\ref{table:WCSSapp-time}).

\begin{table}
\begin{footnotesize}
\begin{center}
\begin{tabular}{|l|cccccc|}\hline
& \multicolumn{6}{c|}{Precision}
\\\cline{2-7}
  & CNS Rat & Leukemia & NCI60 & Lymphoma & Yeast & PBM \\
   Hier-A & 10 & \ding{184}& 3 & 6 & \ding{186}&  - \\
 Hier-C & 10 & \ding{184}& \ding{178}& 8 & 9 &  - \\
 Hier-S & 8 & 10  & \ding{178}& 9 & - &- \\
 R-R0 & \ding{176}  & \ding{175} & \ding{180} & \ding{184} &\ding{175} & -\\
 R-R2 & \ding{178}& 5 & 15 &  \ding{175}  &\ding{175} &  -\\
 R-R5 & \ding{187}& 5 & \ding{180} & 5 &\ding{175} & -\\
 K-means-R & 4 & \ding{184}& 3 & 8 &\ding{175}  &  - \\
 K-means-A & 4 & \ding{184} & \ding{178}& 6 & \ding{186}&  - \\
 K-means-C & \ding{176} & \ding{184} & \ding{189}& 8 &\ding{175} & - \\
K-means-S & 3 & \ding{175} & \ding{178} & 8 & 24 & - \\
\hline
{\bf Gold solution}  & {\bf 6} & {\bf 3} & {\bf 8} & {\bf 3} & {\bf 5} & {\bf 18} \\
\hline
\end{tabular}
\end{center}
\end{footnotesize}
\caption{A summary of the precision results for {\tt WCSS} of Table~\ref{table:WCSS} with the addition of its approximation. Cells with a dash indicate that  {\tt WCSS}  did not give any useful indication.}\label{table:WCSSapp}
\end{table}

\begin{table}
\begin{footnotesize}
\begin{center}
\begin{tabular}{|l|cccc|}\hline
& \multicolumn{4}{c|}{Timing}
\\\cline{2-5}
  &  CNS Rat & Leukemia & NCI60 & Limphoma\\
 Hier-A & $1.1  \times 10^{3}$ & $4.0  \times 10^{2}$&$2.1 \times 10^{3}$ & $1.9 \times 10^{3}$ \\
 Hier-C &  $7.0 \times 10^{2}$ &$ 4.0 \times 10^{2}$ & $1.7 \times 10^{3}$ & $1.4 \times 10^{3}$\\
 Hier-S & $2.6  \times 10^{3}$ & $6.0 \times 10^{2}$ & $3.2 \times 10^{3}$& $3.8 \times 10^{3}$\\
 R-R0 & $1.2 \times 10^{3}$ & $8.0 \times 10^{2}$ & $4.1 \times 10^{3}$ & $3.0 \times 10^{3}$\\
 R-R2 & $1.3 \times 10^{3}$ & $8.0 \times 10^{2}$&$5.3 \times 10^{3}$ & $3.2 \times 10^{3}$\\
 R-R5 &  $1.2 \times 10^{3}$ & $8.0 \times 10^{2}$& $4.6 \times 10^{3}$ &$3.2 \times 10^{3}$\\
 K-means-R  &  $2.4 \times 10^{3}$ & $2.0 \times 10^{3}$ & $8.4 \times 10^{3}$ &  $8.4 \times 10^{3}$ \\
 K-means-A  &$2.3 \times 10^{3}$ & $1.3 \times 10^{3}$& $5.4 \times 10^{3}$  & $5.8 \times 10^{3}$\\
 K-means-C & $1.7 \times 10^{3}$ & $1.3 \times 10^{3}$&$5.0 \times 10^{3}$ & $4.0 \times 10^{3}$ \\
 K-means-S & $2.6 \times 10^{3}$ & $1.6 \times 10^{3}$ & $7.3 \times 10^{3}$ & $7.4 \times 10^{3}$\\
\hline
\end{tabular}
\end{center}
\end{footnotesize}
\caption{A summary of the timing results for {\tt WCSS} of Table~\ref{table:WCSS-time} with the addition of its approximation.}\label{table:WCSSapp-time}
\end{table}

\section{A Geometric Approximation of Gap Statistics: G-Gap}

The geometric interpretation of {\tt Gap} given in Chapter~\ref{chap:ValidationMeasuers} and the behavior of the
{\tt WCSS} curve across null models suggests a fast
approximation, which it is referred to as {\tt G-Gap}. The intuition, based on experimental
observations, is that one can skip the entire simulation phase of
{\tt Gap}, without compromising too much the accuracy of the
prediction of $k^*$. Indeed, based on the {\tt WCSS} curve, the plot
of the $\log {\tt WCSS}$ curve one expects, for a given clustering
algorithm and null model, is a straight line with a slope somewhat
analogous to that of the $\log {\tt WCSS}$ curve and dominating it (see Fig.~\ref{fig:GAP-exp}).
Therefore, one can simply identify the \vir{knee} in the {\tt WCSS} by
translating the end-points of the $\log {\tt WCSS}$ curve on the
original dataset by a given amount $a$, to obtain two points $g_s$
and $g_e$. Those two points are then joined by a straight line,
which is used to replace the null model curve to compute the segment
lengths used to predict $k^*$, i.e, the first maximum among those
segment lengths as $k$ increases.  An example is provided in Fig.~\ref{fig:G-Gap}
with the {\tt WCSS} curve of Fig.~\ref{fig:WCSS-example}. The prediction is $k^*=3$,
which is very close to the correct $k^*=2$. It is worth pointing out that the
use of the {\tt WCSS} curve in the figure is to make clear the
behavior of the segment lengths, which would be unnoticeable with
the $\log {\tt WCSS}$ curve, although the result would be the same.

\begin{figure}[ht] \centering \epsfig{file=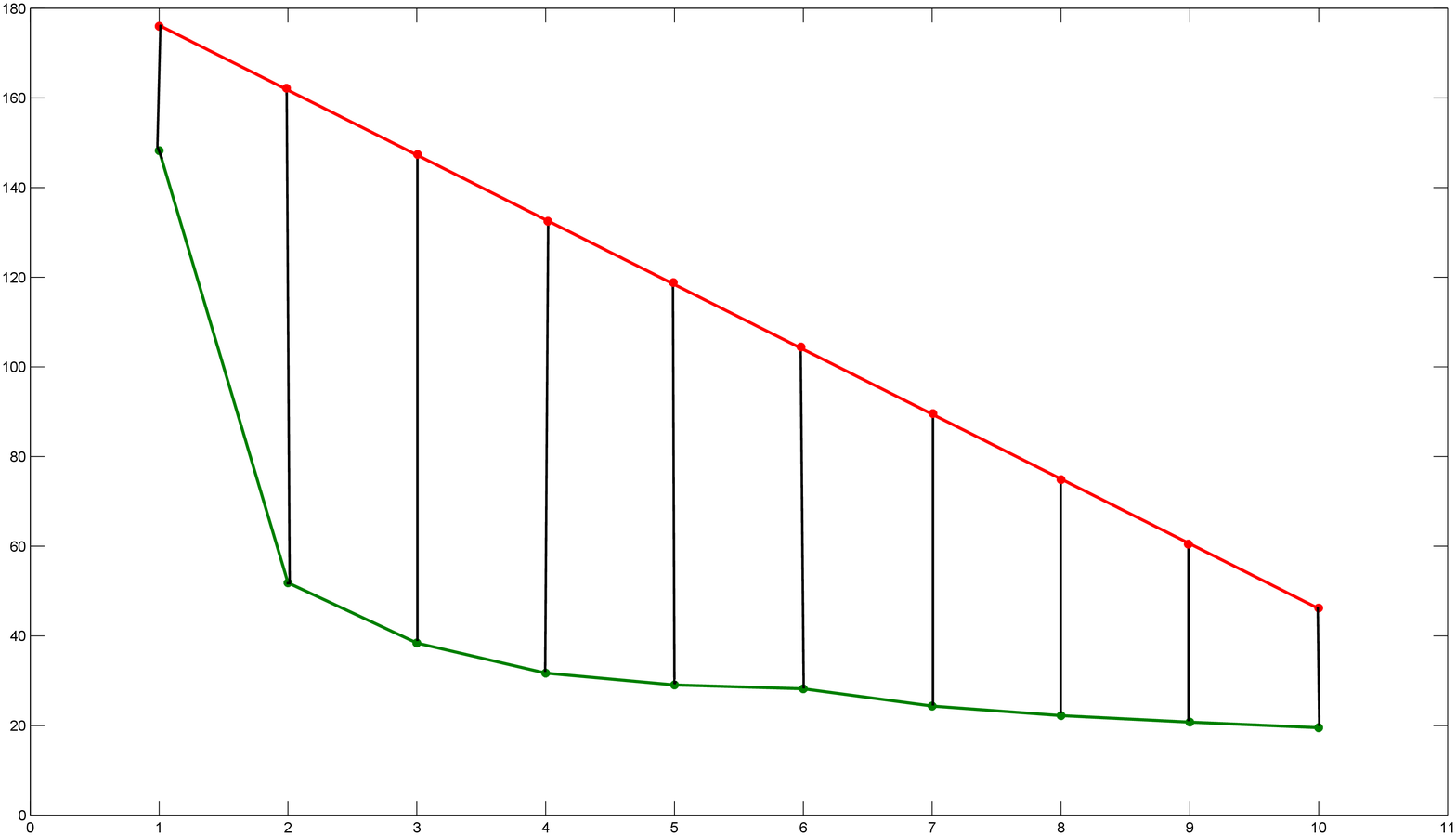,scale=0.4}
\caption{ The G-Gap Heuristic. The curve in green is a {\tt WCSS} curve obtained on the dataset of Fig.~\ref{fig:KmeansExample}(a) with the use of the K-means algorithm. The line in red is obtained by projecting upward the end points of the {\tt WCSS} curve by a units and then joining them. It is a heuristic approximation of {\tt WCSS} for a null model. The vertical lines have the same role as in Gap and the rule to identify $k^*$ is the same, yielding a value  $k^*$ = 3, a value very close to the correct number of classes (two) in the dataset.}\label{fig:G-Gap}
\end{figure}

As for {\tt G-Gap}, the geometric approximation of {\tt Gap}, each algorithm and each dataset, the
corresponding {\tt WCSS} curve and its approximations has been computed in the
interval $[1,30]$ on {\tt Benchmark 1} datasets. The rule described above has been applied to get the value of $k^*$. The corresponding results are
summarized in Tables~\ref{table:Gapapp} and~\ref{table:Gapapp-time}  with the addition of the results of {\tt Gap} reported in Tables~\ref{table:Gap} and~\ref{table:Gap-time}. As it is evident from Tables~\ref{table:Gapapp}, the overall
performance of {\tt G-Gap} is clearly superior to {\tt Gap},
irrespective of the null model. Moreover, depending on the dataset,
it is from two to three orders of magnitude faster (see Table ~\ref{table:Gapapp-time}).

\begin{table}
\begin{footnotesize}
\begin{center}
\begin{tabular}{|l|cccccc|}\hline
& \multicolumn{6}{c|}{Precision}
\\\cline{2-7}
 & CNS Rat & Leukemia & NCI60 & Lymphoma & Yeast & PBM \\
 {\tt G-Gap}-Hier-A & \ding{178} & \ding{184} & 1 & \ding{173} & 3 & 1 \\
 {\tt G-Gap}-Hier-C & \ding{178} & \ding{184} & 2 & \ding{173} & 7 & 4 \\
 {\tt G-Gap}-Hier-S & \ding{176} & \ding{184} &1  & 1 & 1 & 5 \\
 {\tt G-Gap}-R-R0 & 2 & 7 & 2 & \ding{175} & \ding{186} & 4 \\
 {\tt G-Gap}-R-R2 & 3 & \ding{173} & 2 & \ding{173} & \ding{186}& 6 \\
 {\tt G-Gap}-R-R5 & \ding{176}   & \ding{175}   & 2 &  \ding{173} & \ding{175} & 6 \\
 {\tt G-Gap}-K-means-R & \ding{178} & \ding{184} & 4 & \ding{175} & \ding{177} & 5 \\
 {\tt G-Gap}-K-means-A & 4 & \ding{184} & 1 & \ding{173} & \ding{177} &  4 \\
 {\tt G-Gap}-K-means-C & \ding{176} & \ding{184} & 2 & 8 & \ding{177} & 5 \\
 {\tt G-Gap}-K-means-S & 3  & \ding{184} & 1  & 1 & 1 & 1 \\

 {\tt Gap-Ps}-Hier-A & 1 & \ding{175} &  1 & 6 & 3 & 1 \\
 {\tt Gap-Ps}-Hier-C & 1 or 2 & \ding{175} &  2 & 1 or 25& 7 & 15 \\
 {\tt Gap-Ps}-Hier-S & 1 & 1&  1 & 1 & 1 & 1 \\
 {\tt Gap-Ps}-K-means-R & \ding{187}  or \ding{178}  & \ding{175}  or 5 &  3& 8 & 9 &7 \\
 {\tt Gap-Ps}-K-means-A & \ding{178}  & \ding{184} &  1 & 8 & 7 & 9 \\
 {\tt Gap-Ps}-K-means-C & \ding{178} & \ding{175} &  2& 1 or 25  & 12 & 6 \\
 {\tt Gap-Ps}-K-means-S & 9 & \ding{184} &  1 & 1 & 7 & 8 \\

 {\tt Gap-Pc}-Hier-A & 1 & \ding{184} or \ding{175} &  1 & 1 & 1 or 2 or 3  & - \\
 {\tt Gap-Pc}-Hier-C & 1 &  \ding{175} &  1 & 1 &  3  & - \\
 {\tt Gap-Pc}-Hier-S & 1 & 1 &  1& 1  &  1  & - \\
 {\tt Gap-Pc}-K-means-R & 2 & 1 &  1& 1  &   \ding{175}  & -\\
 {\tt Gap-Pc}-K-means-A & 2 &  \ding{175}  &  1 & 1& 3 & - \\
 {\tt Gap-Pc}-K-means-C & 2 & 1 &  1 & 1  & \ding{175} & -\\
 {\tt Gap-Pc}-K-means-S & 3 & 1 &  1 & 1  & 1 & -\\

 {\tt Gap-Pr}-Hier-A & 3 & \ding{175} & 1 & 6 & 3 & 1 \\
 {\tt Gap-Pr}-Hier-C & \ding{178} & \ding{175} & 1 & 1 or 25 & 16 & 1 \\
 {\tt Gap-Pr}-Hier-S & 1 or \ding{187} & 1&  2 & 1 & 1 & 2 \\
 {\tt Gap-Pr}-K-means-R & \ding{187} & \ding{175} & 5 & 8 & 8 & 8 \\
 {\tt Gap-Pr}-K-means-A & 8 & \ding{175} & 1 & 8 & 13 & 4 \\
 {\tt Gap-Pr}-K-means-C & \ding{176} & 6 & 1 & 1 or 25 & 8 & 1 \\
 {\tt Gap-Pr}-K-means-S & \ding{178} & \ding{184} & 2 & 1 & 11 & 1 \\

 \hline
 {\bf Gold solution}  & \textbf{6} & {\bf 3} & {\bf 8} & {\bf 3} & {\bf 5} & {\bf 18}  \\
 \hline
\end{tabular}
\caption{A summary of the precision results for {\tt Gap} of Table~\ref{table:Gap} with the addition of its approximations.  For
{\tt Gap-Pc}, on PBM, the experiments were stopped due to their high
computational demand.}\label{table:Gapapp}
\end{center}
\end{footnotesize}
\end{table}

\begin{table}
\begin{footnotesize}
\begin{center}
\begin{tabular}{|l|cccc|}\hline
 & \multicolumn{4}{c|}{Timing}
\\\cline{2-5}
 & CNS Rat & Leukemia & NCI60 & Limphoma\\
 {\tt G-Gap}-Hier-A & $1.1 \times 10^{3}$ & $4.0 \times 10^{2}$ & $2.0 \times 10^{3}$ & $1.9 \times 10^{3}$ \\
 {\tt G-Gap}-Hier-C & $7.0 \times 10^{2}$ &$ 4.0 \times 10^{2}$ & $1.7 \times 10^{3}$ & $1.4 \times 10^{3}$\\
 {\tt G-Gap}-Hier-S & $2.6 \times 10^{3}$ & $6.0 \times 10^{2}$ & $3.2 \times 10^{3}$ & $3.8 \times 10^{3}$\\
 {\tt G-Gap}-R-R0 & $1.2 \times 10^{3}$ & $8.0 \times 10^{2}$ & $4.0 \times 10^{3}$ & $3.0 \times 10^{3}$\\
 {\tt G-Gap}-R-R2 & $1.3 \times 10^{3}$ & $8.0 \times 10^{2}$ & $5.2 \times 10^{3}$ & $3.2 \times 10^{3}$ \\
 {\tt G-Gap}-R-R5 & $1.2 \times 10^{3}$ & $8.0 \times 10^{2}$ & $4.5 \times 10^{3}$ &$3.2 \times 10^{3}$\\
 {\tt G-Gap}-K-means-R & $2.4 \times 10^{3}$ & $2.0 \times 10^{3}$ & $8.3 \times 10^{3}$ &  $8.4 \times 10^{3}$ \\
 {\tt G-Gap}-K-means-A & $2.3 \times 10^{3}$ & $1.3 \times 10^{3}$ & $5.3 \times 10^{3}$  & $5.8 \times 10^{3}$\\
 {\tt G-Gap}-K-means-C & $1.7 \times 10^{3}$ & $1.3 \times 10^{3}$ & $5.0 \times 10^{3}$ & $4.0 \times 10^{3}$\\
 {\tt G-Gap}-K-means-S & $2.6 \times 10^{3}$ & $1.6 \times 10^{3}$ & $7.3 \times 10^{3}$ & $7.4 \times 10^{3}$\\

 {\tt Gap-Ps}-Hier-A & $2.7 \times 10^{5}$ & $1.4 \times 10^{5}$ &  $6.1 \times 10^{5}$ & $6.4 \times 10^{5}$\\
 {\tt Gap-Ps}-Hier-C  & $2.3 \times 10^{5}$ & $1.1 \times 10^{4}$&  $3.4 \times 10^{5}$  & $3.2 \times 10^{5}$\\
 {\tt Gap-Ps}-Hier-S  & $6.1 \times 10^{5}$ & $1.9 \times 10^{5}$&  $1.1 \times 10^{6}$ &$1.4 \times 10^{6}$ \\
 {\tt Gap-Ps}-K-means-R  & $8.4  \times 10^{5}$ & $5.0 \times 10^{5}$ &  $1.1 \times 10^{6}$ &$1.0 \times 10^{6}$\\
 {\tt Gap-Ps}-K-means-A & $6.1  \times 10^{5}$ & $4.7 \times 10^{5}$&   $1.1 \times 10^{6}$ & $1.0 \times 10^{6}$\\
 {\tt Gap-Ps}-K-means-C & $6.0  \times 10^{5}$ &$6.1  \times 10^{5}$&   $8.8 \times 10^{5}$& $7.6 \times 10^{5}$\\
 {\tt Gap-Ps}-K-means-S & $9.1  \times 10^{5}$ &$6.5  \times 10^{5}$&   $2.1 \times 10^{6}$ & $1.8 \times 10^{6}$\\

 {\tt Gap-Pc}-Hier-A & $3.2 \times 10^{5}$ & $3.7 \times 10^{5}$&   $8.1 \times 10^{5}$ & $6.1 \times 10^{5}$\\
 {\tt Gap-Pc}-Hier-C  & $1.9 \times 10^{5}$ & $1.9 \times 10^{5}$ & $7.1 \times 10^{5}$ & $5.8 \times 10^{5}$ \\
 {\tt Gap-Pc}-Hier-S  & $7.7 \times 10^{5}$ & $3.1 \times 10^{5}$&  $1.4 \times 10^{6}$ & $1.3 \times 10^{6}$\\
 {\tt Gap-Pc}-K-means-R &$4.9 \times 10^{5}$ &$8.0 \times 10^{5}$ & $1.8 \times 10^{6}$& $1.8 \times 10^{6}$\\
 {\tt Gap-Pc}-K-means-A & $3.8 \times 10^{5}$ & $7.0 \times 10^{5}$&   $1.3 \times 10^{6}$ & $1.4 \times 10^{6}$\\
 {\tt Gap-Pc}-K-means-C  &$4.1 \times 10^{5}$ & $5.8 \times 10^{5}$&   $1.3 \times 10^{6}$ & $1.2 \times 10^{6}$ \\
 {\tt Gap-Pc}-K-means-S &$6.5 \times 10^{5}$ & $7.6 \times 10^{5}$&   $2.4 \times 10^{6}$& $2.0 \times 10^{6}$ \\

 {\tt Gap-Pr}-Hier-A  & $2.5 \times 10^{5}$ & $1.6 \times 10^{5}$&  $3.3 \times 10^{5}$   & $3.8 \times 10^{5}$  \\
 {\tt Gap-Pr}-Hier-C   & $1.3 \times 10^{5}$ & $1.4 \times 10^{5}$&  $3.2 \times 10^{5}$   & $3.7 \times 10^{5}$  \\
 {\tt Gap-Pr}-Hier-S  & $6.8 \times 10^{5}$ & $1.9 \times 10^{5}$&  $1.1 \times 10^{6}$   & $1.4 \times 10^{6}$  \\
 {\tt Gap-Pr}-K-means-R & $8.6 \times 10^{5}$ & $5.4 \times 10^{5}$&   $1.5 \times 10^{6}$& $9.4 \times 10^{5}$  \\
 {\tt Gap-Pr}-K-means-A  & $7.4 \times 10^{5}$ & $5.0 \times 10^{5}$&   $8.7 \times 10^{5}$& $1.0 \times 10^{6}$  \\
 {\tt Gap-Pr}-K-means-C  & $6.7  \times 10^{5}$ &$4.6  \times 10^{5}$&   $8.6 \times 10^{5}$& $1.0 \times 10^{6}$   \\
 {\tt Gap-Pr}-K-means-S  & $1.2  \times 10^{6}$ &$5.4  \times 10^{5}$&   $1.8 \times 10^{6}$& $2.3 \times 10^{6}$ \\

 \hline
\end{tabular}
\caption{A summary of the timing results  for {\tt Gap} of Table~\ref{table:Gap-time} with the addition of its approximations.  For
{\tt Gap-Pc}, on PBM, the experiments were stopped due to their high
computational demand.}\label{table:Gapapp-time}
\end{center}
\end{footnotesize}
\end{table}

\section{An Approximation of FOM}
Recalling from Chapter~\ref{chap:ValidationMeasuers} that both  {\tt WCSS} and {\tt FOM} use
the same criteria that is used in order to infer $k^*$.  Such an analogy between {\tt FOM} and {\tt
WCSS} immediately suggest to extend some of the knowledge available
about {\tt WCSS} to {\tt FOM}, as follows:

\begin{itemize}

\item [$\bullet$] The approximation of  {\tt FOM}  is based on exactly the same ideas and
schemes presented for the approximation of {\tt WCSS}. Indeed, {\tt
FOM}$(e,k)$ in equation (\ref{math:fome}) can be approximated in
exactly the same way as {\tt WCSS}$(k)$. Then, one uses equation
(\ref{math:foma}) to approximate {\tt FOM}. We denote those
approximations as {\tt FOM-R}.

\item [$\bullet$] The {\tt G-Gap} idea can be extended verbatim to {\tt FOM}
to make it automatic and to obtain {\tt G-FOM}.

\item [$\bullet$] The {\tt KL} technique can be extended to {\tt
FOM}, although the extension is subtle. Indeed, a verbatim extension
of it would yield poor results (experiments not shown). Rather,
consider formula (\ref{eqn:Diff}), with {\tt WCSS}$(k)$ substituted
by {\tt FOM}$(k)$. As $k$ increases towards $k^*$, $DIFF(k)$
increases to decrease sharply and then assume nearly constant values
as it moves away from $k^*$. Fig.~\ref{fig:fomchap7} provides a small example of this
behavior.  So, one can  take as $k^*$ the abscissa corresponding to
the maximum of $DIFF(k)$ in the interval $[3, k_{max}]$. This method is referred to as {\tt DIFF-FOM}.

\end{itemize}

\begin{figure}[ht] \centering \epsfig{file=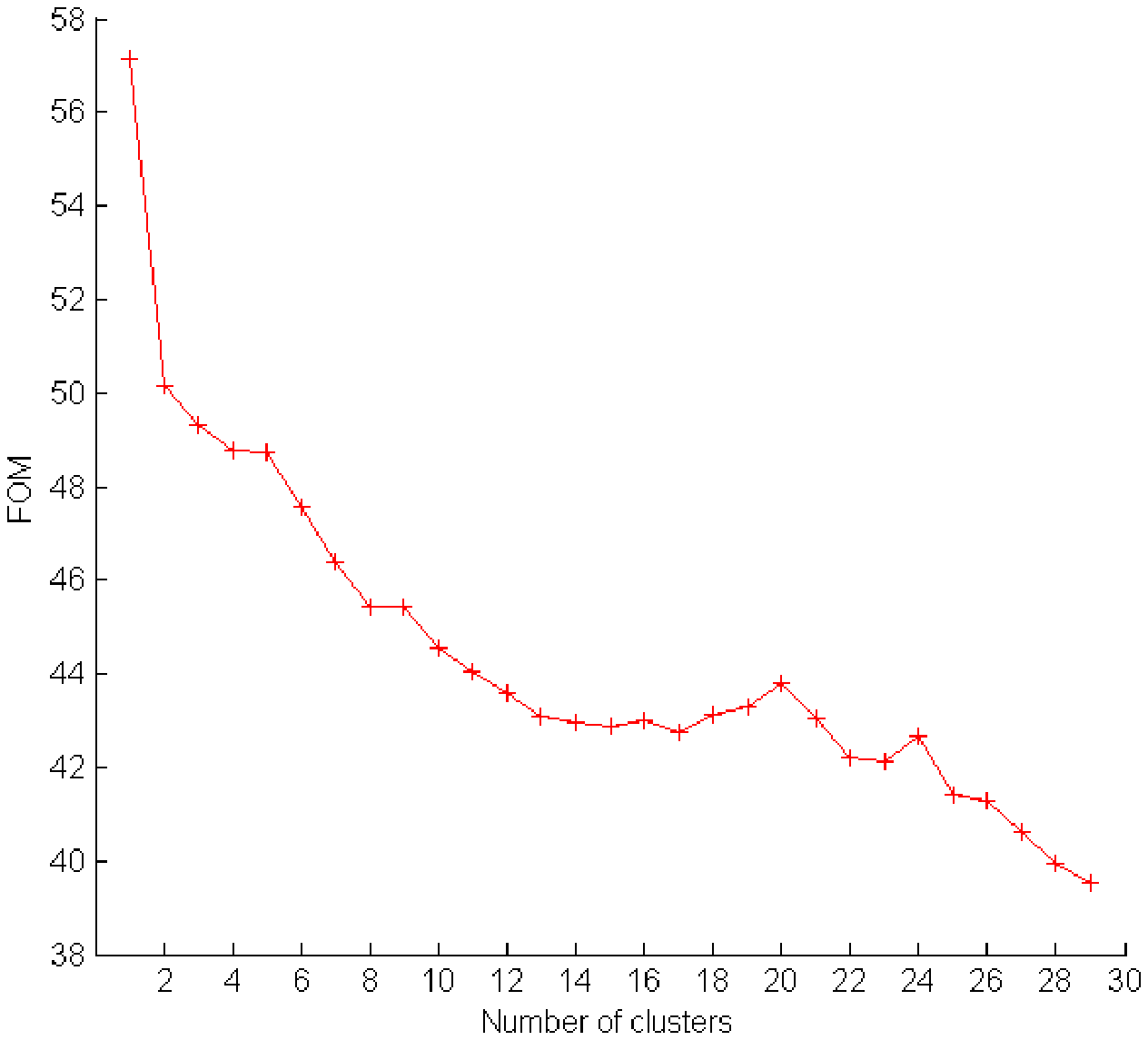,scale=0.8}
\caption{The {\tt FOM} curve computed on the Leukemia dataset with K-means-R. As for {\tt WCSS}, the \vir{knee} in the plot indicates the correct number of clusters in the dataset: $k^* = 3$.}\label{fig:fomchap7}
\end{figure}

For each algorithm, each of the {\tt FOM} approximations (denoted
{\tt FOM-R-R0}, {\tt FOM-R-R2}, {\tt FOM-R-R5}, respectively) and
each dataset in {\tt Benchmark 1}, the same methodology outlined for
{\tt WCSS} and its approximation has been followed. The relevant plots are in Figs. S135-S136 at the following
supplementary material web site~\cite{Bench} (Figures section). The values
resulting from the application of this methodology to the relevant
plots are reported in Table~\ref{table:FOMapp}, while the timing results for the relevant datasets are reported in Table~\ref{table:FOMapp-time} with the addition of the results for {\tt FOM} reported in Tables~\ref{table:FOM-Index} and~\ref{table:FOM-Index-time}, respectively.
Using the same experimental setting, {\tt G-FOM} and {\tt DIFF-FOM}, the
extensions of {\tt FOM} proposed here, are computed in order to predict $k^*$. The results are in Tables~\ref{table:Gfom}-\ref{table:Gfom-time}
and ~\ref{table:DiffFom}-\ref{table:DiffFom-time}, respectively. As those results show, {\tt G-FOM} does not
perform as well as the other two. Moreover, both {\tt FOM} and {\tt
DIFF-FOM} are algorithm-dependent and give no useful indication on
large datasets. As for the approximations of {\tt FOM}, i.e., {\tt
FOM-R-R0}, {\tt FOM-R-R2}, {\tt FOM-R-R5}, they compare very well
with the K-means algorithms in terms of precision and they are an
order of magnitude faster.

\begin{table}\begin{center}
\begin{footnotesize}
\begin{tabular}{|l|cccccc|}\hline
& \multicolumn{6}{c|}{Precision}
\\ \cline{2-7}
 & CNS Rat & Leukemia & NCI60 & Lymphoma & Yeast & PBM \\
Hier-A & \ding{178}&\ding{184} & \ding{178} & 6 &\ding{177} &  - \\
Hier-C & 10 & \ding{175} &  \ding{178} & 7 & \ding{186} & -  \\
Hier-S & 3 & 7 &  \ding{178} &  9  & - &  - \\

R-R0 & 10 & 5 & \ding{178} & \ding{175} & 7 &  -\\
R-R2 & 8 & 5 & \ding{189} &  5  & \ding{186} & - \\
R-R5 & \ding{187}&   \ding{184} & \ding{178} & 5 & \ding{186} & -\\

K-means-R & \ding{178} & \ding{184} & 6 & 9 & \ding{175} &  - \\
K-means-A & \ding{178} & \ding{184} & 6 & 6 & \ding{175}  &  - \\
K-means-C & \ding{178} &  8  &  \ding{189}&  \ding{175}  & \ding{175} &  -  \\
K-means-S &\ding{187 }& \ding{184} & \ding{189} & 8 &\ding{175} &  -  \\
\hline
{\bf Gold solution}  & {\bf 6} & {\bf 3} & {\bf 8} & {\bf 3} & {\bf 5} & {\bf 18} \\
 \hline
\end{tabular}

\end{footnotesize}\end{center}
\caption{
A summary of the precision results for {\tt FOM} of Table~\ref{table:FOM-Index} with the addition of its approximations. Cells with a dash indicate that  {\tt
FOM} did not give any useful indication.
}\label{table:FOMapp}
\end{table}

\begin{table}\begin{center}
\begin{footnotesize}
\begin{tabular}{|l|cccc|}\hline
& \multicolumn{4}{c|}{Timing}
\\ \cline{2-5}
 & CNS Rat & Leukemia & NCI60 & Lymphoma\\
Hier-A & $1.6  \times 10^{3}$ & $7.5  \times 10^{3}$ & $5.1 \times 10^{4}$&   $1.8 \times 10^{4}$\\
Hier-C & $1.6  \times 10^{3}$ & $7.7  \times 10^{3}$& $4.5 \times 10^{4}$& $1.8 \times 10^{4}$\\
Hier-S & $1.6  \times 10^{3}$ & $7.4  \times 10^{3}$& $4.9 \times 10^{5}$&  $1.7 \times 10^{4}$\\

R-R0 & $2.6 \times 10^{3}$ & $3.1 \times 10^{4}$ & $1.7 \times 10^{5}$ & $5.3 \times 10^{4}$ \\
R-R2 & $3.4 \times 10^{3}$ & $3.8 \times 10^{4}$ & $2.2 \times 10^{5}$ & $7.2 \times 10^{4}$ \\
R-R5 & $3.9 \times 10^{3}$ & $3.7 \times 10^{4}$ & $2.1 \times 10^{5}$ & $7.6 \times 10^{4}$ \\

K-means-R  & $2.9  \times 10^{4}$ & $1.9  \times 10^{5}$& $1.3 \times 10^{6}$  &  $6.7 \times 10^{5}$\\
K-means-A  & $2.2  \times 10^{4}$ & $9.3  \times 10^{4}$& $5.5 \times 10^{5}$&  $2.7 \times 10^{5}$\\
K-means-C  & $1.9 \times 10^{4}$ & $9.4 \times 10^{4}$& $5.5 \times 10^{5}$ & $2.6 \times 10^{5}$ \\
K-means-S & $2.9 \times 10^{4}$ & $1.0 \times 10^{5}$ & $7.1 \times 10^{5}$ & $3.6 \times 10^{5}$ \\
\hline
\end{tabular}

\end{footnotesize}\end{center}
\caption{
A summary of the timing results for {\tt FOM} of Table~\ref{table:FOM-Index-time} with the addition of its approximations.
}\label{table:FOMapp-time}
\end{table}

\begin{table}
\begin{footnotesize}
\begin{center}
\begin{tabular}{|l|cccccc|}\hline
& \multicolumn{6}{c|}{Precision}
\\ \cline{2-7}
 & CNS Rat & Leukemia & NCI60 & Lymphoma & Yeast & PBM \\
Hier-A & 3  & \ding{184} & \ding{178} & \ding{173} & 8 & 2 \\
Hier-C & 10 & \ding{175} &  2  & \ding{175} & 8 & 2 \\
Hier-S & \ding{178} & \ding{173} & 2 & \ding{173} & 2 & 2\\
R-R0 & 2 & 7 & 2 & 5 & 7 & 4 \\
R-R2 & \ding{178}&  5  & 2 & 5  & 8 & 4 \\
R-R5 & 4 & \ding{175} & 2 & 6 & \ding{177} & 4 \\
K-means-R & \ding{178}& 5 & 6 & 8 &\ding{177} &  7 \\
K-means-A & 2  & \ding{184} & \ding{178}&\ding{173}  &\ding{177}  &  6 \\
K-means-C & 2 & \ding{175} & 2 & \ding{175} & 7 & 6 \\
K-means-S & 3  & 5 & 2 &  \ding{173} &\ding{177} &  8 \\

\hline
{\bf Gold solution}  & {\bf 6} & {\bf 3} & {\bf 8} & {\bf 3} & {\bf 5} & {\bf 18}\\
\hline
\end{tabular}
\end{center}
\end{footnotesize}
\caption{
A summary of the results for {\tt G-FOM} on all algorithms
and on all datasets. The columns under the label precision indicate
the number of clusters predicted by {\tt G-FOM}, while the remaining
four indicate the timing in milliseconds for the execution of the
corresponding experiment. Cells with a dash indicate that {\tt
G-FOM} did not give any useful indication.
}\label{table:Gfom}
\end{table}

\begin{table}
\begin{footnotesize}
\begin{center}
\begin{tabular}{|l|cccc|}\hline
&  \multicolumn{4}{c|}{Timing}
\\ \cline{2-5}
 &  CNS Rat & Leukemia & NCI60 & Lymphoma\\

Hier-A &  $1.6  \times 10^{3}$ & $7.5  \times 10^{3}$ & $5.1 \times 10^{4}$&   $1.8 \times 10^{4}$\\
Hier-C &  $1.6  \times 10^{3}$ & $7.7  \times 10^{3}$& $4.5 \times 10^{4}$& $1.8 \times 10^{4}$\\
Hier-S & $1.6  \times 10^{3}$ & $7.4  \times 10^{3}$& $4.9 \times 10^{5}$&  $1.7 \times 10^{4}$\\

R-R0 & $2.6 \times 10^{3}$ & $3.1 \times 10^{4}$ & $1.7 \times 10^{5}$ & $5.3 \times 10^{4}$\\
R-R2 & $3.4 \times 10^{3}$ & $3.8 \times 10^{4}$ & $2.2 \times 10^{5}$ & $7.2 \times 10^{4}$\\
R-R5 & $3.9 \times 10^{3}$ & $3.7 \times 10^{4}$ & $2.1 \times 10^{5}$ & $7.6 \times 10^{4}$\\
K-means-R &  $2.9  \times 10^{4}$ & $1.9  \times 10^{5}$& $1.3 \times 10^{6}$  &  $6.7 \times 10^{5}$\\
K-means-A&  $2.2  \times 10^{4}$ & $9.3  \times 10^{4}$& $5.5 \times 10^{5}$&   $2.7 \times 10^{5}$\\
K-means-C & $1.9 \times 10^{4}$ & $9.4 \times 10^{4}$& $5.5 \times 10^{5}$&   $2.6 \times 10^{5}$\\
K-means-S & $2.9 \times 10^{4}$ & $1.0 \times 10^{5}$ & $7.1 \times 10^{5}$&   $3.6 \times 10^{5}$\\
\hline

\end{tabular}
\end{center}
\end{footnotesize}
\caption{
A summary of the precision results for {\tt G-FOM} on all algorithms
and on all datasets. Cells with a dash indicate that {\tt
G-FOM} did not give any useful indication.
}\label{table:Gfom-time}
\end{table}

\begin{table}
\begin{footnotesize}
\begin{center}
\begin{tabular}{|l|cccccc|}\hline
& \multicolumn{6}{c|}{Precision}
\\ \cline{2-7}
 & CNS Rat & Leukemia & NCI60 & Lymphoma & Yeast & PBM \\
K-means-R & 4 & \ding{184}&4 & \ding{184} &3 &  4 \\
K-means-A & \ding{178} & \ding{184} &3 & 6 & 3  &  8 \\
K-means-C & \ding{198} & \ding{184} & \ding{178} & \ding{175}  & 3 & 5 \\
K-means-S & \ding{178} & \ding{184} & 12 & 8 & 3 &  10 \\
R-R0 & 10 & \ding{175} & 17 & \ding{175} & 3 & 3 \\
R-R5 & 4 & \ding{184} & 11 & \ding{184} & 3 & 4 \\
R-R2 & \ding{178} & \ding{184} & 17 & \ding{184} & 3 &  7 \\

Hier-A & \ding{178} & \ding{184} & 3 & 6 & 3 & 25 \\
Hier-C & 9 & \ding{184} & \ding{178} & 7 & 3 & 7  \\
Hier-S & 20 & 7 & 22 &  9 & 7 & 20 \\
\hline
{\bf Gold solution}  & {\bf 6} & {\bf 3} & {\bf 8} & {\bf 3} & {\bf 5} & {\bf 18}\\
\hline
\end{tabular}
\end{center}
\end{footnotesize}
\caption{
A summary of the precision results for {\tt DIFF-FOM} on all
algorithms and on all datasets.
Cells with a dash indicate that {\tt DIFF-FOM} did not give any
useful indication.
}\label{table:DiffFom}
\end{table}

\begin{table}
\begin{footnotesize}
\begin{center}
\begin{tabular}{|l|cccc|}\hline
&  \multicolumn{4}{c|}{Timing}
\\ \cline{2-5}
 &  CNS Rat & Leukemia & NCI60 & Lymphoma\\

Hier-A & $1.6  \times 10^{3}$ & $7.5  \times 10^{3}$ & $5.1 \times 10^{4}$ &   $1.8 \times 10^{4}$\\
Hier-C & $1.6  \times 10^{3}$ & $7.7  \times 10^{3}$& $4.5 \times 10^{4}$ & $1.8 \times 10^{4}$\\
Hier-S & $1.6  \times 10^{3}$ & $7.4  \times 10^{3}$& $4.9 \times 10^{5}$ &  $1.7 \times 10^{4}$\\

R-R0 & $2.6 \times 10^{3}$ & $3.1 \times 10^{4}$ & $1.7 \times 10^{5}$ & $5.3 \times 10^{4}$\\
R-R2 & $3.4 \times 10^{3}$ & $3.8 \times 10^{4}$& $2.2 \times 10^{5}$&   $7.2 \times 10^{4}$\\
R-R5 & $3.9 \times 10^{3}$ & $3.7 \times 10^{4}$&$2.1 \times 10^{5}$&   $7.6 \times 10^{4}$\\

K-means-R & $2.9 \times 10^{4}$ & $1.9  \times 10^{5}$ & $1.3 \times 10^{6}$ & $6.7 \times 10^{5}$\\
K-means-A & $2.2 \times 10^{4}$ & $9.3  \times 10^{4}$ & $5.5 \times 10^{5}$ & $2.7 \times 10^{5}$\\
K-means-C & $1.9 \times 10^{4}$ & $9.4 \times 10^{4}$ & $5.5 \times 10^{5}$ & $2.6 \times 10^{5}$ \\
K-means-S & $2.9 \times 10^{4}$ & $1.0 \times 10^{5}$ & $7.1 \times 10^{5}$&   $3.6 \times 10^{5}$\\
\hline
\end{tabular}
\end{center}
\end{footnotesize}
\caption{
A summary of the timing results for {\tt DIFF-FOM} on all
algorithms and on all datasets. Cells with a dash indicate that  {\tt DIFF-FOM} did not give any
useful indication.
}\label{table:DiffFom-time}
\end{table}

\section{An Exhaustive Study of the Consensus Parameters}

It is helpful for the discussion to highlight, here, some key facts about
{\tt Consensus}, summarizing the detailed description of the procedure presented in Chapter~\ref{chap:Stability}.
For a given number of clusters, {\tt Consensus} computes a certain number of clustering solutions (resampling step), each from a sample of the original data (subsampling). The performance of {\tt Consensus} depends
on two parameters: the number of resampling  steps $H$ and the percentage of subsampling $p$, where  $p$ states how large the sample must be. From each clustering solution, a corresponding
connectivity matrix is computed: each entry in that matrix  indicates whether a  pair of elements is  in the same cluster or not. For the given number of clusters, the consensus matrix is a normalized sum of the corresponding $H$ connectivity matrices. Intuitively, the consensus matrix indicates the level of agreement of clustering solutions that have been obtained via independent sampling of the dataset.

Monti et al., in their seminal paper, set $H=500$ and $p=80\%$,
without any experimental or theoretical justification.  For this reason and based also on an open problem mentioned in Chapter~\ref{chap:6} and in~\cite{giancarlo08}, here several experiments with different parameter settings of $H$ and $p$ are performed,
in order to find the \vir{best} precision-time  trade-off, when {\tt Consensus} is regarded both as an internal validation measure and as a procedure that computes a similarity/distance matrix.
In particular, using the hierarchical algorithms and K-means, experiments with $H=500, 250, 100$ and $p= 80\%, 66\%$ have been performed, respectively,  reporting the precision values and times. The choice of the value of $p$ is justified by the results reported in~\cite{BenHur02,CLEST}.
Intuitively, a value of $p$ smaller then 66\% would fail to capture the entire cluster structure present in the data.
The results in this section will show, it is worthy to anticipate, that a simple reduction in terms of $H$ and $p$ is not enough to grant a good precision-time trade-off.
Such a finding, together with the state of the art outlined in Chapter~\ref{chap:6}, motivates a strong interest in the design of alternative methods, such as fast heuristics that are discussed in depth in Section~\ref{sec:FC}.
As for datasets in {\tt Benchmark 2}, only the experiments with $H=250$ and $p=80\%$ are performed, reporting precision results for all  and timing only for the microarray data, since the timing results for the artificial datasets are redundant. The choice for this parameter setting for {\tt Consensus} is justified when  the results of the experiments on the datasets in {\tt Benchmark 1} are discussed. Moreover, the study  of a proper parameter setting for {\tt Consensus} is limited only to the {\tt Benchmark 1} datasets for pragmatic reasons: that choice allows to complete the (rather high) required number of experiments in a reasonable amount of time.

Due to its high computational demand (see Chapter~\ref{chap:5}), experiments only with $H=250$ and $p=80\%$ for NMF have been performed. They are reported in the relevant Table together with the results of the other algorithms, but they are discussed separately. The choice for this parameter setting for {\tt Consensus} when used in conjunction with NMF is justified when the results of the experiments are discussed.

\subsection{Consensus as an Internal Validation Measure}\label{subsec:ConsensusSpeed}
The experiments summarized here refer to the {\tt Benchmark 1} datasets.
Separate tables for each experimental setup are reported: they are Tables~\ref{table:H500p80}-\ref{table:H100p66-time}.
For each dataset and each  clustering algorithm, {\tt Consensus} for a number of cluster values in the range $[2,30]$ is computed, while, for Leukemia, the range  $[2,25]$ is used when $p=66\%$, due to the small size of the dataset.
The prediction value, $k^*$, is based on the plot of the $\Delta(k)$ curve (defined in Chapter~\ref{chap:Stability}) as indicated in Chapter~\ref{chap:6} and in~\cite{giancarlo08}. The corresponding plots are available at the following supplementary material web site~\cite{SpeedUpWeb}, in the Figures section, as Figs. S1-S10 for $p=80\%$ and $H=500$, and Figs. S11-S24 for $p=80\%$ and $H=250$, Figs. S25-S34 for $p=80\%$ and $H=100$. For $p=66\%$ the relevant figures are Figs. S35-S44, S45-S54 and S55-S64 for $H=500$, $H=250$ and $H=100$, respectively.


For $p=80\%$, the precision results reported in Tables~\ref{table:H500p80}-\ref{table:H100p80-time}  show
there is very little  difference between the results obtained for
$H=500$ and $H=250$. That is in contrast with the results for
$H=100$,  where many prediction values are very far from the gold
solution for the corresponding dataset, e.g., the  Lymphoma dataset.
Such a finding seems to indicate that, in order to find a consensus matrix which captures well the inherent structure of the dataset, one needs a sensible number of connectivity matrices.
The results for a subsampling value of $p=66\%$ confirms that the number of connectivity matrices one needs to compute is more relevant than the percentage of the data matrix actually used to compute them. Indeed,  although it is obvious that a reduction in the number of resampling
steps results in a saving in terms of execution time, it is less
obvious that for subsampling values $p=66\%$ and $p=80\%$, there is
no  substantial difference in the results, both in terms of
precision and of time.

In regard to NMF, only the parameter setting $H=250$ and $p=80\%$ for this experiments is used, since it seems to be the most promising (as determined by the use of  the other algorithms).
The results are reported in Table~\ref{table:H250p80}.  Even so, the inefficiencies of {\tt Consensus} compound with those of NMF; that is, the relatively large number of connectivity matrices  needed by {\tt Consensus} and  the well-known slow convergence of NMF for the computation of a clustering solution, since connectivity matrices are obtained from clustering solutions. The end-result is a slow-down of one order of magnitude with respect to {\tt Consensus} used in conjunction with other clustering algorithms. As a consequence, NMF and {\tt Consensus} can be used together on a conventional PC only for relatively small datasets. In fact, the experiments for Yeast and PBM, the two largest datasets with which one has experimented, were stopped after four days.

It is also worth to point out that, although the parameter setting $H=250$ and $p=80\%$ grants a faster execution of {\tt Consensus} with respect to the original setting by Monti et al., the experiments on the PBM dataset were stopped after four days on all algorithms. That is, the largest of the datasets used here is still \vir{out of reach} of {\tt Consensus} even with a tuning of the parameters aimed at reducing its computational demand.

In conclusion, this experiments show that an effective parameter setting for {\tt Consensus} is $H=250$ and $p=80\%$: in fact Table~\ref{table:H250p80} displays the best trade-off between
precision and time. Moreover,  the experiments also show that inefficiencies of the {\tt Consensus} methodology are due to the large number of  connectivity matrices that are required to compute a reliable consensus matrix, rather than to the size of the sample taken from the data matrix that is then used to compute them. This is particularly important since a slow clustering algorithm, e.g.,  NMF, used in conjunction with the methodology makes it worthless on conventional computers.

\begin{table}
\begin{center}
\begin{footnotesize}
\begin{tabular}{|l|cccccc|}\hline
& \multicolumn{6}{c|}{Precision}
\\ \cline{2-7}
& CNS Rat & Leukemia & NCI60 & Lymphoma & Yeast  & PBM \\
Hier-A & \whitestone[7] & \blackstone[3] & \blackstone[8] & \blackstone[3] &  \blackstone[5] & - \\
Hier-C & \blackstone[6]  & \whitestone[4] & \blackstone[8] & 5 & \whitestone[6] & -  \\
Hier-S & 2 & 8 & 10  & \blackstone[3]  & 10 & - \\
K-means-R & \blackstone[6] & \whitestone[4] & \whitestone[7] & \blackstone[3] & \whitestone[6] & - \\
K-means-A & \whitestone[7]  & \blackstone[3] & \blackstone[8] & \blackstone[3] & \whitestone[6]  & - \\
K-means-C & \blackstone[6]  & \blackstone[3] & \blackstone[8] & \whitestone[4] & \whitestone[6] & - \\
K-means-S & \whitestone[7] & \whitestone[4] & 10 & \whitestone[2] & \whitestone[6] & -  \\
\hline
{\bf Gold solution}  & {\bf 6} & {\bf 3} & {\bf 8} & {\bf 3} & {\bf 5} & {\bf 18} \\
\hline
\end{tabular}
\end{footnotesize}
\end{center}
\caption{A summary of the precision results for {\tt Consensus} with $H=500$  and
$p=80\%$, on all algorithms, except NMF,  and  for the {\tt Benchmark 1} datasets.  Cells with a dash indicate that the experiments were terminated  due to their high
computational demand.}\label{table:H500p80}
\end{table}

\begin{table}
\begin{footnotesize}
\begin{center}
\begin{tabular}{|l|cccccc|}\hline
&  \multicolumn{6}{c|}{Timing}
\\ \cline{2-7}
&  CNS Rat & Leukemia & NCI60 & Lymphoma &  Yeast & PBM\\
Hier-A & $9.2 \times 10^{5}$ & $7.9 \times 10^{5}$ & $2.0 \times 10^{6}$ & $1.9 \times 10^{6}$ & $7.7 \times 10^{7}$ &-\\
Hier-C & $8.7 \times 10^{5}$ & $6.9 \times 10^{5}$ & $2.0 \times 10^{6}$ & $2.0 \times 10^{6}$ & $8.2 \times 10^{7}$ &-\\
Hier-S & $9.4 \times 10^{5}$ & $8.0 \times 10^{5}$& $2.0 \times 10^{6}$ & $1.7 \times 10^{6}$ &  $8.2 \times 10^{7}$ & -\\
K-means-R & $1.0 \times 10^{6}$ & $1.3 \times 10^{6}$ & $3.4 \times 10^{6}$ & $3.0 \times 10^{6}$ & $5.5 \times 10^{7}$ & - \\
K-means-A & $1.3 \times 10^{6}$ & $1.6 \times 10^{6}$ & $3.0 \times 10^{6}$ & $2.6 \times 10^{6}$ & $1.1 \times 10^{8}$ & -\\
K-means-C & $1.3 \times 10^{6}$ &  $1.8 \times 10^{6}$ & $2.9 \times 10^{6}$ & $2.6 \times 10^{6}$ & $9.3 \times 10^{7}$ & -\\
K-means-S & $1.5 \times 10^{6}$ & $1.8 \times 10^{6}$ &  $3.2 \times 10^{6}$ &  $2.8 \times 10^{6}$ & $9.7 \times 10^{7}$ &-\\
\hline
\end{tabular}
\end{center}
\end{footnotesize}
\caption{A summary of the timing results for {\tt Consensus} with $H=500$  and
$p=80\%$, on all algorithms, except NMF,  and  for the {\tt Benchmark 1} datasets. Cells with a dash indicate that the experiments were terminated  due to their high
computational demand.
}\label{table:H500p80-time}
\end{table}

\begin{table}
\begin{footnotesize}
\begin{center}
\begin{tabular}{|l|cccccc|}\hline
& \multicolumn{6}{c|}{Precision}
\\ \cline{2-7}
& CNS Rat & Leukemia & NCI60 & Lymphoma & Yeast  & PBM \\

 Hier-A & \whitestone[7] &   \blackstone[3] & \blackstone[8] & \blackstone[3] &  \blackstone[5] & - \\
 Hier-C & \blackstone[6]&    \whitestone[4] & \blackstone[8] & 5 & \whitestone[6] & - \\
 Hier-S & 2 & \blackstone[3] &  10 &  \whitestone[2] & 10 & - \\
 K-means-R &  \blackstone[6]& \whitestone[4] & \whitestone[7] & \whitestone[4]& \whitestone[6] & -\\
 K-means-A & \whitestone[7] &  \blackstone[3] &  \blackstone[8] &  \blackstone[3] & \whitestone[6] & - \\
 K-means-C &  \blackstone[6]&  \blackstone[3] & \blackstone[8]  &  \whitestone[4] & \whitestone[6]& - \\
 K-means-S & \whitestone[7] & 5 & \whitestone[9]  & \whitestone[2] & \whitestone[6] & - \\

 NMF-R & \blackstone[6] & \whitestone[4]  & \whitestone[7] & \whitestone[4] & - & - \\
 NMF-A & \whitestone[7] & \blackstone[3] & 2 & \blackstone[3] & - & - \\
 NMF-C & \whitestone[5] & \whitestone[4] & \whitestone[7] & \whitestone[4] & - & - \\
 NMF-S & 2 & 8 & \whitestone[9] &  \whitestone[2] & - & - \\

\hline {\bf Gold solution}  & {\bf 6} & {\bf 3} & {\bf 8} & {\bf 3} & {\bf 5} & {\bf 18} \\
\hline
\end{tabular}
\end{center}
\end{footnotesize}
\caption{
A summary of the precision results for {\tt Consensus} with $H=250$  and
$p=80\%$, on all algorithms and for the {\tt Benchmark 1} datasets.
Cells with a dash indicate that the experiments were terminated  due to their high
computational demand.
}\label{table:H250p80}
\end{table}

\begin{table}
\begin{footnotesize}
\begin{center}
\begin{tabular}{|l|cccccc|}\hline
&  \multicolumn{6}{c|}{Timing}
\\ \cline{2-7}
&  CNS Rat & Leukemia & NCI60 & Lymphoma &  Yeast & PBM\\
Hier-A & $8.9 \times 10^{5}$ & $6.0 \times 10^{5}$ & $1.4 \times 10^{6}$ & $1.3 \times 10^{6}$ & $5.0 \times 10^{7}$ & -\\
Hier-C & $8.1 \times 10^{5}$ & $6.0 \times 10^{5}$ & $1.3 \times 10^{6}$ & $1.3 \times 10^{6}$ & $4.8 \times 10^{7}$ & -\\
Hier-S & $4.3 \times 10^{5}$ & $6.2 \times 10^{5}$ & $1.0 \times 10^{5}$ & $1.2 \times 10^{6}$ & $4.8 \times 10^{7}$ & -\\

K-means-R & $5.6 \times 10^{5}$ & $3.7 \times 10^{5}$ & $1.2 \times 10^{6}$ & $1.1 \times 10^{6}$ & $2.7 \times 10^{7}$ & -\\
K-means-A & $1.0 \times 10^{6}$ & $6.4 \times 10^{5}$ & $1.8 \times 10^{6}$ & $1.7 \times 10^{6}$ & $5.6 \times 10^{7}$ & -\\
K-means-C & $9.8 \times 10^{5}$ & $6.5 \times 10^{5}$ & $1.7 \times 10^{6}$ & $1.3 \times 10^{6}$ & $5.3 \times 10^{7}$ & -\\
K-means-S & $1.2 \times 10^{6}$ & $4.7 \times 10^{5}$ & $1.2 \times 10^{6}$ & $1.2 \times 10^{6}$ & $5.7 \times 10^{7}$ & -\\

 NMF-R & $1.1 \times 10^{8}$ & $1.3 \times 10^{7}$ & $6.4 \times 10^{7}$ & $7.7 \times 10^{7}$ & - & -\\
 NMF-A & $3.0 \times 10^{7}$ & $3.0 \times 10^{7}$ & $1.3 \times 10^{7}$ & $1.6 \times 10^{7}$ & - & -\\
 NMF-C & $3.0 \times 10^{7}$ & $4.4 \times 10^{6}$ & $1.3 \times 10^{7}$ & $1.7 \times 10^{7}$ & - & -\\
 NMF-S & $3.6 \times 10^{7}$ & $4.7 \times 10^{6}$ & $1.3 \times 10^{7}$ & $1.6 \times 10^{7}$ & - & -\\
\hline
\end{tabular}
\end{center}
\end{footnotesize}
\caption{
A summary of the timing results for {\tt Consensus} with $H=250$  and
$p=80\%$, on all algorithms and for the {\tt Benchmark 1} datasets.
Cells with a dash indicate that the experiments were terminated  due to their high
computational demand.
}\label{table:H250p80-time}
\end{table}

\begin{table}
\begin{footnotesize}
\begin{center}
\begin{tabular}{|l|cccccc|}\hline
& \multicolumn{6}{c|}{Precision}
\\ \cline{2-7}
  & CNS Rat & Leukemia & NCI60 & Lymphoma & Yeast  & PBM \\
 Hier-A & \blackstone[6]& \blackstone[3] & \whitestone[7] & \blackstone[3] & \whitestone[6] & - \\
 Hier-C & \whitestone[7] & \whitestone[4] & \blackstone[8] & 6 & \blackstone[5] & - \\
 Hier-S & 2 & 9 & \whitestone[9] & 10 & 2 & - \\
 K-means-R & \blackstone[6]& \blackstone[3] - \whitestone[4] & \whitestone[7] & 6 & \whitestone[6] & - \\
 K-means-A & \whitestone[7] & \blackstone[3] & \whitestone[7] & 6 & \whitestone[6] & - \\
 K-means-C & \whitestone[5] & \whitestone[4] & \blackstone[8] & 6 & \whitestone[6] & -\\
 K-means-S &  8  & 8 & \whitestone[9] & 8 & \whitestone[6] & - \\

\hline {\bf Gold solution}  & {\bf 6} & {\bf 3} & {\bf 8} & {\bf 3} & {\bf 5} & {\bf 18} \\
\hline
\end{tabular}
\end{center}
\end{footnotesize}
\caption{
A summary of the precision results for {\tt Consensus} with $H=100$  and
$p=80\%$, on all algorithms, except NMF,  and for the {\tt Benchmark 1} datasets.  Cells with a dash indicate that the experiments were terminated  due to their high
computational demand.
}\label{table:H100p80}
\end{table}

\begin{table}
\begin{footnotesize}
\begin{center}
\begin{tabular}{|l|cccccc|}\hline
&  \multicolumn{6}{c|}{Timing}
\\ \cline{2-7}
&  CNS Rat & Leukemia & NCI60 & Lymphoma &  Yeast & PBM\\
 Hier-A & $3.5 \times 10^{5}$ & $2.6 \times 10^{5}$ & $4.2 \times 10^{5}$ & $3.8 \times 10^{5}$ & $2.9 \times 10^{7}$ & -\\
 Hier-C & $3.3 \times 10^{5}$ & $2.7 \times 10^{5}$ & $3.9 \times 10^{5}$ & $3.9 \times 10^{5}$ & $2.9 \times 10^{7}$ & -\\
 Hier-S & $3.5 \times 10^{5}$ & $2.8 \times 10^{5}$ & $3.9 \times 10^{5}$ & $4.2 \times 10^{5}$ & $2.8 \times 10^{7}$ & -\\

K-means-R & $3.6 \times 10^{5}$ & $3.4 \times 10^{5}$ & $7.4 \times 10^{5}$ & $6.2 \times 10^{5}$ & $2.3 \times 10^{7}$ & -\\
K-means-A & $4.2 \times 10^{5}$ & $1.9 \times 10^{5}$ & $7.2 \times 10^{5}$ & $5.2 \times 10^{5}$ & $7.5 \times 10^{6}$ & -\\
K-means-C & $4.4 \times 10^{5}$ & $2.6 \times 10^{5}$ & $6.7 \times 10^{5}$ & $5.5 \times 10^{5}$ & $3.3 \times 10^{7}$ & -\\
K-means-S & $4.8 \times 10^{5}$ & $2.8 \times 10^{5}$ & $6.2 \times 10^{5}$ & $5.8 \times 10^{5}$ & $3.7 \times 10^{7}$ & -\\

\hline
\end{tabular}
\end{center}
\end{footnotesize}
\caption{
A summary of the timing results for {\tt Consensus} with $H=100$  and
$p=80\%$, on all algorithms, except NMF,  and for the {\tt Benchmark 1} datasets.  Cells with a dash indicate that the experiments were terminated  due to their high
computational demand.
}\label{table:H100p80-time}
\end{table}

\begin{table}
\begin{footnotesize}
\begin{center}
\begin{tabular}{|l|cccccc|}\hline
& \multicolumn{6}{c|}{Precision}
\\ \cline{2-7}
 & CNS Rat & Leukemia & NCI60 & Lymphoma & Yeast  & PBM \\
 Hier-A & \whitestone[7] & \blackstone[3] & \blackstone[8] & \blackstone[3] & \whitestone[6] & -\\
 Hier-C & \whitestone[7] & \blackstone[3] & \blackstone[8] &  5  & \blackstone[5] & - \\
 Hier-S & 2 & 8 & \whitestone[9] & \whitestone[2] & 10 & -  \\
 K-means-R & \whitestone[7] & \whitestone[4] & \whitestone[7] & \blackstone[3] & \whitestone[6] & - \\
 K-means-A & \whitestone[7] & \blackstone[3] & \blackstone[8] & \blackstone[3] & \whitestone[6] & - \\
 K-means-C & \blackstone[6]& \blackstone[3] & \blackstone[8] & \whitestone[4] & \whitestone[6] & - \\
 K-means-S & \whitestone[7] &  5  & \whitestone[9]  & \whitestone[2] &  7 & - \\
\hline {\bf Gold solution}  & {\bf 6} & {\bf 3} & {\bf 8} & {\bf 3} & {\bf 5} & {\bf 18} \\
\hline
\end{tabular}
\end{center}
\end{footnotesize}
\caption{
A summary of the precision results for {\tt Consensus} with $H=500$  and
$p=66\%$, on all algorithms, except NMF,  and for the {\tt Benchmark 1} datasets.  Cells with a dash indicate that the experiments were terminated  due to their high
computational demand.
}\label{table:H500p66}
\end{table}

\begin{table}
\begin{footnotesize}
\begin{center}
\begin{tabular}{|l|cccccc|}\hline
&  \multicolumn{6}{c|}{Timing}
\\ \cline{2-7}
&  CNS Rat & Leukemia & NCI60 & Lymphoma &  Yeast & PBM\\
 Hier-A & $1.5 \times 10^{6}$ & - & $2.4 \times 10^{6}$ & $2.0 \times 10^{6}$ & $5.9 \times 10^{7}$ & - \\
 Hier-C & $1.5 \times 10^{6}$ & - & $2.3 \times 10^{6}$ & $1.6 \times 10^{6}$ & $5.9 \times 10^{7}$ & -  \\
 Hier-S & $1.6 \times 10^{6}$ & - & $1.5 \times 10^{6}$ & $1.6 \times 10^{6}$ & $5.8 \times 10^{7}$ & -  \\

 K-means-R & $1.5 \times 10^{6}$ & - & $3.4 \times 10^{6}$ & $2.7 \times 10^{6}$ & $4.7 \times 10^{7}$ & -  \\
 K-means-A & $1.8 \times 10^{6}$ & - & $3.4 \times 10^{6}$ & $2.0 \times 10^{6}$ & $7.9 \times 10^{7}$ & -  \\
 K-means-C & $1.0 \times 10^{6}$ & - & $2.4 \times 10^{6}$ & $2.0 \times 10^{6}$ & $7.5 \times 10^{7}$ & -  \\
 K-means-S & $1.3 \times 10^{6}$ & - & $2.5 \times 10^{6}$ & $2.8 \times 10^{6}$ & $4.8 \times 10^{7}$ & -  \\

\hline
\end{tabular}
\end{center}
\end{footnotesize}
\caption{
A summary of the timing results for {\tt Consensus} with $H=500$  and
$p=66\%$, on all algorithms, except NMF,  and for the {\tt Benchmark 1} datasets.  For the Leukemia dataset, the timing experiments are not reported because
incomparable with those of the remaining  datasets. Cells with a dash indicate that the experiments were terminated  due to their high
computational demand.
}\label{table:H500p66-time}
\end{table}

\begin{table}
\begin{footnotesize}
\begin{center}
\begin{tabular}{|l|cccccc|}\hline
& \multicolumn{6}{c|}{Precision}
\\ \cline{2-7}
 & CNS Rat & Leukemia & NCI60 & Lymphoma & Yeast  & PBM \\
 Hier-A & \whitestone[7] & \blackstone[3] & \blackstone[8] & \blackstone[3] & \blackstone[5] & - \\
 Hier-C & \whitestone[7] & \blackstone[3] & \blackstone[8] & \blackstone[3] & \whitestone[6] & - \\
 Hier-S & 2 & 8 & \whitestone[9] & \whitestone[2] & 10 & - \\
 K-means-R & \whitestone[7] & \whitestone[4] & \blackstone[8] & 6 & \whitestone[6] & - \\
 K-means-A & \whitestone[7] & \blackstone[3] & \blackstone[8] &  5  &  \whitestone[6] & - \\
 K-means-C & \blackstone[6]& \blackstone[3] & \whitestone[9] &  5  &  \whitestone[6] & - \\
 K-means-S & \whitestone[7] & 8 &  10  & \whitestone[2] & \whitestone[6]  & - \\
\hline {\bf Gold solution}  & {\bf 6} & {\bf 3} & {\bf 8} & {\bf 3} & {\bf 5} & {\bf 18} \\
\hline
\end{tabular}
\end{center}
\end{footnotesize}
\caption{
A summary of the precision results for {\tt Consensus} with $H=250$  and
$p=66\%$, on all algorithms, except NMF,  and for the {\tt Benchmark 1} datasets.  Cells with a dash indicate that the experiments were terminated  due to their high
computational demand.
}\label{table:H250p66}
\end{table}

\begin{table}
\begin{footnotesize}
\begin{center}
\begin{tabular}{|l|cccccc|}\hline
&  \multicolumn{6}{c|}{Timing}
\\ \cline{2-7}
 &  CNS Rat & Leukemia & NCI60 & Lymphoma &  Yeast & PBM\\
  Hier-A & $3.8 \times 10^{5}$ & - & $8.0 \times 10^{5}$ & $1.1 \times 10^{6}$ & $3.7 \times 10^{7}$ & - \\
 Hier-C & $3.8 \times 10^{5}$ & - & $8.2 \times 10^{5}$ & $1.1 \times 10^{6}$ & $3.7 \times 10^{7}$ & -\\
 Hier-S & $7.9 \times 10^{5}$ & - & $8.2 \times 10^{5}$ & $7.0 \times 10^{5}$ & $3.7 \times 10^{7}$ & -\\

 K-means-R & $5.2 \times 10^{5}$ & - & $1.4 \times 10^{6}$ & $1.4 \times 10^{6}$ & $3.1 \times 10^{7}$ & -\\
 K-means-A & $5.7 \times 10^{5}$ & - & $1.2 \times 10^{6}$ & $1.2 \times 10^{6}$ & $4.8 \times 10^{7}$ & - \\
 K-means-C & $5.3 \times 10^{5}$ & - & $1.2 \times 10^{6}$ & $1.3 \times 10^{6}$ & $4.4 \times 10^{7}$ & - \\
 K-means-S & $6.2 \times 10^{5}$ & - & $1.2 \times 10^{6}$ & $1.1 \times 10^{6}$ & $5.1 \times 10^{7}$ & -\\
\hline {\bf Gold solution}  & {\bf 6} & {\bf 3} & {\bf 8} & {\bf 3} & {\bf 5} & {\bf 18} \\
\hline
\end{tabular}
\end{center}
\end{footnotesize}
\caption{
A summary of the timing results for {\tt Consensus} with $H=250$  and
$p=66\%$, on all algorithms, except NMF,  and for the {\tt Benchmark 1} datasets. For the Leukemia dataset, the timing experiments are not reported because
incomparable with those of the remaining datasets. Cells with a dash indicate that the experiments were terminated  due to their high
computational demand.
}\label{table:H250p66-time}
\end{table}

\begin{table}
\begin{footnotesize}
\begin{center}
\begin{tabular}{|l|cccccc|}\hline
& \multicolumn{6}{c|}{Precision}
\\ \cline{2-7}
 & CNS Rat & Leukemia & NCI60 & Lymphoma & Yeast  & PBM \\

 Hier-A &  8  & \blackstone[3] &  \blackstone[8] &\blackstone[3] & \blackstone[5] & - \\
 Hier-C & \whitestone[7] & \blackstone[3] &  \blackstone[8] & \whitestone[4] & \whitestone[6] & - \\
 Hier-S & 2 & 8 & \whitestone[9] & 9 & 2 & - \\
 K-means-R & \whitestone[7] & \whitestone[4] & \blackstone[8] & 6 & \whitestone[6] & - \\
 K-means-A & \whitestone[7] & \blackstone[3] & \blackstone[8] &  5  & \whitestone[6] & -\\
 K-means-C & \whitestone[7] & \blackstone[3] & \blackstone[8] &  5  & \whitestone[6] & - \\
 K-means-S & \whitestone[7] & 8 &  10  & \whitestone[2] & \whitestone[6] & - \\

\hline
\end{tabular}
\end{center}
\end{footnotesize}
\caption{
A summary of the precision results for {\tt Consensus} with $H=100$  and
$p=66\%$, on all algorithms, except NMF,  and for the {\tt Benchmark 1} datasets.  Cells with a dash indicate that the experiments were terminated  due to their high
computational demand.
}\label{table:H100p66}
\end{table}

\begin{table}
\begin{footnotesize}
\begin{center}
\begin{tabular}{|l|cccccc|}\hline
&  \multicolumn{6}{c|}{Timing}
\\ \cline{2-7}
 &  CNS Rat & Leukemia & NCI60 & Lymphoma &  Yeast & PBM\\
 Hier-A & $5.9 \times 10^{4}$ & - & $9.6 \times 10^{4}$ & $4.3 \times 10^{4}$ & $1.0 \times 10^{6}$ & -\\
 Hier-C & $4.2 \times 10^{4}$ & - & $4.9 \times 10^{4}$ & $4.9 \times 10^{5}$ & $1.0 \times 10^{6}$ & -\\
 Hier-S & $4.8 \times 10^{4}$ & - & $5.1 \times 10^{4}$ & $5.3 \times 10^{5}$ & $1.6 \times 10^{6}$ & -\\

 K-means-R & $5.2 \times 10^{5}$ & - & $1.0 \times 10^{6}$ & $9.6 \times 10^{5}$  & $1.2 \times 10^{7}$ & -\\
 K-means-A & $5.6 \times 10^{5}$ & - & $8.7 \times 10^{5}$ & $6.8 \times 10^{5}$  & $2.5 \times 10^{6}$ & -\\
 K-means-C & $5.5 \times 10^{5}$ & - & $5.0 \times 10^{5}$ & $7.8 \times 10^{5}$  & $1.0 \times 10^{7}$ & -\\
 K-means-S & $3.5 \times 10^{5}$ & - & $5.6 \times 10^{5}$ & $8.0 \times 10^{5}$  & $1.5 \times 10^{7}$ & -\\

\hline
\end{tabular}
\end{center}
\end{footnotesize}
\caption{
A summary of the timing results for {\tt Consensus} with $H=100$  and
$p=66\%$, on all algorithms, except NMF,  and for the {\tt Benchmark 1} datasets.  For the Leukemia dataset, the timing experiments are not reported because
incomparable with those of the remaining datasets. Cells with a dash indicate that the experiments were terminated  due to their high
computational demand.
}\label{table:H100p66-time}
\end{table}

\subsection{Consensus and Similarity Matrices}\label{sec:ConsensusSimilarity}
One concentrates on two experiments that, together, assess the ability of {\tt Consensus} to produce a similarity/distance matrix that actually improves  the performance  of clustering algorithms. In particular, following Monti et al., this thesis concentrates on hierarchical algorithms. As in the previous section, only the {\tt Benchmark 1} datasets are used.

The first experiment is as follows: for each dataset and each hierarchical algorithm considered here, one takes the consensus matrix corresponding to the number of clusters $k^*$  predicted by {\tt Consensus}. That matrix is transformed into a distance matrix, which is then used by the clustering algorithm to produce $k^*$ clusters. The agreement of that clustering solution with the gold solution of the given dataset is measured via the {\tt Adjusted Rand} Index (defined in Section~\ref{sec:AdjRand}). In view of the results reported in the previous section, only the cases $H=500, 250$ and $p=80\%$ are discussed here. The corresponding results are reported in Tables~\ref{Table:simmConsensuns500-80} and~\ref{Table:simmConsensuns250-80}, respectively. The relevant values of $k^*$ are taken from Tables~\ref{table:H500p80} and~\ref{table:H250p80}, respectively. The interested reader will find, at the following supplementary material web site~\cite{SpeedUpWeb}, all the complete tables, in the Tables section, as Tables TS1-TS6. The second experiment follows the same lines as the first, but the clustering algorithm uses a Euclidean distance matrix. The results are reported in Table~\ref{table:siimEuclidean}. In this case, the relevant values of $k^*$ are taken from Table~\ref{table:H500p80}.

Tables~\ref{Table:simmConsensuns500-80} and~\ref{Table:simmConsensuns250-80} confirm the indication about the proper {\tt Consensus} parameter  setting identified in the previous section. Moreover, it is worth pointing out that there is no substantial difference between the results reported in Tables~\ref{Table:simmConsensuns500-80} and~\ref{table:siimEuclidean}. Combining those results with the analogous ones obtained by Monti et al., one has an indication that the consensus matrix is at least as good as an Euclidean distance matrix, when used as input to hierarchical clustering algorithms.

\begin{table}
\begin{footnotesize}
\begin{center}
\begin{tabular}{|l|ccccc|}\hline
 & CNS Rat & Leukemia & NCI60 & Lymphoma & Yeast  \\
\cline{2-6}
 Hier-A & 0.190350 & 0.919174 & 0.498265 & 0.430841 & 0.523873\\
 Hier-C & 0.176446 & 0.676293 & 0.414214 & 0.483664 & 0.492815 \\
 Hier-S & 0.000134 & 0.507680 & 0.161798 & -0.01777 & 0.002036 \\
\hline
\end{tabular}
\end{center}
\end{footnotesize}
\caption{For each dataset and each hierarchical algorithm considered here, the consensus matrix corresponding to the number of clusters $k^*$  predicted by {\tt Consensus} in Table~\ref{table:H500p80} is taken. That matrix is transformed into a distance matrix, which is then used by the clustering algorithm to produce $k^*$ clusters. The agreement of that clustering solution with the gold solution of the given dataset is measured via the {\tt Adjusted Rand} Index.}\label{Table:simmConsensuns500-80}
\end{table}

\begin{table}
\begin{footnotesize}
\begin{center}
\begin{tabular}{|l|ccccc|}\hline
 & CNS Rat & Leukemia & NCI60 & Lymphoma & Yeast \\
\cline{2-6}
 Hier-A & 0.190350 & 0.919174  & 0.498265 & 0.430841  & 0.522578  \\
 Hier-C & 0.237957 & 0.676293  & 0.414214 & 0.483664  & 0.555979 \\
 Hier-S & 0.000134 & -0.040230 & 0.161798 & -0.009141 & 0.002036\\
\hline
\end{tabular}
\end{center}
\end{footnotesize}
\caption{For each dataset and each hierarchical algorithm considered here, the consensus matrix corresponding to the number of clusters $k^*$  predicted by {\tt Consensus} in Table~\ref{table:H250p80} is taken. That matrix is transformed into a distance matrix, which is then used by the clustering algorithm to produce $k^*$ clusters. The agreement of that clustering solution with the gold solution of the given dataset is measured via the {\tt Adjusted Rand} Index.}\label{Table:simmConsensuns250-80}
\end{table}

\begin{table}
\begin{footnotesize}
\begin{center}
\begin{tabular}{|l|ccccc|}\hline
 & CNS Rat & Leukemia & NCI60 & Lymphoma & Yeast  \\
\cline{2-6}
 Hier-A & 0.190350 & 0.910081 & 0.498265 & 0.430841 & 0.558884\\
 Hier-C & 0.135778 & 0.676293 & 0.414214 & 0.483664 & 0.413154\\
 Hier-S & 0.000134 & 0.507680 & 0.161798 & -0.01777 & 0.002036 \\
\hline
\end{tabular}
\end{center}
\end{footnotesize}
\caption{For each dataset and each hierarchical algorithm considered here, the Euclidean distance matrix and number of clusters $k^*$  predicted by {\tt Consensus} in Table~\ref{table:H500p80} is taken. That matrix is used by the clustering algorithm to produce $k^*$ clusters. The agreement of that clustering solution with the gold solution of the given dataset is measured via the {\tt Adjusted Rand} Index.}\label{table:siimEuclidean}
\end{table}

\section{An Approximation of Consensus: FC}\label{sec:FC}
In this section an approximation of {\tt Consensus} is provided. This speedup is referred to as {\tt FC} (Fast Consensus).
Intuitively, a large number of clustering solutions, each obtained via a sample of the original dataset,  seem to be required in order to identify the correct number of clusters. However, there is no theoretic reason indicating that those clustering solutions must each be generated from a \emph{different} sample of the input dataset, as {\tt Consensus} does. Based on this observation, this thesis  proposes to perform, first, a sampling step to generate a data matrix  $D_{1}$, which is then used to generate all clustering solutions for $k$ in the range $[2, k_{max}]$. In terms of code, that implies a simple switch of the two iteration cycles of the {\tt Consensus} procedure (see Chapter~\ref{chap:Stability}).
Indeed, with reference to the stability measures discussed in Chapter~\ref{chap:Stability}, it is also worth noticing that each of the $k_{max}\times H$ clustering solutions needed is computed from a distinct dataset. As discussed here, this leads to inefficiencies, in particular in regard to agglomerative clustering algorithms, such as the hierarchical ones. Indeed, their ability to quickly compute a clustering solution with $k$ clusters from one with $k+1$, typical of these methods, cannot be used  within {\tt Consensus} because, for each $k$, the dataset changes. The same holds true for divisive methods.
In turn, that switch allows to obtain a speedup since costly computational duplications are avoided when the clustering algorithm $C_{1}$ is hierarchical. Indeed, once the switch is done,  it becomes  possible to interleave the computation of the measure with the level bottom-up construction of the hierarchical tree underlying the clustering algorithms. Specifically, only one dendogram construction is required rather than the repeated and partial construction of dendograms as in the {\tt Consensus} procedure. Therefore, one uses, in full, the main characteristic of agglomerative algorithms briefly discussed in the section regarding {\tt Consensus}.
{\tt FC} is formalized by the procedure given in Fig.~\ref{algo:FC}.  It is also worth pointing out that this switch is possible for several of the stability based measures detailed in Chapter~\ref{chap:Stability}. This general approximation paradigm is formalized by the procedure given in Fig.~\ref{algo:SpeedUpParadigm}, where the macro operations and inputs are the same used for the \codice{Stability\_Measure} procedure detailed in Chapter~\ref{chap:Stability}. The \vir{rule of thumb} that one uses to predict  $k^*$, via {\tt FC}, is the same as for {\tt Consensus}. An example is reported in  Fig.~\ref{fig:FCexample}(b). It is worth pointing  out that both the CDFs and $\Delta$ curve shapes for {\tt FC} closely track those of the respective curves for {\tt Consensus} Fig.~\ref{fig:FCexample}(a).

\begin{figure}[ht] \centering \epsfig{file=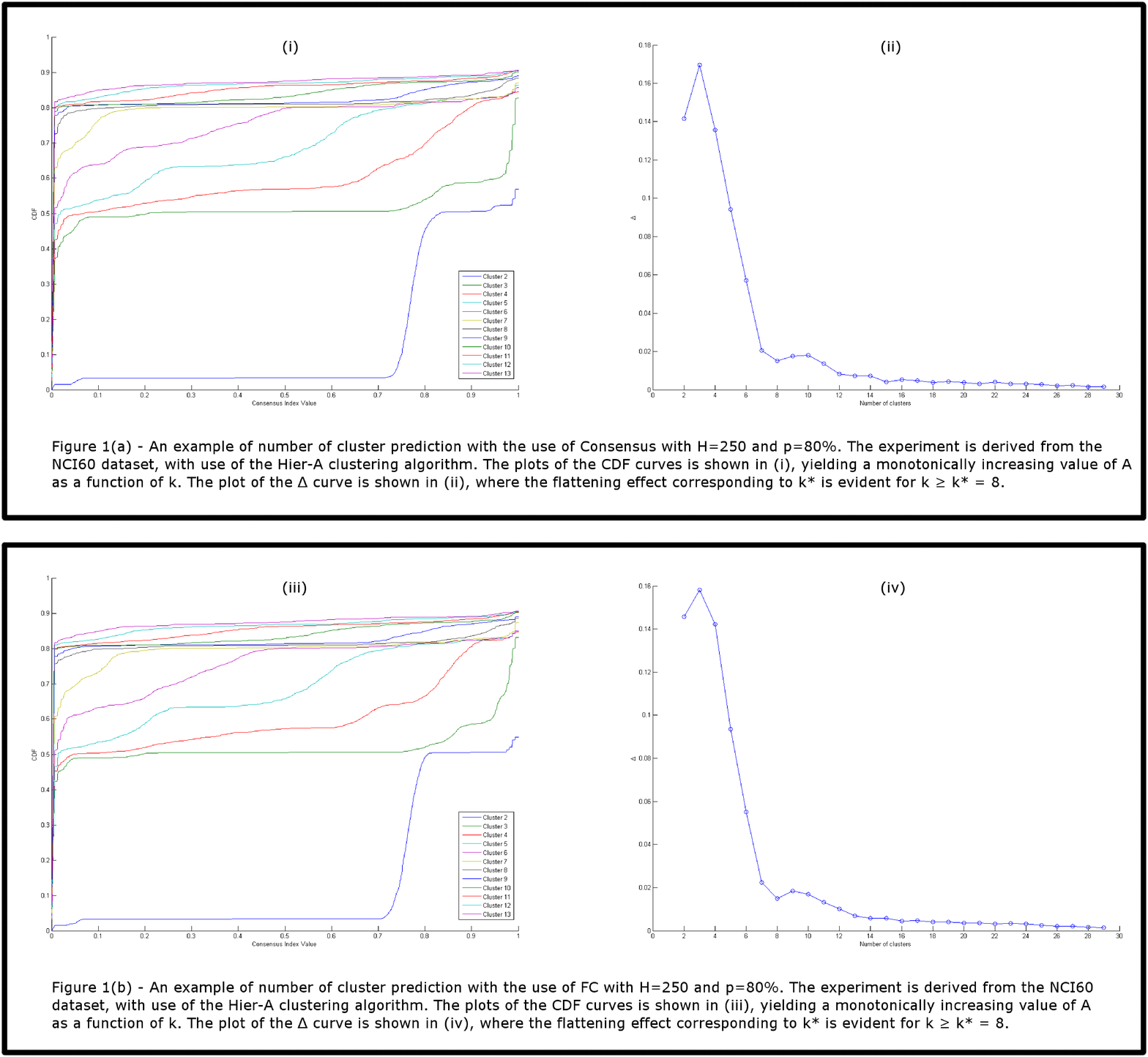,scale=0.16}
\caption{The experiment is derived from the NCI60 dataset, with the use of the
Hier-A clustering algorithm. (a) Figure for {\tt Consensus} with $H=250$ and $p=80\%$: the plot of the CDF curves is shown (i), yielding a monotonically increasing value of $A$ as a
function of $k$.  The plot of the $\Delta$ curve is  shown in (ii),
where the flattening effect corresponding to $k^*$  is evident for
$k\geq k^*=8$. (b) Figure for {\tt FC} with $H=250$ and $p=80\%$: the plots of the $CDF$ curves is
shown in (iii), yielding a monotonically increasing value of $A$ as a
function of $k$.  The plot of the $\Delta$ curve is  shown in (iv),
where the flattening effect corresponding to $k^*$  is evident for
$k\geq k^*=8$.}\label{fig:FCexample}
\end{figure}

\begin{figure}[ht] \centering \epsfig{file=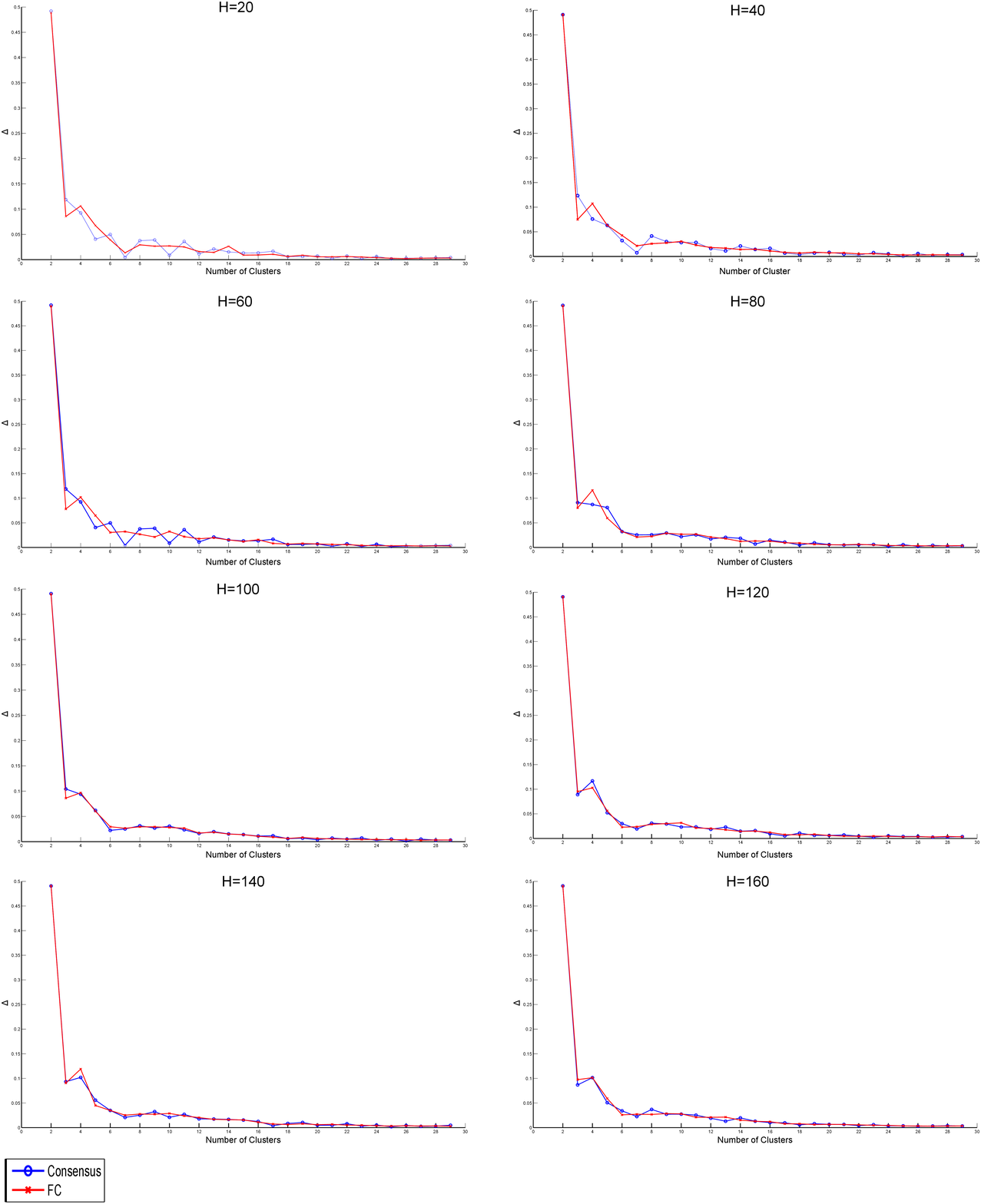,scale=0.18}
\caption{Plot of the $\Delta$ curves for {\tt Consensus} and {\tt FC} with p=80\% and H=20,40,60,80,100,120,140,160.
The experiment is derived from the Lymphoma dataset, with the use of the Hier-A clustering algorithm.}\label{fig:d1}
\end{figure}

\begin{figure}[ht] \centering \epsfig{file=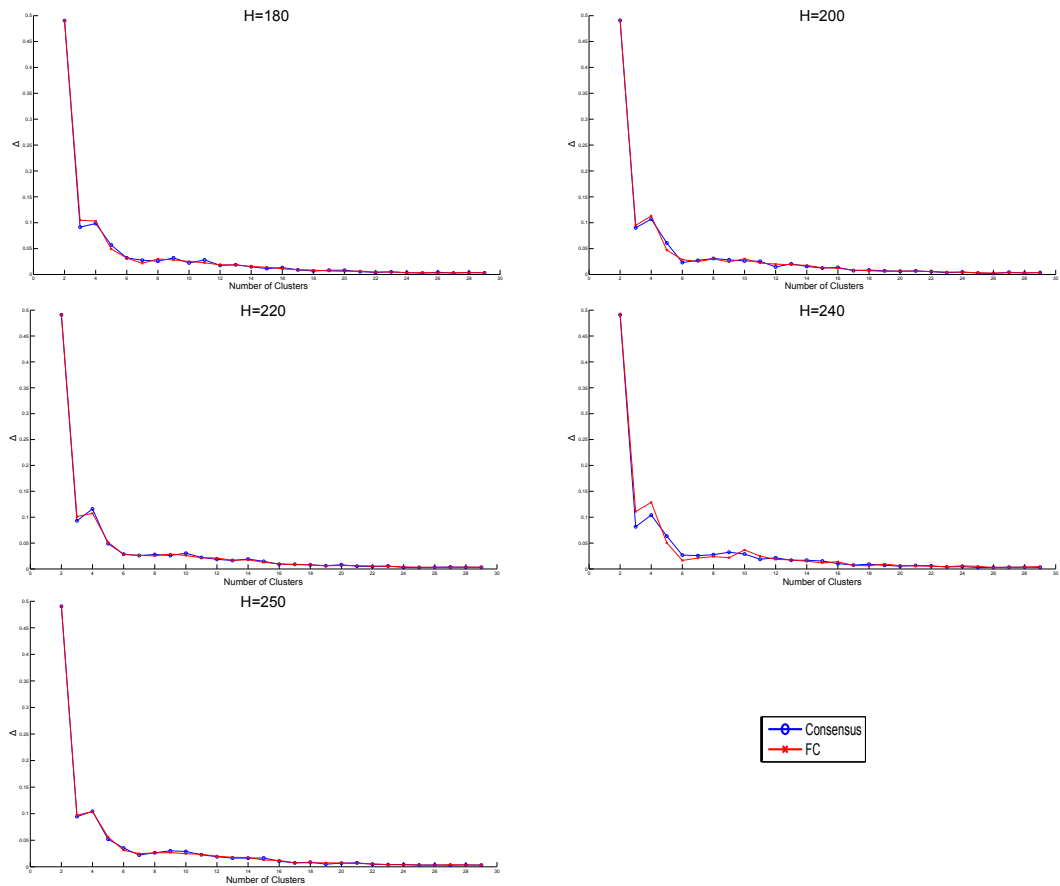,scale=0.18}
\caption{Plot of the $\Delta$ curves for {\tt Consensus} and {\tt FC} with p=80\% and H=180,200,220,250.
The experiment is derived from the Lymphoma dataset, with the use of the Hier-A clustering algorithm.}\label{fig:d2}
\end{figure}

\begin{figure}
\[
\setlength{\fboxsep}{12pt}
\setlength{\mylength}{\linewidth}
\addtolength{\mylength}{-2\fboxsep}
\addtolength{\mylength}{-2\fboxrule}
\ovalbox{
\parbox{\mylength}{
\setlength{\abovedisplayskip}{0pt}
\setlength{\belowdisplayskip}{0pt}

\begin{pseudocode}{FC}{H_{c}, <C_{1}>, D, k_{max}}
\FOR i\GETS 1 \TO H_{c} \DO \\
\BEGIN
1.\mbox{ }\mbox{ }\mbox{ }\mbox{ Generate (via a subsampling) a data matrix $D_i$}\\
\mbox{ }\mbox{ }\mbox{ }\mbox{ }\FOR k\GETS 2 \TO k_{max}   \DO \\
\mbox{ }\mbox{ }\mbox{ }\mbox{ }\BEGIN
2.\mbox{ }\mbox{ }\mbox{ }\mbox{ }\mbox{ }\mbox{ }\mbox{Let $P_{1}$ be the partition of $D_{i}$ into $k$ clusters with the use of }C_{1}\\
3.\mbox{ }\mbox{ }\mbox{ }\mbox{ }\mbox{ }\mbox{ Based on }P_{1}\mbox{ compute the connectivity matrix }M^{k}_{i}\\
\mbox{ }\mbox{ }\mbox{ }\mbox{ }\END\\
\END\\
\FOR k\GETS 2 \TO k_{max}   \DO \\
\BEGIN
4.\mbox{ }\mbox{ }\mbox{ Compute the consensus matrix }\mathcal{M}^{k}\\
\END\\
5.\mbox{ Based on the }k_{max}-1 \mbox{ consensus
matrices, return a prediction for }k^{*}
\end{pseudocode}
}
}
\]
\caption{The \codice{FC} procedure}\label{algo:FC}
\end{figure}


\begin{figure}
\[
\setlength{\fboxsep}{12pt}
\setlength{\mylength}{\linewidth}
\addtolength{\mylength}{-2\fboxsep}
\addtolength{\mylength}{-2\fboxrule}
\ovalbox{
\parbox{\mylength}{
\setlength{\abovedisplayskip}{0pt}
\setlength{\belowdisplayskip}{0pt}

\begin{pseudocode}{Fast\_Stability\_Measure}{k_{min}, k_{max}, D, H, \alpha, \beta, <C_{1},C_{2},\ldots,C_{t}>}
\WHILE H \DO\\
\BEGIN
1. \mbox{ }\mbox{ }<D_1, D_2,\ldots, D_l> \GETS <\CALL{DGP}{D_0, \beta}, \CALL{DGP}{D_0, \beta}, \ldots, \CALL{DGP}{D_0, \beta}>\\
\mbox{ }\mbox{ }\mbox{ }\FOR k\GETS k_{min} \TO k_{max} \DO \\
\mbox{ }\mbox{ }\mbox{ }\BEGIN
2. \mbox{ }\mbox{ }\mbox{ }\mbox{ }<D_{T,0},D_{T,1}, \ldots, D_{T,l}, D_{L,0},D_{L,1}, \ldots, D_{L,l}> \GETS \CALL{Split}{<D_0,D_1, \ldots, D_l>, \alpha}\\

3.\mbox{ }\mbox{ } \mbox{ }\mbox{ }<G> \GETS \CALL{Assign}{<D_{T,0}, D_{T,1},\ldots, D_{T,l}>,<C_{1}, C_{2},\ldots, C_{t}>}\\

4.\mbox{ }\mbox{ }\mbox{ }\mbox{ } <C_{i_{1}}, C_{i_{2}},\ldots, C_{i_{q}}> \GETS \CALL{Train}{<G>}\\

5.\mbox{ }\mbox{ } \mbox{ }\mbox{ }<\hat{G}> \GETS \CALL{Assign}{<D_{L,0}, D_{L,1},\ldots, D_{L,l}>,<C_{1}, C_{2},\ldots, C_{t}>}\\

6. \mbox{ }\mbox{ }\mbox{ }\mbox{ }<P_{1}, P_{2}, \ldots, P_{z}> \GETS \CALL{Cluster}{\hat{G},k}\\

7.~\mbox{ }\mbox{ }\mbox{ }\mbox{ }u \GETS \CALL {Collect\_Statistic}{<P_{1}, P_{2}, \ldots, P_{z}>}\\

8.~\mbox{ }\mbox{ }\mbox{ }\mbox{ }S^{k} \GETS S^{k} \bigcup \{u\}\\

\mbox{ }\mbox{ }\mbox{ }\END\\
\END\\
\mbox{ }\FOR k\GETS k_{min} \TO k_{max} \DO \\
\mbox{ }\mbox{ }\BEGIN
9.\mbox{ }\mbox{ }\mbox{ }\mbox{ }R^{k} \GETS \CALL{Synopsis}{S^{k}}\\
\mbox{ }\mbox{ }\END\\

10. \mbox{ }k^{*} \GETS

\CALL{Significance\_Analysis}{R^{k_{min}},\ldots,R^{k_{max}}}\\

\RETURN{k^{*}}
\end{pseudocode}
}
}
\]
\caption{The \codice{Fast\_Stability\_Measure} procedure.}\label{algo:SpeedUpParadigm}
\end{figure}

\subsection{FC and Its Parameters}
In this section, the results of the experiments obtained with {\tt FC} are reported and discussed. In analogy with {\tt Consensus}, its  precision and time performances depend on $H$ and $p$.  In order to compare the two measures along the parameters of interest, one uses, for {\tt FC}, the same experimental setup detailed in the previous section for {\tt Consensus}. Moreover, based on the results of  the previous section, the discussion here is based only on the experiments for {\tt FC} with $H=250$ and $p=80\%$.

It is worthy to anticipate that the results in this section  will show that {\tt FC} is a very good approximation of {\tt Consensus} both as an internal validation measure and as a preprocessor for clustering algorithms. Remarkably, it is at least one order of magnitude faster in time when used in conjunction with hierarchical clustering algorithms or with partitional algorithms with a hierarchical initialization.

As pointed out in Chapter~\ref{chap:5} in order to perform a better comparison between {\tt Consensus} and its approximation, both {\tt Benchmark 1} and {\tt Benchmark 2} datasets are taken in account (see Section~\ref{sec:dataset} for details) in this section.

\subsubsection{FC as an Internal Validation Measure}

Tables~\ref{table:FC-H250p80} and~\ref{table:FC-H250p80-time}  report the results regarding {\tt FC} as an internal validation measure for the {\tt Benchmark 1} datasets. For this discussion, they are compared with the {\tt Consensus} results reported in Tables~\ref{table:H250p80} and~\ref{table:H250p80-time}. The interested reader will find, at the following supplementary material web site~\cite{SpeedUpWeb}, all the complete tables as Tables TI7-TI12 for {\tt FC}  and the corresponding figures as Figs. S65-S139 in the Tables and Figures section, respectively. The time for the PBM dataset with $p=66\%$ in the corresponding table is not reported, since it does not provide any relevant information.

Note that, in terms of precision,  {\tt FC} and {\tt Consensus} provide  nearly identical predictions on the CNS Rat and Yeast datasets, while their predictions  are quite close  on the Leukemia dataset. Moreover, in terms of time, note that {\tt FC} is faster then {\tt Consensus} by at least one order of magnitude on all hierarchical algorithms and K-means-A, K-means-C and K-means-S. In particular, {\tt FC} is able to complete execution on the PBM dataset, as opposed to {\tt Consensus}, with all of the mentioned algorithms. It is also worthy of notice that K-means-C also provides, for that dataset, a reasonable estimate of the number of clusters present in it. Another point of interest is the performance of {\tt FC} with K-means-R since  the algorithm engineering used in its implementation grants good results on the largest datasets used with that clustering algorithm.

It is somewhat unfortunate, however, that those quite substantial speedups  have only minor effects when one uses {\tt NMF} as a clustering algorithm, which is a clear indication that the time taken by {\tt NMF} to converge to a clustering solution accounts for most of the time performance of {\tt FC} in that setting, in analogy with {\tt Consensus}.

As for {\tt Benchmark 2} datasets, both {\tt Consensus} and {\tt FC} are computed for a number of cluster values in the range $[2,30]$. The prediction value, $k^*$, is based on the plot of the $\Delta(k)$ curve (defined in Chapter~\ref{chap:Stability}) as indicated in~\cite{giancarlo08}. The corresponding plots are available at the following supplementary material web site~\cite{SpeedUpWeb}, in the Figures section, as Figs. M1-M12 and M13-M24 for  {\tt Consensus} and {\tt FC}, respectively.
The corresponding results are reported in Table~\ref{table:Consensus-H250p80Sim} and Table~\ref{table:FCSim} for the simulated datasets, while the corresponding results for the microarray datasets  are in Tables~\ref{table:Consensus-H250p80Monti}-\ref{table:Consensus-H250p80Monti-time} and Tables~\ref{table:FC-Monti}-\ref{table:FC-Monti-time} for {\tt Consensus} and {\tt FC}, respectively.

By comparing the results in the mentioned tables, it is of great interest to notice that,  on the datasets in {\tt Benchmark 2},   there is no difference whatsoever in the predictions between {\tt Consensus} and {\tt FC}. Even more remarkably, by analyzing  the $\Delta$ curves from which the predictions are made (see Methods section), one discovers that the ones produced by {\tt Consensus} and {\tt FC} are nearly identical (see again Figs. M1-M24 at the following supplementary material web site~\cite{SpeedUpWeb}). However, on the microarray datasets on {\tt Benchmark 2}, {\tt FC} is at least one order of magnitude faster than {\tt Consensus}, with exactly the same algorithms indicated  for the {\tt Benchmark 1} datasets. {\tt NMF} results to be problematic also on the datasets on {\tt Benchmark 2}.

It is of some interest to point out that, as detailed in the previous section, {\tt FC} builds the same number of connectivity matrices as {\tt Consensus}. However, it uses only $H$  \vir{new} matrices, each sampled from the input dataset, rather than $H\times k$ \vir{new} matrices as {\tt Consensus} does. Adding this observation to the ones of the preceding subsection, one understands that the number of connectivity matrices computed by  {\tt FC} is key to its precision performance, again in analogy with {\tt Consensus}. The novelty, by far non-obvious,  is that  those matrices can be computed by taking a relatively small number of samples from the input matrix.
Moreover, Figs.~\ref{fig:d1} and~\ref{fig:d2} provides the $\Delta$ curve both for {\tt Consensus} and {\tt FC} for $p=80\%$ and different values of $H$, in order to show how the behavior of the two curves is practically identically for a $H>100$. From these figures it is possible to see how {\tt FC} preserves the same outstanding properties of {\tt Consensus} and that for a reasonable value of $H$ the precision of the measures is the same (see Fig.~\ref{fig:d2} for $H=250$).

\begin{table}
\begin{footnotesize}
\begin{center}
\begin{tabular}{|l|cccccc|}\hline
& \multicolumn{6}{c|}{Precision}
\\ \cline{2-7}
 & CNS Rat & Leukemia & NCI60 & Lymphoma & Yeast  & PBM \\

Hier-A  &  \whitestone[7]  &  \blackstone[3] &  \blackstone[8]  & \blackstone[3]  &  \blackstone[5] & 2 \\
 Hier-C &  \blackstone[6]&  \whitestone[4] &  \blackstone[8] & 5  & \whitestone[6]  & 14 - \whitestone[17] \\
 Hier-S & 2 & 8 &  \blackstone[8]  &  \whitestone[2] & 10  & 2 \\

 K-means-R & \blackstone[6]&\whitestone[4] & \whitestone[7] &  \whitestone[4]& \whitestone[6] &  16  \\
 K-means-A &\whitestone[7] &  \blackstone[3]&  \blackstone[8] &  \blackstone[3] & \whitestone[6] & 12 \\
 K-means-C & \blackstone[6]&  \whitestone[4]  & \blackstone[8]  &  \whitestone[4] & \whitestone[6]& 12 \\
 K-means-S & \blackstone[6] & 7 & \whitestone[9]  & \whitestone[2] & \whitestone[6]& 2 \\

NMF-R & \blackstone[6] & \whitestone[4] & \whitestone[7] & \whitestone[4] & - & - \\
NMF-A & \whitestone[7] & \blackstone[3] & \whitestone[7] & \blackstone[3] & - & - \\
NMF-C & \blackstone[6] & \blackstone[3] & \blackstone[8] & \whitestone[4] & - & - \\
NMF-S & 2 & 8 & \whitestone[9] & \whitestone[2] & - & - \\
\hline {\bf Gold solution}  & {\bf 6} & {\bf 3} & {\bf 8} & {\bf 3} & {\bf 5} & {\bf 18} \\
\hline
\end{tabular}
\end{center}
\end{footnotesize}
\caption{A summary of the precision results for {\tt FC} with $H=250$  and
$p=80\%$, on all algorithms,  and for the {\tt Benchmark 1} datasets.  Cells with a dash indicate that the experiments were terminated  due to their high
computational demand.
}\label{table:FC-H250p80}
\end{table}

\begin{table}
\begin{footnotesize}
\begin{center}
\begin{tabular}{|l|cccccc|}\hline
&  \multicolumn{6}{c|}{Timing}
\\ \cline{2-7}
 &  CNS Rat & Leukemia & NCI60 & Lymphoma &  Yeast & PBM\\

Hier-A  & $5.9 \times 10^{4}$ & $2.7 \times 10^{4}$ & $7.0 \times 10^{4}$ & $6.8 \times 10^{4}$ & $1.5 \times 10^{6}$ & $4.2 \times 10^{7}$ \\
Hier-C & $5.9 \times 10^{4}$ &  $2.7 \times 10^{4}$ & $6.5 \times 10^{4}$ & $6.7 \times 10^{4}$ & $1.4 \times 10^{6} $  &$3.4 \times 10^{7}$\\
Hier-S & $8.1 \times 10^{4}$ &  $2.7 \times 10^{4}$ & $5.8 \times 10^{4}$ & $6.2 \times 10^{4}$ & $2.2 \times 10^{6}$   & $4.4 \times 10^{7}$\\

K-means-R & $3.7 \times 10^{5}$ &  $3.7 \times 10^{5}$ & $1.2 \times 10^{6}$ & $1.1 \times 10^{6}$  & $1.6 \times 10^{7}$  & $1.6 \times 10^{8}$ \\
K-means-A & $3.1 \times 10^{5}$ &  $2.0 \times 10^{5}$ & $9.3 \times 10^{5}$ & $9.0 \times 10^{5}$  & $1.8 \times 10^{7}$ & $2.1 \times 10^{8}$ \\
K-means-C & $2.5 \times 10^{5}$ & $6.0 \times 10^{5}$ & $6.5 \times 10^{5}$ & $9.4 \times 10^{5}$  & $1.4 \times 10^{7}$  & $2.0 \times 10^{8}$\\
K-means-S & $3.7 \times 10^{5}$ &  $5.8 \times 10^{5}$ & $6.9 \times 10^{5}$ &  $9.4 \times 10^{5}$ & $1.9 \times 10^{7}$  & $2.4 \times 10^{8}$\\

NMF-R & $1.1 \times 10^{8}$ & $1.3 \times 10^{7}$ & $6.3 \times 10^{7}$ & $7.5 \times 10^{7}$ & - & -\\
NMF-A & $3.0 \times 10^{7}$ & $4.0 \times 10^{6}$ & $1.2 \times 10^{7}$ & $1.6 \times 10^{7}$ & - & -\\
NMF-C & $2.9 \times 10^{7}$ & $4.0 \times 10^{6}$ & $1.2 \times 10^{7}$ & $1.6 \times 10^{7}$ & - & -\\
NMF-S & $3.5 \times 10^{7}$ & $4.0 \times 10^{6}$ & $1.2 \times 10^{7}$ & $1.5 \times 10^{7}$ & - & -\\
\hline
\end{tabular}
\end{center}
\end{footnotesize}
\caption{A summary of the timing results for {\tt FC} with $H=250$  and
$p=80\%$, on all algorithms, and for the {\tt Benchmark 1} datasets.  Cells with a dash indicate that the experiments were terminated  due to their high
computational demand.
}\label{table:FC-H250p80-time}
\end{table}

\begin{table}
\begin{footnotesize}
\begin{center}
\begin{tabular}{|l|ccc|}\hline
& \multicolumn{3}{c|}{Precision}
\\  \cline{2-4}
 &  Novartis & St.Jude & Normal \\

Hier-A  &  \whitestone[5] -  6  & \blackstone[6] &  10  \\
Hier-C &   \blackstone[4] - \whitestone[5] & \whitestone[5] - \blackstone[6] &  10  \\
Hier-S &  \whitestone[5]  &  2 &  10  \\

K-means-R &  \whitestone[5]  &  \blackstone[6] &  10   \\
K-means-A &  \whitestone[5] -  6   &  \blackstone[6] &  8    \\
K-means-C &  \blackstone[4] - \whitestone[5]  &  \whitestone[5] - \blackstone[6] &  10   \\
K-means-S &  \whitestone[5]  & \blackstone[6] &  10  \\

NMF-R &  -  &  - &  -   \\
NMF-A &  -  &  - &  -   \\
NMF-C &  -  &  - &  -  \\
NMF-S &  -  &  - &  -   \\

 \hline {\bf Gold solution}  & {\bf 4} & {\bf 6} & {\bf 13} \\
\hline
\end{tabular}
\end{center}
\end{footnotesize}
\caption{A summary of the precision results for {\tt Consensus} with $H=250$  and
$p=80\%$, on all algorithms and for the {\tt Benchmark 2} datasets. Cells with a dash indicate that the experiments were terminated  due to their high
computational demand.
}\label{table:Consensus-H250p80Monti}
\end{table}

\begin{table}
\begin{footnotesize}
\begin{center}
\begin{tabular}{|l|ccc|}\hline
& \multicolumn{3}{c|}{Timing}
\\ \cline{2-4}
 &  Novartis & St.Jude & Normal \\

Hier-A  & $1.0 \times 10^{7}$  & $3.7 \times 10^{7}$ & $9.5 \times 10^{6}$  \\
Hier-C & $1.0 \times 10^{7}$  &  $3.7 \times 10^{7}$ & $9.2 \times 10^{6}$ \\
Hier-S & $9.8 \times 10^{6}$ & $3.7 \times 10^{7}$ & $9.4 \times 10^{6}$ \\

K-means-R & $1.8 \times 10^{7}$  &  $1.5 \times 10^{7}$ & $6.3 \times 10^{6}$ \\
K-means-A & $1.4 \times 10^{7}$ & $6.8 \times 10^{7}$ & $1.1 \times 10^{7}$  \\
K-means-C & $1.5 \times 10^{7}$ & $6.8 \times 10^{7}$ & $1.0 \times 10^{7}$ \\
K-means-S & $1.6 \times 10^{7}$ & $6.8 \times 10^{7}$ & $1.1 \times 10^{7}$ \\

NMF-R &  -  &  - &  -   \\
NMF-A &  -  &  - &  -   \\
NMF-C &  -  &  - &  -  \\
NMF-S &  -  &  - &  -  \\

\hline
\end{tabular}
\end{center}
\end{footnotesize}
\caption{A summary of the timing results for {\tt Consensus} with $H=250$  and
$p=80\%$, on all algorithms and for the {\tt Benchmark 2} datasets. Cells with a dash indicate that the experiments were terminated  due to their high
computational demand.
}\label{table:Consensus-H250p80Monti-time}
\end{table}

\begin{table}
\begin{footnotesize}
\begin{center}
\begin{tabular}{|l|ccc|}\hline
& \multicolumn{3}{c|}{Precision}
\\  \cline{2-4}
 &  Gaussian3 & Gaussian5 & Simulated6 \\

Hier-A & \blackstone[3] & \blackstone[5]   & \whitestone[5] \\
Hier-C & \blackstone[3] & \blackstone[5]  & \whitestone[5]   \\
Hier-S & \whitestone[2] & 2  & \whitestone[7]  \\

K-means-R & \blackstone[3]  & \blackstone[5]  & \blackstone[6] \\
K-means-A & \blackstone[3]  & \blackstone[5]  & \whitestone[5] \\
K-means-C & \blackstone[3]  & \blackstone[5]  & \whitestone[5]  \\
K-means-S & \whitestone[2] & \blackstone[5]  & \blackstone[6] \\

 \hline {\bf Gold solution}  & {\bf 3} & {\bf 5} & {\bf 6} \\
\hline
\end{tabular}
\end{center}
\end{footnotesize}
\caption{A summary of the precision results for {\tt Consensus} with $H=250$  and
$p=80\%$, on all algorithms, except NMF, and for the simulated datasets in {\tt Benchmark 2}.
}\label{table:Consensus-H250p80Sim}
\end{table}

\begin{table}
\begin{footnotesize}
\begin{center}
\begin{tabular}{|l|ccc|}\hline
& \multicolumn{3}{c|}{Precision}
\\ \cline{2-4}
 &  Novartis & St.Jude & Normal \\

Hier-A  &  \whitestone[5] -  6  &  \blackstone[6] &  10  \\
 Hier-C &   \blackstone[4] - \whitestone[5] &  \whitestone[5] - \blackstone[6] &  10 \\
 Hier-S &  \whitestone[5]  &  2 & 10 \\

 K-means-R &  \whitestone[5]  & \blackstone[6] &  10  \\
 K-means-A &  \whitestone[5] -  6 & \blackstone[6] & 8  \\
 K-means-C &  \blackstone[4] - \whitestone[5]  &  \whitestone[5] - \blackstone[6] &  10  \\
 K-means-S &  \whitestone[5]  & \blackstone[6]  &  10  \\

NMF-R &  -  &  - &  -   \\
NMF-A &  -  &  - &  -  \\
NMF-C &  -  &  - &  - \\
NMF-S &  -  &  - &  -  \\

 \hline {\bf Gold solution}  & {\bf 4} & {\bf 6} & {\bf 13}\\
\hline
\end{tabular}
\end{center}
\end{footnotesize}
\caption{A summary of the precision results for {\tt FC} with $H=250$  and
$p=80\%$, on all algorithms and for the microarray datasets in {\tt Benchmark 2}. Cells with a dash indicate that the experiments were terminated  due to their high
computational demand.}\label{table:FC-Monti}
\end{table}

\begin{table}
\begin{footnotesize}
\begin{center}
\begin{tabular}{|l|ccc|}\hline
& \multicolumn{3}{c|}{Timing}
\\ \cline{2-4}
 &  Novartis & St.Jude & Normal \\
 Hier-A  & $4.0 \times 10^{5}$  &  $1.6 \times 10^{6}$ & $3.4 \times 10^{5}$ \\
 Hier-C &  $3.9 \times 10^{5}$  &  $1.4 \times 10^{6}$ & $3.3 \times 10^{5}$ \\
 Hier-S &   $4.4 \times 10^{5}$  &  $1.5 \times 10^{6}$ & $3.4 \times 10^{5}$ \\
 K-means-R &  $1.4 \times 10^{7}$  &  $5.9\times 10^{6}$ & $2.0\times 10^{6}$ \\
 K-means-A &  $5.5 \times 10^{6}$  &  $3.2 \times 10^{7}$ & $5.4 \times 10^{6}$  \\
 K-means-C &  $6.5 \times 10^{6}$ & $3.2 \times 10^{7}$  & $2.1\times 10^{6}$ \\
 K-means-S &  $7.8 \times 10^{6}$ & $4.9 \times 10^{7}$ & $2.1\times 10^{6}$  \\
NMF-R &  -  &  - &  -  \\
NMF-A &  -  &  - &  -   \\
NMF-C &  -  &  - &  -   \\
NMF-S &  -  &  - &  -   \\
\hline
\end{tabular}
\end{center}
\end{footnotesize}
\caption{A summary of the precision results for {\tt FC} with $H=250$  and
$p=80\%$, on all algorithms and for the microarray datasets in {\tt Benchmark 2}. Cells with a dash indicate that the experiments were terminated  due to their high
computational demand.}\label{table:FC-Monti-time}
\end{table}

\begin{table}

\begin{footnotesize}
\begin{center}
\begin{tabular}{|l|ccc|}\hline
& \multicolumn{3}{c|}{Precision}
\\  \cline{2-4}
 &  Gaussian3 & Gaussian5 & Simulated6 \\
Hier-A & \blackstone[3] & \blackstone[5]   & \whitestone[5] \\
Hier-C & \blackstone[3] & \blackstone[5]  & \whitestone[5]   \\
Hier-S & \whitestone[2] & 2  & \whitestone[7]  \\
K-means-R & \blackstone[3]  & \blackstone[5]  & \blackstone[6] \\
K-means-A & \blackstone[3]  & \blackstone[5]  & \whitestone[5] \\
K-means-C & \blackstone[3]  & \blackstone[5]  & \whitestone[5]  \\
K-means-S & \whitestone[2] & \blackstone[5]  & \blackstone[6] \\
 \hline {\bf Gold solution}  & {\bf 3} & {\bf 5} & {\bf 6} \\
\hline
\end{tabular}
\end{center}
\end{footnotesize}
\caption{A summary of the precision results for {\tt FC} with $H=250$  and
$p=80\%$, on all algorithms, except NMF, and for the simulated datasets in {\tt Benchmark 2}.
}\label{table:FCSim}
\end{table}

\subsubsection{FC and Similarity Matrices}

The same experiments described for the evaluation of {\tt Consensus} for the computation of a similarity matrix  have been performed here for {\tt FC}. For the presentation of the results, the same organization of Section~\ref{sec:ConsensusSimilarity} is followed here, i.e., the results for the {\tt Benchmark 1} datasets are presented and discussed first.
Indeed, the experiments reported in Table~\ref{table:FC-simGiancarlo} are the same as the ones reported in Table~\ref{Table:simmConsensuns250-80} for {\tt Consensus}. Again, there is no difference between the two tables. Therefore, also in this case, {\tt FC} is a good approximation of {\tt Consensus}.
For completeness, we report that, for {\tt FC} on the {\tt Benchmark 1} datasets, the interested reader will find, at the following supplementary material web site~\cite{SpeedUpWeb}, all the complete tables, in the Tables section, as Tables TS7-TS12,  for each experimental setup.

As for {\tt Benchmark 2} datasets, the results of the  experiments, for the microarrays datasets,  are reported in Tables~\ref{Table:simmConsensuns250-80-Monti} and~\ref{table:FC-simMonti} for {\tt Consensus} and {\tt FC}, respectively. Tables~\ref{table:ConsensussimSimulated} and~\ref{table:FCsimSimulated} report the results for the simulated datasets for  {\tt Consensus} and {\tt FC}, respectively.
Also for the  {\tt Benchmark 2} datasets, there is no difference between the two methods.

In analogy with {\tt Consensus} and the {\tt Benchmark 1} datasets, the clustering results obtained with the use of the similarity matrices computed by {\tt Consensus} and {\tt FC} are compared, for the {\tt Benchmark 2} datasets, against the clustering results obtained with the use of Euclidean distance. The relevant values of $k^*$ are taken from Table~\ref{table:Consensus-H250p80Monti} for {\tt Consensus}  and Table~\ref{table:FC-Monti} for {\tt FC}. The results are reported in Tables~\ref{table:FC-simMonti}-\ref{table:ConsensusEuclideanMonti} and Tables~\ref{table:ConsensussimSimulated}-\ref{table:simEuclideanSimulated}. They confirm that the consensus matrix is at least as good as an Euclidean distance matrix, when used as input to hierarchical clustering algorithms, even when computed by {\tt FC}.

\begin{table}
\begin{footnotesize}
\begin{center}
\begin{tabular}{|l|ccc|}\hline
 & Novartis & St.Jude & Normal\\
\cline{2-4}
 Hier-A &  0.641611 & 0.173717  & 0.572747 \\
 Hier-C &  0.515570 & 0.438039  & 0.521355 \\
 Hier-S &  0.320264 & -7.88788$e^{-4}$ & 0.502043 \\
\hline
\end{tabular}
\end{center}
\end{footnotesize}
\caption{For each dataset and each hierarchical algorithm considered here,  the consensus matrix corresponding to the number of clusters $k^*$  predicted by {\tt Consensus} in Table~\ref{table:Consensus-H250p80Monti} is taken. That matrix is transformed into a distance matrix, which is then used by the clustering algorithm to produce $k^*$ clusters. The agreement of that clustering solution with the gold solution of the given dataset is measured via the {\tt Adjusted Rand} Index.}\label{Table:simmConsensuns250-80-Monti}
\end{table}

\clearpage

\begin{table}
\begin{footnotesize}
\begin{center}
\begin{tabular}{|l|cccccc|}\hline
  & CNS Rat & Leukemia & NCI60 & Lymphoma & Yeast  & PBM   \\
\cline{2-7}
 Hier-A & 0.190350 & 0.919174 & 0.498265 & 0.430841 & 0.528360 & 0.000261\\
 Hier-C & 0.176446 & 0.676293 & 0.414214 & 0.483664 & 0.589179 & 0.672256 \\
 Hier-S & 0.000134 & 0.507680 & 0.171124 & -0.009141 & 0.00203 & 0.000261 \\
\hline
\end{tabular}
\end{center}
\end{footnotesize}
\caption{For each dataset and each hierarchical algorithm considered here, the consensus matrix corresponding to the number of clusters $k^*$  predicted by {\tt FC} in Table~\ref{table:FC-H250p80} is taken. That matrix is transformed into a distance matrix, which is then used by the clustering algorithm to produce $k^*$ clusters. The agreement of that clustering solution with the gold solution of the given dataset is measured via the {\tt Adjusted Rand} Index.}\label{table:FC-simGiancarlo}
\end{table}

\begin{table}
\begin{footnotesize}
\begin{center}
\begin{tabular}{|l|ccc|}\hline
 & Novartis & St.Jude & Normal\\
\cline{2-4}
 Hier-A & 0.641611 & 0.173717 & 0.572747\\
 Hier-C & 0.515570 & 0.435431 & 0.537849  \\
 Hier-S & 0.320264 & -7.88788$e^{-4}$ & 0.502043 \\
\hline
\end{tabular}
\end{center}
\end{footnotesize}
\caption{For each dataset and each hierarchical algorithm considered here,  the consensus matrix corresponding to the number of clusters $k^*$  predicted by {\tt FC} in Table~\ref{table:FC-Monti} is taken. That matrix is transformed into a distance matrix, which is then used by the clustering algorithm to produce $k^*$ clusters. The agreement of that clustering solution with the gold solution of the given dataset is measured via the {\tt Adjusted Rand} Index.}\label{table:FC-simMonti}
\end{table}

\begin{table}
\begin{footnotesize}
\begin{center}
\begin{tabular}{|l|ccc|}
\hline & Novartis & St.Jude & Normal\\
\cline{2-4}
 Hier-A & $0.544647$ & $0.16992$ & $0.57274$ \\
 Hier-C & $0.51557$ & $0.42637$ & $0.52135$\\
 Hier-S & $0.32026$ & $-7.88786e^{-4}$ & $0.50204$ \\
\hline
\end{tabular}
\end{center}
\end{footnotesize}
\caption{For each dataset and each hierarchical algorithm considered here, the Euclidean distance matrix and number of clusters $k^*$  predicted by {\tt Consensus} in Table~\ref{table:Consensus-H250p80Monti} is taken. That matrix is used by the clustering algorithm to produce $k^*$ clusters. The agreement of that clustering solution with the gold solution of the given dataset is measured via the {\tt Adjusted Rand} Index.}\label{table:ConsensusEuclideanMonti}
\end{table}

\begin{table}
\begin{footnotesize}
\begin{center}
\begin{tabular}{|l|ccc|}
\hline  & Gaussian3 & Gaussian5 & Simulated6   \\
\cline{2-4}
 Hier-A & $1.0$  & $0.85942$  & $-0.59990$ \\
 Hier-C & $1.0$ & $0.82638$  & $-0.59990$  \\
 Hier-S & $0.0$ & $0.0$ & $-0.19586$  \\
 \hline
\end{tabular}
\end{center}
\end{footnotesize}
\caption{For each dataset and each hierarchical algorithm considered here, the consensus matrix corresponding to the number of clusters $k^*$  predicted by {\tt Consensus} in Table~\ref{table:Consensus-H250p80Sim} is taken. That matrix is transformed into a distance matrix, which is then used by the clustering algorithm to produce $k^*$ clusters. The agreement of that clustering solution with the gold solution of the given dataset is measured via the {\tt Adjusted Rand} Index.}\label{table:ConsensussimSimulated}
\end{table}

\begin{table}
\begin{footnotesize}
\begin{center}
\begin{tabular}{|l|ccc|}
\hline  & Gaussian3 & Gaussian5 & Simulated6   \\
\cline{2-4}
 Hier-A & $1.0$ & $0.859429$ & $-0.59990$ \\
 Hier-C & $1.0$ & $0.830103$  & $-0.59990$ \\
 Hier-S & $0.0$ & $0.0$  & $-0.19586$ \\
\hline
\end{tabular}
\end{center}
\end{footnotesize}
\caption{For each dataset and each hierarchical algorithm considered here, the consensus matrix corresponding to the number of clusters $k^*$  predicted by {\tt FC} in Table~\ref{table:FCSim} is taken. That matrix is transformed into a distance matrix, which is then used by the clustering algorithm to produce $k^*$ clusters. The agreement of that clustering solution with the gold solution of the given dataset is measured via the {\tt Adjusted Rand} Index.}\label{table:FCsimSimulated}
\end{table}

\begin{table}
\begin{center}
\begin{footnotesize}
\begin{tabular}{|l|ccc|}
\hline  & Gaussian3 & Gaussian5 & Simulated6\\
\cline{2-4}
 Hier-A & 1.0 & 0.82729 & -0.59990 \\
 Hier-C & 1.0 & 0.65218 & -0.59990 \\
 Hier-S & 0.0 & 0.0 & -0.195867 \\
\hline
\end{tabular}
\end{footnotesize}
\end{center}
\caption{For each dataset and each hierarchical algorithm considered here, the Euclidean distance matrix and number of clusters $k^*$  predicted by {\tt Consensus} in Table~\ref{table:Consensus-H250p80Sim} is taken. That matrix is used by the clustering algorithm to produce $k^*$ clusters.  The agreement of that clustering solution with the gold solution of the given dataset is measured via the {\tt Adjusted Rand} Index.}\label{table:simEuclideanSimulated}
\end{table}

\subsection{Comparison of FC with other Internal Validation Measures}
It is also of interest to compare {\tt FC} with other validation measures that are available in the Literature. One takes, as reference, the benchmarking results reported in Chapter~\ref{chap:6},   since both the datasets and the experimental setup are identical to the ones used  here.
It is worth  pointing out that this benchmark show that there is a natural hierarchy, in terms of time, for the measures taken in account. Moreover, the faster the measure, the less accurate it is. From that study and for completeness, taking in account Tables~\ref{table:summary-table} and~\ref{table:summary-table-time} of Section~\ref{time_merits} one reports in Tables~\ref{table:fsummary-table} and~\ref{table:fsummary-table-time} the best performing measures, with the addition of {\tt FC} and the other \vir{best} approximations proposed in this chapter. From that table, one extract and report,  in Tables~\ref{table:fastsummary-table} and~\ref{table:fastsummary-table-time}, the fastest and best performing measures - again, with the addition of {\tt FC}.  As is self-evident from that latter table, {\tt FC} with Hier-A is within a one order of magnitude difference in speed with respect to the fastest measures, i.e., {\tt WCSS} and {\tt G-Gap}. Quite remarkably, it grants a better precision in terms of its ability to identify the underlying structure in each of the benchmark datasets. It is also of relevance to point out that {\tt FC} with Hier-A has a time performance comparable to that of {\tt FOM}, but again it  has a better precision performance. Notice that, none of the three just-mentioned measures depends on any parameter setting, implying that no speedup will result from a tuning of the algorithms.

\begin{table}
\begin{center}
\begin{footnotesize}
\begin{tabular}{|l|ccccc|} \hline
& \multicolumn{5}{c|}{Precision}
\\
\cline{2-6}
  & CNS Rat & Leukemia & NCI60 & Lymphoma & Yeast\\

 {\tt WCSS}-K-means-C & \whitestone[5] & \blackstone[3] &\blackstone[8]& 8 &\whitestone[4] \\
{\tt WCSS}-R-R0 & \whitestone[5] & \whitestone[4] &\blackstone[8] & \blackstone[3] &\whitestone[4]  \\
 {\tt G-Gap}-K-means-R & \whitestone[7] & \blackstone[3] & 4 & \whitestone[4] & \whitestone[6] \\
 {\tt G-Gap}-R-R5 & \whitestone[5] & \whitestone[4] & 2 & \whitestone[2] & \whitestone[4] \\
 {\tt FOM}-K-means-C & \whitestone[7]& 8 & \blackstone[8]&\whitestone[4] & \whitestone[4]  \\
 {\tt FOM}-K-means-S & \blackstone[6] &\blackstone[3] & \blackstone[8] & 8 &\whitestone[4]  \\
 {\tt FOM}-R-R5 & \blackstone[6] & \blackstone[3] & \whitestone[7] & 5 & \blackstone[5] \\
 {\tt FOM}-Hier-A & \whitestone[7] & \blackstone[3] & \whitestone[7] & 6 & \whitestone[6] \\
 {\tt DIFF-FOM}-K-means-C & \whitestone[7] & \blackstone[3] &\whitestone[7]& \whitestone[4] & 3
\\

{\tt FC}-Hier-A & \whitestone[7] & \blackstone[3] & \blackstone[8] & \blackstone[3] & \blackstone[5] \\
{\tt FC}-Hier-C & \blackstone[6]& \whitestone[4] & \blackstone[8] & 5 & \whitestone[6] \\

{\tt FC}-K-means-R & \blackstone[6] & \whitestone[4] & \whitestone[7] & \whitestone[4] & \whitestone[6] \\
{\tt FC}-K-means-A & \whitestone[7] & \blackstone[3] & \blackstone[8] & \blackstone[3] & \whitestone[6] \\
{\tt FC}-K-means-C & \blackstone[6]& \blackstone[3] & \blackstone[8] & \whitestone[4] & \whitestone[6] \\
{\tt FC}-K-means-S & \whitestone[7] & \whitestone[4] & 10 & \whitestone[2] & \whitestone[6] \\

 {\tt Clest-F}-K-means-R & \blackstone[6]& \blackstone[3]& 15  &\whitestone[2]  & \whitestone[4]  \\
 {\tt Clest-FM}-K-means-R & 8 & \whitestone[4] & \blackstone[8]& \whitestone[2] & \whitestone[4] \\
  {\tt Consensus}-Hier-A & \whitestone[7] & \blackstone[3] & \blackstone[8]& \blackstone[3] & \blackstone[5]\\
 {\tt Consensus}-Hier-C & \blackstone[6] &\whitestone[4] & \blackstone[8]& 5 & \whitestone[6] \\
 {\tt Consensus}-K-means-R & \blackstone[6] &\whitestone[4] & \whitestone[7]& \blackstone[3] & \whitestone[6] \\
 {\tt Consensus}-K-means-A & \whitestone[7] & \blackstone[3] & \blackstone[8] & \blackstone[3] & \whitestone[6] \\
 {\tt Consensus}-K-means-C & \blackstone[6]& \blackstone[3] &\blackstone[8]&\whitestone[4] & \whitestone[6] \\
 {\tt Consensus}-K-means-S & \whitestone[7] & \whitestone[4] & 10 & \whitestone[2] & \whitestone[6]\\
 \hline
 {\bf Gold solution}  & {\bf 6} & {\bf 3} & {\bf 8} & {\bf 3} & {\bf 5}\\
\hline
\end{tabular}
\end{footnotesize}
\end{center}
\caption{A summary of precision results of the best performing measures taken into  account in Chapter~\ref{chap:6}, with the addition of {\tt WCSS}-R, {\tt G-Gap}, {\tt FOM}-R,  {\tt DIFF-FOM} and {\tt FC} with $H=500$ and $p=80\%$.
}\label{table:fsummary-table}
\end{table}

\begin{table}
\begin{center}
\begin{footnotesize}
\begin{tabular}{|l|cccc|}\hline
& \multicolumn{4}{c|}{Timing}
\\ \cline{2-5}
  & CNS Rat & Leukemia & NCI60 & Lymphoma\\
 {\tt WCSS}-K-means-C & $1.7 \times 10^{3}$ & $1.3 \times 10^{3}$ & $5.0 \times 10^{3}$ & $4.0
\times 10^{3}$\\
{\tt WCSS}-R-R0 & $1.2 \times 10^{3}$ & $8.0 \times 10^{2}$ & $4.1 \times 10^{3}$ & $3.0
\times 10^{3}$ \\
 {\tt G-Gap}-K-means-R &  $2.4 \times 10^{3}$ & $2.0 \times 10^{3}$ & $8.3 \times 10^{4}$ & $8.4 \times 10^{3}$ \\
 {\tt G-Gap}-R-R5 & $1.2 \times 10^{3}$ & $8.0 \times 10^{2}$ & $4.5 \times 10^{4}$ & $3.2 \times 10^{3}$ \\
 {\tt FOM}-K-means-C & $1.9 \times 10^{4}$ & $9.4 \times 10^{4}$& $5.5 \times 10^{5}$& $2.6 \times 10^{5}$ \\
 {\tt FOM}-K-means-S & $2.9 \times 10^{4}$ & $1.0 \times 10^{5}$ & $7.1 \times 10^{5}$& $3.6 \times 10^{5}$ \\
 {\tt FOM}-R-R5 & $3.9 \times 10^{3}$ & $3.7 \times 10^{4}$ & $2.1 \times 10^{5}$ & $7.6 \times 10^{4}$ \\
 {\tt FOM}-Hier-A & $1.6  \times 10^{3}$ & $7.5 \times 10^{3}$ & $5.1 \times 10^{4}$ & $1.8 \times 10^{4}$\\
 {\tt DIFF-FOM}-K-means-C & $1.9 \times 10^{4}$ & $9.4 \times 10^{4}$ & $5.5 \times 10^{5}$ & $2.6 \times 10^{5}$
\\

{\tt FC}-Hier-A & $4.7 \times 10^{4}$ & $3.5 \times 10^{4}$  & $5.2 \times 10^{4}$ & $1.3 \times 10^{5}$\\
{\tt FC}-Hier-C & $4.4 \times 10^{4}$ & $2.7 \times 10^{4}$ & $1.3 \times 10^{5}$ & $1.3 \times 10^{5}$\\

{\tt FC}-K-means-R & $7.2 \times 10^{5}$ & $7.7 \times 10^{5}$ & $2.5 \times 10^{6}$ & $2.3 \times 10^{6}$\\
{\tt FC}-K-means-A & $5.6 \times 10^{5}$ & $4.2 \times 10^{5}$ & $1.1 \times 10^{6}$ & $1.5 \times 10^{6}$\\
{\tt FC}-K-means-C & $5.4 \times 10^{5}$ & $4.8 \times 10^{5}$ & $1.1 \times 10^{6}$ & $1.0 \times 10^{6}$\\
{\tt FC}-K-means-S & $7.7 \times 10^{5}$ & $4.7 \times 10^{5}$ & $1.3 \times 10^{6}$ & $1.1 \times 10^{6}$\\

 {\tt Clest-F}-K-means-R & $1.2 \times 10^{6}$ & - & - & -  \\
 {\tt Clest-FM}-K-means-R & $1.2 \times 10^{6}$ & -& - & - \\
  {\tt Consensus}-Hier-A & $9.2 \times 10^{5}$ & $7.9 \times 10^{5}$ & $2.0 \times 10^{6}$ & $1.9 \times 10^{6}$\\
 {\tt Consensus}-Hier-C & $8.7 \times 10^{5}$ & $6.9 \times 10^{5}$ & $2.0 \times 10^{6}$ & $2.0 \times 10^{6}$\\
 {\tt Consensus}-K-means-R & $1.0 \times 10^{6}$ & $1.3 \times 10^{6}$ & $3.4 \times 10^{6}$ & $3.0 \times 10^{6}$ \\
 {\tt Consensus}-K-means-A & $1.3 \times 10^{6}$ & $1.6 \times 10^{6}$ & $3.0 \times 10^{6}$ & $2.6 \times 10^{6}$ \\
 {\tt Consensus}-K-means-C & $1.3 \times 10^{6}$ &  $1.8 \times 10^{6}$ & $2.9 \times 10^{6}$ & $2.6 \times 10^{6}$\\
 {\tt Consensus}-K-means-S & $1.5 \times 10^{6}$ & $1.8 \times 10^{6}$ &  $3.2 \times 10^{6}$ &  $2.8 \times 10^{6}$ \\
 \hline
\end{tabular}
\end{footnotesize}
\end{center}
\caption{A summary of the timing results best performing measures taken into  account in Chapter~\ref{chap:6}, with the addition of {\tt WCSS}-R, {\tt G-Gap}, {\tt FOM}-R,  {\tt DIFF-FOM} and {\tt FC}  with $H=500$ and $p=80\%$. Cell with a dash indicates that the experiment was performed on a smaller interval of cluster values with respect to CNS Rat and so the time performance are
not comparable.
}\label{table:fsummary-table-time}
\end{table}

\begin{table}
\begin{center}
\begin{footnotesize}
\begin{tabular}{|l|ccccc|}\hline
& \multicolumn{5}{c|}{Precision}
\\ \cline{2-6}
  & CNS Rat & Leukemia & NCI60 & Lymphoma & Yeast  \\

 {\tt WCSS}-K-means-C & \whitestone[5] & \blackstone[3] &\blackstone[8]& 8 &\whitestone[4] \\
{\tt WCSS}-R-R0 & \whitestone[5] & \whitestone[4] &\blackstone[8] & \blackstone[3] &\whitestone[4]  \\
 {\tt G-Gap}-K-means-R & \whitestone[7] & \blackstone[3] & 4 & \whitestone[4] & \whitestone[6]  \\
 {\tt G-Gap}-R-R5 & \whitestone[5] & \whitestone[4] & 2 & \whitestone[2] & \whitestone[4]  \\
 {\tt FOM}-K-means-C & \whitestone[7] & 8 & \blackstone[8]&\whitestone[4] & \whitestone[4] \\
 {\tt FOM}-K-means-S & \blackstone[6] &\blackstone[3] & \blackstone[8] & 8 &\whitestone[4]  \\
 {\tt FOM}-R-R5 & \blackstone[6] & \blackstone[3] & \whitestone[7] & 5 & \blackstone[5]  \\
 {\tt FOM}-Hier-A & \whitestone[7] & \blackstone[3] & \whitestone[7] & 6 & \whitestone[6] \\
 {\tt DIFF-FOM}-K-means-C & \whitestone[7] & \blackstone[3] &\whitestone[7]& \whitestone[4] & 3
\\
{\tt FC}-Hier-A & \whitestone[7] & \blackstone[3] &  \blackstone[8]  & \blackstone[3] & \blackstone[5] \\
{\tt FC}-Hier-C & \blackstone[6] & \whitestone[4] &  \blackstone[8] & 5 & \whitestone[6]  \\
 \hline
 {\bf Gold solution}  & {\bf 6} & {\bf 3} & {\bf 8} & {\bf 3} & {\bf 5}  \\
\hline
\end{tabular}
\end{footnotesize}
\end{center}
\caption{
A summary of the precision results of best performing measures taken from the benchmarking of  Chapter~\ref{chap:6}, with the addition of {\tt WCSS}-R, {\tt G-Gap}, {\tt FOM}-R,  {\tt DIFF-FOM} and {\tt FC},  with $H=250$ and $p=80\%$.
}\label{table:fastsummary-table}
\end{table}

\begin{table}
\begin{center}
\begin{footnotesize}
\begin{tabular}{|l|cccc|}\hline
& \multicolumn{4}{c|}{Timing}
\\ \cline{2-5}
  & CNS Rat & Leukemia & NCI60 & Lymphoma\\

 {\tt WCSS}-K-means-C & $1.7 \times 10^{3}$ & $1.3 \times 10^{3}$ & $5.0 \times 10^{3}$ & $4.0
\times 10^{3}$\\
{\tt WCSS}-R-R0 & $1.2 \times 10^{3}$ & $8.0 \times 10^{2}$ & $4.1 \times 10^{3}$ & $3.0
\times 10^{3}$ \\
 {\tt G-Gap}-K-means-R & $2.4 \times 10^{3}$ & $2.0 \times 10^{3}$ & $8.3 \times 10^{4}$ & $8.4 \times 10^{3}$ \\
 {\tt G-Gap}-R-R5 & $1.2 \times 10^{3}$ & $8.0 \times 10^{2}$ & $4.5 \times 10^{4}$ & $3.2 \times 10^{3}$ \\
 {\tt FOM}-K-means-C & $1.9 \times 10^{4}$ & $9.4 \times 10^{4}$& $5.5 \times 10^{5}$& $2.6 \times 10^{5}$ \\
 {\tt FOM}-K-means-S & $2.9 \times 10^{4}$ & $1.0 \times 10^{5}$ & $7.1 \times 10^{5}$& $3.6 \times 10^{5}$ \\
 {\tt FOM}-R-R5 & $3.9 \times 10^{3}$ & $3.7 \times 10^{4}$ & $2.1 \times 10^{5}$ & $7.6 \times 10^{4}$ \\
 {\tt FOM}-Hier-A & $1.6  \times 10^{3}$ & $7.5 \times 10^{3}$ & $5.1 \times 10^{4}$ & $1.8 \times 10^{4}$\\
 {\tt DIFF-FOM}-K-means-C & $1.9 \times 10^{4}$ & $9.4 \times 10^{4}$ & $5.5 \times 10^{5}$ & $2.6 \times 10^{5}$
\\
{\tt FC}-Hier-A & $5.9 \times 10^{4}$ & $2.7 \times 10^{4}$ & $7.0 \times 10^{4}$ & $6.8 \times 10^{4}$ \\
{\tt FC}-Hier-C & $5.9 \times 10^{4}$ &  $2.7 \times 10^{4}$ & $6.5 \times 10^{4}$ & $6.7 \times 10^{4}$ \\

\hline
\end{tabular}
\end{footnotesize}
\end{center}
\caption{
A summary of the time results of best performing measures taken from the benchmarking of Chapter~\ref{chap:6}, with the addition of {\tt WCSS}-R, {\tt G-Gap}, {\tt FOM}-R,  {\tt DIFF-FOM} and {\tt FC},  with $H=250$ and $p=80\%$.
}\label{table:fastsummary-table-time}
\end{table}

The results outlined above  are  particularly significant since (i) {\tt FOM} is one of the most established and highly-referenced measures specifically designed for microarray data; (ii)  in purely algorithmic terms,  {\tt WCSS} and {\tt G-Gap}, are so simple as to represent a \vir{lower bound} in terms of the time performance that is achievable by any data-driven internal validation measure.   In conclusion, the experiments reported here show that {\tt FC} is quite close in time performance to three of the fastest data-driven validation measures available in the Literature, while also granting better precision results. In view of the fact that the former measures are considered reference points in this area, the speedup of {\tt Consensus} proposed here seems to be a non-trivial step forward in the area of data-driven internal validation measures. 

\chapter{Conclusions and Future Directions}\label{chap:8}

In this thesis, an extensive study of internal validation measures is proposed, with attention to the analysis of microarray data.
In particular, this dissertation has contributed to the area as follows:

\paragraph{A Paradigm for Stability Measures.}  A new general paradigm of stability internal validation measures is proposed. It is also shown that each of the known stability based measures is an instance of such a novel paradigm. Surprisingly, also {\tt Gap} falls within the new paradigm. Moreover, from this general algorithmic paradigm, it is simple to design new stability internal measure combining the building blocks of the detailed measures.

\paragraph{Benchmarking of Internal Validation Measures.} A benchmarking of internal validation measures, taking into account both the precision and time, is proposed.
This study provides further insights into the relative merits of each of the measures considered, from which more accurate and useful guidelines for their use can be inferred. In particular, when computer time is taken into account,
there is a hierarchy of measures, with {\tt WCSS} being the fastest and {\tt Consensus} the slowest.
Overall, {\tt Consensus} results to be the method of choice.
It is also to be stressed that no measure performed well on large datasets.

\paragraph{Fast Approximations.} Based on the above benchmarking,  the idea of extensions and
approximations of internal validation measures has been systematically investigated. The resulting new measures turn out to be competitive, both in time and precision. In particular, {\tt G-Gap} and {\tt FC} an approximation of {\tt Gap} and {\tt Consensus}, respectively, are proposed.  As it is evident from the results obtained, the overall performance of the approximations is clearly superior to  the \vir{original} measures. Moreover, depending on the dataset, they are at least one orders of magnitude faster.
In terms of the existing Literature on data-driven internal validation measures, {\tt FC} is only one order of magnitude away from the fastest measures, yet granting a superior performance in terms of precision. Although {\tt FC} does not close the gap between the time performance of the fastest internal validation measures and the most precise, it is a substantial step forward towards that goal.

\paragraph{Benchmarking of NMF as a clustering algorithm.}  A benchmarking of NMF as a clustering algorithm on microarray data is proposed. Unfortunately, in view of the steep computational price one must pay, the use of NMF as a clustering algorithm does not seem to be justified. Indeed, NMF is at least two orders of magnitude slower than a classical clustering algorithm and with a worse precision.

\paragraph{Future Directions.} This thesis suggests several interesting directions of investigation. Some of them are mentioned  next:

\begin{itemize}

\item Techniques that would enhance the performance of NMF. In particular,  a relevant issue is to compute a solution for $k$ clusters starting from one with $k\pm1$ clusters. That is, an incremental/decremental version of NMF.  Such a version  could yield  a substantial speedup when NMF is used as a clustering algorithm in conjunction with {\tt Consensus} and  {\tt FC}.

\item The intrinsic and relative study of stability validation measure  generated from the stability paradigm, mixing the building blocks available today.

\item The design of fast approximations of other stability internal validation measures.

\item An internal validation measure that closes the gap between the time performance of the fastest internal validation measures and the most precise.

\item A comparison among the best data driven validation measures discussed here and Bayesian method that solve the same problem, i.e.,~\cite{BayesianCluster}.
\end{itemize}



\bibliographystyle{plain}
\bibliography{Thesis}

\end{document}